\documentclass[letter,10pt]{article}
\usepackage[margin=0.8in]{geometry} 
\usepackage{amsfonts,amssymb,amsmath,amsthm}
\usepackage{graphicx,float,afterpage} 
\usepackage{enumerate,color,caption,tikz,boxedminipage}
\usepackage{changepage}
 \interfootnotelinepenalty=10000 
\usepackage[round]{natbib}
\usepackage{hyperref}
\hypersetup{colorlinks=true,linkcolor=red,citecolor=blue}

\usepackage{chngcntr,apptools,titlesec}
\AtAppendix{\counterwithin{lemma}{section}}
\titleformat{\section}{\normalfont\centering\fontsize{12}{13}\bfseries}{\thesection}{1em}{}
\titleformat{\subsection}{\normalfont\normalsize\bfseries}{\thesubsection}{1em}{}
\titleformat{\subsubsection}{\normalfont\normalsize\bfseries}{\thesubsubsection}{1em}{}
\renewcommand{\thesection}{\arabic{section}}
\renewcommand{\thesubsection}{\arabic{section}.\arabic{subsection}}

\usepackage[ruled,linesnumbered,algoruled,boxed]{algorithm2e}

\usepackage{array,multirow,ctable}

\newtheorem{lemma}{Lemma}
\newtheorem{theorem}{Theorem}
\newtheorem{proposition}{Proposition}
\newtheorem{corollary}{Corollary}
\theoremstyle{definition}
\newtheorem{definition}{Definition}
\theoremstyle{remark}
\newtheorem{remark}{Remark}

\DeclareMathOperator*{\argmin}{argmin}

\DeclareMathOperator{\HC}{\text{HC}}

\newcommand{\vertiii}[1]{{\left\vert\kern-0.25ex\left\vert\kern-0.25ex\left\vert #1 
		\right\vert\kern-0.25ex\right\vert\kern-0.25ex\right\vert}}
\newcommand{\vertii}[1]{{\left\vert\kern-0.25ex\left\vert #1 
		\right\vert\kern-0.25ex\right\vert}}
\newcommand{\smalliii}[1]{{\vert\kern-0.25ex\vert\kern-0.25ex\vert #1 \vert\kern-0.25ex\vert\kern-0.25ex\vert}}
\newcommand{\smallii}[1]{{\vert\kern-0.25ex\vert #1\vert\kern-0.25ex\vert}}
\newcommand{\vect}[1]{\mathop{\mathrm{vec}}(#1)}
\newcommand{\VAR}{\text{VAR}}
\newcommand{\R}{\mathbb{R}}
\newcommand{\E}{\mathbb{E}}
\newcommand{\Cov}{\text{Cov}}
\newcommand{\tr}{\text{trace}}

\newcommand{\Normal}{\mathcal{N}}
\newcommand{\rank}{\text{rank}}

\makeatletter
\newsavebox{\@brx}
\newcommand{\llangle}[1][]{\savebox{\@brx}{\(\m@th{#1\langle}\)}%
	\mathopen{\copy\@brx\mkern2mu\kern-0.9\wd\@brx\usebox{\@brx}}}
\newcommand{\rrangle}[1][]{\savebox{\@brx}{\(\m@th{#1\rangle}\)}%
	\mathclose{\copy\@brx\mkern2mu\kern-0.9\wd\@brx\usebox{\@brx}}}
\makeatother

\makeatletter
\renewcommand{\paragraph}{%
	\@startsection{paragraph}{4}%
	{\z@}{2.0ex \@plus 1ex \@minus .2ex}{-1em}%
	{\normalfont\normalsize\em}%
}
\makeatother


\title{\large Regularized Estimation and Testing \\\vspace*{1mm} for High-Dimensional Multi-Block Vector-Autoregressive Models}
\author{\normalsize Jiahe Lin\thanks{Ph.D Candidate, Department of Statistics, University of Michigan.}, \quad George Michailidis\thanks{Corresponding Author. Professor, Department of Statistics and the Informatics Institute, University of Florida, 205 Griffin-Floyd Hall, PO Box 118545, Gainesville, FL,  32611-8545, USA. Email: \texttt{gmichail@ufl.edu.}}}
\date{}

\begin{document}
	\maketitle
	
\begin{abstract}
Dynamical systems comprising of multiple components that can be partitioned into distinct {\em blocks} originate in many scientific areas. A pertinent example is the interactions between financial assets and selected macroeconomic indicators, which has been studied at {\em aggregate level} --e.g. a stock index and an employment index-- extensively in the macroeconomics literature. A key shortcoming of this approach is that it ignores potential influences from other related components (e.g. Gross Domestic Product) that may exert influence on the system's dynamics and structure and thus produces incorrect results. To mitigate this issue, we consider a multi-block linear dynamical system with Granger-causal ordering between blocks, wherein the blocks' temporal dynamics are described by vector autoregressive processes and are influenced by blocks higher in the system hierarchy. We derive the maximum likelihood estimator for the posited model for Gaussian data in the high-dimensional setting based on appropriate regularization schemes for the parameters of the block components. To optimize the underlying non-convex likelihood function, we develop an iterative algorithm with convergence guarantees. We establish theoretical properties of the maximum likelihood estimates, leveraging the decomposability of the regularizers and a careful analysis of the iterates. Finally, we develop testing procedures for the null hypothesis of whether a block ``Granger-causes" another block of variables. The performance of the model and the testing procedures are evaluated on synthetic data, and illustrated on a data set involving log-returns of the US S\&P100 component stocks and key macroeconomic variables for the 2001--16 period. 
\end{abstract}

\section{Introduction.}\label{sec:intro}

The study of linear dynamical systems has a long history in control theory \citep{Kumar1986Stochastic} and economics \citep{hansen2013recursive} due to their analytical tractability and ease to estimate their parameters. Such systems in their so-called {\em reduced form} give rise to Vector Autoregressive (VAR) models \citep{lutkepohl2005new} that have been widely used in macroeconomic modeling for policy analysis \citep{sims1980macroeconomics, sims1982policy, sims1992interpreting}, in financial econometrics \citep{gourieroux2001financial}, and more recently in functional genomics \citep{shojaie2012adaptive}, financial systemic risk analysis \citep{billio2012econometric} and neuroscience \citep{seth2013interoceptive}. 

In many applications, the components of the system under consideration can be naturally partitioned into {\em interacting blocks}. For example, \cite{cushman1997identifying} studied the impact of monetary policy in a small open economy, where the economy under consideration is modeled as one block, while variables in other (foreign) economies as the other. Both blocks have their own autoregressive structure, and the inter-dependence between blocks is unidirectional: the foreign block {\em  influences} the small open economy, but {\em not} the other way around, thus effectively introducing a {\em linear ordering} amongst blocks. Another example comes from the connection between the stock market and employment macroeconomic variables \citep{fitoussi2000roots, phelps1999behind,farmer2015stock} that focuses on the impact through a wealth effect mechanism of the former on the latter. Once again, the underlying hypothesis of interest is that the stock market influences employment, but not the other way around.
In another application domain, molecular biologists conduct time course experiments on cell lines or animal models and collect data across multiple molecular compartments (transcripotmics, proteomics, metabolomics, lipidomics) in order to delineate mechanisms for disease onset and progression or to study basic biological processes. In this case, the interactions amongst the blocks (molecular compartments) are clearly delineated (transciptomic compartment influencing the proteomic and metabolomic ones), thus leading again to a linear ordering of the blocks \citep[see][]{richardson2016statistical}. 

The proposed model also encompasses the popular in marketing, regional science and growth theory VAR-X model, provided that the temporal 
evolution of the {\em exogenous} block of variables ``X" exhibits autoregressive dynamics. For example, \citet{nijs2001category} examine the sensitivity of over 500 product prices to various marketing promotion strategies (the exogenous block), while \citet{pauwels2008moving} examine changes in subscription rates, search engine referrals and marketing efforts of customers when switched from a free account to a fee-based structure, the latter together with customer characteristics representing the exogenous block. \citet{pesaran2004modeling} examine regional
inter-dependencies, building a model where country specific macroeconomic indicators evolve according to a VAR model and they are influenced 
exogenously by key macroeconomic variables from neighboring countries/regions. Finally, \citet{abeysinghe2001estimation} studies the impact of the
price of oil on Gross Domestic Product growth rates for a number of countries, while controlling for other exogenous variables such as 
the country's consumption and investment expenditures along with its trade balance.

The proposed model gives rise to a network structure that in its most general form corresponds to a multi-partite graph, depicted in Figure \ref{fig:diagram} for 3 blocks, that exhibits a {\em directed acyclic structure} between the constituent blocks, and can also exhibit additional dependence between the nodes in each block. Selected properties of such  multi-block structures, known as {\em chain graphs} \citep{drton2008sinful}, have been studied in the literature. Further, their maximum likelihood estimation for {\em independent and identically distributed} Gaussian data under a high-dimensional {\em sparse} regime is thoroughly investigated in \citet{lin2016penalized}, where a provably convergent estimation procedure is introduced and its theoretical properties are established.

%

\begin{figure}[htbp]
	\centering
	\begin{boxedminipage}{10.5cm}
		\def\layersep{4cm}
		\begin{tikzpicture}[shorten >=1pt,->,draw=black!50, node distance=\layersep]
		\tikzstyle{every pin edge}=[<-,shorten <=1pt]
		\tikzstyle{neuron}=[circle,draw=black!80,minimum size=25pt, inner sep=0pt]
		\tikzstyle{bottom neuron}=[neuron,fill=blue!30];
		\tikzstyle{middle neuron}=[neuron,fill=green!30];
		\tikzstyle{top neuron}=[neuron,fill=red!20];
		\tikzstyle{annot} = [text width=5em, text centered]
		
		\node[bottom neuron] (I-1) at (0,-1) {$x_{t-1}$};
		\node[middle neuron] (I-2) at (0,-2) {$y_{t-1}$};
		\node[top neuron] (I-3) at (0,-3) {$z_{t-1}$};
		
		\path[yshift=0cm] node[bottom neuron] (H-1) at (\layersep,-1) {$x_t$};
		\path[yshift=0cm] node[middle neuron] (H-2) at (\layersep,-2) {$y_t$};
		\path[yshift=0cm] node[top neuron] (H-3) at (\layersep,-3) {$z_t$};
		
		\path[yshift=0cm] node[bottom neuron] (O-1) at (2*\layersep,-1) {$x_{t+1}$};
		\path[yshift=0cm] node[middle neuron] (O-2) at (2*\layersep,-2) {$y_{t+1}$};
		\path[yshift=0cm] node[top neuron] (O-3) at (2*\layersep,-3) {$z_{t+1}$};
		
		\path[->, blue] (I-1) edge (H-1);
		\path[->, blue] (H-1) edge (O-1);
		\path[->, green] (I-2) edge (H-2);
		\path[->, green] (H-2) edge (O-2);
		\path[->, red] (I-3) edge (H-3);
		\path[->, red] (H-3) edge (O-3);
		
		\path[->,dashed, thick,blue] (I-1) edge (H-2);
		\path[->,dashed, thick,blue] (H-1) edge (O-2);
		\path[->,dashed, thick,green] (I-2) edge (H-3);
		\path[->,dashed, thick,green] (H-2) edge (O-3);
		
		\node[annot,above of=H-1, node distance=1cm] (hl) {time $t$};
		\node[annot,left of=hl] {time $t-1$};
		\node[annot,right of=hl] {time $t+1$};
		\end{tikzpicture}	
	\end{boxedminipage}
	\caption{Diagram for a dynamic system with three groups of variables}\vspace*{2mm}\label{fig:diagram}
\end{figure}
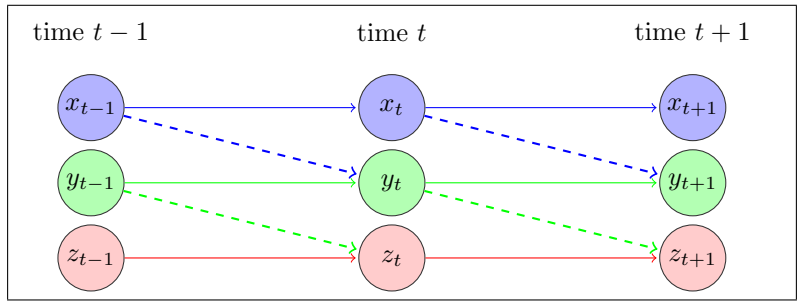

%

Given the wide range of applications of multi-block VAR models, which in addition encompass the widely used VAR-X model, the key contributions
of the current paper are fourfold: (i) formulating the model as a recursive dynamical system and examining its stability properties; (ii) developing a provably convergent algorithm for obtaining the regularized maximum likelihood estimates (MLE) of the model parameters under high-dimensional scaling; (iii) establishing theoretical properties of the ML estimates; and (iv) devising a testing procedure for the parameters that connect the constituent blocks of the model: if the null hypothesis is not rejected, then one is dealing with a set of independently evolving VAR models, otherwise with the posited multi-block VAR model. Finally, the model, estimation and testing procedures are illustrated on an important problem in macroeconomics, as gleaned by the background of the problem and discussion of the results provided in Section~\ref{sec:realdata}. 

For the multi-block VAR model, we assume that the time series within each block are generated by a Gaussian VAR process. Further, the transition matrices within and across blocks can be either {\em sparse} or {\em low rank}. The posited regularized Gaussian likelihood function is not {\em jointly convex} in all the model parameters, which poses a number of technical challenges that are compounded by the presence of temporal dependence. These are successfully addressed and resolved in
Section~\ref{sec:theory}, where we provide a numerically convergent algorithm and establish the theoretical properties of the resulting ML estimates, that constitutes a key contribution in the study of multi-block VAR models. 

The remainder of this paper is organized as follows. In Section~\ref{sec:ProblemFormulation}, we introduce the model setup and the corresponding estimation procedure. In Section~\ref{sec:theory}, we provide consistency properties of the obtained ML estimates 
under a high-dimensional scaling. In Section~\ref{sec:testing}, we introduce the proposed testing framework, both for low-rank
and sparse interaction matrices between the blocks.  Section~\ref{sec:simulation} contains selected numerical results that assess the performance of the estimation and testing procedures. Finally, an application to financial and macroeconomic data that was previously discussed as 
motivation for the model under consideration is presented in Section~\ref{sec:realdata}.

{\em Notation.} Throughout this paper, we use $\smalliii{A}_1$ and $\smalliii{A}_\infty$ respectively to denote the matrix induced $1$-norm and infinity norm of $A\in\mathbb{R}^{m\times n}$, that is, $\smalliii{A}_1 = \max_{1\leq j\leq n}\sum_{i=1}^m |a_{ij}|$, $\smalliii{A}_\infty = \max_{1\leq i\leq m}\sum_{j=1}^n|a_{ij}|$, and use $\|A\|_1$ and $\|A\|_\infty$ respectively to denote the element-wise $1$-norm and infinity norm: $\|A\|_1=\sum_{i,j}|a_{ij}|$, $\|A\|_\infty = \max_{i,j}|a_{ij}|$. Moreover, we use $\smalliii{A}_{*}$, $\smalliii{A}_F$ and $\smalliii{A}_{\text{op}}$ to denote the nuclear, Frobenius and operator norms of $A$, respectively. For two matrices $A$ and $B$ of commensurate dimensions, denote their inner product by $\llangle A, B\rrangle = \tr(A'B)$. Finally, we write $A\succsim B$ if there exists some absolute constant $c$ that is independent of the model parameters such that $A\geq cB$.

\section{Problem Formulation.}\label{sec:ProblemFormulation}

To convey the main ideas and the key technical contributions, we 
consider a recursive linear dynamical system comprising of two blocks of variables, whose structure is given~by:
\begin{equation}\label{eqn:model0}
\begin{split}
X_t & = AX_{t-1} + U_t, \\
Z_t & = BX_{t-1} + CZ_{t-1} + V_t,
\end{split}
\end{equation}
where $X_t\in\mathbb{R}^{p_1}, Z_t\in\mathbb{R}^{p_2}$ are the variables in groups 1 and 2, respectively. The temporal intra-block dependence is captured by transition matrices $A$ and $C$, while the inter-block dependence by $B$. Noise processes $\{U_t\}$ and $\{V_t\}$, respectively, capture
additional contemporaneous intra-block dependence of $X_t$ and $Z_t$, after conditioning on their respective past values. Further, we assume 
that $U_t$ and $V_t$ follow mean zero Gaussian distributions with covariance matrices given by $\Sigma_u$ and $\Sigma_v$, i.e.,
\begin{equation*}
U_t \sim \mathcal{N}(0, \Sigma_u), \quad \text{and} \quad V_t\sim\mathcal{N}(0,\Sigma_v).
\end{equation*}
With the above model setup, the parameters of interest are transition matrices $A\in\R^{p_1\times p_1}$, $B\in\R^{p_2\times p_1}$ and $C\in\R^{p_2\times p_2}$, as well as the covariances $\Sigma_u, \Sigma_v$.  In high-dimensional settings, different combinations of structural assumptions can be imposed on these transition matrices to enable their estimation from limited time series data.
In particular, the intra-block transition matrices $A$ and $C$ are sparse, while the inter-block matrix $B$ can be either sparse or low rank.
Note that the block of $X_t$ variables acts as an {\em exogenous} effect to the evolution of the $Z_t$ block \citep[e.g.,][]{cushman1997identifying,nicolson2016VARX-L}. Further, we assume $\Omega_u:=\Sigma_u^{-1}$ and $\Omega_v:=\Sigma_v^{-1}$ are sparse.

\begin{remark}
	For ease of exposition, we posit a $\VAR(1)$ modeling structure. Extensions to general multi-block structures akin to the one depicted in Figure \ref{fig:diagram} and $\VAR(d)$ specifications are rather straightforward and briefly discussed in Section~\ref{sec:Discussion}. 
\end{remark}

The triangular (recursive) structure of the system enables a certain degree of separability between $X_t$ and $Z_t$.  In the posited model, $X_t$ is a stand-alone $\VAR(1)$ process, and the time series in block $Z_t$ is ``Granger-caused"' by that in block $X_t$, but not vice versa. The second equation in~\eqref{eqn:model0}, as mentioned in the introductory section, also corresponds to the so-called ``VAR-X" model in the econometrics literature \citep[e.g.,][]{sims1980macroeconomics,bianchi2010copula,pesaran2015time}, that extends the standard VAR model to include influences from lagged values of {\em exogenous} variables. Consider the joint process $W_t=(X_t',Z_t')'$, it corresponds to a $\VAR(1)$ model whose transition matrix $G$ has a block triangular form:
\begin{equation}\label{eqn:model1}
W_t = G W_{t-1} + \varepsilon_t, \qquad \text{where}\quad G:=\begin{bmatrix}
A & O \\ B & C
\end{bmatrix}, \quad \varepsilon_t = \begin{bmatrix}
U_t  \\ V_t
\end{bmatrix}.
\end{equation}
The model in~\eqref{eqn:model1} can also be viewed from a Structural Equations Modeling viewpoint involving time series data, and also has a Moving Average representation corresponding to a structural VAR representation with Granger causal ordering \citep{lutkepohl2005new}. As mentioned in the introductory section, the focus of this paper is model parameter estimation under high-dimensional scaling, rather than their cause and effect relationship. For a comprehensive discourse of causality issues for VAR models, we refer the reader to \citet{granger1969investigating,lutkepohl2005new}, and references therein. 

\medskip
Next, we introduce the notion of stability and spectrum with respect to the system. 
\paragraph{\indent System Stability.} To ensure that the joint process $\{W_t\}$ is stable  \citep{lutkepohl2005new}, we require the spectral radius, denoted by $\rho(\cdot)$, of the transition matrix $G$ to be smaller than~1, which is guaranteed by requiring that 
$\rho(A)<1$ and $\rho(C)<1$, since 
\begin{equation*}
\left|  \lambda \mathrm{I}_{p_1\times p_2} - G \right| = \left| \begin{matrix} \lambda \mathrm{I}_{p_1} - A & O \\ - B & \lambda \mathrm{I}_{p_2} - C \end{matrix}\right| = |\lambda \mathrm{I}_{p_1} - A| |\lambda \mathrm{I}_{p_2} - C|,
\end{equation*}
implying that the set of eigenvalues of $G$ is the union of the sets of eigenvalues of $A$ and $C$, hence
\begin{equation*}
\rho(A)<1~,\rho(C)<1, \qquad \Rightarrow \quad \rho(G) = \max\{ \rho(A) ,\rho(C)\} < 1.
\end{equation*}
The latter relation implies that the stability of such a recursive system imposes spectrum constraints only on the diagonal blocks that govern the intra-block evolution, whereas the off-diagonal block that governs the inter-block interaction is left unrestricted.  
\paragraph{\indent Spectrum of the joint process.}  Throughout, we assume that the spectral density of $\{W_t\}$ exists, which then possesses a special structure as a result of the block triangular transition matrix $G$. Formally, we define the spectral density of $\{W_t\}$~as
\begin{equation*}
f_W(\theta) = \frac{1}{2\pi} \sum_{h=-\infty}^\infty \Gamma_W(h)e^{-ih\theta}, \qquad \theta\in[-\pi,\pi],
\end{equation*}
where $\Gamma_W(h):=\E W_tW_{t+h}'$. For two (generic) processes $\{X_t\}$ and $\{Z_t\}$, define their cross-covariance as $\Gamma_{X,Z}(h) = \E X_t Z_{t+h}'$ and $\Gamma_{Z,X}(h)=\E Z_tX_{t+h}'$. In general, $\Gamma_{X,Z}(h)\neq \Gamma_{Z,X}(h)$. The cross-spectra are defined~as:
\begin{equation*}
f_{X,Z}(\theta) := \frac{1}{2\pi}\sum_{h=-\infty}^\infty \Gamma_{X,Z}(h)e^{-ih\theta}, \quad \text{and}\quad f_{Z,X}(\theta) :=  \frac{1}{2\pi}\sum_{h=-\infty}^\infty \Gamma_{Z,X}(h)e^{-ih\theta}, ~~\theta\in[-\pi,\pi].
\end{equation*} 
For the model given in~\eqref{eqn:model1}, by writing out the dynamics of $Z_t$, the cross-spectra between $X_t$ and $Z_t$ are given~by
\begin{equation}\label{eqn:fXZ}
f_{X,Z}(\theta) (\mathrm{I}_{p_2}-C'e^{-i\theta}) = f_X(\theta)B'e^{-i\theta}, \qquad \text{and} \qquad (\mathrm{I}_{p_2}-Ce^{i\theta})f_{Z,X}(\theta) = Be^{i\theta} f_X(\theta).
\end{equation}
Similarly, we have 
\begin{equation}\label{eqn:fZ}
(\mathrm{I}_{p_2}-Ce^{i\theta})f_Z(\theta) = Be^{i\theta} f_{X,Z}(\theta) + f_{V,Z}(\theta).
\end{equation}
Combining~\eqref{eqn:fXZ} and~\eqref{eqn:fZ}, and after some algebra, the spectrum of the joint process $W_t$ is given~by
\begin{equation}\label{eqn:fW}
\small
f_W(\theta) = \big[H_1(e^{i\theta})\big]^{-1}\Big(\big[H_2(e^{i\theta})\big] \big[ \mathbf{1}_{2\times 2}\otimes f_X(\theta)\big] \big[H_2(e^{-i\theta})\big]^\top + \begin{bmatrix}
O & O \\ O & \Sigma_v
\end{bmatrix} \Big) \big[ H_1(e^{-i\theta})\big]^{-\top},
\end{equation}
where $\mathbf{1}_{2\times 2}$ is a $2\times 2$ matrix with all entries being 1, and 
\begin{equation*}
H_1(x) := \begin{bmatrix}
\mathrm{I}_{p_1} & O \\ O & \mathrm{I}_{p_2}-Cx
\end{bmatrix}\in\mathbb{R}^{(p_1+p_2)\times (p_1+p_2)}, ~~~H_2(x) := \begin{bmatrix}
\mathrm{I}_{p_1} & O \\ O & Bx
\end{bmatrix}\in\mathbb{R}^{(p_1+p_2)\times (2p_1)}.
\end{equation*}
Equation~\eqref{eqn:fW} implies that the spectrum of the joint process $\{W_t\}$ can be decomposed into the sum of two parts: the first, is a function of $f_X(\theta)$, while the second part involves the embedded idiosyncratic error process $\{V_t\}$ of $\{Z_t\}$, which only affects the right-bottom block of the spectrum. Note that since $\{W_t\}$ is a $\VAR(1)$ process, its matrix-valued characteristic polynomial is given by 
\begin{equation*}
\mathcal{G}(\theta) := \mathrm{I}_{(p_1+p_2)} - G\theta, 
\end{equation*}
and its spectral density also takes the following form \citep[c.f.][]{hannan1970multiple,anderson1978maximum}:
\begin{equation*}
f_W(\theta) = \frac{1}{2\pi} \big[ \mathcal{G}^{-1}(e^{i\theta}) \big] \Sigma_\varepsilon \big[ \mathcal{G}^{-1}(e^{i\theta}) \big]^*,
\end{equation*}
with
\begin{equation*}
\mathcal{G}(x) = \begin{bmatrix}
\mathrm{I}_{p_1} - Ax &  O \\ -Bx & \mathrm{I}_{p_2} - Cx
\end{bmatrix}, \qquad \Sigma_\varepsilon = \begin{bmatrix}
\Sigma_u & O \\ O & \Sigma_v
\end{bmatrix},
\end{equation*}
and $\mathcal{G}^*$ being the conjugate transpose. One can easily reach the same conclusion as in~\eqref{eqn:fW} by multiplying each term, followed by some algebraic manipulations. 

\subsection{Estimation.} \label{sec:estimation}
Next, we outline the algorithm for obtaining the ML estimates of the transition matrices $A,B$ and $C$ and inverse covariance matrices 
$\Sigma_u^{-1}$ and $\Sigma_v^{-1}$ from time series data. We allow for a potential high-dimensional setting, where the ambient dimensions $p_1$ and $p_2$ of the model exceed the total number of observations~$T$. 

Given centered times series data  $\{x_0,\cdots,x_T\}$ and $\{z_0,\cdots,z_T\}$, we use $\mathcal{X}^T$ and $\mathcal{Z}^T$ respectively, to denote the ``response" matrix from time $1$ to $T$, that is:
\begin{equation*}
\mathcal{X}^T = \begin{bmatrix} x_1 & x_2 & \dots & x_T \end{bmatrix}' 
\quad \text{and} \quad 
\mathcal{Z}^T = \begin{bmatrix} z_1 & z_2 & \dots & z_T \end{bmatrix}',
\end{equation*}
and use $\mathcal{X}$ and $\mathcal{Z}$ without the superscript to denote the ``design" matrix from time $0$ to $T-1$:
\begin{equation*}
\mathcal{X} = \begin{bmatrix} x_0 & x_1 & \dots & x_{T-1} \end{bmatrix}'
\quad \text{and} \quad 
\mathcal{Z} = \begin{bmatrix} z_0 & z_1 & \dots&  z_{T-1} \end{bmatrix}'.
\end{equation*}
We use $\mathcal{U}$ and $\mathcal{V}$ to denote the error matrices. To obtain estimates for the parameters of interest, we formulate optimization problems using a penalized log-likelihood function, with regularization terms corresponding to the imposed structural assumptions on the model parameters--sparsity and/or low-rankness. To solve the optimization problems, we employ block-coordinate descent algorithms, akin to those described in \citet{lin2016penalized}, to obtain the solution.

As previously mentioned, $\{X_t\}$ is not ``Granger-caused" by $Z_t$ and hence it is a stand-alone $\VAR(1)$ process; this enables us to separately estimate the parameters governing the $X_t$ process ($A$ and $\Sigma^{-1}_u$) from those of the $Z_t$ process ($B$, $C$, and $\Sigma^{-1}_v$).

\paragraph{\indent Estimation of $A$ and $\Sigma_u^{-1}$.}  Conditional on the initial observation $x_0$, the likelihood of $\{x_t\}_{t=1}^T$ is given by:
\begin{equation*}
\begin{split}
L(x_T,x_{T-1},\cdots,x_1|x_0) & = L(x_T|x_{T-1},\cdots,x_0)L(x_{T-1}|x_{T-2},\cdots,x_0)\cdots L(x_1|x_0) \\
& = L(x_T|x_{T-1})L(x_{T-1}|x_{T-2})\cdots L(x_1|x_0),
\end{split}
\end{equation*}
where the second equality follows from the Markov property of the process. The log-likelihood function is given~by:
\begin{equation*}
\ell(A,\Sigma_u^{-1}) = \frac{T}{2}\log\det (\Sigma_u^{-1}) -\frac{1}{2}\sum_{t=1}^T (x_t - Ax_{t-1})'\Sigma_u^{-1} (x_t-Ax_{t-1}) + \text{constant}.
\end{equation*}
Letting $\Omega_u:=\Sigma_u^{-1}$, then the penalized maximum likelihood estimator can be written as
\small
\begin{equation}\label{eqn:AOmega_u}
(\widehat{A},\widehat{\Omega}_u) = \argmin\limits_{\substack{A\in\mathbb{R}^{p_1\times p_2}\\\Omega_u\in\mathbb{S}^{++}_{p_1\times p_1} }} \Big\{\text{tr}\big[\Omega_u  (\mathcal{X}^T - \mathcal{X}A')'(\mathcal{X}^T - \mathcal{X}A')/T \big] -  \log\det \Omega_u 
+ \lambda_A\|A\|_1 + \rho_u \|\Omega_u\|_{1,\text{off}} \Big\}. 
\end{equation}
\normalsize
Algorithm~\ref{algo:1} describes the key steps for obtaining $\widehat{A}$ and $\widehat{\Omega}_u$. 

\small
%

\begin{algorithm}[ht]
	\caption{Computational procedure for estimating $A$ and $\Sigma_u^{-1}$.}\label{algo:1}
	
	\KwIn{Time series data $\{x_t\}_{t=1}^T$, tuning parameter $\lambda_A$ and $\rho_u$. }
	\BlankLine
	\textbf{Initialization:} Initialize with $\widehat{\Omega}_u^{(0)}=\mathrm{I}_{p_1}$, then 
	\hspace*{3cm}$\widehat{A}^{(0)} = \argmin\nolimits_A \big\{ \tfrac{1}{T}\vertiii{\mathcal{X}^T-\mathcal{X}A'}_F^2 + \lambda_A\|A\|_1 \big\}$\;
	\textbf{Iterate until convergence:}
	\hspace*{5mm}\begin{minipage}[t]{14cm}
		-- Update $\widehat{\Omega}_u^{(k)}$ by graphical Lasso \citep{friedman2008sparse} on the residuals with the plug-in estimate~$\widehat{A}^{(k)}$\;
		-- Update $\widehat{A}^{(k)}$ with the plug-in $\widehat{\Omega}_u^{(k-1)}$ and cyclically update each row with a Lasso penalty, 
		which solves
		\begin{equation}\label{eqn:optA}
		\min_A \big\{ \tfrac{1}{T}\text{tr}\big[\widehat{\Omega}^{(k-1)}_u (\mathcal{X}^T - \mathcal{X}A)'(\mathcal{X}^T - \mathcal{X}A)/T\big]  + \lambda_A \|A\|_1  \big\}.
		\end{equation}
	\end{minipage}
	\BlankLine
	\KwOut{Estimated sparse transition matrix $\widehat{A}$ and sparse inverse covariance matrix~$\widehat{\Omega}_u$.}
\end{algorithm}

%

\normalsize
Specifically, for fixed $\widehat{\Omega}_u$, each row $j=1,\dots,p_1$ of $\widehat{A}$ is cyclically updated~by:
\begin{equation*}
\widehat{A}^{[s+1]}_{j\cdot} = \argmin\limits_{\beta\in\mathbb{R}^{p_1}}\Big\{ \tfrac{\widehat{\omega}_u^{jj}}{T} \vertii{ \mathcal{X}^T_{\cdot j} + r_j^{[s+1]} -\mathcal{X}\beta }_2^2 + \lambda_A \|\beta\|_1 \Big\},
\end{equation*}
where 
\begin{equation*}
r^{[s+1]}_j = \tfrac{1}{\widehat{\omega}_u^{jj}} \Big[ \sum_{i=1}^{j-1}\widehat{\omega}_u^{ij} \big(\mathcal{X}^T_{\cdot j}-\mathcal{X}(\widehat{A}_{i\cdot}^{[s+1]})'\big)  + \sum_{i=j+1}^{p_1}\widehat{\omega}_u^{ij} \big(\mathcal{X}^T_{\cdot j}-\mathcal{X}(\widehat{A}_{i\cdot}^{[s]})'\big)       \Big].
\end{equation*}
Here $\widehat{\Omega}^{(k)}_u=\big[( \widehat{\omega}_u^{ij})^{(k)}\big]$ is the estimate from the previous iteration, and for notation convenience we drop the index $(k)$ for the outer iteration and use $[s]$ to denote the index for the inner iteration, for each round of cyclic update of the rows. 

\paragraph{\indent Estimation of $B$, $C$ and $\Sigma_v^{-1}$.}  Similarly, to obtain estimates of $B$, $C$ and $\Omega_v:=\Sigma_v^{-1}$, we formulate the optimization problem as follows:
\small
\begin{align}
(\widehat{B},\widehat{C},\widehat{\Omega}_v) = \argmin\limits_{\substack{B\in\mathbb{R}^{p_2\times p_1}, C\in\mathbb{R}^{p_2\times p_2}\\ \Omega_v\in\mathbb{S}^{++}_{p_2\times p_2}}} \Big\{ \text{tr} \big[ \Omega_v&  (\mathcal{Z}^T - \mathcal{X}B' - \mathcal{Z}C')'(\mathcal{Z}^T - \mathcal{X}B' - \mathcal{Z}C')/T \big] - \log\det \Omega_v  \nonumber \\
&  + \lambda_B \mathcal{R}(B) + \lambda_C \|C\|_1 + \rho_v\vertii{\Omega_v}_{1,\text{off}} \Big\}, \label{eqn:BCOmega_v}
\end{align}
\normalsize
where the regularizer $\mathcal{R}(B)=\|B\|_1$ if $B$ is assumed to be sparse, and $\mathcal{R}(B)=\smalliii{B}_*$ if $B$ is assumed to be low rank. Algorithm~\ref{algo:2} outlines the procedure for obtaining estimates $\widehat{B}$, $\widehat{C}$ and $\widehat{\Omega}_v$. Note that $\widehat{B}^{(k)}$ and $\widehat{C}^{(k)}$ need to be treated as a ``joint block" in the outer update and convergence of the ``joint block" is required before moving on to updating $\Omega_v$.  

\small
%
\begin{algorithm}[ht]
	\KwIn{Time series data $\{x_t\}_{t=1}^T$ and $\{z_t\}_{t=1}^T$, tuning parameters $\lambda_B$, $\lambda_C$,~$\rho_v$.}
	\textbf{Initialization:} Initialize with $\widehat{\Omega}_v^{(0)}=\mathrm{I}_{p_2}$, then 
	\begin{equation*}\vspace*{-5mm}
	(\widehat{B}^{(0)},\widehat{C}^{(0)}) = \argmin\limits_{(B,C)}\big\{\tfrac{1}{T} \vertiii{\mathcal{Z}^T - \mathcal{X}B' - \mathcal{Z}C'}_F^2 + \lambda_B\mathcal{R}(B) + \lambda_C \|C\|_1 \big\}.
	\end{equation*}
	\BlankLine
	\textbf{Iterate until convergence:}
	\begin{minipage}[t]{14cm}
		-- Update $\widehat{\Omega}_v^{(k)}$ by graphical Lasso on the residuals with the plug-in estimates $\widehat{B}^{(k)}$ and $\widehat{C}^{(k)}$\;
		-- For fixed $\widehat{\Omega}^{(k)}_v$, $(\widehat{B}^{(k+1)},\widehat{C}^{(k+1)})$ solves
		\begin{equation*}
		\min_{B,C}\big\{\tfrac{1}{T} \text{tr}\big[ \widehat{\Omega}^{(k)}_v(\mathcal{Z}^T - \mathcal{X}B' - \mathcal{Z}C')'(\mathcal{Z}^T - \mathcal{X}B' - \mathcal{Z}C')\big] + \lambda_B\mathcal{R}(B) + \lambda_C \|C\|_1 \big\}.
		\end{equation*}
		\hspace*{2mm}\begin{minipage}[t]{14cm}
			$\cdot$ Fix $\widehat{C}^{[s]}$, update $\widehat{B}^{[s+1]}$ by Lasso or singular value thresholding, which solves
			\begin{equation*}
			\min_B\big\{ \tfrac{1}{T} \text{tr}\big[ \widehat{\Omega}^{(k)}_v(\mathcal{Z}^T - \mathcal{X}B' - \mathcal{Z}\widehat{C}^{[s]'})'(\mathcal{Z}^T - \mathcal{X}B' - \mathcal{Z}\widehat{C}^{[s]'})\big] + \lambda_B\mathcal{R}(B)   \big\};
			\end{equation*}
			$\cdot$ Fix $\widehat{B}^{[s]}$, update $\widehat{C}^{[s]}$ by Lasso, which solves
			\begin{equation*}
			\min_C\big\{ \tfrac{1}{T} \text{tr}\big[ \widehat{\Omega}^{(k)}_v(\mathcal{Z}^T - \mathcal{X}\widehat{B}^{[s]'} - \mathcal{Z}C')'(\mathcal{Z}^T - \mathcal{X}\widehat{B}^{[s]'} - \mathcal{Z}C')\big] + \lambda_C\|C\|_1  \big\}.
			\end{equation*}
		\end{minipage}
	\end{minipage}
	\BlankLine
	\KwOut{Estimated transition matrices $\widehat{B}$, $\widehat{C}$ and sparse $\widehat{\Omega}_v$.}
	\caption{Computational procedure for estimating $B$, $C$ and $\Sigma_v^{-1}$.}\label{algo:2}
\end{algorithm}

%

\normalsize
In many real applications, $B$ is low rank and $C$ is sparse, while in other settings both are sparse. In the first case,  $X_t$ ``Granger-causes" $Z_t$ and the information can be compressed to a lower dimensional space spanned by a relative small number of bases compared to the dimension of the blocks, and $Z_t$ is autoregressive through a subset of its components. Next, we give details for updating $B$ and $C$ under this model specification. 

For fixed $\widehat{\Omega}^{(k)}_v$, with $B$ being low rank and $C$ sparse, the updated $\widehat{B}^{(k)}$ and $\widehat{C}^{(k)}$ satisfies
\small
\begin{equation*}
(\widehat{B}^{(k)},\widehat{C}^{(k)}) = \argmin\limits_{B,C}\left\{\tfrac{1}{T} \text{tr}\big[ \widehat{\Omega}^{(k-1)}_v(\mathcal{Z}^T - \mathcal{X}B' - \mathcal{Z}C')'(\mathcal{Z}^T - \mathcal{X}B' - \mathcal{Z}C')\big] + \lambda_B\vertiii{B}_* + \lambda_C \vertii{C}_1 \right\}.
\end{equation*}
\normalsize
To obtain the solution to the above optimization problem, $B$ and $C$ need to be updated alternately, and within the update of each block, an iterative algorithm is required. Here we drop the superscript $(k)$ that denotes the outer iterations, and use $[s]$ as the inner iteration index for the alternate update between $B$ and $C$, and use $\{t\}$ as the index for the within block iterative update. 
\begin{itemize}
	\item[--] Update $\widehat{B}^{[s+1]}$ for fixed $\widehat{C}^{[s]}$: instead of directly updating $B$, update $\widetilde{B}:=\widehat{\Omega}_v^{1/2}B$ with
	\begin{equation}\label{eqn:solve_for_B}
	\widetilde{B}^{\{t+1\}} = \mathcal{T}_{\alpha_{t+1}\lambda_B} \big( \widetilde{B}^{\{t\}} - \alpha_{t+1} \nabla g(\widetilde{B}^{\{t\}}) \big),
	\end{equation}
	where $\mathcal{T}_{\tau}(\cdot)$ is the singular value thresholding operator with thresholding level~$\tau$, 
	\begin{equation*}
	g(\widetilde{B}) = \tfrac{1}{T}\text{tr}\big[ \widetilde{B}'\widetilde{B} \mathcal{X}'\mathcal{X} - 2 \mathcal{X}' (\mathcal{Z}^T-\mathcal{Z}\widehat{C}^{[s]}) \widehat{\Omega}_v^{1/2}\widetilde{B}\big], 
	\end{equation*}
	and
	\begin{equation*}
	\nabla g(\widetilde{B}) = \tfrac{2}{T} \big[ \widetilde{B}\mathcal{X}'\mathcal{X} - \widehat{\Omega}_v^{1/2}(\mathcal{Z}^T-\mathcal{Z}\widehat{C}^{[s]'})' \mathcal{X}  \big].
	\end{equation*}
	Denote the convergent solution by $\widetilde{B}^{\{\infty\}}$, and $\widehat{B}^{[s+1]}=\widehat{\Omega}_v^{-1/2}\widetilde{B}^{\{\infty\}}$.
	\item[--] Update $\widehat{C}^{[s]}$ for fixed $\widehat{B}^{[s]}$: each row $j=1,\dots,p_2$ of $\widehat{C}^{[s]}$ is cyclically updated~by
	\begin{equation*}
	\widehat{C}^{\{t+1\}}_{j\cdot} = \argmin\limits_{\beta\in\mathbb{R}^{p_2}}\Big\{ \tfrac{\widehat{\omega}_v^{jj}}{T} \vertii{ (\mathcal{Z}^T - \mathcal{X}\widehat{B}^{[s]'})_{\cdot j} + r_j^{\{t+1\}} -\mathcal{Z}\beta }_2^2 + \lambda_C \|\beta\|_1 \Big\},
	\end{equation*}
	where 
	\begin{equation*}
	r^{\{t+1\}}_j = \tfrac{1}{\widehat{\omega}_v^{jj}} \Big[ \sum_{i=1}^{j-1}\widehat{\omega}_v^{ij} \big[(\mathcal{Z}^T - \mathcal{X}\widehat{B}^{[s]'})_{\cdot j}-\mathcal{Z}(\widehat{C}_{i\cdot}^{\{t+1\}})'\big]  + \sum_{i=j+1}^{p_2}\widehat{\omega}_v^{ij} \big[ (\mathcal{Z}^T - \mathcal{X}\widehat{B}^{[s]'})_{\cdot j}-\mathcal{Z}(\widehat{C}_{i\cdot}^{\{t\}})'\big]       \Big],
	\end{equation*}
	and $\widehat{\omega}_v^{ij}$ are entries of $\widehat{\Omega}_v$ coming from the previous outer iteration.
\end{itemize}
\medskip
Although based on the outlined  procedure, a number of iterative steps are required to obtain the final estimate, we have empirically observed that the
number of iterations between the $B, C$ and $\Omega_v$ blocks is usually rather small. Specifically, based on large number of simulation settings (selected
ones presented in Section~\ref{sec:simulation}), for fixed $\widehat{\Omega}^{(k)}_v$, the alternate update for $B$ and $C$ usually converges within 20 iterations, while the update involving $(B,C)$ and $\Omega_v$ takes less than 10 iterations. 

In case, $B$ is sparse, it can be updated by Lasso regression as outlined in Algorithm 2 (details omitted).

\begin{remark}
	In the low dimensional setting where the Gram matrix $\mathcal{X}'\mathcal{X}$ is invertible, the update of $B$ when $\mathcal{R}(B)=\vertiii{B}_*$ can be obtained by a one-shot SVD and singular value thresholding; that is, we first obtain the generalized least squares estimator, then threshold the singular values at a certain level. On the other hand, in the high dimensional setting, an iterative algorithm is required. Note that the singular value thresholding algorithm corresponds to a proximal gradient algorithm, and thus a number of acceleration schemes are available \citep[see][]{nesterov1983method,nesterov1988approach,nesterov2005smooth,nesterov2007gradient}, whose  theoretical properties have been thoroughly investigated in \citet{tseng2008proximal}. We recommend using the acceleration scheme proposed by \citet{beck2009fast}, in which the ``momentum" is carried over to the next iteration, as an extension of \citet{nesterov1983method} to composite functions. Instead of updating $\widetilde{B}$ with \eqref{eqn:solve_for_B}, an ``accelerated" update within the SVT step is given~by:
	\begin{equation*}
	\begin{split}
	y &= \widetilde{B}^{\{t\}} - \frac{t-1}{t+2} \big( \widetilde{B}^{\{t\}} - \widetilde{B}^{\{t-1\}}\big), \\
	\widetilde{B}^{\{t+1\}} &= \mathcal{T}_{\alpha_{t+1}\lambda_B}\big(y-\alpha_{t+1}\nabla g(\widetilde{B}^{\{t\}})\big).
	\end{split}
	\end{equation*}
\end{remark}

Note that the objective function in~\eqref{eqn:AOmega_u} is not {\em jointly convex} in both parameters, but {\em biconvex}. Similarly in~\eqref{eqn:BCOmega_v}, the objective function is biconvex in $\big[(B,C),\Omega_v\big]$. Consequently, convergence to a stationary point is guaranteed, as long as estimates from all iterations lie within a ball around the true value of the parameters, with the radius of the ball upper bounded by a universal constant that only depends on model dimensions and sample size \citep[Theorem 4.1]{lin2016penalized}. This condition is satisfied upon the establishment of consistency properties of the estimates. 

To establish consistency properties of the estimates requires the existence of good initial values for the model parameters $(A, \Omega_u)$, and $(B, C, \Omega_v)$, respectively, in the sense that they are sufficiently close to the true parameters. For the $(A,\Omega_u)$ parameters, the results in \citet{basu2015estimation} guarantee that for random realizations of $\{X_t, E_t\}$, with sufficiently large sample size, the errors of $\widehat{A}^{(0)}$ and $\widehat{\Omega}_u^{(0)}$ are bounded with high probability, which provides us with good initialization values. Yet, additional handling of the bounds is required to ensure that estimates from subsequent iterations are also uniformly close to the true value (see Section~\ref{sec:ConsistencyProp} Theorems~\ref{thm:algo1}). A similar
property for $(\widehat{B}^{(0)}, \widehat{C}^{(0)}, \widehat{\Omega}_v^{(0)})$ and subsequent iterations is established in Section~\ref{sec:ConsistencyProp} Theorems~\ref{thm:algo2} (see also Theorem~\ref{thm:consistencyBC} in Appendix~\ref{appendix:theorems}).

\section{Theoretical Properties.} \label{sec:theory}

In this section, we investigate the theoretical properties of the penalized maximum likelihood estimation procedure proposed in Section~\ref{sec:ProblemFormulation}, with an emphasis on the error bounds for the obtained estimates. We focus on the model specification in which the inter-block transition matrix $B$ is {\em low rank}, which is of interest in many applied settings. Specifically, we consider the consistency properties of $\widehat{A}$ and $(\widehat{B},\widehat{C})$ that are solutions to the following two optimization problems:
\small
\begin{equation}\label{eqn:A-Omega}
(\widehat{A},\widehat{\Omega}_u) = \argmin\limits_{A,\Omega_u}\Big\{ \text{tr}\left[ \Omega_u(\mathcal{X}^T - \mathcal{X}A')'(\mathcal{X}^T - \mathcal{X}A')/T  \right] -\log\det\Omega_u  + \lambda_A \|A\|_1 + \rho_u\|\Omega_u\|_{1,\text{off}}\Big\},
\end{equation}
\normalsize and
\small
\begin{equation}\label{eqn:BC-Omega}
\begin{split}
(\widehat{B},\widehat{C},\widehat{\Omega}_v) = \argmin\limits_{B,C,\Omega_v} \Big\{ \text{tr} \big[ \Omega_v   (\mathcal{Z}^T - \mathcal{X}B' - \mathcal{Z}C')'(\mathcal{Z}^T - \mathcal{X}B' - \mathcal{Z}C')/T \big] - \log\det \Omega_v  \\
+ \lambda_B \vertiii{B}_* + \lambda_C \|C\|_1 + \rho_v\vertii{\Omega_v}_{1,\text{off}} \Big\}.
\end{split}
\end{equation}
\normalsize
The case of a sparse $B$ can be handled similarly to that of $A$ and/or $C$ with minor modifications (details shown in Appendix~\ref{appendix:sparseB}). 

\subsection{A road map for establishing the consistency results.} \label{sec:roadmap}

Next, we outline the main steps followed in establishing the theoretical properties for the model parameters.
Throughout, we denote with a superscript~``$\star$" the true value of the corresponding parameters. 

The following key concepts, widely used in high-dimensional regularized estimation problems, are needed in subsequent developments.

\begin{definition}[Restricted Strong Convexity (RSC)]\label{defn:RSC} For some generic operator $\mathfrak{X}:\mathbb{R}^{m_1\times m_2}\mapsto \mathbb{R}^{T\times m_1}$, it satisfies the RSC condition with respect to norm $\Phi$ with curvature $\alpha_{\text{RSC}}>0$ and tolerance $\tau > 0$ if 
	\begin{equation*}
	\frac{1}{2T}\vertiii{\mathfrak{X}(\Delta)}_F^2 \geq \alpha_{\text{RSC}} \vertiii{\Delta}_F^2 - \tau \Phi^2(\Delta), \qquad \text{for some }\Delta\in \mathbb{R}^{m_1\times m_2}. 
	\end{equation*} 
\end{definition}

Note that the choice of the norm $\Phi$ is context specific. For example, in sparse regression problems, $\Phi(\Delta)=\|\Delta\|_1$ corresponds to the element-wise $\ell_1$ norm of the matrix (or the usual vector $\ell_1$ norm for the vectorized version). The RSC condition becomes equivalent to the {\em restricted eigenvalue (RE) condition} \citep[see][and references therein]{loh2011high,basu2015estimation}. This is the case for the problem of
estimating transition matrix $A$. For estimating $B$ and $C$, define $\mathcal{Q}$ to be the weighted regularizer $\mathcal{Q}(B,C):=\vertiii{B}_*+\tfrac{\Lambda_C}{\lambda_B}\|C\|_1$, and the associated norm $\Phi$ in this setting is defined as
\begin{equation*}
\Phi(\Delta) := \inf\limits_{B_{\text{aug}}+C_{\text{aug}}=\Delta} \mathcal{Q}(B,C).
\end{equation*}

\begin{definition}[Diagonal dominance] A matrix $\Omega\in\mathbb{R}^{p\times p}$ is strictly diagonally dominant if
	\begin{equation*}
	|\Omega_{ii}| > \sum_{j\neq i} |\Omega_{ij}|, \quad \forall~~i=1,\cdots,p. 
	\end{equation*}
\end{definition}

\begin{definition}[Incoherence condition \citep{ravikumar2011high}]\label{def:incoherence} A matrix $\Omega\in\mathbb{R}^{p\times p}$ satisfies the incoherence condition if:
	\begin{equation*}
	\max\limits_{e\in (S_\Omega)^c}\| H_{eS_\Omega}(H_{S_\Omega S_\Omega})^{-1}\|_1 \leq 1-\xi, \quad \text{for some }\xi\in(0,1),
	\end{equation*}
	where $H_{S_\Omega S_\Omega}$ denotes the Hessian of the log-determinant barrier $\log\det\Omega$ restricted to the true edge set of $\Omega$ denoted by $S_\Omega$, and $H_{eS}$ is similarly defined. 
\end{definition}

The above two conditions are associated with the inverse covariance matrices $\Omega_u$ and $\Omega_v$. Specifically, the diagonal dominance condition is required for $\Omega_u^\star$ and $\Omega_v^\star$ as we build the consistency properties for $\widehat{A}$ and $(\widehat{B},\widehat{C})$ with the penalized maximum likelihood formulation. The incoherence condition is primarily required for establishing the consistency of $\widehat{\Omega}_u$ and~$\widehat{\Omega}_v$. 

We additionally introduce the upper and lower extremes of the spectrum, defined~as
\begin{equation*}
\mathcal{M}(f_X) := \mathop{\text{esssup}}\limits_{\theta\in[-\pi,\pi]} \Lambda_{\max}(f_X(\theta)) \qquad\text{and}\qquad  \mathfrak{m}(f_X) := \mathop{\text{essinf}}\limits_{\theta\in[-\pi,\pi]} \Lambda_{\min}(f_X(\theta)).
\end{equation*}
Analogously, the upper extreme for the cross-spectrum is given~by:
\begin{equation*}
\mathcal{M}(f_{X,Z}) := \mathop{\text{esssup}}\limits_{\theta\in[-\pi,\pi]}\sqrt{\Lambda_{\max}(f^*_{X,Z}(\theta) f_{X,Z}(\theta)) },
\end{equation*}
with $f^*_{X,Z}(\theta)$ being the conjugate transpose of $f_{X,Z}(\theta)$. With this definition, 
\begin{equation*}
\mathcal{M}(f_{X,Z}) = \mathcal{M}(f_{Z,X}).
\end{equation*}

Next, consider the solution to~\eqref{eqn:A-Omega} that is obtained by the alternate update of $A$ and $\Omega_u$. If $\Omega_u$ is held fixed, then $A$ solves~\eqref{eqn:Abar}, and we denote the solution by $\bar{A}$ and its corresponding vectorized version as $\bar{\beta}_A:=\text{vec}(\bar{A})$:
\begin{equation}\label{eqn:Abar}
\bar{\beta}_A := \argmin\limits_{\beta\in\mathbb{R}^{p_1^2}} \big\{ -2\beta' \gamma_X + \beta'\Gamma_X \beta + \lambda_A\|\beta\|_1 \big\},
\end{equation}
where
\begin{equation} \label{eqn:GammaA-gammaA}
\Gamma_X = \Omega_u \otimes \tfrac{\mathcal{X}'\mathcal{X}}{T}, \qquad \gamma_X = \tfrac{1}{T}\left(\Omega_u\otimes \mathcal{X}'\right) \text{vec}(\mathcal{X}^T).
\end{equation}
Using a similar notation, if $A$ is held fixed, then $\Omega_u$ solves~\eqref{eqn:Omegabar}:
\begin{equation}\label{eqn:Omegabar}
\bar{\Omega}_u := \argmin\limits_{\Theta\in\mathbb{S}^{++}_{p_1\times p_1}}\big\{ \log\det\Omega_u - \text{trace}\left( S_u\Omega_u \right) + \rho_u\|\Omega_u\|_{1,\text{off}} \big\},
\end{equation}
where 
\begin{equation}\label{eqn:SOmegau}
S_u = \tfrac{1}{T}(\mathcal{X}^T - \mathcal{X}A')'(\mathcal{X}^T - \mathcal{X}A'). 
\end{equation}
For {\em fixed} realizations of $\mathcal{X}$ and $\mathcal{U}$, by \cite{basu2015estimation}, the error bound of $\bar{\beta}_A$ relies on (1) $\Gamma_X$ satisfying the RSC condition defined above; and (2) the tuning parameter $\lambda_A$ is chosen in accordance with a deviation bound condition associated with $\|\mathcal{X}'\mathcal{U}\Omega_u/T\|_\infty$. By \citet{ravikumar2011high}, the error bound of $\bar{\Omega}_u$ relies on how well $S_u$ concentrates around $\Sigma_u^\star$, that is, $\|S_u-\Sigma_u^\star\|_\infty$. Specifically, for~\eqref{eqn:GammaA-gammaA} and~\eqref{eqn:SOmegau}, with $\Omega_u^\star$ and $A^\star$ plugged in respectively, for {\em random} realizations of $\mathcal{X}$ and $\mathcal{U}$, these conditions hold with high probability. In the actual implementation of the algorithm, however, quantities in~\eqref{eqn:GammaA-gammaA} and~\eqref{eqn:SOmegau} are substituted by estimates so that at iteration $k$, $\widehat{\beta}_A^{(k)}$ and $\widehat{\Omega}_u^{(k)}$ solve
\begin{align*}
\widehat{\beta}^{(k)}_A & := \argmin\limits_{\beta\in\mathbb{R}^{p_1^2}} \big\{ -2\beta' \widehat{\gamma}^{(k)}_X + \beta'\widehat{\Gamma}^{(k)}_X \beta + \lambda_A\|\beta\|_1 \big\}, \\
\widehat{\Omega}^{(k)}_u & := \argmin\limits_{\Omega_u\in\mathbb{S}^{++}_{p_1\times p_1}}\big\{ \log\det\Omega_u - \text{trace}\big( \widehat{S}^{(k)}_u\Omega_u \big) + \rho_u\|\Omega_u\|_{1,\text{off}} \big\},
\end{align*}
where
\begin{equation*}
\widehat{\Gamma}^{(k)}_X = \widehat{\Omega}^{(k-1)}_u \otimes \tfrac{\mathcal{X}'\mathcal{X}}{T},\quad  \widehat{\gamma}^{(k)}_X = \tfrac{1}{T}\big( \widehat{\Omega}_u^{(k-1)}\otimes \mathcal{X} \big),  \quad \widehat{S}^{(k)}_u = \tfrac{1}{T}\big[\mathcal{X}^T - \mathcal{X}(\widehat{A}^{(k)})'\big]'\big[\mathcal{X}^T - \mathcal{X}(\widehat{A}^{(k)})'\big].
\end{equation*}
As a consequence, to establish the finite-sample bounds of $\widehat{A}$ and $\widehat{\Omega}_u$ given in~\eqref{eqn:A-Omega}, we need $\widehat{\Gamma}^{(k)}_X$ to satisfy the RSC condition, a bound on $\|\mathcal{X}'\mathcal{U}\widehat{\Omega}_u^{(k-1)}\|_\infty$ and a bound on $\|\widehat{S}_u^{(k)} - \Sigma_u^\star\|_\infty$ for all $k$. Toward this end, we prove that for random realizations of $\mathcal{X}$ and $\mathcal{U}$, with high probability, the RSC condition for $\widehat{\Gamma}_X^{(k)}$ and the universal bounds for $\|\mathcal{X}'\mathcal{U}\widehat{\Omega}_u^{(k-1)}\|_\infty$ and $\|\widehat{S}_u^{(k)} - \Sigma_u^\star\|_\infty$ hold for {\em all iterations} $k$, albeit the quantities of interest rely on estimates from the previous or current iterations. Consistency results of $\widehat{A}$ and $\widehat{\Omega}_u$ then readily follow.

Next, consider the solution to~\eqref{eqn:BC-Omega} that alternately updates $(B,C)$ and $\Omega_v$. As the regularization term involves both the nuclear norm penalty and the $\ell_1$ norm penalty, additional handling of the norms is required which leverages the idea of decomposable regularizers \citep{agarwal2012noisy}. Specifically, if $\Omega_v$ and $(B,C)$ are respectively held fixed, then
\begin{align*}
(\bar{B},\bar{C}) & := \argmin\limits_{B,C}\Big\{\tfrac{1}{T} \text{tr}\big[ \Omega_v(\mathcal{Z}^T - \mathcal{X}B' - \mathcal{Z}C')'(\mathcal{Z}^T - \mathcal{X}B' - \mathcal{Z}C')\big] + \lambda_B\vertiii{B}_* + \lambda_C \vertii{C}_1 \Big\},\\
\bar{\Omega}_v & := \argmin\limits_{\Omega_v}\big\{ \log\det\Omega_v - \text{trace}\big( S_v\Omega_v \big) + \rho_v\|\Omega_v\|_{1,\text{off}} \big\},
\end{align*}
where $S_v=\tfrac{1}{T}(Z^T-\mathcal{X}B'-\mathcal{Z}C')'(Z^T-\mathcal{X}B'-\mathcal{Z}C')$. If we let $\mathcal{W}:=[\mathcal{X},\mathcal{Z}]\in\R^{T\times (p_1+p_2)}$, and define the operator $\mathfrak{W}_{\Omega_v}:\mathbb{R}^{p_2\times (p_1+p_2)}\mapsto \mathbb{R}^{T\times p_2}$ induced jointly by $\mathcal{W}$ and $\Omega_v$ as 
\begin{equation}\label{eqn:W-operator}
\mathfrak{W}_{\Omega_v}(\Delta) := \mathcal{W}\Delta'\Omega_v^{1/2} ~~\text{for }\Delta\in\mathbb{R}^{p_2\times (p_1+p_2)},
\end{equation}
then $\bar{B}_{\text{aug}}:= [\bar{B},O_{p_2\times p_2}]$ and $\bar{C}_{\text{aug}}:=[O_{p_2\times p_1},\bar{C}]$ are equivalently given by
\small
\begin{equation}
(\bar{B}_{\text{aug}},\bar{C}_{\text{aug}}) = \argmin\limits_{B,C}\Big\{\tfrac{1}{T} \vertiii{ \mathcal{Z}^T\Omega_v^{1/2} - \mathfrak{W}_{\Omega_v}(B_{\text{aug}} + C_{\text{aug}})}_F^2 + \lambda_B\vertiii{B}_* + \lambda_C \vertii{C}_1 \Big\},
\end{equation}
\normalsize
where $B_{\text{aug}}:=[B,O_{p_2\times p_2}],C_{\text{aug}}:=[O_{p_2\times p_1},C]\in\mathbb{R}^{p_2\times (p_1+p_2)}$. Then, for fixed realizations of $\mathcal{Z}$, $\mathcal{X}$ and $\mathcal{V}$, with an extension of \citet{agarwal2012noisy} the error bound of $(\bar{B}_{\text{aug}},\bar{C}_{\text{aug}})$ relies on (1) the operator $\mathfrak{W}_{\Omega_v}$ satisfying the RSC condition; and (2) tuning parameters $\lambda_B$ and $\lambda_C$ are respectively chosen in accordance with the deviation bound conditions associated with
\begin{equation}\label{eqn:W-deviation}
\smalliii{\mathcal{W}'\mathcal{V}\Omega_v/T}_{\text{op}} \qquad \text{and} \qquad \smalliii{\mathcal{W}'\mathcal{V}\Omega_v/T}_{\infty}.
\end{equation}
The error bound of $\bar{\Omega}_v$ again relies on $\|S_v-\Sigma_v^\star\|_\infty$. In an analogous way, for the actual alternate update, 
\begin{equation*}
\begin{split}
(\widehat{B}^{(k)}_{\text{aug}},\widehat{C}^{(k)}_{\text{aug}}) & = \argmin\limits_{B,C}\Big\{\tfrac{1}{T} \vertiii{ \mathcal{Z}^T \big[\widehat{\Omega}^{(k-1)}_v\big]^{1/2} - \mathfrak{W}_{\widehat{\Omega}^{(k-1)}_v}(B_{\text{aug}} + C_{\text{aug}})}_F^2 + \lambda_B\vertiii{B}_* + \lambda_C \vertii{C}_1 \Big\}, \\
\widehat{\Omega}^{(k)}_v & := \argmin\limits_{\Omega_v}\big\{ \log\det\Omega_v - \text{trace}\big( \widehat{S}^{(k)}_v\Omega_v \big) + \rho_v\|\Omega_v\|_{1,\text{off}} \big\},
\end{split}
\end{equation*}
and the error bound of $(\widehat{B},\widehat{C},\widehat{\Omega}_v)$ defined in~\eqref{eqn:BC-Omega} depends on the properties of $\mathfrak{W}_{\widehat{\Omega}_v^{(k)}}$, $\smalliii{\mathcal{W}'\mathcal{V}\Omega^{(k)}_v/T}_{\text{op}}$ and $\smalliii{\mathcal{W}'\mathcal{V}\Omega_v^{(k)}/T}_{\infty}$ for all $k$. Specifically, when $\Omega_v$ and $(B,C)$ (in~\eqref{eqn:W-operator} and~\eqref{eqn:W-deviation}, resp.) are substituted by estimated quantities, we prove that the RSC condition and bounds hold with high probability for random realizations of $\mathcal{Z}$, $\mathcal{X}$ and $\mathcal{V}$, for {\em all iterations} $k$, which then establishes the consistency properties of $(\widehat{B},\widehat{C})$ and $\widehat{\Omega}_v$.

\subsection{Consistency results for the Maximum Likelihood estimators.}\label{sec:ConsistencyProp}

Theorems~\ref{thm:algo1} and \ref{thm:algo2} below give the error bounds for the estimators in~\eqref{eqn:A-Omega} and~\eqref{eqn:BC-Omega} obtained through Algorithms~\ref{algo:1} and~\ref{algo:2}, using random realizations coming from the stable VAR system defined in~\eqref{eqn:model0}. As previously mentioned, to establish error bounds for both the transition matrices and the inverse covariance matrix obtained from alternating updates, we need to take into account that the quantities associated with the RSC condition and the deviation bound condition are based on {\em estimated quantities} obtained from the previous iteration. On the other hand, the sources of randomness contained in the observed data are fixed, hence errors from observed data stop accumulating once all sources of randomness are considered after a few iterations, which govern both the leading term of the error bounds and the probability for the bounds to hold. 

Specifically, using the same notation as defined in Section~\ref{sec:roadmap}, we obtain the error bounds of the estimated transition matrices and inverse covariance matrices iteratively, building upon that for all iterations $k$: 
\begin{enumerate}[(1)]
	\itemsep-0.5em
	\item $\widehat{\Gamma}^{(k)}_X$ or the operator $\mathfrak{W}_{\widehat{\Omega}_v^{(k)}}$ satisfies the RSC condition;
	\item deviation bounds hold for 
	$\|\mathcal{X}'\mathcal{U}\widehat{\Omega}_u^{(k)}/T\|_\infty$, $\|\mathcal{W}'\mathcal{V}\widehat{\Omega}_v^{(k)}/T\|_\infty$, and $\smalliii{\mathcal{W}'\mathcal{V}\widehat{\Omega}_v^{(k)}/T}_\text{op}$;
	\item a good concentration given by $\|\widehat{S}_u^{(k)}-\Sigma_u^\star\|_\infty$ and $\|\widehat{S}_v^{(k)}-\Sigma_v^\star\|_\infty$.
\end{enumerate}
We keep track of how the bounds change in each iteration until convergence, by properly controlling the norms and track the rate of the error bound that depends on $p_1,p_2$ and $T$, and reach the conclusion that the error bounds {\em hold uniformly for all iterations}, for the estimates of
both the transition matrices $A, B$ and $C$ and the inverse covariance matrices $\Omega_u$ and $\Omega_v$. 

\begin{theorem}\label{thm:algo1}
	Consider the stable Gaussian VAR process defined in~\eqref{eqn:model0} in which $A^\star$ is assumed to be $s_A^\star$-sparse. Further, assume the following:
	\begin{itemize}
		\item[C.1] The incoherence condition holds for $\Omega_u^\star$.
		\item[C.2] $\Omega_u^\star$ is diagonally dominant. 
		\item[C.3] The maximum node degree of $\Omega_u^\star$ satisfies $d_{\Omega_u^\star}^{\max}=o(p_1)$.  
	\end{itemize} 
	Then, for random realizations of $\{X_t\}$ and $\{U_t\}$, and the sequence $\{\widehat{A}^{(k)},\widehat{\Omega}_u^{(k)}\}_{k}$ returned by Algorithm~\ref{algo:1} outlined in Section~\ref{sec:estimation},
	there exist constants $c_1,c_2,\tilde{c}_1,\tilde{c}_2>0,\tau>0$ such that for sample size $T\succsim \max\{(d_{\Omega_u^\star}^{\max})^2,s_A^\star\}\log p_1$, with probability at least $$1-c_1\exp(-c_2T)-\tilde{c}_1\exp(-\tilde{c}_2\log p_1)-\exp(-\tau \log p_1),$$
	the following hold for all $k\geq 1$ for some $C_0,C'_0>0$ that are functions of the upper and lower extremes $\mathcal{M}(f_X),\mathfrak{m}(f_X)$ 
	of the spectrum $f_X(\theta)$ and do not depend on $p_1,T$ or $k$:
	\begin{enumerate}[(i)]
		\item $\widehat{\Gamma}_X^{(k)}$ satisfies the RSC condition;
		\item $\|\mathcal{X}'\mathcal{U}\widehat{\Omega}_u^{(k)}/T\|_\infty \leq C_0\sqrt{\tfrac{\log p_1}{T}}$;
		\item $\| \widehat{S}_u^{(k)} - \Sigma_u^\star\|_\infty \leq C_0'\sqrt{\tfrac{\log p_1}{T}}$.
	\end{enumerate}
	As a consequence, the following bounds hold uniformly for all iterations $k\geq 1$:
	\begin{equation*}
	\vertiii{\widehat{A}^{(k)}-A^\star}_F = O\Big(\sqrt{\tfrac{ s_A^\star\log p_1}{T}}\Big), \qquad \vertiii{\widehat{\Omega}_u^{(k)}-\Omega_u^\star}_F = O\Big( \sqrt{\tfrac{(s^\star_{\Omega_u}+p_1)\log p_1}{T}}\Big).
	\end{equation*}
\end{theorem}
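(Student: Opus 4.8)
The plan is to decouple the randomness of the data from the deterministic recursion induced by the alternating updates. First I would fix a ``master'' event $\mathcal{E}$ determined entirely by the realized $\mathcal{X}$ and $\mathcal{U}$, on which three oracle-type statements hold simultaneously: (a) the Gram operator $\mathcal{X}'\mathcal{X}/T$ satisfies the restricted eigenvalue / RSC condition of Definition~\ref{defn:RSC}; (b) the deviation bound $\|\mathcal{X}'\mathcal{U}\Omega_u^\star/T\|_\infty \le C\sqrt{\log p_1/T}$ holds; and (c) the noise sample covariance $\mathcal{U}'\mathcal{U}/T$ concentrates around $\Sigma_u^\star$ in element-wise max norm at rate $\sqrt{\log p_1/T}$. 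Each of these is available for stable Gaussian VAR processes from \citet{basu2015estimation}, and their intersection carries exactly the probability $1-c_1\exp(-c_2 T)-\tilde c_1\exp(-\tilde c_2\log p_1)-\exp(-\tau\log p_1)$ stated in the theorem. The crucial point is that $\mathcal{E}$ involves only the fixed data, so the probability is paid once; all subsequent reasoning on $\mathcal{E}$ is deterministic and therefore holds for every iteration $k$ simultaneously.

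Next, I would run an induction on $k$ on the event $\mathcal{E}$, with inductive hypothesis that $\widehat{\Omega}_u^{(k-1)}$ is consistent, in particular that its eigenvalues are trapped in a fixed window $[c_\ast,C_\ast]$ around those of $\Omega_u^\star$ and $\smalliii{\widehat{\Omega}_u^{(k-1)}-\Omega_u^\star}_1$ is small; the base case is supplied by the consistency of the initializers of Algorithm~\ref{algo:1}. For claim (i), since $\widehat{\Gamma}_X^{(k)}=\widehat{\Omega}_u^{(k-1)}\otimes(\mathcal{X}'\mathcal{X}/T)$ and the eigenvalues of a Kronecker product are products of eigenvalues, the RE property of $\mathcal{X}'\mathcal{X}/T$ on $\mathcal{E}$ transfers to an RSC condition for $\widehat{\Gamma}_X^{(k)}$ with curvature scaled by $c_\ast$. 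For claim (ii), I would split $\mathcal{X}'\mathcal{U}\widehat{\Omega}_u^{(k)}/T = \mathcal{X}'\mathcal{U}\Omega_u^\star/T + \mathcal{X}'\mathcal{U}(\widehat{\Omega}_u^{(k)}-\Omega_u^\star)/T$ and control the second term by $\|\mathcal{X}'\mathcal{U}/T\|_\infty\,\smalliii{\widehat{\Omega}_u^{(k)}-\Omega_u^\star}_1$, which is of smaller order under C.3 and the inductive consistency of $\widehat{\Omega}_u$. For claim (iii), expanding the residual as $\mathcal{X}^T-\mathcal{X}(\widehat{A}^{(k)})'=\mathcal{U}+\mathcal{X}(A^\star-\widehat{A}^{(k)})'$ reduces $\|\widehat{S}_u^{(k)}-\Sigma_u^\star\|_\infty$ to the oracle concentration (c) plus cross and quadratic terms controlled by $\vertiii{\widehat{A}^{(k)}-A^\star}_F$. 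Feeding (i)--(ii) into the fixed-design analysis of \citet{basu2015estimation} yields the stated rate for $\widehat{A}^{(k)}$, and feeding (iii) into the graphical-Lasso analysis of \citet{ravikumar2011high} (using C.1 and C.2) yields the stated rate for $\widehat{\Omega}_u^{(k)}$; together these close the induction.

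The main obstacle is to prevent the errors from compounding across iterations, since the bound for $\widehat{A}^{(k)}$ is governed by quantities depending on $\widehat{\Omega}_u^{(k-1)}$ and vice versa. The resolution I would pursue is to show that the controlling constants --- the RSC curvature, the deviation constant $C_0$, and the concentration constant $C_0'$ --- depend only on the spectral extremes $\mathcal{M}(f_X),\mathfrak{m}(f_X)$ and on the fixed eigenvalue window $[c_\ast,C_\ast]$, and never on $k$. Because the per-iteration rates $O(\sqrt{s_A^\star\log p_1/T})$ and $O(\sqrt{(s^\star_{\Omega_u}+p_1)\log p_1/T})$ are $o(1)$ under $T\succsim\max\{(d_{\Omega_u^\star}^{\max})^2,s_A^\star\}\log p_1$, the inductive hypothesis (well-conditioning of $\widehat{\Omega}_u^{(k-1)}$ and smallness of $\smalliii{\widehat{\Omega}_u^{(k-1)}-\Omega_u^\star}_1$) is reproduced at step $k$ rather than degraded, so the recursion admits a uniform-in-$k$ invariant keeping the estimates within a fixed ball of the prescribed radius about the truth --- precisely the condition that also guarantees convergence to a stationary point via biconvexity \citep[Theorem~4.1]{lin2016penalized}. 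Since the intersection over the finitely many sources of randomness is taken only once, the three structural claims and hence both Frobenius bounds hold simultaneously for all $k\ge1$.
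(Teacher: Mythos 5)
Your proposal is correct and follows essentially the same route as the paper's proof: the paper likewise collects all sources of randomness into a fixed family of high-probability events ($\mathbf{E_1}$--$\mathbf{E_5}$) involving only $\mathcal{X}$ and $\mathcal{U}$, uses the same decompositions $\mathcal{X}'\mathcal{U}\widehat{\Omega}_u^{(k)}/T=\mathcal{X}'\mathcal{U}\Omega_u^\star/T+\mathcal{X}'\mathcal{U}(\widehat{\Omega}_u^{(k)}-\Omega_u^\star)/T$ and $\widehat{S}_u^{(k)}-\Sigma_u^\star=(S_u-\Sigma_u^\star)+$ cross $+$ quadratic terms, and invokes \citet{basu2015estimation} and \citet{ravikumar2011high} in the same places. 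The only cosmetic difference is that you phrase the propagation across iterations as an explicit induction with an eigenvalue-window invariant, while the paper carries out iterations $0$ and $1$ explicitly and then observes that no new randomness or larger-order error terms can enter thereafter.
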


It should be noted that the above result establishes the {\em consistency for the ML estimates} of the model presented in
\citet{basu2015estimation}.

\begin{theorem}\label{thm:algo2}
	Consider the stable Gaussian VAR system defined in~\eqref{eqn:model0} in which $B^\star$ is assumed  to be low rank with rank $r_B^\star$ and $C^\star$ is assumed to be $s_C^\star$-sparse. Further, assume the following 
	\begin{itemize}
		\item[C.1] The incoherence condition holds for $\Omega_v^\star$.
		\item[C.2] $\Omega_v^\star$ is diagonally dominant. 
		\item[C.3] The maximum node degree of $\Omega_v^\star$ satisfies $d_{\Omega_v^\star}^{\max} = o(p_2)$.   
	\end{itemize}
	Then, for random realizations of $\{X_t\}$, $\{Z_t\}$ and $\{V_t\}$, and the sequence $\{(\widehat{B}^{(k)},\widehat{C}^{(k)}),\widehat{\Omega}_v^{(k)}\}_k$ returned by Algorithm~\ref{algo:2} outlined in Section~\ref{sec:estimation}, 
	there exist constants $\{c_i,\tilde{c}_i\},i=(0,1,2)$ and $\tau>0$ such that for sample size $T\succsim (d_{\Omega_v^\star}^{\max})^2(p_1+2p_2)$, with probability at least $$1- c_0\exp\{-\tilde{c}_0(p_1+p_2)\} - c_1\exp\{-\tilde{c}_1(p_1+2p_2)\} - c_2\exp\{-\tilde{c}_2 \log[p_2(p_1+p_2)]\} - \exp\{-\tau \log p_2\},$$
	the following hold for all $k\geq 1$ for $C_0,C_0',C_0''>0$ that are functions of the upper and lower extremes $\mathcal{M}(f_W),\mathfrak{m}(f_W)$
	of the spectrum $f_W(\theta)$ and of the upper extreme $\mathcal{M}(f_{W,V})$ of the cross-spectrum $f_{W,V}(\theta)$ 
	and do not depend on $p_1,p_2$ or $T$: 
	\begin{enumerate}[(i)]
		\item $\widehat{\Gamma}_W^{(k)}$ satisfies the RSC condition;
		\item $\|\mathcal{W}'\mathcal{V}\widehat{\Omega}_v^{(k)}/T\|_\infty \leq C_0\sqrt{\tfrac{(p_1+p_2)+p_2}{T}}$ and $\smalliii{\mathcal{W}'\mathcal{V}\widehat{\Omega}_v^{(k)}/T}_\text{op} \leq C'_0\sqrt{\tfrac{(p_1+p_2)+p_2}{T}}$;
		\item $\| \widehat{S}_v^{(k)} - \Sigma_v^\star\|_\infty \leq C_0''\sqrt{\tfrac{(p_1+p_2)+p_2}{T}}$.
	\end{enumerate}	
	As a consequence, the following bounds hold uniformly for all iterations $k\geq 1$:
	\begin{equation*}
	\vertiii{\widehat{B}^{(k)}-B^\star}^2_F+\vertiii{\widehat{C}^{(k)}-C^\star}^2_F = O\Big(\tfrac{\max\{r_B^\star,s_C^\star\}(p_1+2p_2)}{T}\Big),~~
	\vertiii{\widehat{\Omega}_v^{(k)} - \Omega_v^\star}_F = O\Big(\sqrt{\tfrac{(s^\star_{\Omega_v}+p_2)(p_1+2p_2)}{T}}\Big).
	\end{equation*}
\end{theorem}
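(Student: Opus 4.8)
The plan is to decouple the analysis into two layers: a \emph{single-step} deterministic bound that treats the plugged-in inverse covariance and residual design as fixed, and a \emph{probabilistic} layer that certifies the requisite regularity conditions hold simultaneously for every iterate. For the single-step layer, I would fix $\Omega_v$ at a generic estimate $\widehat{\Omega}_v^{(k-1)}$ and view $(\bar{B}_{\text{aug}},\bar{C}_{\text{aug}})$ as the solution of a least-squares problem penalized by the superposition regularizer $\mathcal{Q}(B,C)=\vertiii{B}_*+\tfrac{\lambda_C}{\lambda_B}\|C\|_1$. Invoking the decomposable-regularizer machinery of \citet{agarwal2012noisy}, adapted to the operator $\mathfrak{W}_{\Omega_v}$ in~\eqref{eqn:W-operator}, the estimation error should obey
\begin{equation*}
\vertiii{\bar{B}-B^\star}_F^2 + \vertiii{\bar{C}-C^\star}_F^2 = O\Big( \tfrac{\lambda_B^2 r_B^\star + \lambda_C^2 s_C^\star}{\alpha_{\text{RSC}}^2}\Big),
\end{equation*}
provided (a) $\mathfrak{W}_{\Omega_v}$ satisfies the RSC condition with curvature $\alpha_{\text{RSC}}$ relative to $\Phi$, and (b) the tuning parameters dominate the dual norms of the gradient, i.e. $\lambda_B\succsim\smalliii{\mathcal{W}'\mathcal{V}\Omega_v/T}_{\text{op}}$ and $\lambda_C\succsim\|\mathcal{W}'\mathcal{V}\Omega_v/T\|_\infty$. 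The appearance of \emph{both} dual norms is the structural feature distinguishing this case from Theorem~\ref{thm:algo1}: the operator norm (dual to the nuclear norm) controls the low-rank block $B$, while the element-wise infinity norm (dual to $\|\cdot\|_1$) controls the sparse block $C$, and their distinct scalings produce the $(p_1+p_2)+p_2$ factor in the stated rates.

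Next I would establish the three probabilistic ingredients at the true parameters, working on the joint process $\{W_t\}$, which by~\eqref{eqn:model1} is itself a stable $\VAR(1)$ with spectral density $f_W$ given in~\eqref{eqn:fW}. The RSC condition for $\mathfrak{W}_{\Omega_v^\star}$ follows from concentration of $\mathcal{W}'\mathcal{W}/T$ around its population analogue, with curvature and tolerance governed by $\mathfrak{m}(f_W)$ and $\mathcal{M}(f_W)$ via \citet{basu2015estimation}; the $\Omega_v^{1/2}$ factor in the operator is absorbed using the diagonal-dominance condition C.2 together with the spectral extremes. The infinity-norm deviation bound on $\mathcal{W}'\mathcal{V}\Omega_v^\star/T$ is a Gaussian-concentration estimate driven by $\mathcal{M}(f_W)$ and $\mathcal{M}(f_{W,V})$, yielding the $\sqrt{((p_1+p_2)+p_2)/T}$ rate, whereas the operator-norm bound requires a covering argument over the sphere in $\R^{p_1+p_2}$, which is precisely the source of the sample-size requirement $T\succsim(d_{\Omega_v^\star}^{\max})^2(p_1+2p_2)$. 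Finally, $\|\widehat{S}_v^{(k)}-\Sigma_v^\star\|_\infty$ is controlled by expanding the residual $\mathcal{Z}^T-\mathcal{X}\widehat{B}'-\mathcal{Z}\widehat{C}'$ around the truth and combining the transition-matrix error bound with standard concentration of the sample covariance of $\{V_t\}$.

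The crux of the argument, and the step I expect to be the main obstacle, is lifting these single-step, fixed-plug-in guarantees to a statement holding \emph{uniformly over all iterations $k$}. The difficulty is the coupling: the RSC curvature and deviation bounds at iteration $k$ are expressed through $\widehat{\Omega}_v^{(k-1)}$, while $\widehat{S}_v^{(k)}$ depends on $(\widehat{B}^{(k)},\widehat{C}^{(k)})$, so one cannot apply the population results a single time. My plan is an induction on $k$ framed as an invariance (self-mapping) argument: assuming the iterate $(\widehat{B}^{(k-1)},\widehat{C}^{(k-1)},\widehat{\Omega}_v^{(k-1)})$ lies in a ball of fixed radius $\delta$ around the truth, I would show that (i) $\mathfrak{W}_{\widehat{\Omega}_v^{(k-1)}}$ still satisfies RSC with curvature at least $\alpha_{\text{RSC}}/2$, since $\vertiii{\widehat{\Omega}_v^{(k-1)}-\Omega_v^\star}_{\text{op}}$ is small relative to $\mathfrak{m}(f_W)$; (ii) the deviation terms computed with $\widehat{\Omega}_v^{(k-1)}$ inflate by at most a constant multiple of those at $\Omega_v^\star$; and hence (iii) the single-step bound returns the next iterate to the \emph{same} ball of radius $\delta$.

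Because all randomness resides in the fixed realizations $\mathcal{X},\mathcal{Z},\mathcal{V}$, the high-probability events certifying RSC and the deviation/concentration bounds at $\Omega_v^\star$ need only be intersected once; the perturbation analysis above then propagates them deterministically across iterations, so the errors stop accumulating after the randomness is exhausted. Intersecting these events and tracking constants through $\mathcal{M}(f_W),\mathfrak{m}(f_W),\mathcal{M}(f_{W,V})$ yields the stated probability and the uniform Frobenius-norm rates for $(\widehat{B}^{(k)},\widehat{C}^{(k)})$. The bound on $\widehat{\Omega}_v^{(k)}$ finally follows from \citet{ravikumar2011high}, applied with ingredient three as input, under the incoherence condition C.1 and the degree condition C.3.
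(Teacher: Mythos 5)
Your proposal follows essentially the same route as the paper: a deterministic single-step bound under the RSC condition and the two dual-norm deviation conditions (the paper's Theorem~\ref{thm:consistencyBC}), probabilistic verification of those conditions at the true parameters via the spectral-density machinery of \citet{basu2015estimation} (the paper's Lemmas~\ref{lemma:RSCW} and~\ref{lemma:operator-infinity}), and a propagation argument showing that once the finitely many high-probability events are intersected, the plug-in perturbations $\widehat{\Omega}_v^{(k)}-\Omega_v^\star$ contribute only lower-order terms, so the error rate and the probability stabilize after the first iteration. The one small inaccuracy is attributing the $(d_{\Omega_v^\star}^{\max})^2$ factor in the sample-size requirement to the covering argument for the operator-norm bound --- that covering argument only needs $T\succsim (p_1+2p_2)$, and the $(d_{\Omega_v^\star}^{\max})^2$ instead arises from controlling perturbation terms of the form $d_{\Omega_v^\star}^{\max}\,\|\widehat{\Omega}_v^{(k)}-\Omega_v^\star\|_\infty$ so that the RSC curvature and deviation bounds survive the plug-in --- but this does not affect the validity of your argument.
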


\begin{remark}
	It is worth pointing out that the initializers $\widehat{A}^{(0)}$ and $(\widehat{B}^{(0)},\widehat{C}^{(0)})$ are slightly different from those obtained in successive iterations, as they come from the penalized least square formulation where the inverse covariance matrices are temporarily assumed diagonal. Consistency results for these initializers under deterministic realizations are established in Theorems~\ref{thm:consistencyA} and~\ref{thm:consistencyBC} (see Appendix~\ref{appendix:theorems}), and the corresponding conditions are later verified for random realizations in Lemmas~\ref{lemma:REX} to~\ref{lemma:operator-infinity} (see Appendix~\ref{appendix:KeyLemmas}). These theorems and lemmas serve as the stepping stone toward the proofs of Theorems~\ref{thm:algo1} and~\ref{thm:algo2}. 
	
	Further, the constants $C_0, C'_0, C''_0$ reflect both the temporal dependence among $X_t$ and $Z_t$ blocks, as well as the cross-sectional dependence within and across the two blocks.
\end{remark}
\subsection{The effect of temporal and cross-dependence on the established bounds.} \label{sec:bound-discussion}

We conclude this section with a discussion on the error bounds of the estimators that provides additional insight into the impact of
temporal and cross dependence within and between the blocks; specifically, how the exact bounds depend on the underlying processes through their spectra when explicitly taking into consideration the triangular structure of the joint transition matrix. 

First, we introduce additional notations needed in subsequent technical developments. The definition of the spectral densities and the cross-spectrum are the same as previously defined in Section~\ref{sec:ProblemFormulation} and their upper and extremes are defined in Section~\ref{sec:roadmap}. For $\{X_t\}$ defined in~\eqref{eqn:model0}, let $\mathcal{A}(\theta) = \mathrm{I}_{p_1} - A\theta$ denote the characteristic matrix-valued polynomial of $\{X_t\}$ and $\mathcal{A}^*(\theta)$ denote its conjugate. We further define its upper and lower extremes by:
\begin{equation*}
\mu_{\max}(\mathcal{A}) = \max\limits_{|\theta|=1} \Lambda_{\max}\left( \mathcal{A}^*(\theta) \mathcal{A}(\theta) \right), ~~~\mu_{\min}(\mathcal{A}) = \min\limits_{|\theta|=1} \Lambda_{\min}\left( \mathcal{A}^*(\theta) \mathcal{A}(\theta) \right).
\end{equation*}
The same set of quantities for the joint process $\{W_t=(X_t',Z_t')'\}$ are analogously defined, that is, 
\begin{equation*}
\mathcal{G}(\theta) = \mathrm{I}_{p_1+p_2} - G\theta, ~~\mu_{\max}(\mathcal{G}) = \max\limits_{|\theta|=1} \Lambda_{\max}\left( \mathcal{G}^*(\theta) \mathcal{G}(\theta) \right), ~~\mu_{\min}(\mathcal{G}) = \min\limits_{|\theta|=1} \Lambda_{\min}\left( \mathcal{G}^*(\theta) \mathcal{G}(\theta) \right).
\end{equation*}
Using the result in Theorem~\ref{thm:algo2} as an example, we show how the error bound depends on the underlying processes $\{(X'_t,Z'_t)'\}$. Specifically, we note that the bounds for $(\widehat{B}^{(k)},\widehat{C}^{(k)})$ can be equivalently written as
\begin{equation*}
\smalliii{\widehat{B}^{(k)}-B^\star}^2_F+\smalliii{\widehat{C}^{(k)}-C^\star}^2_F \leq \bar{C}\Big(\tfrac{\max\{r_B^\star,s_C^\star\}(p_1+2p_2)}{T}\Big),
\end{equation*}
which holds for all $k$ and some constant $\bar{C}$ that does not depend on $p_1,p_2$ or $T$. Specifically, by Theorem~\ref{thm:consistencyBC}, Lemmas~\ref{lemma:RSCW} and~\ref{lemma:operator-infinity}, 
\begin{equation*}
C_0 \propto \left[\mathcal{M}(f_W) + \Lambda_{\max}(\Sigma_v) + \mathcal{M}(f_{W,V})\right]/\mathfrak{m}(f_W). 
\end{equation*}
This indicates that the exact error bound depends on $\mathfrak{m}(f_W), \mathcal{M}(f_W)$ and $\mathcal{M}(f_{W,V})$. Next, we provide bounds on these quantities. The joint process $W_t$ as we have noted in~\eqref{eqn:model1}, is a $\VAR(1)$ process with characteristic polynomial
$\mathcal{G}(\theta)$ and spectral density $f_W(\theta)$. The bounds for $\mathfrak{m}(f_W)$ and $\mathcal{M}(f_W)$ are given by \citet[][Proposition 2.1]{basu2015estimation}, that is,
\begin{equation}\label{eqn:bound:lu}
\mathfrak{m}(f_W) \geq  \frac{\min\{\Lambda_{\min}(\Sigma_u),\Lambda_{\min}(\Sigma_v)\}}{(2\pi)\mu_{\max}(\mathcal{G})} \quad \text{and} \quad \mathcal{M}(f_W) \leq  \frac{\max\{\Lambda_{\max}(\Sigma_u),\Lambda_{\max}(\Sigma_v)\}}{(2\pi)\mu_{\min}(\mathcal{G})}. 
\end{equation}
Consider the bound for $\mathcal{M}(f_{W,V})$. First, we note that $\{V_t\}$ is a sub-process of the joint error process $\{\varepsilon_t\}$, where $\varepsilon_t=(U_t',V_t')'$. Then, by Lemma~\ref{lemma:boundsubprocess}, 
\begin{equation*}
\mathcal{M}(f_{W,V}) \leq \mathcal{M}(f_{W,\varepsilon}) \leq \mathcal{M}(f_W)\mu_{\max}(\mathcal{G}),
\end{equation*}
where the second inequality follows from \citet[][Proof of Proposition 2.4]{basu2015estimation}.

What are left to be bounded are $\mu_{\min}(\mathcal{G})$ and $\mu_{\max}(\mathcal{G})$. By Proposition 2.2 in \citet{basu2015estimation}, these two quantities are bounded by:
\begin{equation}\label{eqn:bound-upper}
\mu_{\max}(\mathcal{G})  \leq \Big[ 1 + \frac{\vertiii{G}_\infty + \vertiii{G}_1}{2}\Big]^2\\
\end{equation}
and 
\begin{equation*}
\mu_{\min}(\mathcal{G}) \geq (1-\rho(G))^2 \cdot \vertiii{P}^{-2}_{\text{op}}\cdot\vertiii{P^{-1}}^{-2}_{\text{op}},
\end{equation*}
where $G = P\Lambda_GP^{-1}$ with $\Lambda_G$ being a diagonal matrix consisting of the eigenvalues of $G$. 
Since $\smalliii{P^{-1}}_{\text{op}}\geq \smalliii{P}_{\text{op}}^{-1}$, it follows that 
\begin{equation*}
\vertiii{P}^{-2}_{\text{op}}\cdot\vertiii{P^{-1}}^{-2}_{\text{op}} \geq \vertiii{P^{-1}}_{\text{op}}^2 \cdot \vertiii{P^{-1}}^{-2}_{\text{op}} = 1, 
\end{equation*}
and therefore
\begin{equation}\label{eqn:bound-lower}
\mu_{\min}(\mathcal{G})\geq \left(1-\max\{\rho(A),\rho(C)\}\right)^2.
\end{equation}

\begin{remark}
	{\em The impact of the system's lower-triangular  structure on the established bounds.} Consider the bounds in~\eqref{eqn:bound-upper} and~\eqref{eqn:bound-lower}. An upper bound of $\mu_{\max}(\mathcal{G})$ depends on $\vertiii{G}_\infty$ and $\vertiii{G}_1$, whereas a lower bound of $\mu_{\min}(\mathcal{G})$ involves only the spectral radius of $G$. Combined with~\eqref{eqn:bound:lu}, this suggests that the lower extreme of the spectral density is associated with the average of the maximum weighted in-degree and out-degree of the system, whereas the upper extreme is associated with the stability condition: the {\em less the system is intra- and inter-connected}, the {\em tighter the bound} for the lower extreme will be; similarly, the {\em more stable} (exhibits smaller temporal dependence) the system is, the {\em tighter the bound} for the upper extreme will be. Finally, we note that an upper bound for $(\vertiii{G}_\infty+ \vertiii{G}_1)$ is given by
	\begin{equation*}
	\max\{\vertiii{A}_\infty+\vertiii{B}_\infty ,\vertiii{C}_\infty\} + \max\{\vertiii{A}_1, \vertiii{B}_1 + \vertiii{C}_1 \}.
	\end{equation*}
	The presence of $\vertiii{B}_\infty$ and $\vertiii{B}_1$ depicts the role of the inter-connectedness between $\{X_t\}$ and $\{Z_t\}$ on the lower extreme of the spectrum, which is associated with the overall curvature of the joint process.  
	
	\noindent
	{\em The impact of the system's lower-triangular structure on the system capacity.} With $G$ being a lower-triangular matrix, we only require $\rho(A)<1$ and $\rho(C)<1$ to ensure the stability of the system. This enables the system to have ``larger capacity" (can accommodate more cross-dependence within each block), since the two sparse components $A$ and $C$ can exhibit larger average weighted in- and out-degrees compared with a system where $G$ does not possess such triangular structure. In the case where $G$ is a complete matrix, one deals with a $(p_1+p_2)$-dimensional
	VAR system and $\rho(G)<1$ is required to ensure its stability. As a consequence, the average weighted in- and out-degree requirements for each time series become more restrictive. 
\end{remark}


\section{Testing Group Granger-Causality.}\label{sec:testing}

In this section, we develop a procedure for testing the hypothesis $H_0: B=0$. Note that without the presence of $B$, the blocks $X_t$ and $Z_t$ in the model become {\em decoupled} and can be treated as two separate VAR models, whereas with a nonzero $B$, the group of variables in $Z_t$ is collectively ``Granger-caused" by those in $X_t$. Moreover, since we are testing whether or not the entire block of $B$ is zero, we do not need to rely on the exact distribution of its individual entries, but rather on the properly measured correlation between the responses and the covariates.  To facilitate presentation of the testing procedure, we illustrate the proposed framework via a simpler model setting with $Y_t = \Pi X_t + \epsilon_t$ and testing whether $\Pi=0$; subsequently, we translate the results to the actual setting of interest, namely, whether or not $B=0$ in the model $Z_t = BX_{t-1} + CZ_{t-1}+V_t$. 

The testing procedure focuses on the following sequence of tests for the rank of $B$:
\begin{equation}\label{eqn:hypothesis_lowrank}
H_0: \rank(B)\leq r, \qquad \text{for an arbitrary}\quad r< \min(p_1,p_2).
\end{equation}
Note that the hypothesis of interest,  $B=0$ corresponds to the special case with $r=0$. 
To test for it, we develop a procedure associated with {\em canonical correlations}, which leverages ideas present in the literature \citep[see][]{anderson1999asymptotic}.

As mentioned above, we consider a simpler setting similar to that in \citet{anderson1999asymptotic,anderson2002canonical}, given by 
\begin{equation*}
Y_t = \Pi X_{t} + \epsilon_t,
\end{equation*}
where $Y_t\in\R^{p_2}$, $X\in\mathbb{R}^{p_1}$ and $\epsilon_t$ is independent of $X_t$. At the population level, let 
\begin{equation*}
\E Y_t Y_t' = \Sigma_Y, \qquad \E X_tX_t' = \Sigma_X, \qquad \E Y_{t}X'_{t} = \Sigma_{YX}=\Sigma_{XY}'.
\end{equation*}
The population canonical correlations between $Y_t$ and $X_t$ are the roots of
\begin{equation*}
\left| \begin{matrix}
-\rho\Sigma_Y & \Sigma_{YX} \\ \Sigma_{XY} & -\rho\Sigma_X
\end{matrix}\right| = 0,
\end{equation*} 
i.e., the nonnegative solutions to 
\begin{equation}\label{eqn:cceqn}
|\Sigma_{YX}\Sigma_X^{-1}\Sigma_{XY} - \rho^2\Sigma_Y| = 0,
\end{equation}
with $\rho$ being the unknown. By the results in \citet{anderson1999asymptotic,anderson2002canonical}, the number of positive solutions to~\eqref{eqn:cceqn} is equal to the rank of $\Pi$, and indicates the ``degree of dependency" between processes $Y_t$ and $X_t$. This suggests that if $\rank(\Pi)\leq r<p$, we would expect $\sum_{k=r+1}^p \lambda_k$ to be small, where the $\lambda$'s solve the eigen-equation
\begin{equation*}
|S_{YX}S_X^{-1}S_{XY} - \lambda S_Y| =0, \qquad \text{with}\quad \lambda_1\geq \lambda_2\geq \cdots \geq \lambda_p,
\end{equation*}
and $S_X,S_{XY}$ and $S_Y$ are the sample counterparts corresponding to $\Sigma_X,\Sigma_{XY}$ and $\Sigma_Y$, respectively. \medbreak

With this background, we switch to our model setting given by 
\begin{equation}\label{eqn:model3}
Z_t = BX_{t-1} + CZ_{t-1} + V_t,
\end{equation}
where $V_t$ is assumed to be independent of $X_{t-1}$ and $Z_{t-1}$, $B$ encodes the canonical correlation between $Z_t$ and $X_{t-1}$, conditional on $Z_{t-1}$. We use the same notation as in Section~\ref{sec:theory}; that is,  let $\Gamma_X(h)=\E X_tX_{t+h}'$,  $\Gamma_Z(h)=\E Z_tZ_{t+h}'$, and $\Gamma_{X,Z}(h) = \E X_t Z_{t+h}'$, with $(h)$ omitted whenever $h=0$. At the population level, under the Gaussian assumption,
\begin{equation*}
\begin{bmatrix}
Z_t \\ X_{t-1} \\ Z_{t-1}
\end{bmatrix}\sim \Normal\left( \begin{bmatrix}
0 \\ 0 \\ 0
\end{bmatrix}, \begin{bmatrix}
\Gamma_Z & \Gamma_{X,Z}'(1) & \Gamma_Z(1) \\ \Gamma_{X,Z}(1) & \Gamma_X & \Gamma_{X,Z} \\ \Gamma_Z'(1) & \Gamma'_{X,Z} & \Gamma_Z
\end{bmatrix}\right) ,
\end{equation*}
which suggests that conditionally, 
\begin{equation*}
Z_t\big|Z_{t-1} \sim\Normal\left(\Gamma_{Z}(1)\Gamma_Z^{-1}Z_{t-1} , \Sigma_{00}\right)  \qquad \text{and} \qquad X_{t-1}\big| Z_{t-1}\sim\Normal\left(\Gamma_{X,Z}\Gamma_Z^{-1}Z_{t-1},\Sigma_{11}\right),
\end{equation*}
where
\begin{equation}\label{eqn:Sigma0011}
\Sigma_{00} := \Gamma_Z - \Gamma_Z(1) \Gamma_Z^{-1} \Gamma_Z'(1) \qquad \text{and} \qquad \Sigma_{11}:=\Gamma_X - \Gamma_{X,Z}\Gamma_Z^{-1}\Gamma_{X,Z}'.
\end{equation}
Then, we have that jointly
\begin{equation*}
\begin{bmatrix}
Z_t \\ X_{t-1}
\end{bmatrix}\Big| Z_{t-1} \sim \Normal\left(
\begin{bmatrix} 
\Gamma_{Z}(1) \\ \Gamma_{X,Z}
\end{bmatrix}\Gamma_Z^{-1}Z_{t-1} \;,\; 
\begin{bmatrix} 
\Gamma_Z & \Gamma_{XZ}'(1) \\ \Gamma_{XZ}(1) & \Gamma_Z 
\end{bmatrix} - \begin{bmatrix}
\Gamma_Z(1) \\ \Gamma_{XZ}
\end{bmatrix} \Gamma_Z^{-1} \begin{bmatrix}
\Gamma_Z'(1) & \Gamma_{ZX}
\end{bmatrix}
\right),
\end{equation*}
so the partial covariance matrix between $Z_t$ and $X_{t-1}$ conditional on $Z_{t-1}$ is given by
\begin{equation}\label{eqn:Sigma01}
\Sigma_{10} = \Sigma_{01}' := \Gamma_{X,Z}(1) - \Gamma_Z(1)\Gamma_Z^{-1}\Gamma_{X,Z}.
\end{equation}
The population canonical correlations between $Z_t$ and $X_{t-1}$ conditional on $Z_{t-1}$ are the non-negative roots~of
\begin{equation*}
\left|\Sigma_{01}\Sigma_{11}^{-1}\Sigma_{10}-\rho^2 \Sigma_{00}\right| = 0, 
\end{equation*}
and the number of positive solutions corresponds to the rank of $B$; see \citet{anderson1951estimating} for a discussion in which the author is
interested in estimating and testing linear restrictions on regression coefficients. Therefore, to test $\rank(B)\leq r$, it is appropriate to examine the behavior~of $\Psi_r : = \sum_{k=r+1}^{\min(p_1,p_2)} \phi_k$, where $\phi$'s are ordered non-increasing solutions~to
\begin{equation}\label{eqn:eigeqn}
|S_{01}S_{11}^{-1}S_{10}-\phi S_{00}| = 0,
\end{equation}
and $S_{01}$, $S_{11}$ and $S_{00}$ are the empirical surrogates for the population quantities $\Sigma_{01},\Sigma_{11}$ and $\Sigma_{00}$. 
For subsequent developments, we make the very mild assumption that $p_1<T$ and $p_2<T$ so that $\mathcal{Z}'\mathcal{Z}$ is invertible.

Proposition~\ref{prop:testing-rank} gives the tail behavior of the eigenvalues and Corollary~\ref{cor:granger-causal-test} gives the testing procedure for block ``Granger-causality" as a direct consequence.

\medbreak
\begin{proposition}\label{prop:testing-rank}
	Consider the model setup given in~\eqref{eqn:model3}, where $B\in\R^{p_2\times p_1}$. Further, assume all positive eigenvalues $\mu$ of the following eigen-equation are of algebraic multiplicity one:
	\begin{equation}\label{eqn:popeigen}
	\left| \Sigma_{01}\Sigma_{11}^{-1}\Sigma_{10} - \mu \Sigma_{00}\right| = 0,
	\end{equation}
	where $\Sigma_{00}, \Sigma_{11}$ and $\Sigma_{01}$ are given in~\eqref{eqn:Sigma0011} and~\eqref{eqn:Sigma01}. The test statistic for testing 
	\begin{equation*}
	H_0: \rank(B)\leq r, \qquad \text{for an arbitrary}\quad r < \min(p_1,p_2),
	\end{equation*}  
	is given by 
	\begin{equation*}
	\Psi_r : = \sum_{k=r+1}^{\min(p_1,p_2)} \phi_k,
	\end{equation*}
	where $\phi_k$'s are ordered decreasing solutions to the eigen-equation $|S_{01}S_{11}^{-1}S_{10}-\phi S_{00}| = 0$ where
	\begin{equation*}
	S_{11} = \tfrac{1}{T}\mathcal{X}'(I-P_z)\mathcal{X},~~S_{00} = \tfrac{1}{T} \left(\mathcal{Z}^T\right)' (I-P_z) \left(\mathcal{Z}^T\right),~~S_{10} = S_{01}' = \tfrac{1}{T} \mathcal{X}'(I-P_z) \left(\mathcal{Z}^T\right),
	\end{equation*}
	and $P_z = \mathcal{Z}(\mathcal{Z}'\mathcal{Z})^{-1}\mathcal{Z}'$. Moreover, the limiting behavior of $\Psi_r$ is given by
	\begin{equation*}
	T\Psi_r \sim \chi^2_{(p_1-r)(p_2-r)}.
	\end{equation*}
\end{proposition}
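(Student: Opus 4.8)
The plan is to identify $\Psi_r$ as the sum of the $\min(p_1,p_2)-r$ smallest squared sample (partial) canonical correlations, to show that under $H_0$ the matching population correlations vanish, and then to pass to the limiting law of these small eigenvalues via a perturbation argument that recasts them as the squared Frobenius norm of a Gaussian matrix. First I would apply the Frisch--Waugh--Lovell reduction: with $R_0 := (I-P_z)\mathcal{Z}^T$ and $R_1 := (I-P_z)\mathcal{X}$ denoting the residuals after partialling out $Z_{t-1}$, we have $S_{00}=R_0'R_0/T$, $S_{11}=R_1'R_1/T$ and $S_{10}=R_1'R_0/T$, so the $\phi_k$ solving \eqref{eqn:eigeqn} are the eigenvalues of $S_{00}^{-1}S_{01}S_{11}^{-1}S_{10}$, equivalently the squared singular values $\sigma_k^2(\hat M)$ of the whitened cross-moment $\hat M := S_{00}^{-1/2}S_{01}S_{11}^{-1/2}$. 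Stability of the system in \eqref{eqn:model3} makes the joint process stationary and ergodic, so $S_{ij}\to\Sigma_{ij}$ in probability and $\hat M\to M:=\Sigma_{00}^{-1/2}\Sigma_{01}\Sigma_{11}^{-1/2}$, whose singular values are the population partial canonical correlations.

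The structural step is to feed the model into the residuals. Since $CZ_{t-1}$ lies asymptotically in the column space of $\mathcal{Z}$ and is annihilated by $I-P_z$, while $V_t$ is independent of $Z_{t-1}$, partialling out gives $R_0 = R_1 B' + \tilde{\mathcal{V}}$ with $\tilde{\mathcal{V}}:=(I-P_z)\mathcal{V}$; hence $\Sigma_{10}=\Sigma_{11}B'$, $\Sigma_{01}=B\Sigma_{11}$, and $M=\Sigma_{00}^{-1/2}B\Sigma_{11}^{1/2}$, so that $\rank(M)=\rank(B)$. Two consequences follow. Under $H_0:\rank(B)\leq r$ the smallest $\min(p_1,p_2)-r$ singular values of $M$ are zero; and using the conditional-variance identity $\Sigma_{00}=\Sigma_v+B\Sigma_{11}B'$ (the Schur complement in \eqref{eqn:Sigma0011}) one obtains the clean relation $MM'=\mathrm{I}_{p_2}-\Sigma_{00}^{-1/2}\Sigma_v\Sigma_{00}^{-1/2}$, so that the left null space of $M$ is exactly the eigenspace on which $\Sigma_{00}^{-1/2}\Sigma_v\Sigma_{00}^{-1/2}$ equals the identity. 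This identity is what forces the asymptotic noise covariance to collapse to the identity on the relevant block.

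I would then establish a central limit theorem for the noise-driven remainder. The only stochastic term surviving on the null directions of $\hat M-M$ is $\Sigma_{00}^{-1/2}(\tilde{\mathcal{V}}'R_1/T)\Sigma_{11}^{-1/2}$; because $V_t$ is independent of $X_{t-1},Z_{t-1}$, the leading term $\sum_t V_tX_{t-1}'$ of $\tilde{\mathcal V}'R_1$ is a martingale-difference sum, and a CLT for stable VAR processes yields $\sqrt{T}(\hat M-M)\Rightarrow\Delta$, a Gaussian matrix with $\Cov(\mathrm{vec}(\sqrt T\,\tilde{\mathcal V}'R_1/T))\to\Sigma_v\otimes\Sigma_{11}$. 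Writing the SVD of $M$ and letting $\Delta_{22}$ be the compression of $\Delta$ onto the left/right null subspaces of dimensions $p_2-r$ and $p_1-r$, the whitened covariance restricted to this block equals $(\Sigma_{00}^{-1/2}\Sigma_v\Sigma_{00}^{-1/2})\otimes\mathrm{I}$, which by the identity above collapses to $\mathrm{I}\otimes\mathrm{I}$, so $\Delta_{22}$ has asymptotically independent $N(0,1)$ entries. A degenerate singular-value perturbation expansion at the zero singular value of multiplicity $\min(p_1,p_2)-r$ gives $\sigma_k(\hat M)=T^{-1/2}\sigma_{k-r}(\Delta_{22})+o_p(T^{-1/2})$ for $k>r$, whence $T\Psi_r=T\sum_{k>r}\sigma_k^2(\hat M)=\vertiii{\Delta_{22}}_F^2+o_p(1)\sim\chi^2_{(p_1-r)(p_2-r)}$.

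The hard part will be the covariance bookkeeping that makes the null block exactly isotropic: one must show that the $B(S_{11}-\Sigma_{11})$ contribution to $S_{01}-\Sigma_{01}$, the cross terms, and the fluctuations of $S_{00}$ and $S_{11}$ in $\hat M-M$ do not leak into the null block at the $T^{-1/2}$ order, and that after the double whitening the surviving covariance is precisely $(\Sigma_{00}^{-1/2}\Sigma_v\Sigma_{00}^{-1/2})\otimes\mathrm{I}$, which the relation $MM'=\mathrm{I}-\Sigma_{00}^{-1/2}\Sigma_v\Sigma_{00}^{-1/2}$ then reduces to the identity on the null eigenspace. A secondary subtlety is the degenerate perturbation step itself: since the zero singular value has multiplicity $\min(p_1,p_2)-r>1$, the first-order formula valid for simple singular values does not apply, and one must instead diagonalize the compressed perturbation $\Delta_{22}$ within the singular subspaces, controlling the $O_p(T^{-1})$ corrections uniformly.
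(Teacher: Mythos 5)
Your proposal is correct, but it takes a substantially more self-contained route than the paper. The paper's proof is short: it observes that $W_t=(X_t',Z_t')'$ is a stationary $\VAR(1)$ process so that the lagged sample autocovariances converge in probability to their population counterparts, applies the Continuous Mapping Theorem to conclude $S_{00}\stackrel{p}{\to}\Sigma_{00}$, $S_{10}\stackrel{p}{\to}\Sigma_{10}$, $S_{11}\stackrel{p}{\to}\Sigma_{11}$, invokes the multiplicity-one assumption to get $\phi_k\stackrel{p}{\to}\mu_k$, and then cites Hsu (1941) for the limiting law $T\Psi_r\sim\chi^2_{(p_1-r)(p_2-r)}$. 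What you have written is essentially a reconstruction of the content of that citation, adapted to the VAR setting: the Frisch--Waugh--Lovell reduction $R_0=R_1B'+\tilde{\mathcal V}$ (note $(I-P_z)\mathcal{Z}C'=0$ exactly, not just asymptotically), the identities $\Sigma_{10}=\Sigma_{11}B'$, $\Sigma_{00}=\Sigma_v+B\Sigma_{11}B'$ and $MM'=\mathrm{I}-\Sigma_{00}^{-1/2}\Sigma_v\Sigma_{00}^{-1/2}$, the martingale CLT giving asymptotic covariance $\Sigma_v\otimes\Sigma_{11}$ for the whitened cross-term (where, as you should make explicit, both $\mathcal{V}'\mathcal{X}/\sqrt{T}$ and $\mathcal{V}'P_z\mathcal{X}/\sqrt{T}$ contribute at first order and their joint limit is what produces $\Sigma_{11}$ rather than $\Gamma_X$), and the compression of the perturbation onto the null singular subspaces yielding $\vertiii{\Delta_{22}}_F^2\sim\chi^2_{(p_1-r)(p_2-r)}$. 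Your approach buys transparency about exactly where the temporal dependence enters (only through the martingale CLT for $V_t$ against lagged regressors, since $V_t$ is independent of the past), which the paper's citation-based argument glosses over given that Hsu's original result is stated for i.i.d.\ sampling; the paper's approach buys brevity and defers the degenerate-perturbation and covariance bookkeeping you correctly flag as the hard steps to the classical reference. The two issues you identify as remaining work are real but standard, and neither invalidates the outline.
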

\begin{remark}
	We provide a short comment on the assumption that the positive solutions to \eqref{eqn:popeigen} have algebraic multiplicity one in Proposition~\ref{prop:testing-rank}. This assumption is imposed on the eigen-equation associated with population quantities, to exclude the case where a positive root has algebraic multiplicity greater than one and its geometric multiplicity does not match the algebraic one, and hence we would fail to obtain $r$ mutually independent canonical variates and the rank-$r$ structure becomes degenerate. With the imposed assumption which is common in the canonical correlation analysis literature \citep[e.g.][]{anderson2002canonical,bach2005probabilistic}, such a scenario is automatically excluded. Specifically, this condition is not stringent, as for $\phi$'s that are solutions to the eigen-equation associated with sample quantities, the distinctiveness amongst roots is satisfied with probability 1 \citep[see][Proof of Lemma 3]{hsu1941problem}. 
\end{remark}

\begin{corollary}[Testing group Granger-causality]\label{cor:granger-causal-test}
	Under the model setup in \eqref{eqn:model3}, the test statistic for testing $B=0$ is given by 
	\begin{equation*}
	\Psi_0:=\sum_{k=1}^{\min(p_1,p_2)}\phi_k,
	\end{equation*}
	with $\phi_k$ being the ordered decreasing solutions of
	\begin{equation*}
	\Big|S_{01}\big[ \mathrm{diag}(S_{11})\big]^{-1}S_{10}-\phi S_{00}\Big| = 0.
	\end{equation*}
	Asymptotically, $T\Psi_0\sim \chi^2_{p_1p_2}$. To conduct a level $\alpha$ test, we reject the null hypothesis $H_0:B=0$ if
	\begin{equation*}
	\Psi_0 > \frac{1}{T}\chi^2_{p_1p_2}(\alpha), 
	\end{equation*}
	where $\chi^2_d(\alpha)$ is the upper $\alpha$ quantile of the $\chi^2$ distribution with $d$ degrees of freedom.
\end{corollary}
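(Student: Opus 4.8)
The plan is to obtain the statement as the $r=0$ specialization of Proposition~\ref{prop:testing-rank}, supplemented by a short separate argument that tracks the effect of replacing $S_{11}$ by its diagonal. First I would record the identification between the two null hypotheses: since the number of strictly positive roots of the population eigen-equation~\eqref{eqn:popeigen} equals $\rank(B)$, the event $\{\rank(B)\le 0\}$ coincides with $\{B=0\}$, so testing $B=0$ is exactly the $r=0$ instance of the rank test. Consequently the summation index in $\Psi_r=\sum_{k=r+1}^{\min(p_1,p_2)}\phi_k$ starts at $k=1$, i.e. $\Psi_0$ aggregates all $\min(p_1,p_2)$ sample roots, and the degrees of freedom $(p_1-r)(p_2-r)$ collapse to $p_1p_2$. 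Granting Proposition~\ref{prop:testing-rank} verbatim, this already delivers $T\Psi_0\sim\chi^2_{p_1p_2}$ for the statistic built from the full inverse $S_{11}^{-1}$.

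The one genuinely new ingredient is that the corollary normalizes by $[\mathrm{diag}(S_{11})]^{-1}$ rather than $S_{11}^{-1}$, so I would re-derive the limiting null law for this modified statistic directly. Under $H_0:B=0$ the model reduces to $Z_t=CZ_{t-1}+V_t$, whence $(I-P_z)\mathcal{Z}^T=(I-P_z)\mathcal{V}$ and $S_{10}=\tfrac1T\mathcal{X}'(I-P_z)\mathcal{V}$; since $V_t$ is independent of $(X_{t-1},Z_{t-1})$, the population partial cross-covariance $\Sigma_{10}$ in~\eqref{eqn:Sigma01} vanishes and every population root is zero. I would then apply a central limit theorem to the (approximately) martingale-difference array $\{\tilde x_t v_t'\}$, where $\tilde x_t$ is the $t$-th row of $(I-P_z)\mathcal{X}$ and the projection $P_z$ contributes only lower-order terms, to obtain $\sqrt T\,\vect{S_{10}}\Rightarrow\Normal(0,\Sigma_{00}\otimes\Sigma_{11})$, with $S_{00}\xrightarrow{p}\Sigma_{00}$ and $S_{11}\xrightarrow{p}\Sigma_{11}$.

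Next I would write $T\Psi_0=T\,\tr\!\big(S_{00}^{-1}S_{01}[\mathrm{diag}(S_{11})]^{-1}S_{10}\big)$ as a quadratic form $\big(\sqrt T\,\vect{S_{10}}\big)'\big(S_{00}^{-1}\otimes[\mathrm{diag}(S_{11})]^{-1}\big)\big(\sqrt T\,\vect{S_{10}}\big)$, replace the sample weight by its probability limit via Slutsky, and identify the limit as $\zeta'\big(\Sigma_{00}^{-1}\otimes[\mathrm{diag}(\Sigma_{11})]^{-1}\big)\zeta$ with $\zeta\sim\Normal(0,\Sigma_{00}\otimes\Sigma_{11})$. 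A Gaussian quadratic form $\zeta'W\zeta$ equals $\chi^2_{p_1p_2}$ in distribution precisely when $W\,\Cov(\zeta)$ is idempotent of rank $p_1p_2$; here $W\,\Cov(\zeta)=I_{p_2}\otimes\big([\mathrm{diag}(\Sigma_{11})]^{-1}\Sigma_{11}\big)$, so the remaining task is to verify that $[\mathrm{diag}(\Sigma_{11})]^{-1}\Sigma_{11}$ acts as the identity on the relevant space.

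I expect this last verification to be the main obstacle: for the full-inverse statistic of Proposition~\ref{prop:testing-rank} the idempotency is automatic, whereas for the diagonal normalization it requires the residualized predictors to be (asymptotically) orthogonal. I would therefore either invoke the standardization under which $\Sigma_{11}$ is diagonal, or else show that the off-diagonal contributions to the quadratic form are asymptotically negligible under $H_0$, so that $\mathrm{diag}(S_{11})$ furnishes the correct whitening to leading order. Once idempotency of rank $p_1p_2$ is established, the $\chi^2_{p_1p_2}$ limit follows, and the level-$\alpha$ rejection rule $\Psi_0>\tfrac1T\chi^2_{p_1p_2}(\alpha)$ is immediate, since $P_{H_0}\big(T\Psi_0>\chi^2_{p_1p_2}(\alpha)\big)\to\alpha$ by the definition of the upper $\alpha$ quantile.
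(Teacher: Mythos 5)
Your first step coincides exactly with the paper's: Corollary~\ref{cor:granger-causal-test} is obtained by setting $r=0$ in Proposition~\ref{prop:testing-rank}, using that the number of positive population roots of \eqref{eqn:popeigen} equals $\rank(B)$ so that $H_0:\rank(B)\le 0$ is the same event as $H_0:B=0$, the sum starts at $k=1$, and the degrees of freedom become $p_1p_2$. The paper offers nothing more elaborate for this part, so up to here you are on the paper's route.

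The genuine issue is your final step, and you have correctly isolated it without closing it. Writing $T\Psi_0$ as the quadratic form $\bigl(\sqrt{T}\,\vect{S_{10}}\bigr)'\bigl(S_{00}^{-1}\otimes[\mathrm{diag}(S_{11})]^{-1}\bigr)\bigl(\sqrt{T}\,\vect{S_{10}}\bigr)$ and passing to the limit shows that the $\chi^2_{p_1p_2}$ law requires $[\mathrm{diag}(\Sigma_{11})]^{-1}\Sigma_{11}$ to be idempotent of rank $p_1$, i.e.\ $\Sigma_{11}$ diagonal; your fallback of showing the off-diagonal contributions are ``asymptotically negligible'' cannot work in general, because at the $\sqrt{T}$ scale they enter the limiting quadratic form at the same order as the diagonal ones. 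The paper's own justification for the diagonal normalization is a single remark --- that $S_{01}=S_{10}=0$ under the null --- which only says the population cross-covariance vanishes and therefore does not by itself pin down the limit distribution at the CLT scale; it does not supply the idempotency you correctly identify as necessary. So your proposal, as written, leaves the key verification open, but your analysis is in fact sharper than the paper's one-line argument: to make the corollary rigorous as stated one must either assume the conditional covariance $\Sigma_{11}$ is (or has been standardized to be) diagonal, or revert to the full inverse $S_{11}^{-1}$ of Proposition~\ref{prop:testing-rank}, giving up the $p_1>T$ applicability that motivated the substitution.
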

\begin{remark}
	Corollary~\ref{cor:granger-causal-test} is a special case of Proposition~\ref{prop:testing-rank} with the null hypothesis being $H_0:r=0$, which corresponds to the Granger-causality test. Under this particular setting, we are able to take the inverse with respect to $\text{diag}(S_{11})$, yet maintain the same asymptotic distribution due to the fact that $S_{01}=S_{10}=0$ under the null hypothesis $B=0$. This enables us to perform the test even with $p_1>T$. 
\end{remark}

The above testing procedure takes advantage of the fact that when $B=0$, the canonical correlations among the partial regression residuals after removing the effect of $Z_{t-1}$ are very close to zero. However, the test may not be as powerful under a sparse alternative, i.e., $H_A: B$ is sparse. In Appendix~\ref{appendix:sparse-testing}, we present a testing procedure that specifically takes into consideration the fact that the alternative hypothesis is sparse, and the corresponding performance evaluation is shown in Section~\ref{sec:testing-sim} under this setting. 

%
%

\section{Performance Evaluation.}\label{sec:simulation}

Next, we present the results of numerical studies to evaluate the performance of the developed ML estimates (Section~\ref{sec:estimation}) of the model parameters, as well as that of the testing procedure (Section~\ref{sec:testing}). 

\subsection{Simulation results for the estimation procedure.}\label{sec:performance}

A number of factors may potentially influence the performance of the estimation procedure; in particular, the model dimension $p_1$ and $p_2$, the sample size $T$, the rank of $B^\star$ and the sparsity level of $A^\star$ and $C^\star$, as well as the spectral radius of $A^\star$ and $C^\star$. Hence, we consider several settings where these parameters vary. 

\medbreak
For all settings, the data $\{x_t\}_t$ and $\{z_t\}_t$ are generated according to the model
\begin{equation*}
\begin{split}
x_t & = A^\star x_{t-1} + u_t, \\
z_t & = B^\star x_{t-1} + C^\star z_{t-1} + v_t.
\end{split}
\end{equation*}
For the sparse components, each entry in $A^\star$ and $C^\star$ is nonzero with probability $2/p_1$ and $1/p_2$ respectively, and the nonzero entries are generated from $\mathsf{Unif}\left([-2.5,-1.5]\cup [1.5,2.5]\right)$, then scaled down so that the spectral radii $\rho(A)$ and $\rho(C)$ satisfy the stability condition. For the low rank component, each entry in $B^\star$ is generated from $\mathsf{Unif}(-10,10)$, followed by singular value thresholding, so that $\rank(B^\star)$ conforms with the model setup. For the contemporaneous dependence encoded by $\Omega_u^\star$ and $\Omega_v^\star$, both matrices are generated according to an Erd\"{o}s-R\'{e}nyi random graph, with sparsity being 0.05 and condition number being 3. 

Table~\ref{table:model} depicts the values of model parameters under different model settings. Specifically, we consider three major settings in which the size of the system, the rank of the cross-dependence component, and the stability of the system vary. The sample size is fixed at $T=200$ unless otherwise specified. Additional settings examined (not reported due to space considerations) are consistent with the main conclusions presented next.

\begin{table}[!h]
	\centering
	\caption{Model parameters under different model settings}\label{table:model}
	\begin{tabular}{p{3.5cm}|c|ccccc}
		\specialrule{.12em}{.05em}{.05em} 
		& & \multicolumn{5}{c}{\color{blue} model parameters}  \\ 
		& & $p_1$ & $p_2$ & $\text{rank}(B^*)$ & $\rho_A$ & $\rho_C $   \\ \hline
		\multirow{4}{*}{\parbox{3cm}{model dimension}} 
		&  A.1 & 50 & 20 & 5 & 0.5 & 0.5  \\ 
		& A.2 & 100 & 50 & 5 & 0.5 & 0.5 \\
		& A.3 & 200 & 50 & 5 & 0.5 & 0.5  \\ 
		& A.4 & 50 & 100 & 5 & 0.5 & 0.5 \\ \hline 
		\multirow{ 2}{*}{\parbox{3cm}{rank }} 	
		& B.1 & 100 & 50  & 10 & 0.5 & 0.5   \\ 
		& B.2 & 100 & 50 & 20 & 0.5 & 0.5 \\ \hline 
		\multirow{3}{*}{\parbox{3cm}{spectral radius}}  
		& C.1 & 50 & 20 & 5 & 0.8 & 0.5   \\
		&C.2 & 50 & 20 &  5 & 0.5 & 0.8 \\ 
		& C.3 & 50 & 20 &  5 & 0.8 & 0.8 \\ 
		\specialrule{.12em}{.05em}{.05em} 
	\end{tabular}
\end{table}

We use sensitivity (SEN), specificity (SPC) and relative error in Frobenius norm (Error) as criteria to evaluate the performance of the estimates of transition matrices $A$, $B$ and $C$. Tuning parameters are chosen based on BIC. Since the exact contemporaneous dependence is not of primary concern, we omit the numerical results for $\widehat{\Omega}_u$ and $\widehat{\Omega}_v$.
\begin{equation*}
\textrm{SEN} = \frac{\textrm{TP}}{\textrm{TP}+\textrm{FN}},\quad \textrm{SPE} = \frac{\textrm{TN}}{\textrm{FP}+\textrm{TN}}, \quad \textrm{Error} = \frac{\smalliii{\text{Est.}-\text{Truth}}_F}{\smalliii{\text{Truth}}_F}.
\end{equation*}
Table~\ref{table:simresult1} illustrates the performance for each of the parameters under different simulation settings
considered. The results are based on an average of 100 replications and their standard deviations are given in parentheses.
\begin{table}[!h]
	\centering
	\caption{Performance evaluation of $\widehat{A}$, $\widehat{B}$ and $\widehat{C}$ under different model settings.}\label{table:simresult1}
		\begin{tabular}{p{0.5cm}|ccc|cc|ccc}
			\specialrule{.12em}{.05em}{.05em} 
			& \multicolumn{3}{c|}{\color{blue} performance of $\widehat{A}$}  & \multicolumn{2}{c|}{\color{blue} performance of $\widehat{B}$} & \multicolumn{3}{c}{\color{blue} performance of $\widehat{C}$} \\ 
			&  SEN & SPC & Error & $\text{rank}(\widehat{B})$ & Error & SEN & SPC & Error  \\ 
			\specialrule{.08em}{.05em}{.05em}
			A.1& 0.98{\scriptsize(0.014)} & 0.99{\scriptsize(0.004)} & 0.34{\scriptsize(0.032)} & 5.2{\scriptsize(0.42)} & 0.11{\scriptsize(0.008)} & 1.00{\scriptsize(0.000)} & 0.97{\scriptsize(0.008)} & 0.15{\scriptsize(0.074)} \\ 
			A.2& 0.97{\scriptsize(0.014)} & 0.99{\scriptsize(0.001)} & 0.38{\scriptsize(0.015)} & 5.2{\scriptsize(0.42)} &  0.31{\scriptsize(0.011)} & 0.97{\scriptsize(0.008)} & 0.97{\scriptsize(0.004)} & 0.28{\scriptsize(0.033)}\\
			A.3&  0.99{\scriptsize(0.005)} & 0.96{\scriptsize(0.002)} & 0.87{\scriptsize(0.011)} & 5.8{\scriptsize(0.92)} &  0.54{\scriptsize(0.022)} & 0.98{\scriptsize(0.000)} & 0.92{\scriptsize(0.009)} & 0.28{\scriptsize(0.028)} \\  
			A.4& 0.96{\scriptsize(0.0261)} & 0.99{\scriptsize(0.002)} & 0.36{\scriptsize(0.034)} & 5.2{\scriptsize(0.42)} & 0.32{\scriptsize(0.012)} & 0.95{\scriptsize(0.009)} & 0.98{\scriptsize(0.001)} & 0.37{\scriptsize(0.010)} \\  
			\specialrule{.08em}{.05em}{.05em}
			B.1	& 0.97{\scriptsize(0.008)} & 0.99{\scriptsize(0.001)} & 0.37{\scriptsize(0.017)}  & 11.4{\scriptsize(1.17)} & 0.15{\scriptsize(0.008)} & 1.00{\scriptsize(0.000)} & 0.99{\scriptsize(0.001)} & 0.09{\scriptsize(0.021)} \\ 
			B.2	& 0.98{\scriptsize(0.008)} & 0.99{\scriptsize(0.001)} & 0.38{\scriptsize(0.016)} & 21.2{\scriptsize(0.91)} & 0.12{\scriptsize(0.006)} & 1.00{\scriptsize(0.000)} & 0.99{\scriptsize(0.001)} & 0.08{\scriptsize(0.018)} \\ 
			\specialrule{.08em}{.05em}{.05em}
			C.1	& 1.00{\scriptsize (0.000)} & 0.97{\scriptsize(0.005)} & 0.25{\scriptsize(0.015)} &  5.6{\scriptsize(0.52)} & 0.23{\scriptsize(0.006)} & 1.00{\scriptsize(0.000)} & 0.92{\scriptsize(0.021)} & 0.11{\scriptsize(0.072)} \\
			C.2	 &  0.99{\scriptsize(0.007)} & 0.95{\scriptsize(0.004)} & 0.45{\scriptsize(0.022)} &  5.0{\scriptsize(0.00)} & 0.31{\scriptsize(0.014)} & 1.00{\scriptsize(0.000)} & 0.92{\scriptsize(0.019)} & 0.04{\scriptsize(0.011)}\\ 
			C.3	&  1.00{\scriptsize(0.000)} & 0.96{\scriptsize(0.004)} & 0.18{\scriptsize(0.013)}& 6.7{\scriptsize(1.16)} & 0.19{\scriptsize(0.011)} & 1.00{\scriptsize(0.000)} & 0.87{\scriptsize(0.029)} & 0.14{\scriptsize(0.067)} \\
			C.3'&  1.00{\scriptsize(0.000)} & 0.99{\scriptsize(0.002)} & 0.13{\scriptsize(0.016)}& 5.2{\scriptsize(0.42)} & 0.23{\scriptsize(0.005)} & 1.00{\scriptsize(0.000)} & 0.90{\scriptsize(0.021)} & 0.06{\scriptsize(0.023)} \\
			\specialrule{.12em}{.05em}{.05em} 
		\end{tabular}
\end{table}
Overall, the results are highly satisfactory and all the parameters are estimated with a high degree of accuracy. Further,
all estimates were obtained in less than 20 iterations, thus indicating that the estimation procedure is numerically stable. As expected, when the the spectral radii of $A$ and $C$ increase thus leading to less stable $\{X_t\}$ and $\{Z_t\}$ processes, a larger sample size is required for the estimation procedure to match the performance of the setting with same parameters but smaller $\rho(A)$ and $\rho(C)$. This is illustrated in row C.3' of Table~\ref{table:simresult1}, where the sample size is increased to $T=500$, which outperforms the results in row C.3 in which $T=200$ and broadly matches that of row A.1. 

Next, we investigate the robustness of the algorithm in settings where the marginal distributions of $\{X_t\}$ and $\{Z_t\}$ deviate from the
Gaussian assumption posited and may be more heavy-tailed. Specifically, we consider the following two distributions that have been studied in \citet{qiu2015robust}:
\begin{itemize}
	\item $t$-distribution: the idiosyncratic error processes $\{u_t\}$ and $\{v_t\}$ are generated from multivariate $t$-distributions with degree of freedom 3, and covariance matrices $(\Omega^\star_u)^{-1}$ and $(\Omega^\star_v)^{-1}$, respectively. 
	\item elliptical distribution: $(u_1',\dots,u_T')$ and $(v_1',\dots,v_T')'$ are generated from an elliptical distribution \citep[e.g.][]{remillard2012copula} with a log-normal generating variate $\log \mathcal{N}(0,2)$ and covariance matrices $\widetilde{\Sigma}_u$ and $\widetilde{\Sigma}_v$ -- both are block-diagonal with $\Sigma_u^\star=(\Omega_u^\star)^{-1}$ and $\Sigma_v^\star=(\Omega_v^\star)^{-1}$ respectively on the diagonals. 
\end{itemize}
For both scenarios, transition matrices, $\Omega_u^\star$ and $\Omega_v^\star$ are generated analogously to those in the Gaussian setting. We present the results for $\widehat{A},\widehat{B}$ and $\widehat{C}$ under model settings A.2, B.1, C.1 and C.2 (see Table~\ref{table:model}). 
\begin{table}[!h]
	\centering
	\caption{Performance evaluation of $\widehat{A}$, $\widehat{B}$ and $\widehat{C}$ under non-Gaussian settings.}\label{table:non-Gaussian}
		\begin{tabular}{ll|ccc|cc|ccc}
			\specialrule{.12em}{.05em}{.05em} 
			&	& \multicolumn{3}{c|}{\color{blue} performance of $\widehat{A}$}  & \multicolumn{2}{c|}{\color{blue} performance of $\widehat{B}$} & \multicolumn{3}{c}{\color{blue} performance of $\widehat{C}$} \\ 
			& &  SEN & SPC & Error & $\text{rank}(\widehat{B})$ & Error & SEN & SPC & Error  \\ 
			\specialrule{.08em}{.05em}{.05em}
			\multirow{2}{*}{A.2} & t(df=3) & 0.99\scriptsize{(0.005)} & 0.95\scriptsize{(0.013)} & 0.60\scriptsize{(0.062)} & 6.00\scriptsize{(1.45)} & 0.24\scriptsize{(0.019)} & 0.96\scriptsize{(0.013)} & 0.96\scriptsize{(0.005)} & 0.27\scriptsize{(0.038)} \\
			& elliptical & 0.97\scriptsize{(0.014)} & 0.99\scriptsize{(0.001)} & 0.36\scriptsize{(0.016)} & 5.1\scriptsize{(0.30)} & 0.34\scriptsize{(0.009)} & 1.00\scriptsize{(0.000)} & 0.85\scriptsize{(0.026)} & 0.15\scriptsize{(0.033)}\\ 
			\specialrule{.12em}{.05em}{.05em} 
			\multirow{2}{*}{B.1}& t(df=3) & 0.98\scriptsize{(0.008)} & 0.95\scriptsize{(0.014)} & 0.61\scriptsize{(0.083)} & 10.4\scriptsize{(0.49)} & 0.34\scriptsize{(0.026)} &0.99\scriptsize{(0.015)} & 0.95\scriptsize{(0.004)} & 0.25\scriptsize{(0.091)} \\
			& elliptical &  0.95\scriptsize{(0.015)} & 0.99\scriptsize{(0.001)} & 0.37\scriptsize{(0.024)} & 10.1\scriptsize{(0.22)} & 0.40\scriptsize{(0.013)} & 1.00\scriptsize{(0.000)} & 0.90\scriptsize{(0.013)} & 0.09\scriptsize{(0.001)}\\
			\specialrule{.12em}{.05em}{.05em} 
			\multirow{2}{*}{C.1}& t(df=3) & 0.99\scriptsize{(0.001)} & 0.92\scriptsize{(0.011)} & 0.22\scriptsize{(0.03)} & 6.0\scriptsize{(1.13)} & 0.09\scriptsize{(0.014} &1.00\scriptsize{(0.000)} & 0.93\scriptsize{(0.016)} & 0.10\scriptsize{(0.068)} \\
			& elliptical & 1.00\scriptsize{(0.000)} & 0.90\scriptsize{(0.006} & 0.32\scriptsize{(0.013)} & 5.2\scriptsize{(0.44)} & 0.13\scriptsize{(0.007)} & 1.00\scriptsize{(0.000)} & 0.92\scriptsize{(0.020)} & 0.07\scriptsize{(0.041)} \\ \hline
			\multirow{2}{*}{C.2} & t(df=3) & 0.99 \scriptsize{(0.002)} & 0.95\scriptsize{(0.023)} & 0.37\scriptsize{(0.056)} & 5.1\scriptsize{(0.22)} & 0.22\scriptsize{(0.017)} & 1.00\scriptsize{(0.000)} & 0.89\scriptsize{(0.017)} & 0.10\scriptsize{(0.029)} \\
			& elliptical & 0.88\scriptsize{(0.029)} & 0.97\scriptsize{(0.001)} & 0.43\scriptsize{(0.032)} & 5.1\scriptsize{(0.14)} & 0.40\scriptsize{(0.020)} & 1.00\scriptsize{(0.000)} & 0.86\scriptsize{(0.026)} & 0.10\scriptsize{(0.046)} \\
			\specialrule{.12em}{.05em}{.05em} 
		\end{tabular}
\end{table}

Based on Table~\ref{table:non-Gaussian}, the performance of the estimates under these heavy-tailed settings is comparable in terms of sensitivity
and specificity for $A$ and $C$, as well as for rank selection for $B$ to those under Gaussian settings. However, the estimation error
exhibits some deterioration which is more pronounced for the $t$-distribution case. In summary, the estimation procedure proves to be very
robust for support recovery and rank estimation even in the presence of more heavy-tailed noise terms.

Lastly, we examine performance with respect to one-step-ahead forecasting. Recall that VAR models are widely used for forecasting purposes
in many application areas \citep{lutkepohl2005new}. The performance metric is given by the relative error as measured by the $\ell_2$ norm of the out-of-sample points $x_{T+1}$ and $z_{T+1}$, where the predicted values are given by \mbox{$\widehat{x}_{T+1}=\widehat{A}x_{T}$} and \mbox{$\widehat{z}_{T+1}=\widehat{B}x_{T} + \widehat{C}z_{T}$}, respectively. It is worth noting that both $\{X_t\}$ and $\{Z_t\}$ are mean-zero processes. However, since the transition matrix of $\{X_t\}$ is subject to the spectral radius constraints to ensure the stability of the corresponding process, the magnitude of the realized value  $x_t$'s is small; whereas for $\{Z_t\}$, since no constraints are imposed on the $B$ coefficient matrix that encodes the inter-dependence, $z_t$'s has the capacity of having relative large values in magnitude. Consequently, the relative error of $\widehat{x}_{T+1}$ is significantly larger than that of $\widehat{z}_{T+1}$, partially due to the small total magnitude of the denominator. 

The results show that an increase in the spectral radius (keeping the other structural parameters fixed) leads to a decrease of the relative
error, since future observations become more strongly correlated over time. On the other hand, an increase in dimension leads to a deterioration in forecasting, since the available sample size impacts the quality of the parameter estimates. Finally, an increase in the rank of the $B$ matrix is beneficial for forecasting, since it plays a stronger role in the system's temporal evolution.

\begin{table}[!h]
	\centering
	\caption{One-step-ahead relative forecasting error.}\label{table:rel-forecasting}
	\begin{tabular}{l|c|cc}
		\specialrule{.12em}{.05em}{.05em}
		&     & $\tfrac{\|\widehat{x}_{T+1}-x_{T+1}\|_2}{\|x_{T+1}\|_2}$ & $\tfrac{\|\widehat{z}_{T+1}-z_{T+1}\|_2}{\|z_{T+1}\|_2}$ \\
		\hline 
		baseline &  A.1 & 0.89{\scriptsize(0.066)} & 0.23{\scriptsize(0.075)} \\ \hline
		\multirow{3}{*}{spectral radius} & C.1 & 0.62{\scriptsize(0.100)} & 0.10{\scriptsize(0.035)}\\
		& C.2 & 0.93{\scriptsize(0.062)} & 0.17{\scriptsize(0.059)} \\
		& C.3 & 0.68{\scriptsize(0.096)} &  0.10{\scriptsize(0.045)}\\ \hline
		\multirow{2}{*}{rank} & B.1 & 0.92{\scriptsize(0.044)}& 0.14{\scriptsize(0.038)} \\
		& B.2 & 0.94{\scriptsize(0.042)} & 0.14{\scriptsize(0.025)} \\ \hline
		\multirow{3}{*}{dimension} & A.2 & 0.87{\scriptsize(0.051)} &0.24{\scriptsize(0.073)}\\
		& A.3 & 0.96{\scriptsize(0.040)} & 0.44{\scriptsize(0.139)}\\
		& A.4 & 0.89{\scriptsize(0.059)} & 0.274{\scriptsize(0.068)}\\
		\specialrule{.12em}{.05em}{.05em}
	\end{tabular}
\end{table}

\subsection{A comparison between the two-step and the ML estimates.}

We briefly compare the ML estimates to the ones obtained through the following two-step procedure:
\begin{itemize}
	\item[--] Step 1: estimate transition matrices through penalized least squares: 
	\begin{align*}
	\widehat{A}^{\text{t-s}} &= \argmin\limits_{A} \Big\{ \tfrac{1}{T}\vertiii{\mathcal{X}^T - \mathcal{X}A'}_F^2 + \lambda_A \|A\|_1 \Big\},  \\
	(\widehat{B}^{\text{t-s}},\widehat{C}^{\text{t-s}}) &= \argmin\limits_{(B,C)} \Big\{ \tfrac{1}{T}\vertiii{\mathcal{Z}^T - \mathcal{X}B'-\mathcal{Z}C'}_F^2 + \lambda_B \vertiii{B}_* + \lambda_C \vertii{C}_1  \Big\}.
	\end{align*} 
	\item[--] Step 2: estimate the inverse covariance matrices applying the graphical Lasso algorithm \citep{friedman2008sparse} to the residuals
	calculated based on the Step 1 estimates.
\end{itemize}
Note that the two-step estimates coincide with our ML estimates at iteration 0, and they yield the same error rate in terms of the relative scale of $p_1,p_2$ and $T$. We compare the two sets of estimates under setting A.1 with $B^\star$ being low rank and setting A.2 with $B^\star$ being sparse, whose entries are nonzero with probability $1/p_1$. 

In Table~\ref{table:performance}, we present the performance evaluation of the two-step estimates and the ML estimates under setting A.1. Additionally in Tables~\ref{table:errorA} and~\ref{table:errorBC}, we track the value of the objective function, the relative error (in $\|\cdot\|_F$) and the cardinality (or rank) of the estimates along iterations, with iteration 0 corresponding to the two-step estimates.  A similar set of results is shown in Tables~\ref{table:performance-A2-sparse} to~\ref{table:sparseB} for setting A.2, but with a sparse $B^\star$. All other model parameters are identically generated according to the procedure described in Section~\ref{sec:performance}.

As the results show, the ML estimates clearly outperform their two-step counterparts, in terms of the relative error in Frobenius norm.
On the other hand, both sets of estimates exhibit similar performance in terms of sensitivity and specificity and rank specification. More specifically, when estimating $A$, the ML estimate decreases the false positive rate (higher SPC). Under setting A.1, while estimating $B$ and $C$, both estimates correctly identify the rank of $B$, and the ML estimate provides a more accurate estimate in terms of the magnitude of $C$, at the expense of incorrectly including a few more entries in its support set; under setting A.2 with a sparse $B^\star$, improvements in both the relative error of $B$ and $C$ are observed. In particular, due to the descent nature of the algorithm, we observe a sharp drop in the value of the objective function at iteration 1, as well as the most pronounced change in the estimates. 

\begin{table}[h]
	\small
	\centering
	\caption{Performance comparison under A.1 with a low-rank $B$}\label{table:performance}
	\begin{tabular}{p{4cm}|ccc|cc|ccc}
		\specialrule{.12em}{.05em}{.05em} 
		& \multicolumn{3}{c|}{\color{blue} performance of $\widehat{A}$}  & \multicolumn{2}{c|}{\color{blue} performance of $\widehat{B}$} & \multicolumn{3}{c}{\color{blue} performance of $\widehat{C}$} \\ 
		&  SEN & SPC & Error & Error & $\text{rank}(\widehat{B})$ & SEN & SPC & Error  \\ 
		\specialrule{.07em}{.05em}{.05em}
		two-step estimates & 0.97 & 0.95 & 0.52 & 0.27 & 5 &  1.00 & 0.98 & 0.12 \\ 
		ML estimates & 0.97 & 0.97 & 0.36 & 0.24 & 5 &  1.00 & 0.95 & 0.05\\
		\specialrule{.12em}{.05em}{.05em} 
	\end{tabular}\bigskip
	
	\caption{Relative error of $\widehat{A}$ and the values of the objective function under A.1}\label{table:errorA}
	\begin{tabular}{l|cccccc}
		\specialrule{.12em}{.05em}{.05em} 
		iteration & 0 & 1 & 2 & 3 & 4 & 5\\ \hline
		Rel.Error & 0.521 & 0.408 &  0.376 &  0.360 & 0.359 & 0.359 \\ 
		Cardinality & 227 & 169 & 160 & 155 & 155 & 155 \\ 
		Value of Obj & 128.14 & 41.74 & 37.94 & 37.85 & 37.70 & 37.70 \\ 
		\specialrule{.12em}{.05em}{.05em} 
	\end{tabular}\bigskip
	
	\caption{Relative error of $\widehat{B}$,$\widehat{C}$ and the values of the objective function under A.1}\label{table:errorBC}
		\begin{tabular}{l|ccccccccccc}
			\specialrule{.12em}{.05em}{.05em} 
			iteration & 0 & 1 & 2 & 3 & 4 & 5 & 6 & 7 & 8 & 9 & 10 \\ \hline
			Rel.Error of $\widehat{B}$ & 0.274 & 0.235 &  0.236 &  0.237 & 0.237 & 0.237 & 0.237& 0.237& 0.237& 0.237& 0.237 \\ 
			Rank of $\widehat{B}$ & 5 & 5 &5 &5 &5 &5 &5 &5 &5 &5 &5  \\ \hline
			Rel.Error of $\widehat{C}$ & 0.119 & 0.050 & 0.049 & 0.049 & 0.048 & 0.048 & 0.048 & 0.047 & 0.047 & 0.047 & 0.047   \\ 
			Cardinality of $\widehat{C}$ &  30 & 34 & 38 & 41 & 39 & 39 & 39& 39& 39& 39& 39 \\ \hline
			Value of Obj & 160.41 & 134.26 & 131.90 & 131.48 & 131.38 & 131.17 & 131.01 & 130.96 & 130.96 & 130.85 & 130.8 \\ 
			\specialrule{.12em}{.05em}{.05em} 
		\end{tabular}
\end{table}

\begin{table}[h]
	\centering
	\caption{Performance comparison under A.2 with a sparse $B$}\label{table:performance-A2-sparse}
	\begin{tabular}{p{4cm}|ccc|ccc|ccc}
		\specialrule{.12em}{.05em}{.05em} 
		&  \multicolumn{3}{c|}{\color{blue} performance of $\widehat{A}$} & \multicolumn{3}{c|}{\color{blue} performance of $\widehat{B}$} & \multicolumn{3}{c}{\color{blue} performance of $\widehat{C}$} \\ 
		& SEN & SPC &   Error & SEN & SPC &   Error  & SEN & SPC & Error  \\ 
		\specialrule{.07em}{.05em}{.05em}
		two-step estimates&  0.97 & 0.95 & 0.44 & 0.96 & 0.98 & 0.45 &  1 & 0.99 & 0.35 \\ 
		ML estimates &  0.97 & 0.98 & 0.35 & 0.99 & 0.95 & 0.34 & 1 & 0.98 & 0.30\\
		\specialrule{.12em}{.05em}{.05em} 
	\end{tabular}\bigskip
	
	\caption{Relative error of $\widehat{A}$ and the values of the objective function under A.2}\label{table:errorA-A2-sparse}
	\begin{tabular}{l|ccccc}
		\specialrule{.12em}{.05em}{.05em} 
		iteration & 0 & 1 & 2 & 3 & 4 \\ \hline
		Rel.Error & 0.438 & 0.346& 0.351& 0.351 &0.351  \\ 
		Cardinality & 479 & 350 & 325 & 324 & 324  \\ 
		Value of Obj & 156.58 & 48.25 &  38.16 & 36.97 & 36.93 \\ 
		\specialrule{.12em}{.05em}{.05em} 
	\end{tabular}\bigskip

	\caption{Changes over iteration under A.2}\label{table:sparseB}
	\begin{tabular}{l|ccccc}
		\specialrule{.12em}{.05em}{.05em} 
		iteration & 0  &1 & 2 & 3 & 4  \\ \hline
		Rel.Error of $\widehat{B}$ & 0.454 & 0.340 & 0.337 & 0.337 & 0.337  \\ 
		Cardinality of $\widehat{B}$ &  301 & 325 & 323 &323 & 323 \\ \hline
		Rel.Error of $\widehat{C}$ & 0.35&  0.304& 0.302 &0.302& 0.301\\ 
		Cardinality of $\widehat{C}$ &  63 & 70 & 74 & 74 & 74 \\ \hline
		Value of Obj & 143.942 & 59.63 & 41.46 & 41.87 & 41.87 \\ 
		\specialrule{.12em}{.05em}{.05em} 
	\end{tabular}
\end{table}
%
%
\subsection{Simulation results for the block Granger-causality test.}\label{sec:testing-sim}

Next, we illustrate the empirical performance of the testing procedure introduced in Section~\ref{sec:testing}, together with the alternative one (in Appendix~\ref{appendix:sparse-testing})
when $B$ is sparse, with the null hypothesis being $B^\star=0$ and the alternative being $B^\star\neq 0$, either low rank or sparse. Specifically, when the alternative hypothesis is true and has a low-rank structure, we use the general testing procedure proposed in~Section~\ref{sec:testing}, whereas when the alternative is true and sparse, we use the testing procedure presented in Appendix~\ref{appendix:sparse-testing}.  We focus on evaluating the type I error (empirical false rejection rate) when $B^\star=0$, as well as the power of the test when $B^\star$ has nonzero entries. 

\smallbreak
For both testing procedures, the transition matrix $A^\star$ is generated with each entry being nonzero with probability $2/p_1$, and the nonzeros are generated from $\mathsf{Unif}\left([-2.5,-1.5]\cup [1.5,2.5]\right)$, then further scaled down so that $\rho(A^\star)=0.5$. For transition matrix $C^\star$, each entry is nonzero with probability $1/p_2$, and the nonzeros are generated from $\mathsf{Unif}\left([-2.5,-1.5]\cup [1.5,2.5]\right)$, then further scaled down so that $\rho(C^\star)=0.5$ or $0.8$, depending on the simulation setting. Finally, we only consider the case where $v_t$ and $u_t$ have diagonal covariance matrices.

We use sub-sampling as in \citet{politis1994large} and \citet{politis1999subsampling} with the number of subsamples set to 3,000;
an alternative would have been a block bootstrap procedure \citep[e.g.][]{hall1985resampling,carlstein1986use,kunsch1989jackknife}. Note that the length of the subsamples varies across simulation settings in order to gain insight on how sample size impacts the type I error or the power of the test. 

\paragraph{Low-rank testing.} To evaluate the type I error control and the power of the test, we primarily consider the case where $\rank(B^\star) =0$, with the data alternatively generated based on $\rank(B^\star)=1$. We test the hypothesis $H_0:\rank(B)=0$ and tabulate the empirical proportion of falsely rejecting $H_0$ when $\rank(B^*)=0$ (type I error) and the probability that we reject $H_0$ when $\rank(B^*)=1$ (power). In addition, we also show how the testing procedure performs when the underlying $B^\star$ has rank $r\geq 0$. In particular, when $\rank(B^\star)=r^\star$, the type I error of the test corresponds to the empirical proportion of rejections of the null hypothesis $H_0:r\leq r^\star$, while the power of the test to the empirical proportion of rejections of the null hypothesis set to $H_0:r\leq (r^\star-1)$. The latter resembles the sequential test in~\citet{johansen1988statistical}. 

Empirically, we expect that when $B^\star=0$, the value of the proposed test statistic mostly falls below the cut-off value (upper $\alpha$ quantile), while when $\rank(B^\star)=1$, the value of the proposed test statistic mostly falls beyond the critical value $\chi^2(\alpha)_{p_1p_2}/T$ with $T$ being the sample size, hence leading to a detection. Table~\ref{table:lowrank} gives the type I error of the test when setting $\alpha=0.1,0.05,0.1$, and the power of the test using the upper 0.01 quantile of the reference distribution as the cut-off, for different combinations of model dimensions $(p_1,p_2)$ and sample size. 

\begin{table}[!h]
	\centering
	\caption{Empirical type I error and power for low-rank testing}\label{table:lowrank}
		\begin{tabular}{c|c|l|ccc|c}
			\specialrule{.1em}{.05em}{.05em} 
			&	&	& \multicolumn{3}{c|}{\color{blue} type I error ($B^\star=0$)}  &  {\color{blue}power ($\rank(B^\star)=1$)} \\ 
			& $(p_1,p_2)$	&  sample size  & $\alpha=0.01$ & $\alpha=0.05$ & $\alpha=0.1$ & cut-off $\chi^2(0.01)_{p_1p_2}/T$  \\  
			\specialrule{.05em}{.05em}{.05em}
			\multirow{12}{*}{$\rho(C^\star)=0.5$} &\multirow{3}{*}{$(20,20)$} & $T=500$ & 0.028 & 0.123 & 0.227 & 1 \\
			& 	  & $T=1000$ & 0.015 & 0.073 & 0.137 & 1 \\
			& 	  & $T=2000$ & 0.011 & 0.059 & 0.118 & 1 \\ 
			\cline{2-7}
			&\multirow{3}{*}{$(50,20)$} & $T=500$ &	0.070 & 0.228 & 0.355 & 1 \\
			&     & $T=1000$ & 0.026 & 0.125 & 0.226 & 1 \\
			&     & $T=2000$ & 0.013 & 0.094 & 0.163 & 1 \\
			\cline{2-7}
			&\multirow{3}{*}{$(20,50)$} & $T=500$ & 0.484 & 0.751 & 0.857 & 1 \\
			&     & $T=1000$ & 0.089 & 0.246 & 0.375 & 1\\
			&     & $T=2000$ & 0.020 & 0.088 & 0.164 & 1 \\
			\cline{2-7}
			&\multirow{3}{*}{$(100,50)$} & $T=500$ & \underline{0.997} & \underline{0.999} & \underline{1} & 1 \\ 
			&  & $T=1000$ & 0.608 & 0.828 & 0.908 & 1 \\
			&  & $T=2000$ & 0.166 & 0.374 & 0.511 & 1 \\
			\hline
			\multirow{6}{*}{$\rho(C^\star)=0.8$} &\multirow{3}{*}{$(20,50)$} & $T=500$ & 0.533 & 0.789 & 0.880 & 1 \\
			&  & $T=1000$ & 0.130 & 0.306 & 0.452 & 1 \\
			&  & $T=2000$ & 0.045 & 0.145 & 0.252 & 1 \\
			\cline{2-7}	
			& \multirow{3}{*}{$(50,20)$} & $T=500$ & 0.083 & 0.250 & 0.382  & 1\\
			&  & $T=1000$ & 0.039 & 0.133 & 0.234 & 1 \\
			&  & $T=2000$ & 0.019 & 0.096 & 0.174 & 1 \\  			    
			\specialrule{.1em}{.05em}{.05em} 
			&       &   &  \multicolumn{3}{c|}{\textcolor{blue}{type I error $(H_0:r\leq 5)$} }& \textcolor{blue}{power $(H_0:r\leq 4)$} \\  
			&       &   & $\alpha=0.01$ & $\alpha=0.05$ & $\alpha=0.1$ & cut-off $\chi^2(0.01)_{(p_1-4)(p_2-4)}/T$ \\
			\cline{2-7}
			\multirow{7}{*}{\shortstack{$\rho(C^\star)=0.5$\\$\rank(B^\star)=5$}} &  \multirow{3}{*}{$(20,50)$} & $T=500$ & 0.092 &  0.274 & 0.400 & 1 \\
			& & $T=1000$ & 0.034 & 0.140 & 0.236 & 1 \\
			& & $T=2000$ & 0.022 & 0.096 & 0.178 & 1 \\ 
			\cline{2-7}
			& \multirow{3}{*}{$(50,20)$} & $T=500$ & 0.454 & 0.722   & 0.829  & 1 \\
			& & $T=1000$ & 0.126 & 0.313 & 0.452 & 1 \\
			& & $T=2000$ & 0.062 & 0.184 & 0.284 & 1  \\ 
			\specialrule{.1em}{.05em}{.05em}
		\end{tabular}
\end{table}
Based on the results shown in Table~\ref{table:lowrank}, it can be concluded that the proposed low-rank testing procedure 
accurately detects the presence of ``Granger causality" across the two blocks, when the data have been generated based on a truly multi-layer VAR system. Further, when $B^\star=0$, the type I error is close to  the nominal $\alpha$ level for sufficiently large sample sizes, but deteriorates for increased model dimensions. In particular, relatively large values of $p_2$ and larger spectral radius $\rho(C^\star)$ negatively impact the empirical false rejection proportion, which deviates from the desired control level of the type I error. In the case where $\rank(B^\star)=r>0$, the testing procedure provides satisfactory type I error control for
larger sample sizes and excellent power.  

\paragraph{Sparse testing.} Since the rejection rule of the HC-statistic is based on empirical process theory \citep{shorack2009empirical} and its dependence on $\alpha$ is not explicit, we focus on illustrating how the empirical proportion of false rejections (type I error) varies with the sample size $T$, the model dimensions $(p_1,p_2)$ and the spectral radius of $C^\star$. To show the power of the test, each entry in $B^\star$ is nonzero with probability $q\in(0,1)$ such that $q(p_1p_2) = (p_1p_2)^\theta$ with $\theta=0.6$, to ensure the overall sparsity of $B^\star$ satisfies the sparsity requirement posited in Proposition~\ref{prop:asymptoticallypowerful}. The magnitude is set such that the signal-to-noise ratio is 1.2. Note that the actual number of parameters is $p_1p_2$, while the total number of subsamples used is 3000 with the length of subsamples varying according to different simulation settings to demonstrate the dependence of type I error and power on sample sizes. 
\begin{table}[!h]
	\centering
	\caption{Empirical type I error and power for sparse testing}\label{table:sparse}
		\begin{tabular}{c|c|p{1cm}p{1cm}p{1cm}p{1cm}|p{1cm}p{1cm}p{1cm}p{1cm}}
			\specialrule{.1em}{.05em}{.05em} 
			&		& \multicolumn{4}{c|}{\color{blue} type I error ($B^\star=0$)}  &  \multicolumn{4}{c}{\color{blue} power ($\text{SNR}(B^\star)=0.8$)} \\ 
			&	 $(p_1,p_2)$ & 200 & 500 & 1000 & 2000 & 200 & 500 & 1000 & 2000  \\  
			\specialrule{.05em}{.05em}{.05em}
			\multirow{4}{*}{\parbox{2cm}{$\rho(C^\star)=0.5$}} & $(20,20)$ & 0.244 & 0.097  & 0.074  & 0.055  &  1 & 1  &   1 &  1 \\ 
			&	$(50,20)$ & 0.393 & 0.131 & 0.108 & 0.074 & 1 & 1 &1 & 1\\
			&	$(20,50)$ & \underline{0.996} & 0.351 & 0.153 & 0.093 & 1 & 1 & 1 & 1 \\
			&	$(100,50)$ &\underline{1.000} & \underline{0.963} & 0.270 & 0.115 & 1 & 1 &1  &1  \\
			\hline
			\multirow{2}{*}{\parbox{2cm}{$\rho(C^\star)=0.8$}} & $(50,20)$ & 0.402 & 0.158  & 0.112  & 0.075  &  0.829 & 0.996  &   1 &  1 \\ 		
			& $(20,50)$ & \underline{0.999} & 0.430  & 0.166  & 0.111  &  1 & 1  &   1 &  1 \\ 		
			\specialrule{.1em}{.05em}{.05em} 
		\end{tabular}
\end{table}

Based on the results shown in Table~\ref{table:sparse}, when $B^\star=0$, the proposed testing procedure can effectively detect the absence of block ``Granger causality", provided that the sample size is moderately large compared to the total number of parameters being tested. However, if the model dimension is large, whereas the sample size is small, the test procedure becomes problematic and fails to provide legitimate type I error control, as desired. When $B^\star$ is nonzero, empirically the test is always able detect its presence, as long as the effective signal-to-noise ratio is beyond the detection threshold.

\section{Real Data Analysis Illustration.}\label{sec:realdata}

We employ the developed framework and associated testing procedures to address one of the motivating applications. Specifically, we analyze the temporal dynamics of the log-returns of stocks with large market capitalization and key macroeconomic variables, as well as their cross-dependence. Specifically, using the notation in~\eqref{eqn:model0}, we assume that the $X_t$ block consists of the stock log-returns, while the macroeconomic variables form the $Z_t$ block. With this specification, we assume that the macroeconomic variables are ``Granger-caused" by the stock market, but not vice versa. Note that our framework allows us to pose and test a more general question than previous work in the economics literature considered.
For example, \citet{farmer2015stock} building on previous work by \cite{fitoussi2000roots, phelps1999behind} tests only the relationship between the employment index and the composite stock index, using a 
bivariate VAR model. On the other hand, our framework enables us to consider the components of the S\&P 100 index and the ``medium" list of macroeconomic variables considered in the work of \citet{stock2005empirical}.

Next, we provide a brief description of the data used. The stock data consist of monthly observations of 71 stocks corresponding to a stable set of historical components comprising the  S\&P 100 index for the 2001-2016 period. The macroeconomic variables are chosen from the ``medium" list in  \citet{stock2005empirical}; that is, the 3 core economic indicators (Fed Funds Rate, Consumer Price Index and Gross Domestic Product Growth Rate), plus 17 additional variables with aggregate information (e.g., exchange rate, employment, housing, etc.). However, in our study, we exclude variables that exhibit a significant change after the financial crisis of 2008 (e.g. total reserves/reserves of depository institutions). We process the macroeconomic variables to ensure stationarity following the recommendations in \citet{stock2005empirical}. As a general guideline, for real quantitative variables (e.g., GDP, money supply M2), we use the relative first difference, which corresponds to their growth rate, while for rate-related variables (e.g., Federal Funds Rate, unemployment rate), we use their first differences directly. The complete lists of stocks and macroeconomic variables used in this study are given in Appendix~\ref{appendix:Dictionary}.

\medbreak
We start the analysis by using the VAR model for the stock log-returns to study their evolution over the 2001-2016 period. Analogously to the strategy employed by \citet{billio2012econometric}, we consider 36-month long rolling-windows for fitting the model $X_t = AX_{t-1} + U_t$, for a total of 143 estimates of the transition matrix $A$. VAR models involving more than 1 lag were also fitted to the data, but did not indicate temporal dependence beyond lag 1. 

To obtain the final estimates across all 143 subsamples, we employ {\em stability selection} \citep{meinshausen2010stability}, with the threshold set at 0.6 for including an edge in $A$.\footnote{The threshold is set at a relatively low level to compensate for the relative small rolling window size.} Figure~\ref{fig:transA} depicts the global clustering coefficient \citep{luce1949method} of the skeleton of the estimated $A$ over all 143 rolling windows, with the time stamps on the horizontal axis specifying the starting time of the corresponding window. 
\begin{figure}[!hbtp]
	\centering
	\caption{Global clustering coefficient of estimated $A$ over different periods}\label{fig:transA}
	\includegraphics[scale=0.4]{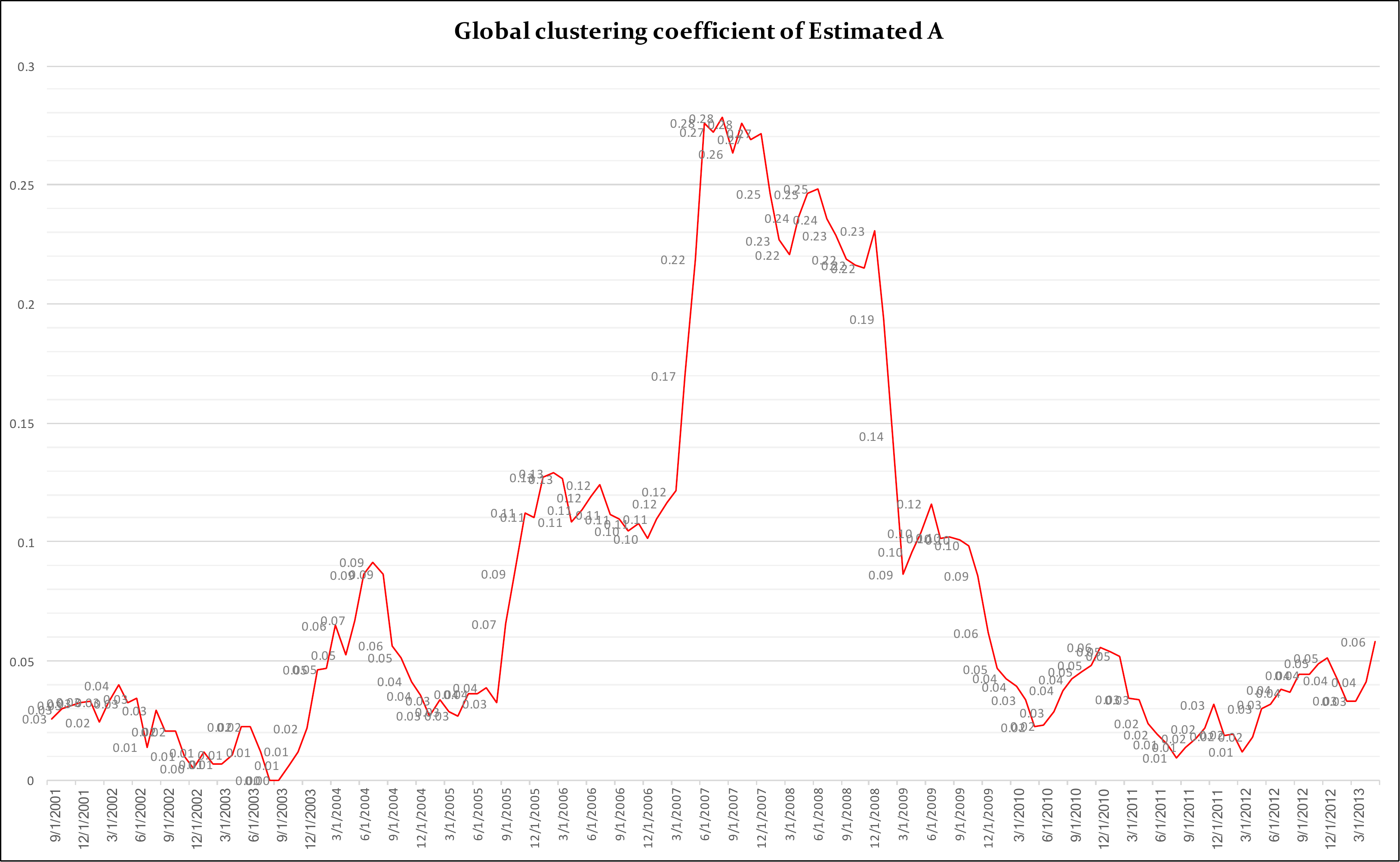}
\end{figure}

The results clearly indicate strong connectivity in lead-lag stock relationships during the financial crisis period March 2007-June 2009.
It is of interest that the data exhibit such sharp changes at time points that correspond to well documented events in the literature;
namely, March 2007 when several subprime lenders declared bankruptcy, put themselves up for sale or reported significant losses and
June 2009 that the National Bureau of Economic Research declared as the end point of the crisis. Similar patterns were broadly observed in \citet{billio2012econometric,brunetti2015interconnectedness}, albeit for a different set of stocks and using a different statistical methodology. Specifically, \citet{billio2012econometric} considered financial sector stocks (banks, brokerages, insurance companies), while 
\citet{brunetti2015interconnectedness} considered European banking stocks, and both studies used bivariate VAR models to obtain the results, thus ignoring the influence of all other components on the pairwise Granger causal relationship estimated and hence producing potentially biased estimates of connectivity. 

Next, we present the analysis based on the VAR-X component of our model, given by $Z_t = BX_{t-1} + CZ_{t-1} + V_t$ with the stock log-returns corresponding to the $X_t$ block and the (stationary) macroeconomic variables to the $Z_t$ block. As before, we fit the data within each rolling window, with the tuning parameters based on a search over a $10\times 10$ lattice (with $(\lambda_B,\lambda_C)\in [0.5,4]\times[0.2,2]$, equal-spaced) using the BIC. It should be noted that for the majority of the rolling windows, the rank of $B$ is 1 (data not shown). The sparsity level of the estimated $C$ over the 143 rolling windows is depicted in Figure~\ref{fig:sparsityC}.
\begin{figure}[!hbtp]
	\centering
	\caption{Sparsity of estimated $C$ over different periods}\label{fig:sparsityC}
	\includegraphics[scale=0.4]{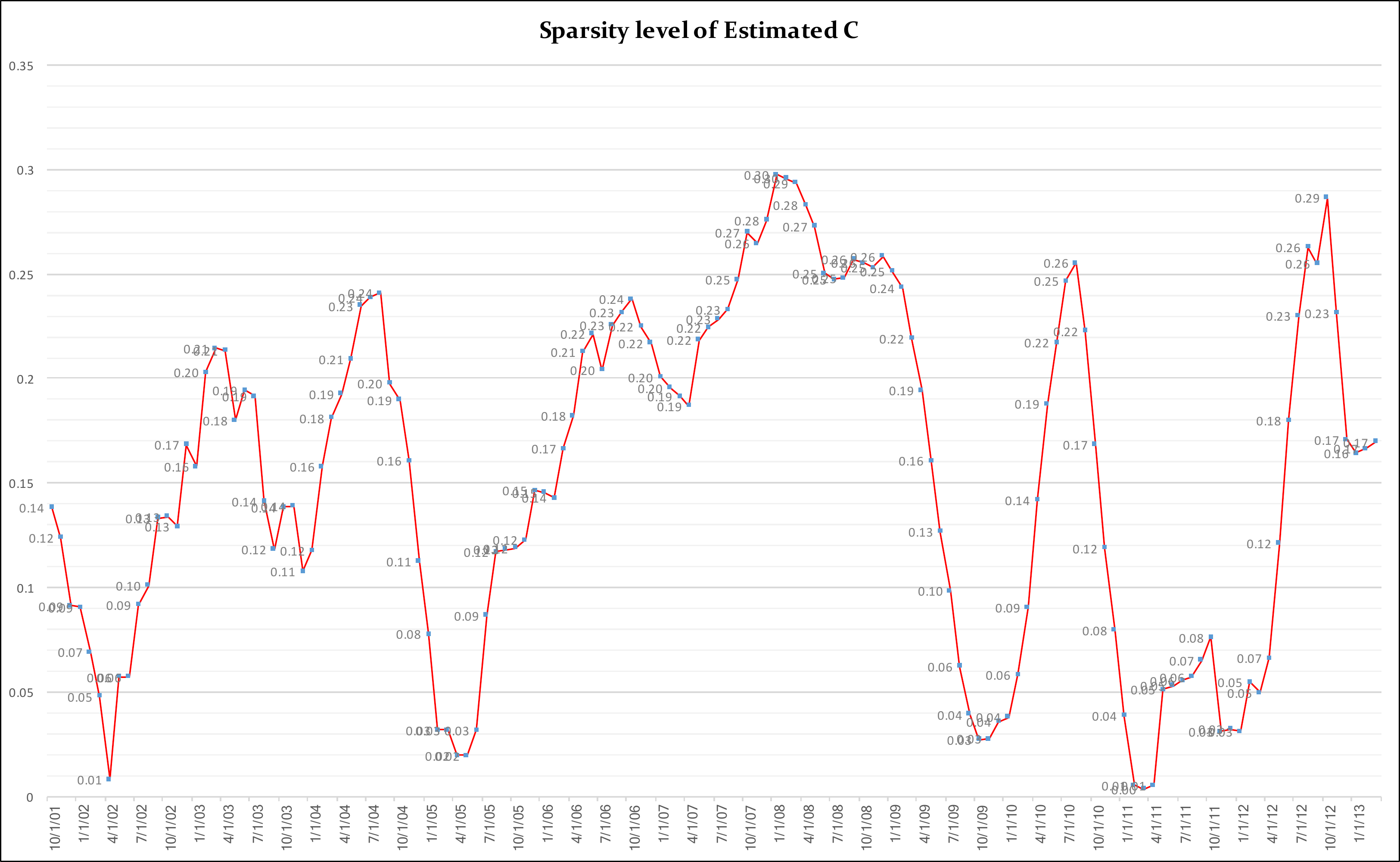}
\end{figure}
The connectivity patterns in $C$ show more complex and nuanced patterns than for stocks. Several local peaks depicted correspond to the following events: (i) March-April 2003, when the Federal Reserve cut the Effective Federal Funds Rate aggressively driving it down to 1\%, the lowest level in 45 years up to that point, (ii) January-March 2008, a significant decline in the major US stock indexes, coupled with
the freezing up of the market for auctioning rate securities with investors declining to bid, (iii) January-April 2009, characterizes the unfolding of the European debt crisis with a rescue package put together for Greece and significant downgrades of Spanish and Portuguese sovereign debt by rating agencies and (iv) July 2010, that correspond to the enactment of the Dodd-Frank Wall Street Reform and Consumer Protection Act and the acceleration of quantitative easing by the Federal Reserve Board.

%
%
\medbreak
Based on the previous findings, we partition the time frame spanning 2001-2016 into the following periods: pre- (2001/07--2007/03), during- 
(2007/01--2009/12) and post-crisis (2010/01-2016/06) one. We estimate the model parameters using the data within the entire sub-period(s). 

The estimation procedure of the transition matrix $A$ for different periods is identical to that described above using subsamples over rolling-windows. For the pre- and post- crisis periods, since we have 76 and 77 samples respectively, the stability selection threshold is set at 
0.75, whereas for the during-crisis period, at 0.6 to compensate for the small sample size (36). Table~\ref{table:stock} shows the average R-square for all 71 stocks, as well as its standard deviation, which is calculated based on in-sample fit; i.e.,the proportion of variation explained by using the $\VAR(1)$ model to fit the data.  The overall sparsity level and the spectral radius of the estimated transition matrices $A$ are also presented. The results are consistent with the previous finding of increased connectivity during the crisis. Further, for all periods the estimate of the spectral radius is fairly large, indicating strong temporal dependence of the log-returns.
\begin{table}[H]
	\centering
	\caption{Summary for estimated $A$ within different periods.}\label{table:stock}
	\begin{tabular}{l|ccc}
		\specialrule{.1em}{.05em}{.05em} 
		&  2001/07--2007/03 & 2007/01--2009/12 & 2010/01--2016/06 \\ \hline
		Averaged R sq & 0.31 & \underline{0.72} & 0.28  \\  		
		Sd of R sq &  0.103 & 0.105 & 0.094 \\
		Sparsity level of $\widehat{A}$ & 	0.17	& 0.23 & 0.19 \\ 
		Spectral radius of $\widehat{A}$& 0.67 & 0.90 & 0.75  \\
		\specialrule{.1em}{.05em}{.05em} 
	\end{tabular}
\end{table}		
Figures~\ref{fig:EstimatedA_01to07} to \ref{fig:EstimatedA_10to16} depict the estimated transition matrices $A$ for different periods, as a network, with edges thresholded based on their magnitude for ease of presentation. The node or edge coloring red/blue indicates the sign positive/negative of the corresponding entry in the transition matrix. Further, node size is proportional to the out-degree, thus indicating
which stocks influence other stocks in the next time period. The most striking feature is the outsize influence exerted by the insurance company AIG and the investment bank Goldman Sachs, whose role during the financial crisis has been well documented \citep{financial2011financial}. On the other hand, the pre- and post-crisis periods are characterized by more sparse and balanced networks, in terms of in- and out-degree magnitude. 

Next, we focus on the key motivation for developing the proposed modeling framework, namely the
inter-dependence of stocks and macroeconomic variables over the specified three sub-periods. The $p$-value for testing the hypothesis of
lack of block ``Granger causality" $H_0:B=0$, together with the spectral radius and the sparsity level for the estimated $C$ transition matrices are listed in Table~\ref{table:summaryBC}. Specifically, for all three periods, the rank of estimated $B$ is 1, indicating that the stock market as captured by its leading stocks, ``Granger-causes" the temporal evolution of the macroeconomic variables. The fact that
the rank of $B$ is 1, indicates that the inter-block influence can be captured as a single portfolio acting in unison. To investigate the relative importance of each sector in the portfolio, we group the stocks by sectors. The proportion of each sector (up to normalization) is obtained by summing up the loadings (first right singular vector of the estimated $B$) of the stocks within this sector, weighted by their market capitalization. Further, the estimated transition matrices $C$'s are depicted in network form, in Figures~\ref{fig:precrisis} to~\ref{fig:postcrisis}. It is worth noting that the temporal correlation of the macroeconomic variables significantly increased during the crisis.

Note that the proportion of various sectors in the portfolio is highly consistent with their role in stock market. For example, before crisis the financial sector had a large market capitalization (roughly 20\%), while it shrunk (to roughly 12\%) after the crisis. Also, the Information Technology (IT) and Financial (FIN) sectors are the ones exhibiting highest volatility (high beta) relative to the market, while the Utilities is the one with low volatility (low beta) , a well established stylized fact in the literature for the time period under consideration.

Next, we discuss some key relationships emerging from the model. We start with total employment (ETTL), whose dynamics are only influenced by its own past values as seen by the lack of an incoming arrow in Figure~\ref{fig:duringcrisis}. Further, an examination of the left singular vector (see Table~\ref{table:leftSV}) of $B$ strongly indicates the impact of the stock market on total employment. This finding is consistent with the analysis in \citet{farmer2015stock}, which argues that the crash of the stock market provides a plausible explanation for the great recession. However, the analysis in \citet{farmer2015stock} is based on bivariate VAR models involving only employment and the stock index. Therefore, there is a possibility that the stock market is reacting to some other information captured by other macroeconomic variables, such as GDP, capital spending, inflation, interest rates, etc. However, our high-dimensional VAR model simultaneously analyzes a key set of macroeconomic variables and also accounts for the influence of the largest stocks in the market. Hence, it automatically overcomes the criticism leveraged by \cite{sims1992interpreting} about misinterpretations of findings from small scale VAR models due to the omission of important variables, and further echoed in the discussion in~\cite{bernanke2005measuring}.

Another interesting finding is the strong influence of the stock market on GDP in the pre- and post-crisis period, consistent with the popular view of being a leading indicator for GDP growth. Further, capital utilization is positively impacted during the crisis period by GDP growth and total employment---which are both falling and hence reducing capital utilization---and further accentuated by the impact of the stock market---also falling---thus reinforcing the lack of available capital goods and resources.

In summary, the brief analysis presented above provides interesting insights into the interaction of the stock market with the real economy, identifies a number of interesting developments during the crisis period and reaffirms a number of findings studied in the literature, while ensuring that a much larger information set is utilized (a larger number of variables included) than in previous analysis.
Therefore, high-dimensional multi-block VAR models are useful for analyzing complex temporal relationships and provide insights into their dynamics.
\begin{table}[!htbp]
	\centering
	\caption{Summary for estimated $B$ and $C$ within different periods.}\label{table:summaryBC}
	\begin{tabular}{l|ccc}
		\specialrule{.1em}{.05em}{.05em} 
		& 2001/07--2007/03 & 2007/01--2009/12 & 2010/01--2016/16 \\ \hline
		$p$-value for testing $H_0:B=0$ & 0.075 & 0.009 & 0.044 \\
		\specialrule{.05em}{.05em}{.05em}
		Sparsity level of $\widehat{C}$ & 	0.06	& 0.25 & 0.06 \\ 
		Spectral radius of $\widehat{C}$& 0.35 & 0.76 & 0.40  \\ 
		\specialrule{.1em}{.05em}{.05em} 
	\end{tabular}
\end{table}

\begin{table}[!htbp]
	\centering
	\caption{Left Singular Vectors of Estimated $B$ for different periods}\label{table:leftSV}
	\begin{tabular}{l|ccc}
		\specialrule{.1em}{.05em}{.05em} 
		&	Pre-Crisis	&	During-Crisis	&	Post-Crisis	\\ \hline
		FFR	&	-0.24	&	-0.26	&	-0.23	\\
		T10yr	&	-0.09	&	0.14	&	0.16	\\
		UNEMPL	&	-0.07	&	0.01	&	-0.07	\\
		IPI	&	-0.43	&	0.34	&	0.26	\\
		ETTL	&	0.33	&	0.24	&	0.13	\\
		M1	&	0.23	&	-0.12	&	-0.47	\\
		AHES	&	-0.01	&	0.30	&	0.17	\\
		CU	&	-0.49	&	0.32	&	0.27	\\
		M2	&	0.10	&	-0.04	&	-0.32	\\
		HS	&	0.51	&	-0.02	&	-0.02	\\
		EX	&	-0.18	&	0.41	&	0.06	\\
		PCEQI	&	-0.07	&	-0.18	&	0.41	\\
		GDP	&	0.10	&	-0.02	&	0.05	\\
		PCEPI	&	0.00	&	0.14	&	-0.01	\\
		PPI	&	-0.15	&	0.00	&	0.06	\\
		CPI	&	0.01	&	0.15	&	-0.31	\\
		SP.IND	&	-0.06	&	-0.53	&	0.38	\\
		\specialrule{.1em}{.05em}{.05em} 
	\end{tabular}
\end{table}

\begin{remark}
	We also applied our multi-block model with the first block $X_t$ corresponding to the macro-economic variables and the second block $Z_t$ the stocks variables (results not shown). The key question is whether there is also ``Granger causality" from the broader economy to the stock market. The results are inconclusive due to sample size issues that do not allow us to properly test for the key hypothesis whether $B=0$ or not. Specifically, the length of the sub-periods is short compared to the dimensionality required for the test procedure. A similar issue arises, which is related to the detection boundary for the sparse testing procedure during the crisis period. Further, for a sparse $B$, an examination of its entries shows that Employment Total did not impact the stock market, which is in line with the conclusion reached at the aggregate level by \citet{farmer2015stock}. On the other hand, GDP negatively impacts stock log-returns, which may act as a leading indicator for suppressed investment and business growth and hence future stock returns. 
\end{remark}

\section{Discussion.}\label{sec:Discussion}

We briefly discuss generalizations of the model to the case of more than two blocks, as mentioned in the introductory section.
For the sake of concreteness, consider a triangular recursive linear dynamical system given by:
\begin{equation}\label{model}
\begin{split}
X^{(1)}_t & = A_{11} X^{(1)}_{t-1} + \epsilon^{(1)}_t, \\
X^{(2)}_t & = A_{12} X^{(1)}_{t-1} + A_{22} X^{(2)}_{t-1} + \epsilon^{(2)}_t, \\
X^{(3)}_t & = A_{13}X^{(1)}_{t-1} + A_{23}X^{(2)}_{t-1} + A_{33}X^{(3)}_{t-1} + \epsilon^{(3)}_t, \\
& \vdots 
\end{split}
\end{equation}
where $X^{(j)}\in\mathbb{R}^{p_j}$ denotes the variables in group $j$, $A_{ij}~(i<j)$ encodes the dependency of $X^{(j)}$ on the past values of variables in group $i$, and $A_{jj}$ encodes the dependency on its own past values. Further, $\{\epsilon^{(j)}_t\}$ is the innovation process that is neither temporally, nor cross-sectionally correlated, i.e.,
\begin{equation*}
\Cov(\epsilon_t^{(j)} ,\epsilon_s^{(j)}) = 0~(s\neq t), \quad  \Cov(\epsilon_t^{(i)} ,\epsilon_s^{(j)}) = 0~(i\neq j,~\forall~(s,t)),\quad \Cov(\epsilon_t^{(j)}, \epsilon_t^{(j)}) = \big(\Omega^{(j)}\big)^{-1},
\end{equation*}
with $\Omega^{(j)}$ capturing the conditional contemporaneous dependency of variables within group $j$. The model in~\eqref{model} can also be viewed from a multi-layered time-varying network perspective: nodes in each layer are ``Granger-caused" by nodes from its previous layers, and are also dependent on its own past values. As previously mentioned, in various real applications, it is of interest to obtain estimates of the transition matrices, and/or test if ``Granger-causality" is present between interacting blocks; i.e., to test $A_{ij}=0$ for some $i\neq j$.

The triangular structure of the system decouples the estimation of the transition matrices from each equation, and hence a
straightforward extension of the estimation procedure presented in Section~\ref{sec:estimation} becomes applicable. Specifically, to obtain estimates of the transition matrices $A_{ij}$'s for fixed $j$ and $1\leq i\leq j$, and the inverse covariance $\Omega^{(j)}$, the optimization problem is formulated as follows:
\small
\begin{align}
(\{\widehat{A}_{ij}\}_{i\leq j},\widehat{\Omega}^{(j)}) = \argmin\limits\limits_{A_{ij},\Omega^{(j)}}\Big\{  -\log\det\Omega^{(j)} + &\frac{1}{T}\sum_{t=1}^T\big( x_t^{(j)} - \sum_{i=1}^j A_{ij}x_{t-1}^{(i)}\big)'\Omega^{(j)}\big(x_t^{(j)} - \sum_{1\leq i\leq j}A_{ij}x_{t-1}^{(i)}\big)\nonumber\\
& + \sum_{i=1}^j \mathcal{R}(A_{ij}) + \rho^{(j)} \|\Omega^{(j)}\|_{1,\text{off}}  \Big\}, \label{eqn:opt-generalized}
\end{align}
\normalsize
where the exact expression for the $\mathcal{R}(A_{ij})$ adapts to the structural assumption imposed on the corresponding transition matrix (sparse/low-rank). Solving~\eqref{eqn:opt-generalized} again requires an iterative algorithm involving the alternate update between transition matrices and the inverse covariance matrices. Further, for updating the values of the transition matrices, a
cyclic block-coordinate updating procedure is used. 

Consistency results can be established analogously to those provided in Section~\ref{sec:theory}, under the posited conditions of restricted strong convexity (RSC) and a deviation bound. With a larger number of interacting blocks of variables, lower bounds for the lower extremes of the spectra involve all corresponding transition matrices. The error rates that can be obtained are as follows: (i) if equation $k$ only involves sparse transition matrices, then the finite-sample bounds of the transition matrices in this layer in Frobenius norm are of the order $O\big(\sqrt{\tfrac{\log p_k+\log \sum\nolimits_{i\leq k}p_k}{T}}\big)$, while (ii) if some of the transition matrices are assumed low rank, then the corresponding finite sample bounds are of the order $O\big(\sqrt{\tfrac{p_k+\sum\nolimits_{i\leq k}p_k}{T}}\big)$.

Another generalization that can be handled algorithmically with the same estimation procedure discussed above is the presence of
$d$-lags in the specification of the linear dynamical system. Based on the consistency results developed in this work, together with the theoretical findings for VAR($d$) models presented in \citet{basu2015estimation}, we expect all the established theoretical properties of the transition matrices estimates to go through under appropriate RSC and deviation bound conditions.

\clearpage
\begin{figure}[H]
	\centering
	\caption{Sector proportion and Estimated C for pre-crisis period}\label{fig:precrisis}\vspace*{-2mm}
	\begin{boxedminipage}{7cm}
		\includegraphics[scale=0.38]{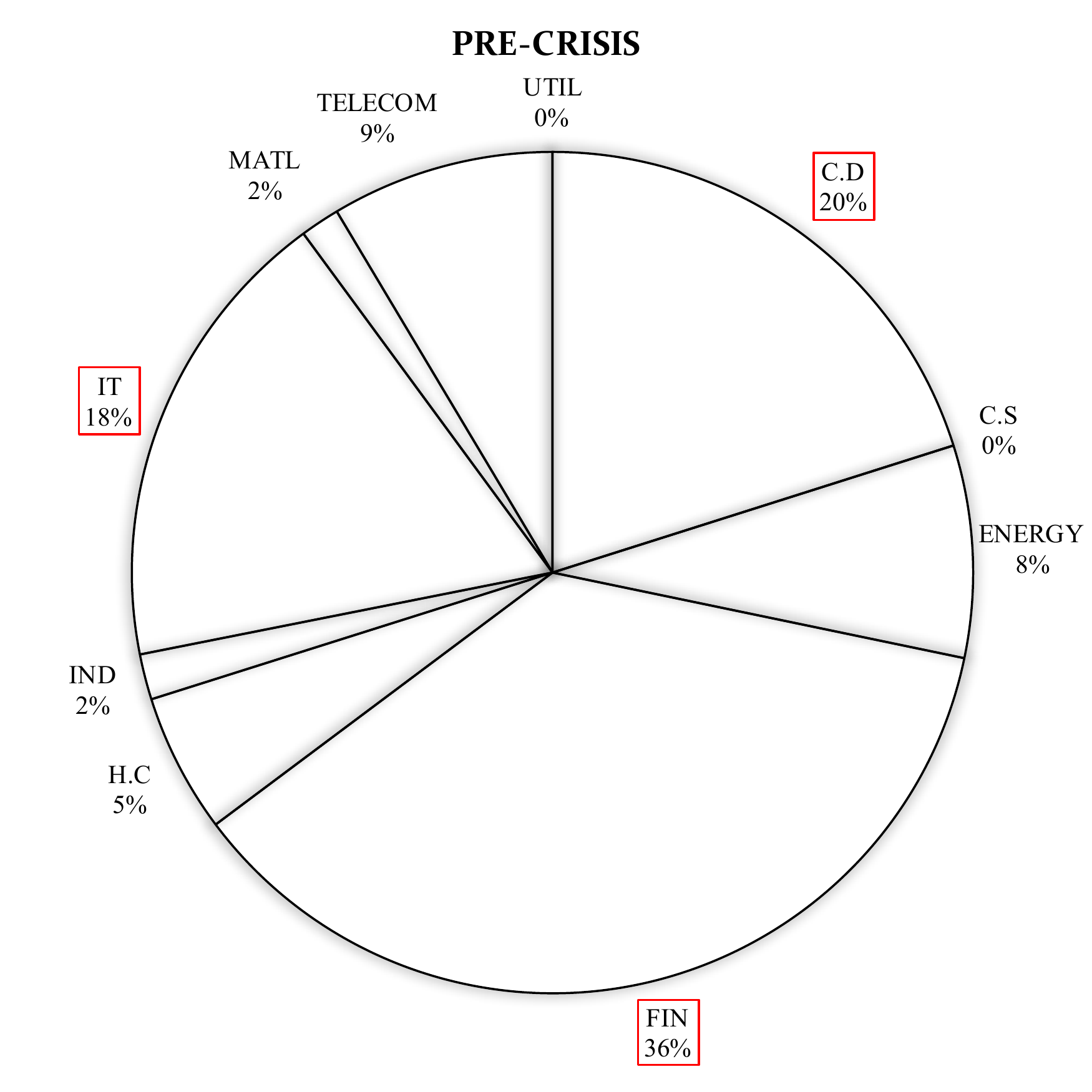}
	\end{boxedminipage}
	\begin{boxedminipage}{7cm}	
		\includegraphics[scale=0.38]{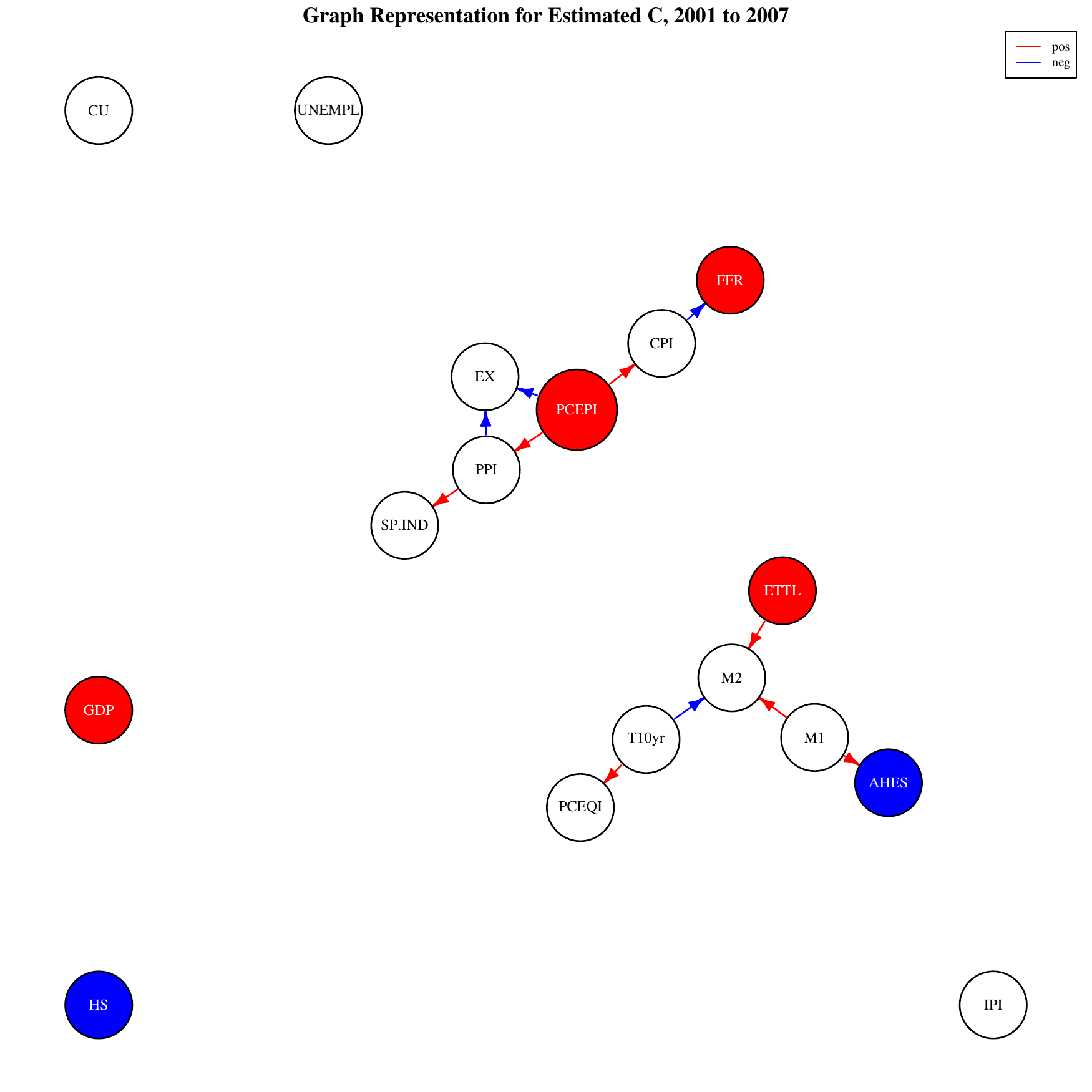}
	\end{boxedminipage}
\end{figure}
\begin{figure}[H]
	\centering
	\caption{Sector proportion and Estimated C for during-crisis period}\label{fig:duringcrisis}\vspace*{-2mm}
	\begin{boxedminipage}{7cm}
		\includegraphics[scale=0.38]{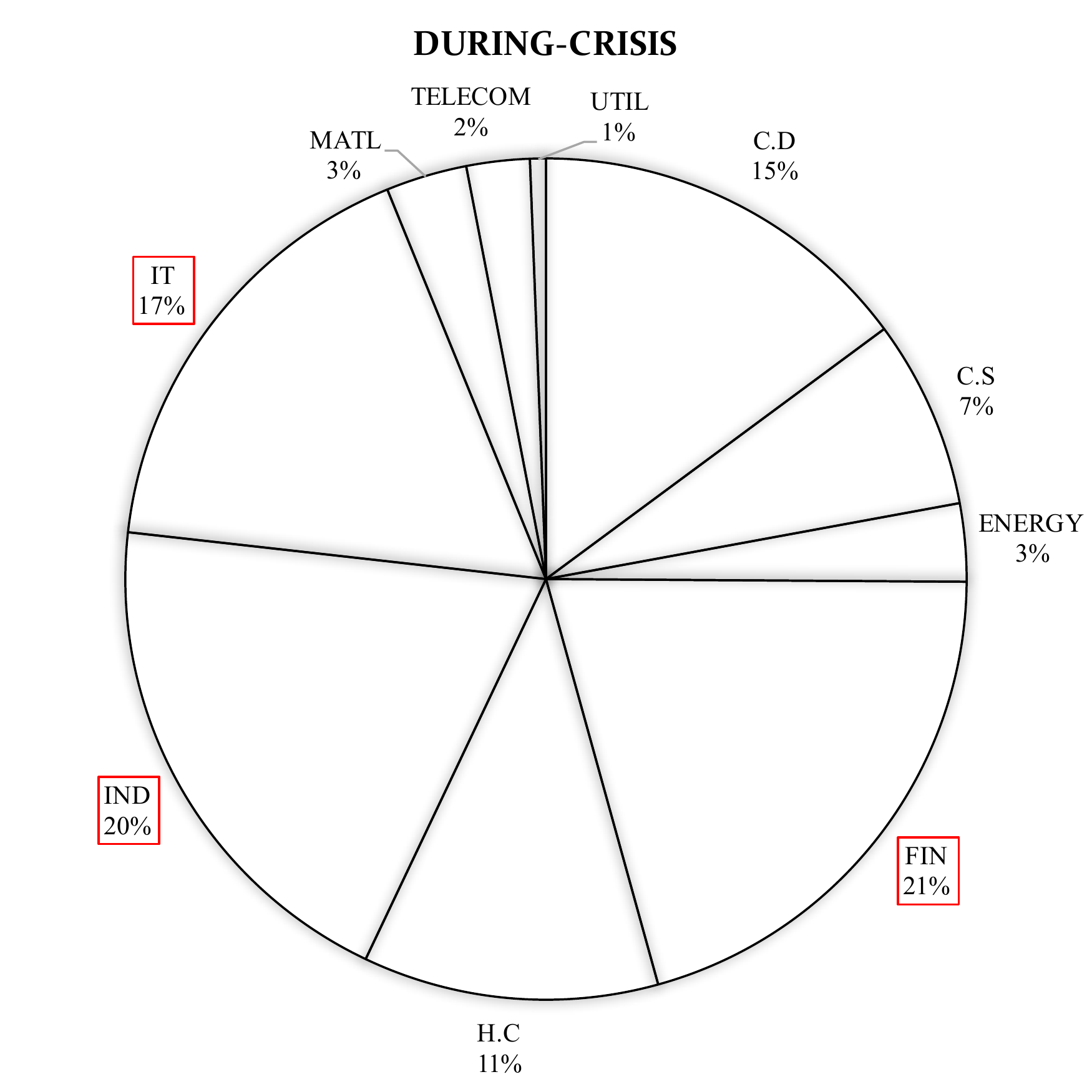}
	\end{boxedminipage}
	\begin{boxedminipage}{7cm}
		\includegraphics[scale=0.38]{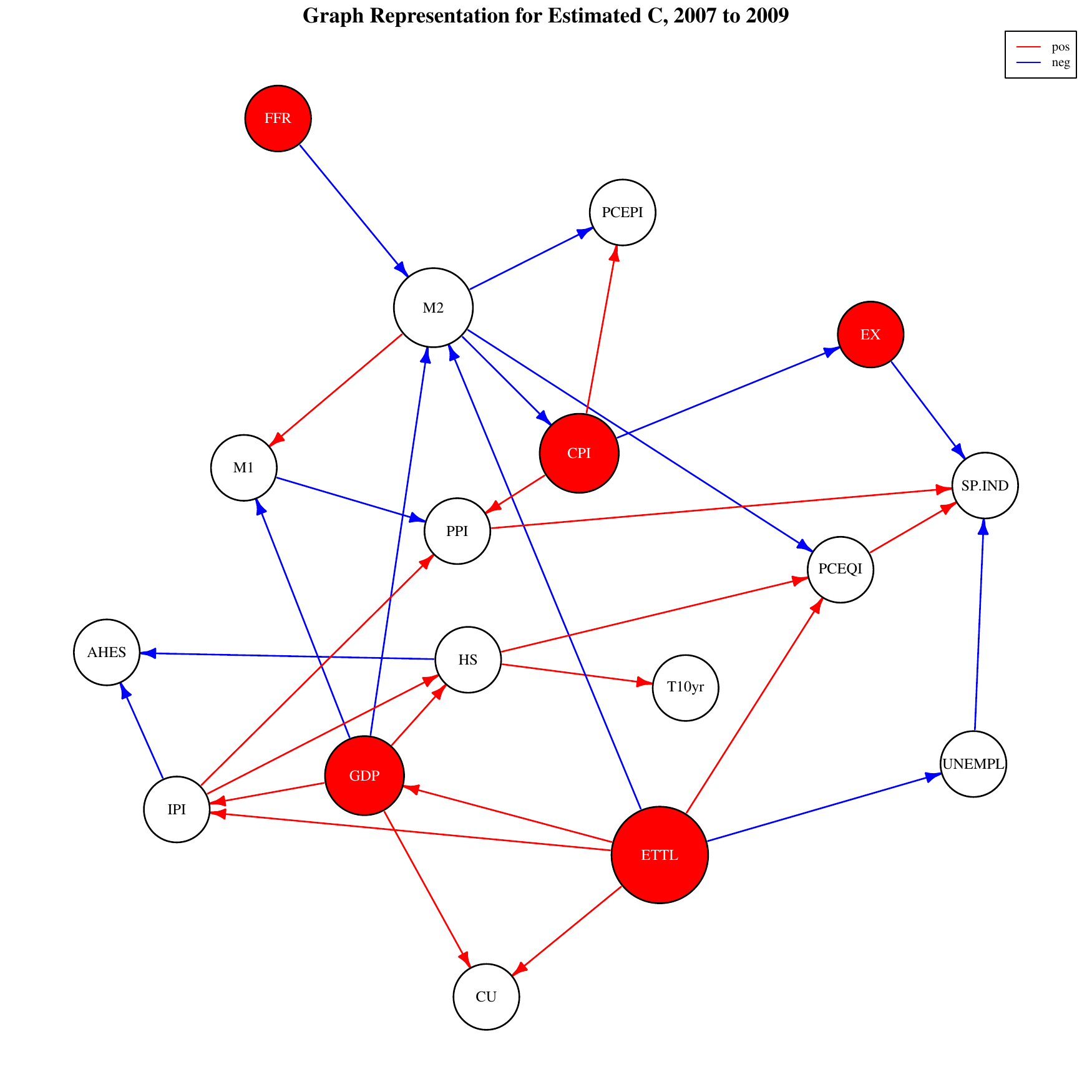}
	\end{boxedminipage}
\end{figure}
\begin{figure}[H]
	\centering
	\caption{Sector proportion and Estimated C for post-crisis period}\label{fig:postcrisis}\vspace*{-2mm}
	\begin{boxedminipage}{7cm}
		\includegraphics[scale=0.38]{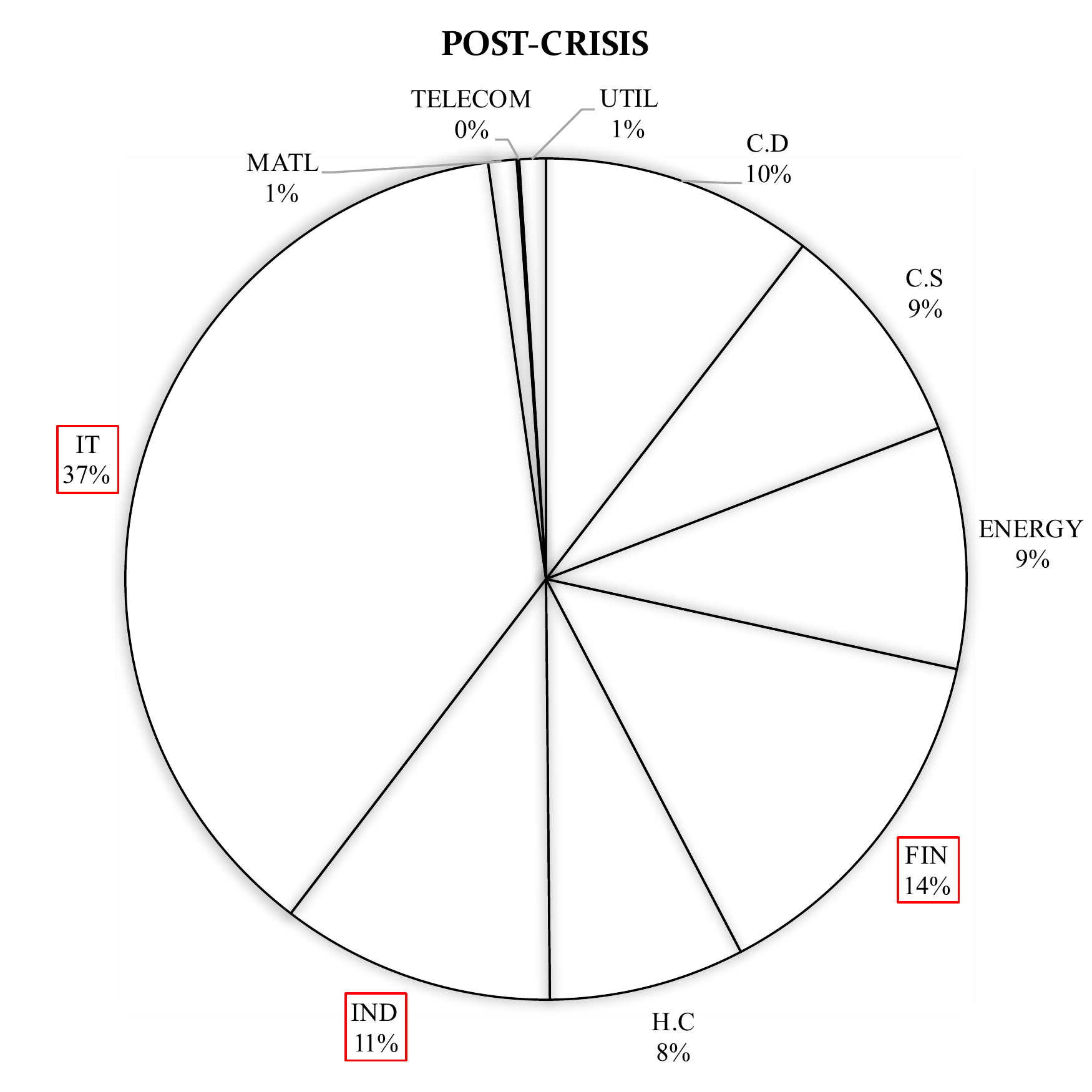}
	\end{boxedminipage}
	\begin{boxedminipage}{7cm}
		\includegraphics[scale=0.38]{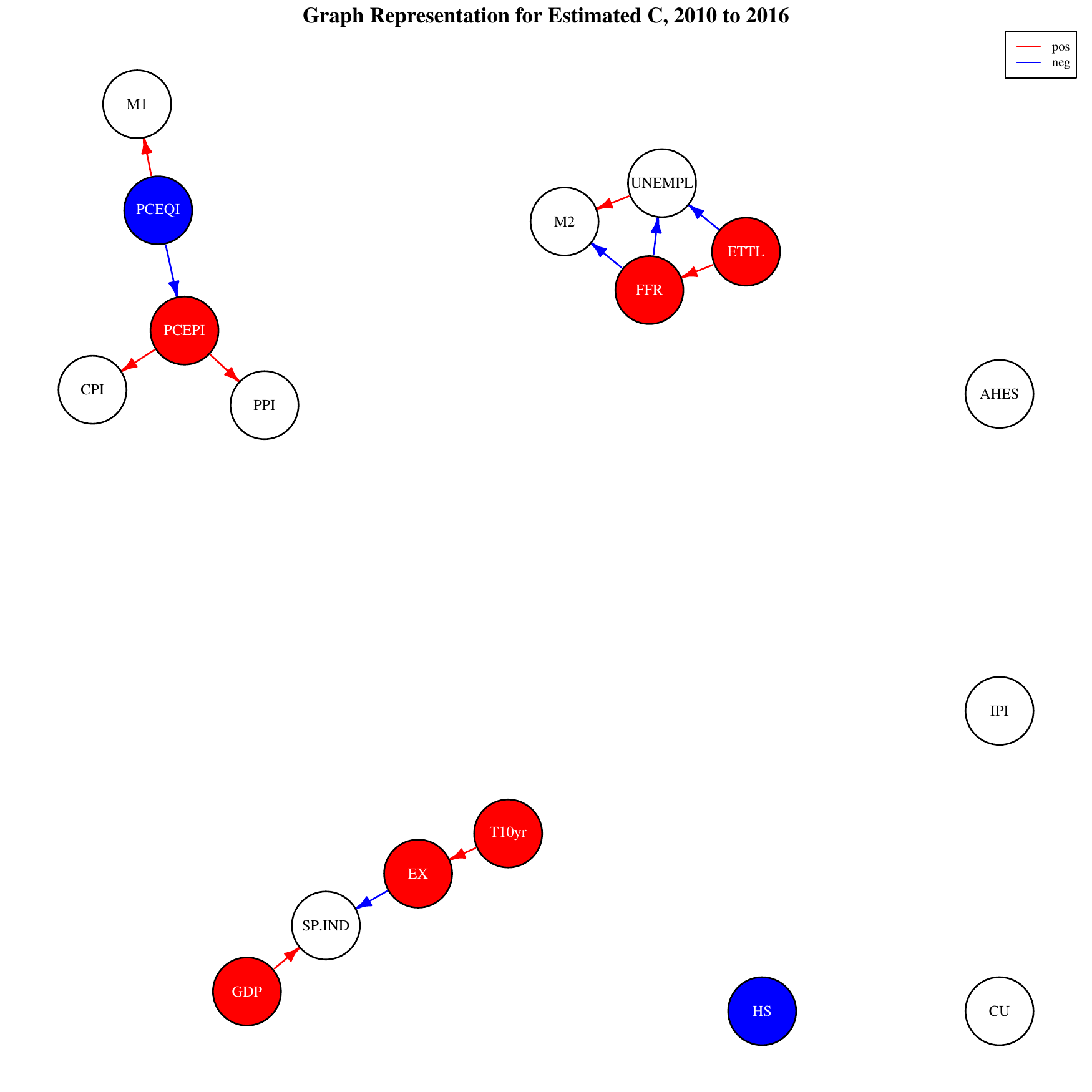}
	\end{boxedminipage}
\end{figure}
\begin{figure}[h]
	\centering	
	\caption{Estimated transition matrix for stock dynamics between 2001 to 2007}\label{fig:EstimatedA_01to07}
	\begin{boxedminipage}{15cm}
		\includegraphics[scale=.8]{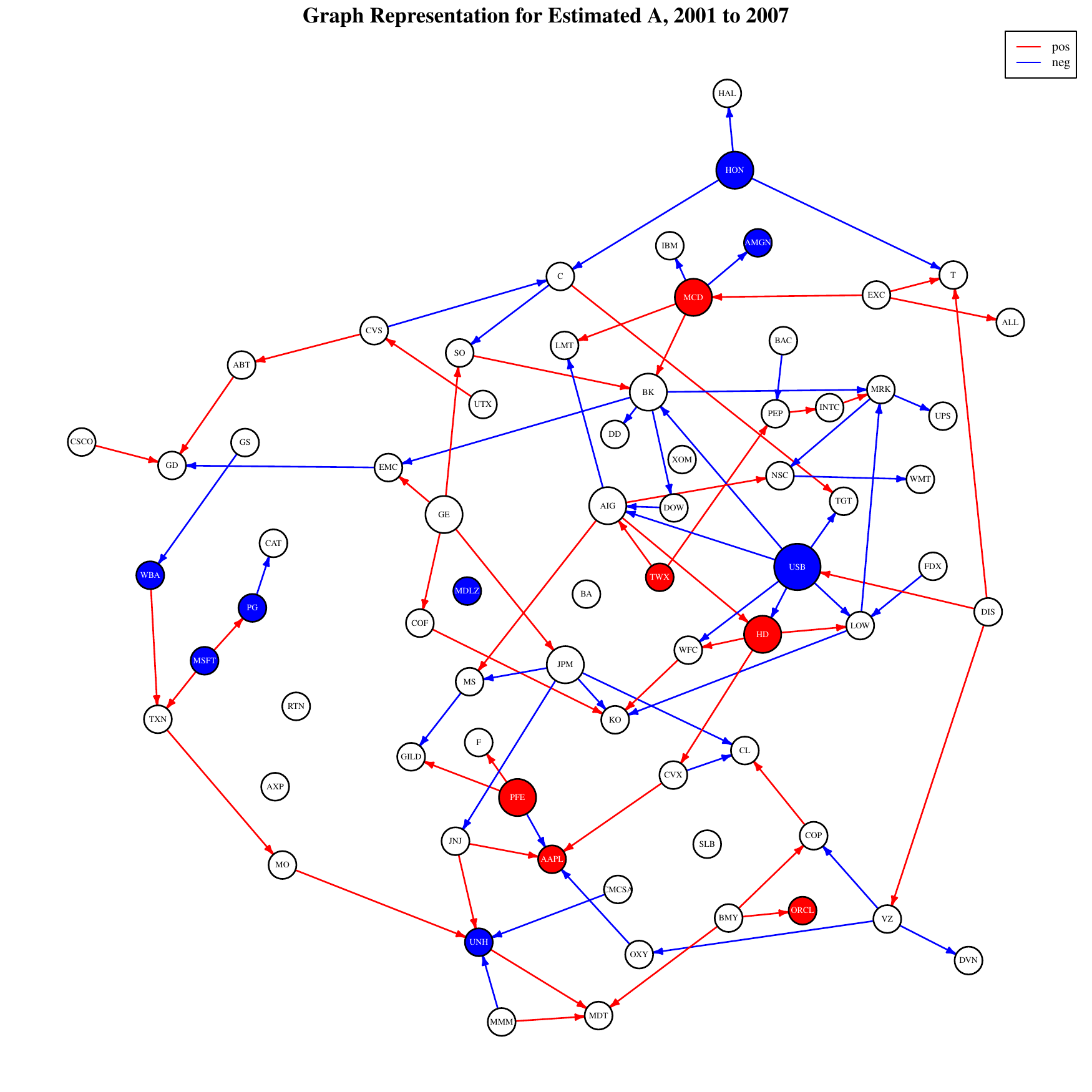}
	\end{boxedminipage}
\end{figure}
\begin{figure}
	\centering
	\caption{Estimated transition matrix for stock dynamics between 2007 to 2009}\label{fig:EstimatedA_07to09}
	\begin{boxedminipage}{15cm}
		\includegraphics[scale=.8]{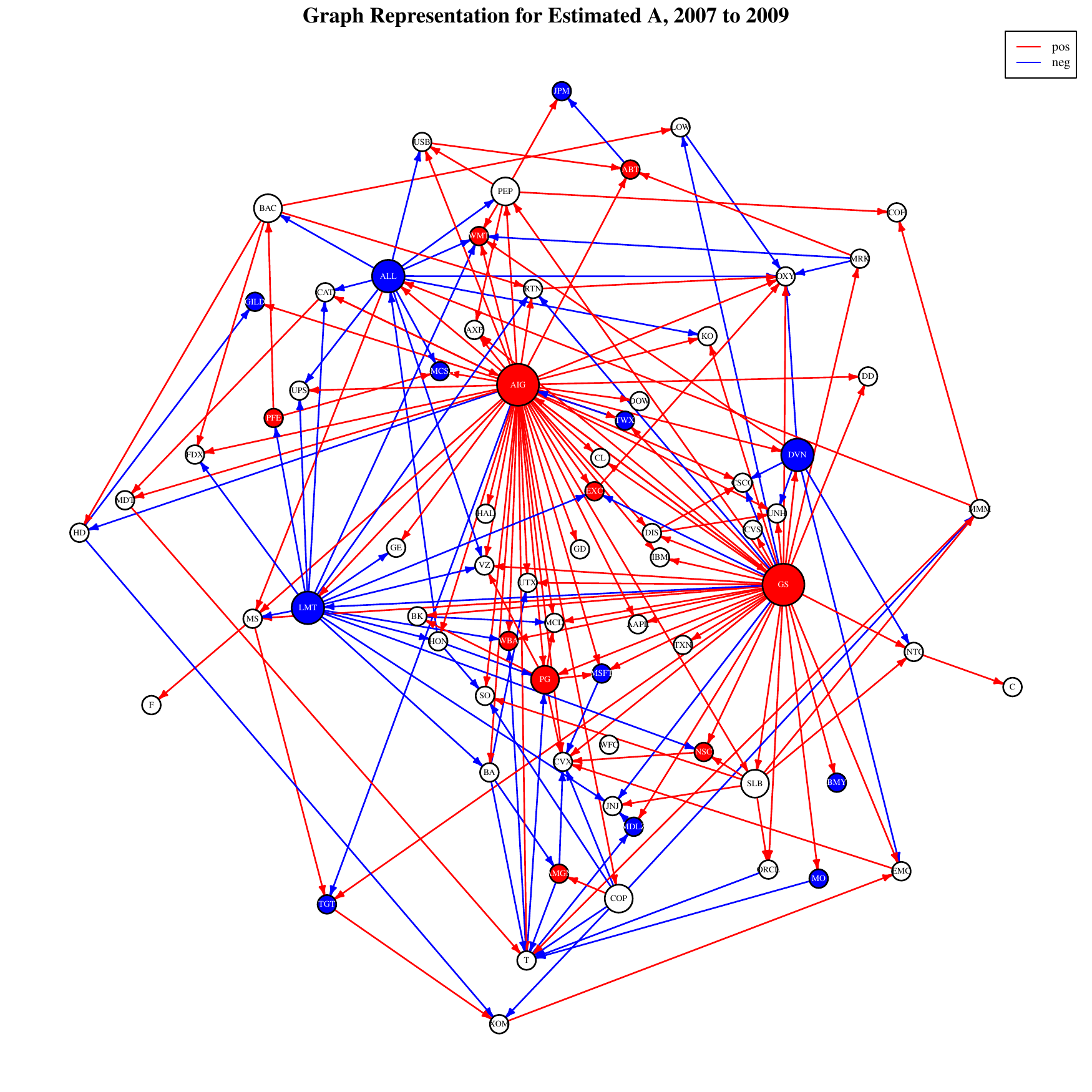}
	\end{boxedminipage}
\end{figure}
\begin{figure}
	\centering
	\caption{Estimated transition matrix for stock dynamics between 2010 to 2016}\label{fig:EstimatedA_10to16}
	\begin{boxedminipage}{15cm}
		\includegraphics[scale=.8]{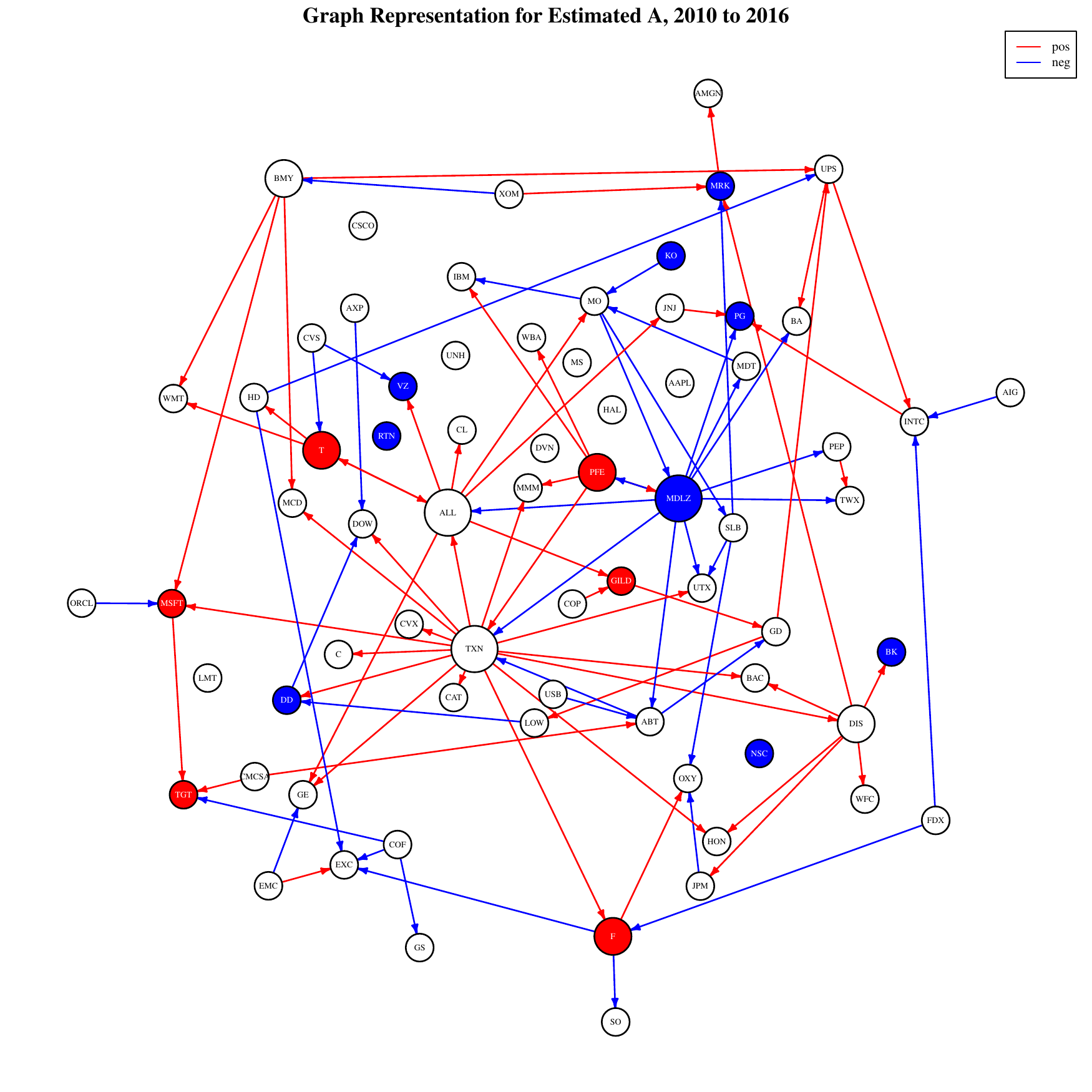}
	\end{boxedminipage}
\end{figure}

\clearpage
\appendix\label{sec:appendix}
\section{Additional Theorems and Proofs for Theorems.}\label{appendix:theorems}

In this section, we introduce two additional theorems that respectively establish the consistency properties for the initializers $\widehat{A}^{(0)}$ and $(\widehat{B}^{(0)},\widehat{C}^{(0)})$, for {\em fixed} realizations of the processes $\{X_t\}$ and $\{Z_t\}$. Specifically, $\widehat{A}^{(0)}$ and $(\widehat{B}^{(0)},\widehat{C}^{(0)})$ are solutions to the following optimization problems: 
\begin{align}
\widehat{A}^{(0)} & := \argmin\limits_{A}\big\{ \tfrac{1}{T} \vertiii{\mathcal{X}^T - \mathcal{X}A'}_F + \lambda_A \|A\|_1 \big\}, \label{eqn:optA0}\\
(\widehat{B}^{(0)},\widehat{C}^{(0)}) & := \argmin\limits_{B,C}\big\{ \tfrac{1}{T} \vertiii{\mathcal{Z}^T - \mathcal{X}B' - \mathcal{Z}C'}_F + \lambda_B \vertiii{B}_* + \lambda_C \vertii{C}_1\big\}. \label{eqn:optBC0} 
\end{align}
Note that they also correspond to estimators of the setting where there is no contemporaneous dependence among the idiosyncratic error processes. If we additionally introduce operators $\mathfrak{X}_0$ and $\mathfrak{W}_0$ defined as 
\begin{equation*}
\begin{split}
\mathfrak{X}_0:&~~~\mathfrak{X}_0(\Delta) = \mathcal{X}'\Delta,~~\text{for }\Delta\in\mathbb{R}^{p_1\times p_1}, \\
\mathfrak{W}_0:&~~~\mathfrak{W}_0(\Delta) = \mathcal{W}'\Delta,~~\text{for }\Delta\in\mathbb{R}^{p_2\times (p_1+p_2)} \quad \text{where }\mathcal{W}:=[\mathcal{X},\mathcal{Z}],
\end{split}
\end{equation*}
then~\eqref{eqn:optA0} and~\eqref{eqn:optBC0} can be equivalently written as
\begin{align*}
\widehat{A}^{(0)} & := \argmin\limits_{A}\big\{ \tfrac{1}{T} \vertiii{\mathcal{X}^T - \mathfrak{X}_0(A)}_F + \lambda_A \|A\|_1 \big\}, \\
(\widehat{B}^{(0)},\widehat{C}^{(0)}) & := \argmin\limits_{B,C}\big\{ \tfrac{1}{T} \vertiii{\mathcal{Z}^T - \mathfrak{W}_0(B_{\text{aug}},C_{\text{aug}})}_F + \lambda_B \vertiii{B}_* + \lambda_C \vertii{C}_1\big\},
\end{align*}
where $B_{\text{aug}}:=[B,O_{p_2\times p_2}],C_{\text{aug}}:=[O_{p_2\times p_1},C]$.

\begin{theorem}[Error bounds for $\widehat{A}^{(0)}$]\label{thm:consistencyA} Suppose the operator $\mathfrak{X}_0$ satisfies the RSC condition with norm $\Phi(\Delta)=\|\Delta\|_1$, curvature $\alpha_{\text{RSC}}>0$ and tolerance $\tau>0$, so that 
	\begin{equation*}
	s_A^\star\tau \leq \alpha_{\text{RSC}}/32.
	\end{equation*}
	Then, with regularization parameter $\lambda_A$ satisfying $\lambda_A \geq 4\| \mathcal{X}'\mathcal{U}/T\|_\infty$, the solution to~\eqref{eqn:optA0} satisfies the following bounds:
	\begin{equation*}
	\smalliii{\widehat{A}^{(0)}-A^\star}_F\leq 12\sqrt{s^\star_A}\lambda_A/\alpha_{\text{RSC}} \qquad \text{and} \qquad \|\widehat{A}-A^\star\|_1\leq 48s^\star_A\lambda_A/\alpha_{\text{RSC}}.
	\end{equation*}
\end{theorem}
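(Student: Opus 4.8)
The plan is to follow the now-standard deterministic analysis for $\ell_1$-regularized $M$-estimators under a restricted strong convexity condition, as developed in \citet{basu2015estimation} and \citet{agarwal2012noisy}, specialized here to the squared least-squares loss with a decomposable $\ell_1$ penalty. Write $\widehat{\Delta}:=\widehat{A}^{(0)}-A^\star$, let $S$ denote the support of $A^\star$ with $|S|=s_A^\star$, and note that $\mathcal{X}^T-\mathfrak{X}_0(A^\star)=\mathcal{U}$.

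First I would establish the \emph{basic inequality}. Since $\widehat{A}^{(0)}$ minimizes the objective in~\eqref{eqn:optA0}, comparing its value to that at $A^\star$ and expanding the squared Frobenius loss yields
\[
\tfrac{1}{T}\smalliii{\mathfrak{X}_0(\widehat{\Delta})}_F^2 \leq \tfrac{2}{T}\llangle \mathcal{U}, \mathfrak{X}_0(\widehat{\Delta})\rrangle + \lambda_A\big(\|A^\star\|_1-\|\widehat{A}^{(0)}\|_1\big).
\]
The noise term is controlled by Hölder duality between the element-wise $\|\cdot\|_\infty$ and $\|\cdot\|_1$ norms, $\tfrac{2}{T}\llangle \mathcal{U}, \mathfrak{X}_0(\widehat{\Delta})\rrangle \le 2\|\mathcal{X}'\mathcal{U}/T\|_\infty\|\widehat{\Delta}\|_1 \le \tfrac{\lambda_A}{2}\|\widehat{\Delta}\|_1$, invoking the hypothesis $\lambda_A\ge 4\|\mathcal{X}'\mathcal{U}/T\|_\infty$. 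Combining this with the decomposability estimate $\|A^\star\|_1-\|\widehat{A}^{(0)}\|_1\le \|\widehat{\Delta}_S\|_1-\|\widehat{\Delta}_{S^c}\|_1$ and the split $\|\widehat{\Delta}\|_1=\|\widehat{\Delta}_S\|_1+\|\widehat{\Delta}_{S^c}\|_1$ gives
\[
\tfrac{1}{T}\smalliii{\mathfrak{X}_0(\widehat{\Delta})}_F^2 \le \tfrac{3\lambda_A}{2}\|\widehat{\Delta}_S\|_1 - \tfrac{\lambda_A}{2}\|\widehat{\Delta}_{S^c}\|_1.
\]

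Next, because the left-hand side is nonnegative, this immediately yields the \emph{cone condition} $\|\widehat{\Delta}_{S^c}\|_1\le 3\|\widehat{\Delta}_S\|_1$, hence $\|\widehat{\Delta}\|_1\le 4\|\widehat{\Delta}_S\|_1\le 4\sqrt{s_A^\star}\,\smalliii{\widehat{\Delta}}_F$. I would then invoke the RSC condition (Definition~\ref{defn:RSC}) with $\Phi(\cdot)=\|\cdot\|_1$ to lower-bound the left-hand side by $2\alpha_{\text{RSC}}\smalliii{\widehat{\Delta}}_F^2-2\tau\|\widehat{\Delta}\|_1^2$, and use the cone condition to absorb the tolerance term: since $\|\widehat{\Delta}\|_1^2\le 16 s_A^\star\smalliii{\widehat{\Delta}}_F^2$, the assumption $s_A^\star\tau\le \alpha_{\text{RSC}}/32$ ensures $2\tau\|\widehat{\Delta}\|_1^2\le\alpha_{\text{RSC}}\smalliii{\widehat{\Delta}}_F^2$, leaving an effective curvature of order $\alpha_{\text{RSC}}$. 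Feeding this back gives $\alpha_{\text{RSC}}\smalliii{\widehat{\Delta}}_F^2\le \tfrac{3\lambda_A}{2}\sqrt{s_A^\star}\,\smalliii{\widehat{\Delta}}_F$; dividing through by $\smalliii{\widehat{\Delta}}_F$ delivers the Frobenius bound, and substituting back into $\|\widehat{\Delta}\|_1\le 4\sqrt{s_A^\star}\,\smalliii{\widehat{\Delta}}_F$ delivers the $\ell_1$ bound, with the stated constants $12$ and $48$ emerging from a more conservative bookkeeping of the numerical factors.

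The argument is entirely deterministic given its two structural inputs---the RSC condition and the deviation bound $\lambda_A\ge 4\|\mathcal{X}'\mathcal{U}/T\|_\infty$---which are verified separately for random realizations in the lemmas of Appendix~\ref{appendix:KeyLemmas}; there is no probabilistic content in this statement itself. The only point demanding genuine care is the order of operations: the cone condition must be extracted purely from the nonnegativity of the loss \emph{before} RSC is brought in, since otherwise the tolerance term $\tau\|\widehat{\Delta}\|_1^2$ cannot be controlled and the argument would become circular. This decoupling is precisely what makes the hypothesis $s_A^\star\tau\le\alpha_{\text{RSC}}/32$ exactly strong enough to close the estimate.
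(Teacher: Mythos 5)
Your proposal is correct and follows essentially the same route as the paper's proof: the basic inequality from optimality, H\"older's inequality with the choice $\lambda_A\geq 4\|\mathcal{X}'\mathcal{U}/T\|_\infty$, the cone condition $\|\widehat{\Delta}_{S^c}\|_1\leq 3\|\widehat{\Delta}_S\|_1$ (the paper delegates this to Lemma~\ref{lemma:J3cone}), and then the RSC condition with the tolerance term absorbed via $s_A^\star\tau\leq\alpha_{\text{RSC}}/32$. The only difference is in the bookkeeping of numerical constants (your version is slightly tighter), which you correctly flag as immaterial to the stated bounds.
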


\begin{theorem}[Error bound for $(\widehat{B}^{(0)},\widehat{C}^{(0)})$]\label{thm:consistencyBC} Let $\mathcal{J}_{C^\star}$ be the support set of $C^\star$ and $s^\star_C$ denote its cardinality. Let $r^\star_B$ be the rank of $B^\star$.  Assume that $\mathfrak{W}_0$ satisfies the RSC condition with norm 
	\begin{equation*}
	\Phi(\Delta) := \inf\limits_{B_{\text{aug}}+C_{\text{aug}}=\Delta} \mathcal{Q}(B,C), \qquad \text{where }~\mathcal{Q}(B,C):=\vertiii{B}_*+\tfrac{\Lambda_C}{\lambda_B}\|C\|_1,
	\end{equation*}	
	curvature $\alpha_{\text{RSC}}$ and tolerance $\tau$ such that 
	\begin{equation*}
	128\tau r^\star_B < \alpha_{\text{RSC}}/4 \qquad \text{and} \qquad 64\tau s^\star_C(\lambda_C/\lambda_B)^2 < \alpha_{\text{RSC}}/4.
	\end{equation*}
	Then, with regularization parameters $\lambda_B$ and $\lambda_C$ satisfying 
	\begin{equation*}
	\lambda_B\geq 4\vertiii{\mathcal{W}'\mathcal{V}/T}_{\text{op}} \quad \text{and} \quad \lambda_C \geq 4\vertii{\mathcal{W}'\mathcal{V}/T}_\infty, 
	\end{equation*}
	the solution to~\eqref{eqn:optBC0} satisfies the following bounds:
	\begin{equation*}
	\smalliii{\widehat{B}^{(0)}-B^\star}_F^2 + \smalliii{\widehat{C}^{(0)}-C^\star}_F^2 \leq 4 \left(2r^\star_B\lambda_B^2 + s^\star_C\lambda_C^2\right)/\alpha^2_{\text{RSC}}.
	\end{equation*}	
\end{theorem}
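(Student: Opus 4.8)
The plan is to follow the decomposable-regularizer M-estimation framework of \citet{agarwal2012noisy}, adapted to the joint low-rank-plus-sparse penalty. Write $\Delta_B := \widehat{B}^{(0)} - B^\star$, $\Delta_C := \widehat{C}^{(0)} - C^\star$, and $\Delta := \Delta_{B,\text{aug}} + \Delta_{C,\text{aug}}$, noting that because $B_{\text{aug}}$ and $C_{\text{aug}}$ occupy disjoint column blocks we have $\smalliii{\Delta}_F^2 = \smalliii{\Delta_B}_F^2 + \smalliii{\Delta_C}_F^2$. First I would write down the basic inequality: optimality of $(\widehat B^{(0)}, \widehat C^{(0)})$ together with the data model $\mathcal{Z}^T = \mathfrak{W}_0(B^\star_{\text{aug}} + C^\star_{\text{aug}}) + \mathcal{V}$ yields, after expanding the squared Frobenius loss and cancelling the pure-noise term,
\begin{equation*}
\tfrac{1}{T}\smalliii{\mathfrak{W}_0(\Delta)}_F^2 \le \tfrac{2}{T}\llangle \mathfrak{W}_0(\Delta), \mathcal{V}\rrangle + \lambda_B\big(\smalliii{B^\star}_* - \smalliii{\widehat B^{(0)}}_*\big) + \lambda_C\big(\|C^\star\|_1 - \|\widehat C^{(0)}\|_1\big).
\end{equation*}

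Next I would control the cross term by passing to the adjoint, $\tfrac{1}{T}\llangle \mathfrak{W}_0(\Delta),\mathcal{V}\rrangle = \tfrac{1}{T}\llangle \Delta, \mathcal{W}'\mathcal{V}\rrangle$, and splitting along the two blocks of $\Delta$; invoking nuclear/operator duality on the $B$-block and $\ell_1/\ell_\infty$ duality on the $C$-block bounds this by $\smalliii{\mathcal{W}'\mathcal{V}/T}_{\text{op}}\smalliii{\Delta_B}_* + \|\mathcal{W}'\mathcal{V}/T\|_\infty\|\Delta_C\|_1$, so that under the stated choices $\lambda_B \ge 4\smalliii{\mathcal{W}'\mathcal{V}/T}_{\text{op}}$ and $\lambda_C \ge 4\|\mathcal{W}'\mathcal{V}/T\|_\infty$ the full cross term is at most $\tfrac{\lambda_B}{2}\smalliii{\Delta_B}_* + \tfrac{\lambda_C}{2}\|\Delta_C\|_1$. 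Then I would apply decomposability separately to each regularizer: splitting $\Delta_B$ along the rank-$\le 2r_B^\star$ subspace determined by the SVD of $B^\star$ and $\Delta_C$ along the support $\mathcal{J}_{C^\star}$, the triangle-inequality bounds $\smalliii{B^\star}_* - \smalliii{\widehat B^{(0)}}_* \le \smalliii{\Delta_B'}_* - \smalliii{\Delta_B''}_*$ and $\|C^\star\|_1 - \|\widehat C^{(0)}\|_1 \le \|\Delta_{C,\mathcal{J}}\|_1 - \|\Delta_{C,\mathcal{J}^c}\|_1$ combine with the noise bound to produce a cone condition confining $(\Delta_B,\Delta_C)$: the ``off-model'' mass $\lambda_B\smalliii{\Delta_B''}_* + \lambda_C\|\Delta_{C,\mathcal{J}^c}\|_1$ is dominated by a constant multiple of the ``on-model'' mass $\lambda_B\smalliii{\Delta_B'}_* + \lambda_C\|\Delta_{C,\mathcal{J}}\|_1$.

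Finally I would invoke RSC on $\mathfrak{W}_0$ with the infimal-convolution norm $\Phi$, giving $\tfrac{1}{T}\smalliii{\mathfrak{W}_0(\Delta)}_F^2 \ge 2\alpha_{\text{RSC}}\smalliii{\Delta}_F^2 - 2\tau\Phi^2(\Delta)$. Within the cone the subspace-compatibility estimates $\smalliii{\Delta_B'}_* \le \sqrt{2r_B^\star}\,\smalliii{\Delta}_F$ and $\|\Delta_{C,\mathcal{J}}\|_1 \le \sqrt{s_C^\star}\,\smalliii{\Delta}_F$ bound both the right-hand side (linearly in $\smalliii{\Delta}_F$, with coefficient $\propto \lambda_B\sqrt{r_B^\star} + \lambda_C\sqrt{s_C^\star}$) and the tolerance term, $\Phi^2(\Delta) \lesssim \big(r_B^\star + (\lambda_C/\lambda_B)^2 s_C^\star\big)\smalliii{\Delta}_F^2$. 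The two smallness hypotheses $128\tau r_B^\star < \alpha_{\text{RSC}}/4$ and $64\tau s_C^\star(\lambda_C/\lambda_B)^2 < \alpha_{\text{RSC}}/4$ are exactly what is needed to absorb $2\tau\Phi^2(\Delta)$ into the curvature term, leaving an effective strong-convexity constant of order $\alpha_{\text{RSC}}$; solving the resulting quadratic inequality $\alpha_{\text{RSC}}\smalliii{\Delta}_F^2 \lesssim (\lambda_B\sqrt{r_B^\star} + \lambda_C\sqrt{s_C^\star})\smalliii{\Delta}_F$ and squaring yields the claimed bound $\smalliii{\Delta_B}_F^2 + \smalliii{\Delta_C}_F^2 \le 4(2r_B^\star\lambda_B^2 + s_C^\star\lambda_C^2)/\alpha_{\text{RSC}}^2$ after tracking constants.

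The main obstacle I anticipate is the simultaneous, coupled treatment of the two regularizers. Because the loss depends only on the sum $\mathfrak{W}_0(B_{\text{aug}} + C_{\text{aug}})$, the components $B$ and $C$ are not separately identifiable from the data term alone, so the decomposition and cone argument must be run jointly, and the tolerance term $\tau\Phi^2(\Delta)$ must be controlled through the combined norm $\Phi$ rather than through either penalty in isolation. Getting the bookkeeping of the mixed constants right---in particular the ratio $\lambda_C/\lambda_B$ appearing in $\mathcal{Q}$ and hence in $\Phi$, so that the two separate smallness conditions suffice to absorb the single tolerance term---is the delicate part of the argument.
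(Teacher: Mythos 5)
Your proposal is correct and follows essentially the same route as the paper's proof: the same basic inequality, the same dual-norm control of $\llangle \Delta, \mathcal{W}'\mathcal{V}\rrangle$ via the operator/$\ell_\infty$ norms, the same decomposition of $\Delta^B$ along the rank-$2r_B^\star$ subspace and of $\Delta^C$ along $\mathcal{J}_{C^\star}$ (the paper packages the cone step and the absorption of the tolerance term into Lemmas~\ref{lemma:inequality1} and~\ref{lemma:RSC-result}, the latter citing Lemma~2 of \citet{agarwal2012noisy}), followed by the subspace-compatibility constants $\sqrt{2r_B^\star}$, $\sqrt{s_C^\star}$ and solving the resulting quadratic. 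The only cosmetic difference is that you absorb $\tau\Phi^2(\Delta)$ directly into the curvature while the paper's lemma trades it against the weighted regularizer $\tfrac{\lambda_B}{2}\mathcal{Q}(\Delta^B,\Delta^C)$; both lead to the same final bound.
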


In the rest of this subsection, we first prove Theorem~\ref{thm:consistencyA} and~\ref{thm:consistencyBC}, then prove Theorem~\ref{thm:algo1} and~\ref{thm:algo2}, whose  statements are given in~Section~\ref{sec:ConsistencyProp}. 

%
%
\begin{proof}[\textbf{Proof of Theorem~\ref{thm:consistencyA}}]
	
	For the ease of notation, in this proof, we use $\widehat{A}$ to refer to $\widehat{A}^{(0)}$ whenever there is no ambiguity. Let $\beta^\star_A = \mathrm{vec}(A^\star)$ and denote the residual matrix and its vectorized version by $\Delta_A = \widehat{A}-A^\star$ and $\Delta_{\beta_A} = \widehat{\beta}_A - \beta_A^\star$, respectively. By the optimality of $\widehat{A}$ and the feasibility of $A^\star$, the following {\em basic inequality} holds:
	\begin{equation*}
	\tfrac{1}{T} \vertiii{\mathfrak{X}_0(\Delta_A)}_F^2 \leq \tfrac{2}{T} \llangle \Delta_A, \mathcal{X}'\mathcal{U} \rrangle + \lambda_A \left\{ \vertii{A^\star}_1- \vertii{A^\star + \Delta_A}_1\right\},
	\end{equation*}
	which is equivalent to:
	\begin{equation}\label{A:basic}
	\Delta_{\beta_A}' \widehat{\Gamma}^{(0)}_{X} \Delta_{\beta_A} \leq \tfrac{2}{T} \langle \Delta_{\beta_A}, \mathrm{vec}(\mathcal{X}'\mathcal{U}) \rangle + \lambda_A \left\{ \vertii{\beta_A^\star}_1 - \vertii{\beta_A^\star + \Delta_{\beta_A}}_1  \right\},
	\end{equation}
	where $\widehat{\Gamma}^{(0)}_X=\mathrm{I}_{p_1}\otimes \tfrac{\mathcal{X}'\mathcal{X}}{T}$. By H\"{o}lder's inequality and the triangle inequality, an upper bound for the right-hand-side of~\eqref{A:basic} is given~by
	\begin{equation}\label{A:upperbound1}
	\tfrac{2}{T} \vertii{\Delta_{\beta_A}}_1 \vertii{\mathcal{X}'\mathcal{U}}_\infty + \lambda_A \|\Delta_{\beta_A}\|_1.
	\end{equation}
	Now with the specified choice of $\lambda_A$, by Lemma~\ref{lemma:J3cone}, $\|\Delta_{\beta_A|\mathcal{J}_{A^\star}}\|_1\leq 3\|\Delta_{\beta_A|\mathcal{J}_{A^\star}^c}\|_1$ i.e., $\Delta_{\beta_A} \in \mathcal{C}(\mathcal{J}_{A^\star},3)$, hence $\|\Delta_{\beta_A}\|_1 \leq 4\| \Delta_{\beta_A|\mathcal{J}_{A^\star}}\|_1 \leq 4\sqrt{s_A^\star}\|\Delta_{\beta_A}\|$. By choosing $\lambda_A\geq 4\|\mathcal{X}'\mathcal{U}/T\|_\infty$, (\ref{A:upperbound1}) is further upper bounded~by
	\begin{equation}\label{A:upperbound2}
	\frac{3}{2} \lambda_A \|\Delta_{\beta_A}\|_1 \leq 6\sqrt{s^\star_A}\lambda_A \|\Delta_{\beta_A}\|.
	\end{equation}
	Combined with the RSC condition and the upper bound given in~\eqref{A:upperbound2}, we have 
	\begin{equation*}
	\begin{split}
	\frac{\alpha_{\text{RSC}}}{2}\|\Delta_{\beta_A}\|^2 - \frac{\tau}{2}\|\Delta_{\beta_A}\|^2_1 \leq \frac{1}{2}\Delta_{\beta_A}'\widehat{\Gamma}^{(0)}_{X}\Delta_{\beta_A} \leq 3\sqrt{s^\star_A}\lambda_A \|\Delta_{\beta_A}\|,\\
	\frac{\alpha_{\text{RSC}}}{4} \|\Delta_{\beta_A}\|^2 \leq \left( \frac{\alpha_{\text{RSC}}}{2} -  \frac{16s^\star_A\tau}{4} \right)   \|\Delta_{\beta_A}\|^2 \leq 3\sqrt{s^\star_A}\lambda_A \|\Delta_{\beta_A}\|,
	\end{split}
	\end{equation*}
	which implies 
	\begin{equation*}
	\|\Delta_{\beta_A}\| \leq 12\sqrt{s^\star_A}\lambda_A/\alpha_{\text{RSC}} \qquad \text{and} \qquad \|\Delta_{\beta_A}\|_1 \leq 48s^\star_A\lambda_A/\alpha_{\text{RSC}}.
	\end{equation*}
	It is easy to see that these bounds also hold for $\|\Delta_A\|_F$ and $\|\Delta_A\|_1$, respectively. 
\end{proof}

%
%
Next, to prove Theorem~\ref{thm:consistencyBC}, we introduce the following two sets of subspaces $\{\mathcal{S}_\Theta,\mathcal{S}_\Theta^\perp\}$ and $\{\mathcal{R}_\Theta,\mathcal{R}_\Theta^c\}$ associated with some generic matrix $\Theta\in\mathbb{R}^{m_1\times m_2}$, in which the nuclear norm and the $\ell_1$-norm are decomposable, respectively \citep[see][]{negahban2009unified}.
Specifically, let the singular value decomposition of $\Theta$ be $\Theta = U\Sigma V'$ with $U$ and $V$ being orthogonal matrices. Let $r=\text{rank}(\Theta)$, and we use $U^r$ and $V^r$ to denote the first $r$ columns of $U$ and $V$ associated with the $r$ singular values of $\Theta$. Further, define  
\begin{equation}\label{defn:basisspace}
\begin{split}
\mathcal{S}_\Theta& : = \left\{ \Delta\in\mathbb{R}^{m_1\times m_2} |  \text{row}(\Delta) \subseteq V^r ~~~\text{and}~~~ \text{col}(\Delta)\subseteq U^r \right\}, \\
\mathcal{S}^\perp_\Theta& : = \left\{ \Delta\in\mathbb{R}^{m_1\times m_2} |  \text{row}(\Delta) \perp V^r ~~~\text{and}~~~ \text{col}(\Delta)\perp U^r \right\}.
\end{split}
\end{equation}
Then, for an arbitrary (generic) matrix $M\in\mathbb{R}^{m_1\times m_2}$, its restriction on the subspace $\mathcal{S}(\Theta)$ and $\mathcal{S}^\perp(\Theta)$, denoted by $M_{\mathcal{S}(\Theta)}$ and $M_{\mathcal{S}^\perp(\Theta)}$ respectively, are given by:
\begin{equation*}
M_{\mathcal{S}_\Theta} = U \begin{bmatrix}
\widetilde{M}_{11} & \widetilde{M}_{12} \\ \widetilde{M}_{21} & O
\end{bmatrix}V' \qquad \text{and} \qquad M_{\mathcal{S}^\perp_\Theta} = U \begin{bmatrix}
O & O \\ O & \widetilde{M}_{22}
\end{bmatrix}V',
\end{equation*}
where $\Theta = U\Sigma V'$ and $\widetilde{M}$ is defined and partitioned as
\begin{equation*}
\widetilde{M} = U' M V = \begin{bmatrix}
\widetilde{M}_{11} & \widetilde{M}_{12} \\ \widetilde{M}_{21} & \widetilde{M}_{22}
\end{bmatrix}, \quad \text{where}~~\widetilde{M}_{11}\in\mathbb{R}^{r\times r}.
\end{equation*}
Note that by Lemma~\ref{lemma:decomposition}, $M_{\mathcal{S}_\Theta} + M_{\mathcal{S}^\perp_\Theta} = M$. Moreover, when $\Theta$ is restricted to the subspace induced by itself $\Theta_{\mathcal{S}_\Theta}$ (and we write $\Theta_{\mathcal{S}}$ for short for this specific case), the following decomposition for the nuclear norm holds:
\begin{equation*}
\vertiii{\Theta}_* = \vertiii{\Theta_\mathcal{S} + \Theta_{\mathcal{S}^\perp}}_* =  \vertiii{\Theta_\mathcal{S}}_* + \vertiii{\Theta_{\mathcal{S}^\perp}}_*.
\end{equation*}
Let $\mathcal{J}(\Theta)$ be the set of indexes in which $\Theta$ is nonzero. Analogously, we define 
\begin{equation}\label{defn:supportspace}
\begin{split}
\mathcal{R}_\Theta &:= \left\{ \Delta\in\mathbb{R}^{m_1\times m_2} | \Delta_{ij}=0~~\text{for}~~(i,j)\notin \mathcal{J}(\Theta) \right\}, \\
\mathcal{R}^c_\Theta & : = \left\{ \Delta\in\mathbb{R}^{m_1\times m_2} | \Delta_{ij}=0~~\text{for}~~(i,j)\in \mathcal{J}(\Theta) \right\}.
\end{split}
\end{equation}
Then, for an arbitrary matrix $M$, $M_{\mathcal{J}_\Theta} \in \mathcal{R}_\Theta$ is obtained by setting the entries of $M$ whose indexes are not in $\mathcal{J}(\Theta)$ to 0, and $M_{\mathcal{J}^c_\Theta}\in \mathcal{R}^c_\Theta$ is obtained by setting the entries of $M$ whose indexes are in $\mathcal{J}(\Theta)$ to 0. Then, the following decomposition holds:
\begin{equation*}
\vertii{ M_{\mathcal{J}_\Theta} + M_{\mathcal{J}^c_\Theta}}_1 = \vertii{ M_{\mathcal{J}_\Theta}}_1 + \vertii{M_{\mathcal{J}^c_\Theta}}_1.
\end{equation*}
\begin{proof}[\textbf{Proof of Theorem~\ref{thm:consistencyBC}}] Again for the ease of notation, in this proof, we drop the superscript and use $(\widehat{B}^{(0)},\widehat{C}^{(0)})$ to denote $(\widehat{B},\widehat{C})$ whenever there is no ambiguity. Define $\mathcal{Q}$ to be the weighted regularizer:
	\begin{equation*}
	\mathcal{Q}(B,C) = \vertiii{B}_* + \tfrac{\lambda_C}{\lambda_B}\vertii{C}_1.
	\end{equation*}
	Note that $(B^\star,C^\star)$ is always feasible, and by the optimality of $(\widehat{B},\widehat{C})$, the following inequality holds:
	\small
	\begin{equation*}
	\frac{1}{T}\smalliii{\mathcal{Z}^T - \mathfrak{W}_0(\widehat{B}_{\text{aug}} + \widehat{C}_{\text{aug}})}_F^2 + \lambda_B\smalliii{\widehat{B}}_* + \lambda_C\smallii{\widehat{C}}_1 \leq \frac{1}{T}\smalliii{\mathcal{Z}^T -\mathfrak{W}_0(B^\star + C^\star)}_F^2 + \lambda_B\vertiii{B^\star}_* + \lambda_C \smallii{C^\star}_1,
	\end{equation*}
	\normalsize
	By defining $\Delta^B_{\text{aug}}=\widehat{B}_{\text{aug}}-B_{\text{aug}}^\star = [\Delta^B, O]$, $\Delta_{\text{aug}}^C = \widehat{C}_{\text{aug}}-C_{\text{aug}}^\star=[O,\Delta^C]$, we obtain the following \textit{basic inequality}:
	\begin{equation}\label{BC:basic}
	\tfrac{1}{T}\vertiii{\mathfrak{W}_0(\Delta^B_{\text{aug}} + \Delta^C_{\text{aug}})}_F^2 \leq \tfrac{2}{T}\llangle \Delta^B_{\text{aug}} + \Delta^C_{\text{aug}}, \mathcal{W}'\mathcal{V} \rrangle + \lambda_B\mathcal{Q}(B^\star,C^\star) - \lambda_B \mathcal{Q}(\widehat{B},\widehat{C}).
	\end{equation}
	By H\"{o}lder's inequality and Lemma~\ref{lemma:inequality1}, we have 
	\small
	\begin{align}
	\tfrac{1}{T}\vertiii{\mathfrak{W}_0(\Delta^B_{\text{aug}} + \Delta^C_{\text{aug}})}_F^2 &\leq \tfrac{2}{T}\big(\vertiii{ \Delta^B_{\mathcal{S}_{B^\star}}}_* + \smalliii{\Delta^B_{\mathcal{S}_{B^\star}^\perp}}_*\big) \smalliii{\mathcal{W}'\mathcal{V}}_{\text{op}} + \tfrac{2}{T}\big(\smallii{\Delta^C_{\mathcal{J}^c_{C^\star}}}_1 + \smallii{\Delta^C_{\mathcal{J}^c_{C^\star}}}_1\big)\vertii{\mathcal{W}'\mathcal{V}}_\infty \nonumber \\
	& \quad + \lambda_B\mathcal{Q}(\Delta^B_{\mathcal{S}_{B^\star}},\Delta^C_{\mathcal{J}_{C^\star}}) -\lambda_B \mathcal{Q}(\Delta^B_{\mathcal{S}^\perp_{B^\star}},\Delta^C_{\mathcal{J}^c_{C^\star}} ). \label{BC:upperbound1}
	\end{align}
	\normalsize
	With the specified choice of $\lambda_B$ and $\lambda_C$, after some algebra, \eqref{BC:upperbound1} is further bounded by
	\begin{equation*}
	\frac{3\lambda_B}{2} \mathcal{Q}(\Delta^B_{\mathcal{S}_{B^\star}},\Delta^C_{\mathcal{J}_{C^\star}}) -\frac{\lambda_B}{2} \mathcal{Q}(\Delta^B_{\mathcal{S}^\perp_{B^\star}},\Delta^C_{\mathcal{J}^c_{C^\star}}).
	\end{equation*}
	By Lemma~\ref{lemma:RSC-result}, and using this upper bound, we obtain 
	\begin{equation*}
	\frac{\alpha_{\text{RSC}}}{2} (\vertiii{\Delta^B}_F^2 + \vertiii{\Delta^C}_F^2) - \frac{\lambda_B}{2}\mathcal{Q}(\Delta^B,\Delta^C) \leq \frac{3\lambda_B}{2} \mathcal{Q}(\Delta^B_{\mathcal{S}_{B^\star}},\Delta^C_{\mathcal{J}_{C^\star}}) -\frac{\lambda_B}{2} \mathcal{Q}(\Delta^B_{\mathcal{S}^\perp_{B^\star}},\Delta^C_{\mathcal{J}^c_{C^\star}}).
	\end{equation*}
	By the triangle inequality, $\mathcal{Q}(\Delta^B,\Delta^C)\leq \mathcal{Q}(\Delta^B_{\mathcal{S}_{B^\star}},\Delta^C_{\mathcal{J}_{C^\star}})+\mathcal{Q}(\Delta^B_{\mathcal{S}^\perp_{B^\star}},\Delta^C_{\mathcal{J}^c_{C^\star}})$, rearranging gives
	\begin{equation}\label{BC:inequality}
	\frac{\alpha_{\text{RSC}}}{2}  (\vertiii{\Delta^B}_F^2 + \vertiii{\Delta^C}_F^2)   \leq  2\lambda_B \mathcal{Q}(\Delta^B_{\mathcal{S}_{B^\star}},\Delta^C_{\mathcal{J}_{C^\star}}).
	\end{equation}
	By Lemma~\ref{lemma:decomposition}, with $N=B^\star$, $M_1=\Delta^B_{\mathcal{S}_{B^\star}}$, $M_2=\Delta^B_{\mathcal{S}^\perp_{B^\star}}$, we get 
	\begin{equation*}
	\text{rank}(\Delta^B_{\mathcal{S}_{B^\star}}) \leq 2r_B^\star \qquad \text{and} \qquad \llangle \Delta^B_{\mathcal{S}_{B^\star}}, \Delta^B_{\mathcal{S}^\perp_{B^\star}}\rrangle = 0,
	\end{equation*}
	which implies $\smalliii{\Delta^B_{\mathcal{S}_{B^\star}}}_* \leq (\sqrt{2r_B^\star})\smalliii{\Delta^B_{\mathcal{S}_{B^\star}}}_F\leq (\sqrt{2r_B^\star})\smalliii{\Delta^B}_F$. Since $\Delta^C_{\mathcal{J}_{C^\star}}$ has at most $s^\star_C$ nonzero entries, it follows that $\|\Delta^C_{\mathcal{J}_{C^\star}}\|_1\leq \sqrt{s^\star_C}\smalliii{\Delta^C_{\mathcal{J}_{C^\star}}}_F \leq \sqrt{s_C^\star}\smalliii{\Delta^C}_F$. Therefore,
	\begin{equation*}
	\mathcal{Q}(\Delta^B_{\mathcal{S}_{B^\star}},\Delta^C_{\mathcal{J}_{C^\star}}) = \lambda_B \vertiii{\Delta^B_{\mathcal{S}_{B^\star}}}_* + \lambda_C\vertii{\Delta^C_{\mathcal{J}_{C^\star}} }_1 \leq \lambda_B(\sqrt{2r_B^\star})\vertiii{\Delta^B}_F + \lambda_C(\sqrt{s_C^\star})\vertiii{\Delta^C}_F
	\end{equation*}
	With an application of the Cauchy-Schwartz inequality, \eqref{BC:inequality} yields:
	\begin{equation*}
	\frac{\alpha_{\text{RSC}}}{2}  (\vertiii{\Delta^B}_F^2 + \vertiii{\Delta^C}_F^2) \leq \sqrt{2r^\star_B\lambda_B^2 + s^\star_C\lambda_C^2} * \sqrt{\vertiii{\Delta^B}_F^2 + \vertiii{\Delta^C}_F^2 } 
	\end{equation*}
	and we obtain the following bound:
	\begin{equation*}
	\vertiii{\Delta^B}_F^2 + \vertiii{\Delta^C}_F^2 \leq 4 \left(2r_B^\star\lambda_B^2 + s_C^\star\lambda_C^2\right)/\alpha^2_{\text{RSC}}.
	\end{equation*}
\end{proof}

%
%
\begin{proof}[\textbf{Proof of Theorem~\ref{thm:algo1}}] At iteration 0, $\widehat{A}^{(0)}$ solves the following optimization problem:
	\begin{equation*}
	\widehat{A}^{(0)} = \argmin\limits_{A\in\mathbb{R}^{p_1\times p_1}}\big\{\tfrac{1}{T}\vertiii{\mathcal{X}^T - \mathcal{X}A'}_F^2 + \lambda_A\vertiii{A}_* \big\}.
	\end{equation*}
	By Theorem~\ref{thm:consistencyA}, its error bound is given by 
	\begin{equation*}
	\smallii{\widehat{A}^{(0)} - A^\star}_1  \leq 48 s_A^\star \lambda_A/\alpha_{\text{RSC}},
	\end{equation*}
	provided that $\widehat{\Gamma}^{(0)}_X = \mathrm{I}_{p_1}\otimes \mathcal{X}'\mathcal{X}/T$ satisfies the RSC condition, and the regularization parameter $\lambda_A$ satisfies $\lambda_A\geq 4\|\mathcal{X}'\mathcal{U}/T\|_\infty$. For random realizations $\mathcal{X}$ and $\mathcal{U}$, by Lemma~\ref{lemma:REX} and Lemma~\ref{lemma:DeviationBound1}, there exist constants $c_i$ and $c_i'$ such that for sample size $T\succsim s_A^\star \log p_1$, with probability at least $1-c_1\exp(-c_2 T\min\{1,\omega^{-2}\})$, where $\omega = c_3\mu_{\max}(\mathcal{A})/\mu_{\min}(\mathcal{A})$
	\begin{equation*}
	(\mathbf{E_1})\qquad \text{$\widehat{\Gamma}^{(0)}_X$ satisfies RSC condition with $\alpha_{\text{RSC}}=\Lambda_{\min}(\Sigma_u^\star)/(2\mu_{\max}(\mathcal{A}))$ },
	\end{equation*}
	and with probability at least $1-c'_1\exp(-c'_2\log p_1)$,
	\begin{equation*}
	(\mathbf{E_2}) \qquad \|\mathcal{X}'\mathcal{U}/T\|_\infty \leq C_0\sqrt{\frac{\log p_1}{T}}, \qquad \text{for some constant $C_0$.}
	\end{equation*}
	Hence with probability at least $1-c_1\exp(-c_2T)-c_1'\exp(-c_2'\log p_1)$, $$\|\widehat{A}^{(0)}-A^\star\|_1 = O\Big(s_A^\star\sqrt{\tfrac{\log p_1}{T}}\Big).$$ Moving onto $\widehat{\Omega}_u^{(0)}$, which is given by
	\begin{equation*}
	\widehat{\Omega}_u^{(0)} = \argmin\limits_{\Omega_u\in\mathbb{S}_{++}^{p_1\times p_1}} \left\{\log\det\Omega_u - \tr\big(\widehat{S}_u^{(0)}\Omega_u \big) + \rho_u \|\Omega_u\|_{1,\text{off}}\right\},
	\end{equation*}
	where $\widehat{S}_u^{(0)} = \tfrac{1}T(\mathcal{X}^T - \mathcal{X}\widehat{A}^{(0)'})'(\mathcal{X}^T - \mathcal{X}\widehat{A}^{(0)'})$. By Theorem 1 in \citet{ravikumar2011high}, the error bound for $\widehat{\Omega}_u^{(0)}$ relies on how well $\widehat{S}_u^{(0)}$ concentrates around $\Sigma_u^\star$, more specifically, $\|\widehat{S}_u^{(0)}-\Sigma_u^\star\|_\infty$. Note that 
	\begin{equation*}
	\|\widehat{S}_u^{(0)}-\Sigma_u^\star\|_\infty \leq  \|S_u-\Sigma_u^\star\|_\infty +  \|\widehat{S}_u^{(0)}-S_u\|_\infty,
	\end{equation*}
	where $S_u=\mathcal{U}'\mathcal{U}/T$ is the sample covariance based on true errors. For the first term, by \citet{ravikumar2011high}, there exists constant $\tau_0>2$ such that with probability at least $1-1/p_1^{\tau_0-2} = 1-\exp(-\tau\log p_1)~(\tau>0)$, the following bound holds:
	\begin{equation*}
	(\mathbf{E_3}) \qquad \|S_u - \Sigma_u^\star\|_\infty \leq C_1\sqrt{\frac{\log p_1}{T}}, \qquad \text{for some constant $C_1$.}
	\end{equation*}
	For the second term, 
	\begin{equation*}
	\widehat{S}_u^{(0)}-S_u = \tfrac{2}T\mathcal{U}'\mathcal{X}(A^\star - \widehat{A}^{(0)})' + (A^\star - \widehat{A}^{(0)})\left(\tfrac{\mathcal{X}'\mathcal{X}}T\right)(A^\star - \widehat{A}^{(0)})' := I_1 + I_2,
	\end{equation*}
	For $I_1$, based on the analysis of $\smallii{A^\star - \widehat{A}^{(0)}}_1$ and $\smallii{\mathcal{X}'\mathcal{U}/T}_\infty$,
	\begin{equation*}
	\|I_1\|_\infty \leq  2 \smalliii{A^\star - \widehat{A}^{(0)}}_\infty \smallii{\tfrac{1}T\mathcal{X}'\mathcal{U}}_\infty \leq 2\smallii{A^\star - \widehat{A}^{(0)}}_1 \smallii{\tfrac{1}T\mathcal{X}'\mathcal{U}}_\infty = O\Big(\tfrac{s_A^\star\log p_1}{T}\Big)
	\end{equation*}
	For $I_2$, 
	\begin{equation*}
	\begin{split}
	\|(A^\star - \widehat{A}^{(0)})\left(\tfrac{\mathcal{X}'\mathcal{X}}T\right)(A^\star - \widehat{A}^{(0)})'\|_\infty & \leq \smalliii{A^\star - \widehat{A}^{(0)}}_\infty\smalliii{A^\star - \widehat{A}^{(0)}}_1 \smallii{\tfrac{\mathcal{X}'\mathcal{X}}T}_\infty \\
	&\leq \smallii{A^\star - \widehat{A}^{(0)}}_1^2 \smallii{\tfrac{\mathcal{X}'\mathcal{X}}{T}}_\infty,
	\end{split}
	\end{equation*}
	where by Proposition 2.4 in \citet{basu2015estimation} and then taking a union bound, with probability at least $1-c_1''\exp(-c_2''\log p_1)~(c_1'',c_2''>0)$,
	\begin{equation*}
	(\mathbf{E_4})\qquad \smallii{\tfrac{\mathcal{X}'\mathcal{X}}{T}}_\infty \leq C_2\sqrt{\frac{\log p_1}T} + \Lambda_{\max}(\Gamma_X), \qquad \text{for some constant $C_2$.} 
	\end{equation*}
	Hence,
	\begin{equation*}
	\|I_2\|_\infty = O\Big( (s_A^\star)^2\big(\tfrac{\log p_1}{T}\big)^{3/2}\Big) + O\Big((s_A^\star)^2\tfrac{\log p_1}{T}\Big)
	\end{equation*}
	Combining all terms, and since we assume that $T^{-1}\log p_1$ is small,  $O(\sqrt{T^{-1}\log p_1})$ becomes the leading term, and the following bound holds with probability at least $1-c_1\exp(-c_2T) - c_1'\exp(-c_2'\log p_1) - c_1''\exp(-c_2''\log p_1) - \exp(-\tau \log p_1)$:
	\begin{equation*}
	\|\widehat{S}_u^{(0)}-\Sigma_u^\star\|_\infty  = O\Big(\sqrt{\tfrac{\log p_1}T}\Big).
	\end{equation*} 
	Consequently, 
	\begin{equation*}
	\|\widehat{\Omega}_u^{(0)} - \Omega_u^\star\|_\infty = O\Big(\sqrt{\tfrac{\log p_1}T}\Big).
	\end{equation*}
	At iteration 1, the vectorized $\widehat{A}^{(1)}$ solves 
	\begin{equation*}
	\widehat{\beta}_{A}^{(1)} = \argmin\limits_{\beta\in\mathbb{R}^{p_1^2}} \big\{ -2\beta' \widehat{\gamma}_X^{(1)} + \beta'\widehat{\Gamma}_X^{(1)}\beta + \lambda_A\|\beta\|_1 \big\},
	\end{equation*}
	where
	\begin{equation*}
	\widehat{\gamma}^{(1)}_X = \tfrac{1}T\big(\widehat{\Omega}_u^{(0)} \otimes \mathcal{X}'\big) \text{vec}(\mathcal{X}^T), \qquad \widehat{\Gamma}_X^{(1)} =  \widehat{\Omega}_u^{(0)}\otimes \tfrac{\mathcal{X}'\mathcal{X}}T.
	\end{equation*}
	The error bound for $\widehat{\beta}_{A}^{(1)}$ relies on (1) $\widehat{\Gamma}_X^{(1)}$ satisfying the RSC condition, which holds for sample size $T\succsim (d^{\max}_{\Omega_u^\star})^2\log p_1$ upon $\|\widehat{\Omega}_u^{(0)} - \Omega_u^\star\|_\infty=O(\sqrt{T^{-1}\log p_1})$; and (2) a bound for $\|\mathcal{X}'\mathcal{U}\widehat{\Omega}_u^{(0)}/T\|_\infty$. For 
	$\|\mathcal{X}'\mathcal{U}\widehat{\Omega}_u^{(0)}/T\|_\infty$, 
	\begin{equation*}
	\tfrac{1}T\mathcal{X}'\mathcal{U}\widehat{\Omega}_u^{(0)} = \tfrac{1}T\mathcal{X}'\mathcal{U}\Omega_u^\star + \tfrac{1}T\mathcal{X}'\mathcal{U}(\widehat{\Omega}_u^{(0)} - \Omega_u^\star) : = I_3 + I_4. 
	\end{equation*}
	For $I_3$, by Lemma 3 in \citet{lin2016penalized} and with the aid of Proposition 2.4 in \citet{basu2015estimation}, again with probability at least $1-c_1'''\exp(-c_2'''\log p_1)$ we get
	\begin{equation*}
	(\mathbf{E_5}) \qquad \vertii{\tfrac{1}T\mathcal{X}'\mathcal{U}\Omega_u^\star}_\infty \leq C_3\sqrt{\frac{\log p_1}T}, \qquad \text{for some constant $C_3$. }
	\end{equation*}
	For $I_4$, by Corollary 3 in \citet{ravikumar2011high}, we get
	\begin{equation*}
	\vertii{\tfrac{1}T\mathcal{X}'\mathcal{U}(\widehat{\Omega}_u^{(0)} - \Omega_u^\star)}_\infty \leq d_{\Omega^\star_u}^{\max}\smallii{\tfrac{1}T\mathcal{X}'\mathcal{U}}_\infty \smallii{\widehat{\Omega}_u^{(0)} - \Omega_u^\star}_\infty = O\Big(\tfrac{\log p_1}T\Big).
	\end{equation*}
	Combining all terms and taking the leading one, once again we have 
	\begin{equation*}
	\|\widehat{A}^{(1)} - A^\star\|_1 = O\Big(s_A^\star\sqrt{\tfrac{\log p_1}T}\Big),
	\end{equation*}
	which holds with probability at least $1-c_1\exp(-c_2T) - \tilde{c}_1\exp(-\tilde{c}_2\log p_1) - \exp(-\tau \log p_1)$, by letting $\tilde{c}_1=\max\{c_1',c_1'',c_1'''\}$ and $\tilde{c}_1=\min\{c_2',c_2'',c_2'''\}$. 
	It should be noted that up to this step, all sources of randomness from the random realizations have been captured by events from $\mathbf{E_1}$ to $\mathbf{E_5}$; thus, for $\widehat{\Omega}_u^{(1)}$ and iterations thereafter, the probability for which the bounds hold will no longer change, and the same holds for the error bounds for $\widehat{A}^{(k)}$ and $\widehat{\Omega}_u^{(k)}$ in terms of the relative order with respect to the dimension $p_1$ and sample size $T$. Therefore, we conclude that with high probability, for all iterations~$k$, 
	\begin{equation*}
	\|\mathcal{X}'\mathcal{U}\widehat{\Omega}_u^{(k)}/T\|_\infty = O\Big(\sqrt{\tfrac{ \log p_1}T}\Big), \qquad \| \widehat{S}_u^{(k)} - \Sigma_u^\star\|_\infty = O\Big( \sqrt{\tfrac{\log p_1}T}\Big).
	\end{equation*}
	With the aid of Theorem~\ref{thm:consistencyA}, it then follows that 
	\begin{equation*}
	\smalliii{\widehat{A}^{(k)}-A^\star}_F = O\Big(\sqrt{\tfrac{ s_A^\star\log p_1}T}\Big), \qquad \smalliii{\widehat{\Omega}_u^{(k)}-\Omega_u^\star}_F = O\Big( \sqrt{\tfrac{(s_{\Omega_u^\star}+p_1)\log p_1}T}\Big).
	\end{equation*}
\end{proof}

%
%
\begin{proof}[\textbf{Proof of Theorem~\ref{thm:algo2}}] At iteration 0, $(\widehat{B}^{(0)},\widehat{C}^{(0)})$ solves the following optimization:
	\begin{equation*}
	(\widehat{B}^{(0)},\widehat{C}^{(0)}) = \argmin\limits_{(B,C)}\left\{ \tfrac{1}{T}\vertiii{\mathcal{Z}^T - \mathcal{X}B' - \mathcal{Z}C'}_F^2 + \lambda_B\vertiii{B}_* + \lambda_C \vertii{C}_1 \right\}.
	\end{equation*}	
	Let $W_t=(X_t',Z_t')'\in\mathbb{R}^{p_1+p_2}$ be the joint process and $\mathcal{W}$ be the realizations, with operators $\mathfrak{W}_0$ identically defined to that in Theorem~\ref{thm:consistencyBC} . By Theorem~\ref{thm:consistencyBC}, 
	\begin{equation*}
	\smalliii{\widehat{B}^{(0)} - B^\star}_F^2 + \smalliii{\widehat{C}^{(0)} - C^\star}_F^2 \leq 4(2r_B^\star \lambda_B^2 + s_C^\star \lambda_C^2)/\alpha^2_{\text{RSC}},
	\end{equation*}
	provided that $\mathfrak{W}$ satisfies the RSC condition and $\lambda_B$, $\lambda_C$ respectively satisfy
	\begin{equation*}
	\lambda_B\geq 4\vertiii{\mathcal{W}'\mathcal{V}/T}_{\text{op}} \quad \text{and} \quad \lambda_C \geq 4\vertii{\mathcal{W}'\mathcal{V}/T}_\infty.
	\end{equation*}
	In particular, by Lemma~\ref{lemma:RSCW} for random realizations of $\mathcal{X}$, $\mathcal{Z}$ and $\mathcal{V}$, for sample size $T\succsim c_0(p_1+2p_2)$, with probability at least $1-c_1\exp\{-c_2(p_1+p_2)\}$, 
	\begin{equation*}
	(\mathbf{E'_1}) \qquad \mathfrak{W}_0 \text{ satisfies the RSC condition}.
	\end{equation*}
	By Lemma~\ref{lemma:operator-infinity}, for sample size $T\succsim (p_1+2p_2)$ and some constant $C_1,C_2>0$,
	\begin{equation*}
	(\mathbf{E'_2})\qquad \vertiii{\mathcal{W}'\mathcal{V}/T}_{\text{op}} \leq C_1\sqrt{\frac{p_1+2p_2}T} \quad \text{and} \quad 
	\vertii{\mathcal{W}'\mathcal{V}/T}_\infty \leq C_2\sqrt{\frac{\log(p_1+p_2)+\log p_2}T},
	\end{equation*}
	with probability at least $1-c_1'\exp\{-c_2'(p_1+2p_2)\}$ and $1-c_1''\exp\{-c_2''\log[p_2(p_1+p_2)]\}$, respectively. 
	Hence, with probability at least 
	\begin{equation*}
	1 - c_1\exp\{-c_2(p_1+p_2)\} - c_1'\exp\{-c_2'(p_1+2p_2)\} - c_1''\exp\{-c_2''\log[p_2(p_1+p_2)]\},
	\end{equation*} 
	the following bound holds for the initializers as long as sample size $T\succsim (p_1+2p_2)$:
	\begin{equation}\label{eqn:BC0}
	\smalliii{\widehat{B}^{(0)} - B^\star}_F^2 + \smalliii{\widehat{C}^{(0)} - C^\star}_F^2 = O\Big(\tfrac{p_1+2p_2}T\Big) + O\Big(\tfrac{\log(p_1+p_2)+\log p_2}T\Big).
	\end{equation}
	Considering the estimation of $\widehat{\Omega}_v^{(0)}$, it solves a graphical Lasso problem:
	\begin{equation*}
	\widehat{\Omega}_v^{(0)} = \argmin\limits_{\Omega_v\in\mathbb{S}_{++}^{p_2\times p_2}} \left\{\log\det\Omega_v - \tr\big(\widehat{S}_u^{(0)}\Omega_v \big) + \rho_v \|\Omega_v\|_{1,\text{off}}\right\},
	\end{equation*}
	where $\widehat{S}_v^{(0)} = \tfrac{1}T(\mathcal{Z}^T - \mathcal{X}\widehat{B}^{(0)'} - \mathcal{Z}\widehat{C}^{(0)'})'(\mathcal{Z}^T - \mathcal{X}\widehat{B}^{(0)'} - \mathcal{Z}\widehat{C}^{(0)'})$. Similar to the proof of Theorem~\ref{thm:algo1}, the error bound for $\widehat{\Omega}_v^{(0)}$ depends on $\|\widehat{S}_v^{(0)}-\Sigma_v^\star\|_\infty$, which can be decomposed~as
	\begin{equation*}
	\|\widehat{S}_v^{(0)}-\Sigma_v^\star\|_\infty \leq  \|S_v-\Sigma_v^\star\|_\infty +  \|\widehat{S}_v^{(0)}-S_v\|_\infty,
	\end{equation*}
	where $S_v=\mathcal{V}'\mathcal{V}/T$ is the sample covariance based on the true errors. For the first term, by Lemma 1 in \citet{ravikumar2011high}, there exists constant $\tau_0>2$ such that with probability at least $1-1/p_2^{\tau_0-2}=1-\exp(-\tau \log p_2)~(\tau>0)$, the following bound holds:
	\begin{equation*}
	(\mathbf{E'_3}) \qquad \|S_v - \Sigma_v^\star\|_\infty \leq C_3\sqrt{\frac{\log p_1}T}, \qquad \text{for some constant $C_3$.}
	\end{equation*}
	For the second term, let $\Pi=[B,C]\in\mathbb{R}^{p_2\times (p_1+p_2)}$, then
	\begin{equation*}
	\widehat{S}_v^{(0)}-S_v = \frac{2}T\mathcal{V}'\mathcal{W}(\Pi^\star - \widehat{\Pi}^{(0)})' + (\Pi^\star - \widehat{\Pi}^{(0)})\left(\tfrac{\mathcal{W}'\mathcal{W}}T\right)(\Pi^\star - \widehat{\Pi}^{(0)})' := I_1 + I_2,
	\end{equation*}
	For $I_1$, we have 
	\begin{equation*}
	\begin{split}
	\smallii{\tfrac{2}T\mathcal{V}'\mathcal{W}(\Pi^\star - \widehat{\Pi}^{(0)})'}_\infty \leq \smallii{\tfrac{2}T\mathcal{V}'\mathcal{W}(\Pi^\star - \widehat{\Pi}^{(0)})'}_F \leq 2  \smalliii{\tfrac{1}T\mathcal{W}'\mathcal{V}}_{\text{op}} \smalliii{\Pi^\star - \widehat{\Pi}^{(0)}}_F.
	\end{split}
	\end{equation*}
	Consider the leading term of $\smalliii{\Pi^\star - \widehat{\Pi}^{(0)}}_F$ as in~\eqref{eqn:BC0}, whose rate is 
	$O(\sqrt{T^{-1}(p_1+2p_2)})$. We therefore obtain 
	\begin{equation*}
	\|I_1\|_\infty \leq \|I_1\|_F  = O\Big(\tfrac{p_1+2p_2}T\Big).
	\end{equation*}
	Similarly for $I_2$, 
	\begin{equation*}
	\|I_2\|_\infty \leq \|I_2\|_F \leq \smalliii{\Pi^\star - \widehat{\Pi}^{(0)}}_F^2 \smalliii{\tfrac{\mathcal{W}'\mathcal{W}}T}_{\text{op}},
	\end{equation*}
	where with a similar derivation to that in Lemma~\ref{lemma:boundoperator}, for sample size $T\succsim (p_1+p_2)$, with probability at least $1-c_1'''\exp\{-c_2'''(p_1+p_2) \}$,  we get
	\begin{equation*}
	(\mathbf{E_4'})\qquad \smalliii{\tfrac{\mathcal{W}'\mathcal{W}}T}_{\text{op}}\leq C_4\sqrt{\frac{p_1+2p_2}T} + \Lambda_{\max}(\Gamma_X), \qquad \text{for some constant $C_4$.}
	\end{equation*}
	Hence,
	\begin{equation*}
	\|I_2\|_\infty \leq \|I_2\|_F \leq  = O\Big( \big(\frac{p_1+2p_2}T\big)^{3/2} \Big).
	\end{equation*}
	Combining all terms and then taking the leading one, with probability at least 
	\begin{equation*}
	\begin{split}
	1 - c_1\exp\{-c_2(p_1+p_2)\} - c_1'\exp\{-c_2'(p_1+2p_2)\} - c_1''\exp\{-c_2''\log[p_2(p_1+p_2)]\} \\
	- c_1'''\exp\{-c_2'''(p_1+p_2)\} - \exp(-\tau \log p_2),
	\end{split}
	\end{equation*}
	we obtain 
	\begin{equation*}
	\|\widehat{S}_v^{(0)}-\Sigma_v^\star\|_\infty  = O\Big(\sqrt{\tfrac{p_1+2p_2}T}\Big). 
	\end{equation*}
	Note that here with the required sample size, $(p_1+2p_2)/T$ is a small quantity, and therefore
	\begin{equation*}
	O\Big( \big(\tfrac{p_1+2p_2}T\big)^{3/2} \Big)\leq O\Big(\tfrac{p_1+2p_2}T\Big) \leq O\Big(\sqrt{\tfrac{p_1+2p_2}T}\Big).
	\end{equation*}
	At iteration 1, the bound of $\smalliii{\widehat{B}^{(1)}-B^\star}_F^2+\smalliii{\widehat{C}^{(1)}-C^\star}_F^2$ relies on the following two quantities:
	\begin{equation*}
	\vertiii{\tfrac{1}T \mathcal{W}'\mathcal{V}\widehat{\Omega}^{(0)}_v}_{\text{op}} \qquad \text{and} \qquad \vertii{\tfrac{1}T\mathcal{W}'\mathcal{V}\widehat{\Omega}^{(0)}_v}_{\infty}.
	\end{equation*}
	Using a similar derivation to that in the proof of Theorem~\ref{thm:algo1}, 
	\begin{equation}\label{eqn:term1}
	\vertii{\tfrac{1}T\mathcal{W}'\mathcal{V} \widehat{\Omega}_v^{(0)}}_{\infty} \leq \vertii{\tfrac{1}T \mathcal{W}'\mathcal{V} (\widehat{\Omega}_v^{(0)}-\Omega_v^\star)}_\infty + \vertii{\tfrac{1}T\mathcal{W}'\mathcal{V}\Omega_v^\star}_\infty,
	\end{equation}	
	where by viewing $\mathcal{V}\Omega_v^\star $ as some random realization coming from a certain sub-Gaussian process, with probability at least $1-\bar{c}_1''\exp\{-\bar{c}_2''\log[p_2(p_1+p_2)]\}$, we get
	\begin{equation*}
	(\mathbf{E_5'}) \qquad \vertii{\tfrac{1}T\mathcal{W}'\mathcal{V}\Omega_v^\star}_\infty\leq C_5\sqrt{\frac{\log(p_1+p_2)+\log p_2}T}, \qquad \text{for some constant $C_5$},
	\end{equation*}
	and
	\begin{equation*}
	\begin{split}
	\vertii{\tfrac{1}T \mathcal{W}'\mathcal{V} (\widehat{\Omega}_v^{(0)}-\Omega_v^\star)}_\infty &\leq d^{\Omega_v^\star}_{\max}\smallii{\tfrac{1}T\mathcal{W}'\mathcal{V}}_\infty \smallii{\widehat{\Omega}_v^{(0)}-\Omega_v^\star}_\infty\\
	& =O\Big(\sqrt{\tfrac{\log(p_1+p_2)+\log p_2}T}\Big)\cdot O\Big(\sqrt{\tfrac{p_1+2p_2}T}\Big).
	\end{split}
	\end{equation*}
	For $\smalliii{\tfrac{1}T\mathcal{W}' \mathcal{V}\widehat{\Omega}_v^{(0)}}_{\text{op}}$, similarly we have
	\begin{equation}\label{eqn:term2}
	\vertiii{\tfrac{1}T\mathcal{W}'\mathcal{V}\widehat{\Omega}_v^{(0)}}_{\text{op}} \leq \smalliii{\tfrac{1}T\mathcal{W}'\mathcal{V}(\widehat{\Omega}_v^{(0)}-\Omega_v^\star)}_{\text{op}} + \smalliii{\tfrac{1}T\mathcal{W}'\mathcal{V}\Omega_v^\star}_{\text{op}},
	\end{equation}
	where with probability at least $1- \bar{c}_1'\exp\{-\bar{c}_2'(p_1+p_2)\}$,
	\begin{equation*}
	(\mathbf{E_6'}) \qquad \vertiii{\tfrac{1}T\mathcal{W}'\mathcal{V}\Omega_v^\star}_{\text{op}}\leq C_6\sqrt{\frac{p_1+2p_2}T} \qquad \text{for some constant $C_6$},
	\end{equation*}
	and
	\begin{equation*}
	\begin{split}
	\vertiii{\tfrac{1}T\mathcal{W}'\mathcal{V}(\widehat{\Omega}_v^{(0)}-\Omega_v^\star)}_{\text{op}} & \leq \smalliii{\tfrac{1}T\mathcal{W}'\mathcal{V}}_{\text{op}} \smalliii{\widehat{\Omega}_v^{(0)}-\Omega_v^\star}_{\text{op}} \\
	& \leq  \smalliii{\tfrac{1}T\mathcal{W}'\mathcal{V}}_{\text{op}}\left[ d^{\Omega_v^\star}_{\max} \smallii{\widehat{\Omega}_v^{(0)}-\Omega_v^\star}_\infty\right] = O\Big(\tfrac{p_1+2p_2}T\Big),
	\end{split}
	\end{equation*}
	where the second inequality follows from Corollary 3 of \citet{ravikumar2011high}.  Combining all terms from~\eqref{eqn:term1} and~\eqref{eqn:term2}, the leading term gives the following bound:
	\begin{equation*}
	\smalliii{\widehat{B}^{(1)}-B^\star}_F^2+\smalliii{\widehat{C}^{(1)}-C^\star}_F^2  \leq C_7\Big(\frac{p_1+2p_2}T\Big)  \qquad \text{for some constant $C_7$},
	\end{equation*}
	and this error rate coincides with that in the bound of $\smalliii{\widehat{B}^{(0)}-B^\star}_F^2+\smalliii{\widehat{C}^{(0)}-C^\star}_F^2$. This implies that for $\widehat{\Omega}_v^{(1)}$ and iterations thereafter, the error rate remains unchanged. Moreover, all sources of randomness have been captured up to this step in events $\mathbf{E_1'}$ to $\mathbf{E_6'}$, and therefore the probability for the bounds to hold no longer changes. Consequently, the following bounds hold for all iterations~$k$:
	\begin{equation*}
	\|\mathcal{W}'\mathcal{V}\widehat{\Omega}_v^{(k)}/T\|_\infty = \smalliii{\mathcal{W}'\mathcal{V}\widehat{\Omega}_v^{(k)}/T}_\text{op} = O\Big(\sqrt{\tfrac{p_1+2p_2}T}\Big)
	\end{equation*}
	and
	\begin{equation*}
	\| \widehat{S}_v^{(k)} - \Sigma_v^\star\|_\infty = O\Big(\sqrt{\tfrac{p_1+2p_2}T}\Big),
	\end{equation*}
	with probability at least 
	\begin{equation*}
	1- c_0\exp\{-\tilde{c}_0(p_1+p_2)\} - c_1\exp\{-\tilde{c}_1(p_1+2p_2)\} - c_2\exp\{-\tilde{c}_2 \log[p_2(p_1+p_2)]\} - \exp\{-\tau \log p_2\}.
	\end{equation*}
	for some new positive constants $c_i,\tilde{c}_i~(i=0,1,2)$ and $\tau$.\footnote{Here we slightly abuse the notations and redefine $c_0:=\max\{c_1,c_1'''\}$, $c_1:=\max\{c_1',\bar{c}_1'\}$, $\tilde{c}_1:=\min\{c_2',\bar{c}_2'\}$, $c_2=\max\{c_1'',\bar{c}_1''\}$, $\tilde{c}_2:=\min\{c_2'',\bar{c}_2''\}$. }
	The above bounds directly imply the bound in the statement in Theorem~\ref{thm:algo2}, with the aid of Theorem~\ref{thm:consistencyBC}.
\end{proof}

\bigbreak
\section{Key Lemmas and Their Proofs.} \label{appendix:KeyLemmas}

In this section, we verify the conditions that are required for establishing the consistency results in Theorem~\ref{thm:consistencyA} and~\ref{thm:consistencyBC}, under random realizations of $\mathcal{X}$, $\mathcal{Z}$, $\mathcal{U}$ and $\mathcal{V}$. 

The following two lemmas verify the conditions for establishing the consistency properties for $\widehat{A}^{(0)}$. Specifically, Lemma~\ref{lemma:REX} establishes that with high probability, $\mathfrak{X}_0$ satisfies the RSC condition. Further, Lemma~\ref{lemma:DeviationBound1} gives a high probability upper bound for $\|\mathcal{X}'\mathcal{U}/T\|_\infty$ for random $\mathcal{X}$ and $\mathcal{U}$. 
\begin{lemma}[Verification of the RSC condition] \label{lemma:REX} For the $\VAR(1)$ model $\{X_t\}$ posited in~\eqref{eqn:model0}, there exist $c_i>0~(i=1,2,3)$ such that for sample size $T\succsim \max\{\omega^2,1\} s_A^\star \log p_1$,  with probability at least $$1-c_1\exp\left[-c_2 T\min\{1,\omega^{-2}\}\right],\qquad \omega = c_3\frac{\Lambda_{\max}(\Sigma_u)\mu_{\max}(\mathcal{A})}{\Lambda_{\min}(\Sigma_u)\mu_{\min}(\mathcal{A})},$$ the following inequality holds
	\begin{equation*}
	\frac{1}{2T}\vertiii{\mathfrak{X}_0(\Delta)}_F^2 \geq \alpha_{\text{RSC}} \vertiii{\Delta}_F^2 - \tau \vertii{\Delta}_1^2, \qquad \text{for }\Delta\in \mathbb{R}^{p_1\times p_1}, 
	\end{equation*}
	where $\alpha_{\text{RSC}}=\frac{\Lambda_{\min}(\Sigma_u)}{\mu_{\max}(\mathcal{A})}$, $\tau=4\alpha_{\text{RSC}}\max\{\omega^2,1\}\log p_1/T$.
\end{lemma}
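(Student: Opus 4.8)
The plan is to reduce the matrix-valued restricted strong convexity statement to a vector-level restricted eigenvalue (RE) bound for the Gram matrix $\mathcal{X}'\mathcal{X}/T$ of the stand-alone stationary process $\{X_t\}$, and then import the RE theory for stable Gaussian VAR processes from \citet{basu2015estimation}. First I would unfold the operator. Writing $\delta_1,\dots,\delta_{p_1}\in\R^{p_1}$ for the columns of $\Delta$, the squared operator norm is a block-diagonal Kronecker quadratic form,
\begin{equation*}
\frac{1}{T}\vertiii{\mathfrak{X}_0(\Delta)}_F^2 = \mathrm{vec}(\Delta)'\Big(\mathrm{I}_{p_1}\otimes \tfrac{\mathcal{X}'\mathcal{X}}{T}\Big)\mathrm{vec}(\Delta) = \sum_{j=1}^{p_1}\frac{1}{T}\vertii{\mathcal{X}\delta_j}_2^2 .
\end{equation*}
Hence, if the RE inequality $\tfrac{1}{T}\vertii{\mathcal{X}v}_2^2 \ge 2\alpha_{\text{RSC}}\vertii{v}_2^2 - 2\tau\vertii{v}_1^2$ holds simultaneously for every $v\in\R^{p_1}$, summing over the columns $\delta_j$ gives $\tfrac{1}{T}\vertiii{\mathfrak{X}_0(\Delta)}_F^2 \ge 2\alpha_{\text{RSC}}\vertiii{\Delta}_F^2 - 2\tau\sum_{j}\vertii{\delta_j}_1^2$, and the elementary bound $\sum_j\vertii{\delta_j}_1^2 \le (\sum_j\vertii{\delta_j}_1)^2 = \vertii{\Delta}_1^2$; dividing by two then delivers exactly the claimed inequality with penalty $\Phi(\Delta)=\vertii{\Delta}_1$. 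The key structural point is that the vector RE bound is uniform over all of $\R^{p_1}$ on a single event, so summing over the $p_1$ columns incurs neither an additional union bound nor any loss in the probability.

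Next I would establish that vector RE bound. For fixed $v$, the scalar process $Y_t = v'X_t$ is mean-zero Gaussian with spectral density $v'f_X(\theta)v$, squeezed between $2\pi\mathfrak{m}(f_X)\vertii{v}_2^2$ and $2\pi\mathcal{M}(f_X)\vertii{v}_2^2$; in particular $v'\Gamma_X(0)v \ge 2\pi\mathfrak{m}(f_X)\vertii{v}_2^2$. The concentration of $\tfrac{1}{T}\vertii{\mathcal{X}v}_2^2 = \tfrac{1}{T}\sum_t Y_t^2$ around its mean is the Hanson--Wright-type tail bound for dependent Gaussian quadratic forms \citep[][Proposition 2.4]{basu2015estimation}, whose deviation scale is controlled by $2\pi\mathcal{M}(f_X)\vertii{v}_2^2$; this is precisely where the temporal dependence enters, encapsulated through the spectral extremes rather than a naive variance proxy. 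Combining the pointwise tail bound with a discretization over the sparse sphere and the standard extension-from-the-grid argument (lifting control on $k$-sparse unit vectors to the whole space at the price of the $\tau\vertii{v}_1^2$ slack) reproduces the RE inequality with curvature proportional to $\mathfrak{m}(f_X)$, tolerance proportional to $\mathfrak{m}(f_X)\max\{\omega^2,1\}\log p_1/T$, and effective conditioning $\omega \asymp \mathcal{M}(f_X)/\mathfrak{m}(f_X)$, valid once $T\succsim \max\{\omega^2,1\}s_A^\star\log p_1$ with the stated exponential probability.

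Finally I would translate the spectral quantities into the transition/covariance parameters. By Proposition~2.1 of \citet{basu2015estimation},
\begin{equation*}
\mathfrak{m}(f_X) \ge \frac{\Lambda_{\min}(\Sigma_u)}{2\pi\,\mu_{\max}(\mathcal{A})}, \qquad \mathcal{M}(f_X) \le \frac{\Lambda_{\max}(\Sigma_u)}{2\pi\,\mu_{\min}(\mathcal{A})},
\end{equation*}
so that $2\pi\mathfrak{m}(f_X)$ is bounded below by $\Lambda_{\min}(\Sigma_u)/\mu_{\max}(\mathcal{A})=\alpha_{\text{RSC}}$, while $\mathcal{M}(f_X)/\mathfrak{m}(f_X) \le \Lambda_{\max}(\Sigma_u)\mu_{\max}(\mathcal{A})/[\Lambda_{\min}(\Sigma_u)\mu_{\min}(\mathcal{A})]$, which matches the stated $\omega$ after absorbing the Basu--Michailidis constant into $c_3$; substituting into the tolerance yields $\tau = 4\alpha_{\text{RSC}}\max\{\omega^2,1\}\log p_1/T$. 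I expect the main obstacle to be purely the concentration/discretization step for the dependent Gaussian quadratic form --- namely obtaining a tail that is \emph{uniform} over the approximately-sparse cone with deviation governed by $\mathcal{M}(f_X)$ --- but this is exactly the content that can be imported wholesale from \citet{basu2015estimation}; the remaining work is the bookkeeping that propagates the vector statement to the matrix operator $\mathfrak{X}_0$ and the constant-tracking in the final display.
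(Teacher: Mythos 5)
Your proposal is correct and follows essentially the same route as the paper: the paper's proof is a one-line reduction that vectorizes $\Delta$ so that $\tfrac{1}{T}\smalliii{\mathfrak{X}_0(\Delta)}_F^2 = \mathrm{vec}(\Delta)'\big(\mathrm{I}_{p_1}\otimes \mathcal{X}'\mathcal{X}/T\big)\mathrm{vec}(\Delta)$ with $\|\mathrm{vec}(\Delta)\|_2=\smalliii{\Delta}_F$ and $\|\mathrm{vec}(\Delta)\|_1=\|\Delta\|_1$, and then invokes Proposition 4.2 of \citet{basu2015estimation} with $d=1$. Your column-wise decomposition and the sketch of the concentration, discretization, and spectral-extreme bounds simply unpack the content of that cited proposition, so no genuinely different idea is involved.
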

\begin{proof}[\textbf{Proof of Lemma~\ref{lemma:REX}}]
	For the specific $\VAR(1)$ process $\{X_t\}$ given in~\eqref{eqn:model0}, using  Proposition 4.2 in \citet{basu2015estimation} with $d=1$ directly gives the result. Specifically, we note that by letting $\theta=\text{vec}(\Delta)$, 
	\begin{equation*}
	\tfrac{1}{T}\vertiii{\mathfrak{X}_0(\Delta)}_F^2 = \theta' \widehat{\Gamma}^{(0)}_X \theta,
	\end{equation*}
	where $\widehat{\Gamma}^{(0)}_X = \mathrm{I}_{p_1}\otimes (\mathcal{X}'\mathcal{X}/T)$, and $\|\theta\|_2^2 = \vertiii{\Delta}_F^2$, $\|\theta\|_1 = \vertii{\Delta}_1$. 
\end{proof}
\begin{lemma}[Verification of the deviation bound]\label{lemma:DeviationBound1}
	For the model in~\eqref{eqn:model0}, there exist constants $c_i>0, ~i=0,1,2$ such that for $T\succsim 2\log p_1$, with probability at least $1-c_1\exp(-2c_2\log p_1)$, the following bound holds:
	\begin{equation}
	\|\mathcal{X}'\mathcal{U}/T\|_\infty \leq c_0\Lambda_{\max}(\Sigma_u)\left[ 1 + \frac{1}{\mu_{\min}(\mathcal{A})} +  \frac{\mu_{\max}(\mathcal{A})}{ \mu_{\min}(\mathcal{A})}\right]\sqrt{\frac{2\log p_1}{T}}.
	\end{equation}
\end{lemma}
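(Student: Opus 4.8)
The plan is to control each of the $p_1^2$ entries of $\mathcal{X}'\mathcal{U}/T$ separately by a Gaussian concentration argument and then absorb the number of entries through a union bound. Fix $i,j\in\{1,\dots,p_1\}$ and write the $(i,j)$ entry as $Y_{ij}=\tfrac1T\sum_{t=1}^T x_{t-1,i}u_{t,j}$. Since $x_{t-1}$ is a function of the past innovations $\{u_s:s\le t-1\}$ while $u_t$ is independent of them, each summand has mean zero and hence $\E Y_{ij}=0$; the genuine difficulty is the temporal dependence across $t$, which couples the summands through the autocovariance of $\{X_t\}$ and prevents a naive i.i.d.\ argument.

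First I would linearize the product with the polarization identity, $x_{t-1,i}u_{t,j}=\tfrac14\big[(x_{t-1,i}+u_{t,j})^2-(x_{t-1,i}-u_{t,j})^2\big]$, which rewrites $Y_{ij}$ as the difference of two quadratic forms $\tfrac1{4T}\sum_t(\xi_t^{+})^2-\tfrac1{4T}\sum_t(\xi_t^{-})^2$ in the jointly Gaussian sequences $\xi_t^{\pm}=x_{t-1,i}\pm u_{t,j}$. Because $\E[x_{t-1,i}u_{t,j}]=0$, both sequences share the common mean $\tfrac1T\sum_t\E(\xi_t^{\pm})^2$, which cancels in the difference. I would then invoke the standard tail bound for a Gaussian quadratic form: if $\xi\sim\Normal(0,\Gamma_\xi)$, then $\tfrac1T\|\xi\|_2^2$ concentrates around $\tfrac1T\tr(\Gamma_\xi)$ at scale $\vertiii{\Gamma_\xi}_{\text{op}}$, with $\mathbb{P}\big(|\tfrac1T\|\xi\|_2^2-\tfrac1T\tr(\Gamma_\xi)|>\vertiii{\Gamma_\xi}_{\text{op}}\eta\big)\le2\exp(-cT\eta^2)$ for $\eta\le1$ (obtained from $\xi=\Gamma_\xi^{1/2}g$, $g\sim\Normal(0,I_T)$, and Bernstein/Hanson--Wright).

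The crux is therefore an operator-norm bound on the $T\times T$ covariance $\Gamma_\xi$ of $(\xi_1^{\pm},\dots,\xi_T^{\pm})$, which I would split as $\vertiii{\Gamma_\xi}_{\text{op}}\le\vertiii{\Gamma^{(x)}}_{\text{op}}+\vertiii{\Gamma^{(u)}}_{\text{op}}+2\vertiii{\Gamma^{(xu)}}_{\text{op}}$, corresponding to the autocovariance of $\{x_{\cdot,i}\}$, the autocovariance of $\{u_{\cdot,j}\}$, and the cross-covariance. Using the spectral-domain bounds of \citet{basu2015estimation} (their Propositions 2.1--2.3) each block is controlled by the spectral extremes: $\vertiii{\Gamma^{(x)}}_{\text{op}}\le2\pi\mathcal{M}(f_X)\le\Lambda_{\max}(\Sigma_u)/\mu_{\min}(\mathcal{A})$, giving the $\mu_{\min}(\mathcal{A})^{-1}$ term; $\vertiii{\Gamma^{(u)}}_{\text{op}}=(\Sigma_u)_{jj}\le\Lambda_{\max}(\Sigma_u)$, giving the constant term; and $\vertiii{\Gamma^{(xu)}}_{\text{op}}\le2\pi\mathcal{M}(f_{X,U})\le2\pi\mathcal{M}(f_X)\mu_{\max}(\mathcal{A})\le\Lambda_{\max}(\Sigma_u)\mu_{\max}(\mathcal{A})/\mu_{\min}(\mathcal{A})$, the cross-spectrum inequality being exactly the one quoted in Section~\ref{sec:bound-discussion}. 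Summing reproduces, up to an absolute constant folded into $c_0$, the bracketed factor $1+\mu_{\min}(\mathcal{A})^{-1}+\mu_{\max}(\mathcal{A})\mu_{\min}(\mathcal{A})^{-1}$ multiplying $\Lambda_{\max}(\Sigma_u)$.

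Finally I would set the per-entry deviation level to $s=c_0\Lambda_{\max}(\Sigma_u)\big[1+\mu_{\min}(\mathcal{A})^{-1}+\mu_{\max}(\mathcal{A})\mu_{\min}(\mathcal{A})^{-1}\big]\sqrt{2\log p_1/T}$, which corresponds to $\eta\asymp\sqrt{\log p_1/T}$; the hypothesis $T\succsim2\log p_1$ is precisely what forces $\eta\le1$ and keeps us in the $\min(\eta,\eta^2)=\eta^2$ regime. The quadratic-form tail then yields $\mathbb{P}(|Y_{ij}|>s)\le c_1\exp(-c'c_0^2\log p_1)$ for each pair; a union bound over the $p_1^2$ entries multiplies this by $\exp(2\log p_1)$, and choosing $c_0$ large enough that $c'c_0^2>2+2c_2$ delivers the stated probability $1-c_1\exp(-2c_2\log p_1)$. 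I expect the main obstacle to be the cross-covariance block: establishing $\vertiii{\Gamma^{(xu)}}_{\text{op}}\lesssim\Lambda_{\max}(\Sigma_u)\mu_{\max}(\mathcal{A})/\mu_{\min}(\mathcal{A})$ requires passing from the time-domain cross-covariance to the cross-spectrum $f_{X,U}$ and invoking $\mathcal{M}(f_{X,U})\le\mathcal{M}(f_X)\mu_{\max}(\mathcal{A})$, the one ingredient that genuinely exploits the innovation structure of the VAR recursion rather than generic stationarity.
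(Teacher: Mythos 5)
Your proposal is correct and follows essentially the same route as the paper: a per-entry tail bound of exactly the stated form, quoted in the paper from Proposition~2.4(b) of Basu and Michailidis (2015), followed by a union bound over the $p_1^2$ entries with $\eta\asymp\sqrt{\log p_1/T}$ and the sample-size condition guaranteeing the $\eta^2$ regime. The only difference is that the paper invokes that proposition as a black box, whereas you reconstruct its internal argument (polarization, Hanson--Wright for Gaussian quadratic forms, and the spectral bounds $2\pi\mathcal{M}(f_X)\le\Lambda_{\max}(\Sigma_u)/\mu_{\min}(\mathcal{A})$ and $\mathcal{M}(f_{X,U})\le\mathcal{M}(f_X)\mu_{\max}(\mathcal{A})$), which is precisely how that cited result is proved.
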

\begin{proof}[\textbf{Proof of Lemma~\ref{lemma:DeviationBound1}}]
	First, we note that, 
	\begin{equation*}
	\vertii{\mathcal{X}'\mathcal{U}/T}_\infty = \max\limits_{\substack{1\leq i\leq p_1\\1\leq j\leq p_1}}\left| e_i' \left(\mathcal{X}'\mathcal{U}/T\right) e_j \right|.
	\end{equation*}
	Applying Proposition~2.4(b) in \citet{basu2015estimation} for an arbitrary pair of $(i,j)$ gives:
	\begin{equation*}
	\mathbb{P}\left( \left| e_i' \left(\mathcal{X}'\mathcal{U}/T\right) e_j \right| >  \eta \left[\Lambda_{\max}(\Sigma_u)\left( 1 + \frac{1}{\mu_{\min}(\mathcal{A})} +  \frac{\mu_{\max}(\mathcal{A})}{ \mu_{\min}(\mathcal{A})}\right) \right]\right) \leq 6\exp[-cT\min\{\eta,\eta^2\}]. 
	\end{equation*}
	Setting $\eta = c_0\sqrt{2\log p_1/T}$ and taking a union bound over all $1\leq i\leq p_1,1\leq j\leq p_1$, we get that for some $c_1,c_2>0$, with probability at least $1-c_1\exp[-2c_2\log p_1]$, 
	\begin{equation*}
	\max\limits_{\substack{1\leq i\leq p_1\\1\leq j\leq p_1}}\left| e_i' \left(\mathcal{X}'\mathcal{U}/T\right) e_j \right| \leq c_0\Lambda_{\max}(\Sigma_u)\left[ 1 + \frac{1}{\mu_{\min}(\mathcal{A})} +  \frac{\mu_{\max}(\mathcal{A})}{ \mu_{\min}(\mathcal{A})}\right]\sqrt{\frac{2\log p_1}{T}}.
	\end{equation*}
\end{proof}
In the next two lemmas, Lemma~\ref{lemma:RSCW} gives an RSC curvature that holds with high probability for $\mathfrak{W}$ induced by a random $\mathcal{W}$, and Lemma~\ref{lemma:operator-infinity} gives a high probability upper bound for $\vertiii{W'\mathcal{V}/T}_{\text{op}}$ and $\vertii{W'\mathcal{V}/T}_\infty$. 
\begin{lemma}[Verification of the RSC condition]\label{lemma:RSCW}
	Consider the covariance stationary process $W_t=(X'_t,Z'_t)'\in\mathbb{R}^{p_1+p_2}$ whose spectral density exists. Suppose $\mathfrak{m}(f_W)>0$. There exist constants $c_i>0, i=1,2,3$ such that with probability at least $1-2c_1\exp(-c_2(p_1+p_2))$, the RSC condition for $\mathfrak{W}$ induced by a random $\mathcal{W}$ holds for $\alpha_{\text{RSC}}$ and tolerance 0, where 
	\begin{equation*}
	\alpha_{\text{RSC}} = \pi \mathfrak{m}(f_W)/4,
	\end{equation*}
	whenever $T\succsim c_3(p_1+p_2)$.
\end{lemma}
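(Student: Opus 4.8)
The plan is to observe that a tolerance-zero RSC condition for the linear map $\mathfrak{W}_0(\Delta)=\mathcal{W}\Delta'$ is nothing more than a uniform lower bound on the smallest eigenvalue of the sample Gram matrix $\widehat{\Gamma}_W:=\mathcal{W}'\mathcal{W}/T$. First I would write, for any $\Delta\in\R^{p_2\times(p_1+p_2)}$ with rows $\delta_1,\dots,\delta_{p_2}$,
\[
\tfrac{1}{T}\vertiii{\mathfrak{W}_0(\Delta)}_F^2=\tr\big(\Delta\,\widehat{\Gamma}_W\,\Delta'\big)=\sum_{i=1}^{p_2}\delta_i\widehat{\Gamma}_W\delta_i'\geq \Lambda_{\min}(\widehat{\Gamma}_W)\sum_{i=1}^{p_2}\|\delta_i\|_2^2=\Lambda_{\min}(\widehat{\Gamma}_W)\vertiii{\Delta}_F^2 .
\]
Hence it suffices to show $\Lambda_{\min}(\widehat{\Gamma}_W)\geq \tfrac{1}{2}\pi\,\mathfrak{m}(f_W)=2\alpha_{\text{RSC}}$ on a high-probability event; the claimed bound $\tfrac{1}{2T}\vertiii{\mathfrak{W}_0(\Delta)}_F^2\geq\alpha_{\text{RSC}}\vertiii{\Delta}_F^2$ then holds for \emph{every} $\Delta$, i.e. with zero tolerance.

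Next I would anchor the estimate to the population covariance. Since $\{W_t\}$ is covariance stationary with existing spectral density, $\Gamma_W(0)=\int_{-\pi}^{\pi}f_W(\theta)\,d\theta$, so for every unit vector $v$,
\[
v'\Gamma_W(0)v=\int_{-\pi}^{\pi}v'f_W(\theta)v\,d\theta\geq 2\pi\,\mathfrak{m}(f_W),
\]
giving $\Lambda_{\min}(\Gamma_W(0))\geq 2\pi\,\mathfrak{m}(f_W)>0$ under the hypothesis $\mathfrak{m}(f_W)>0$. For the sample version I would invoke the Gaussian quadratic-form concentration of \citet[Proposition 2.4]{basu2015estimation}: for a fixed unit vector $v$ and $\eta>0$,
\[
\mathbb{P}\Big(\big|v'\widehat{\Gamma}_W v-v'\Gamma_W(0)v\big|>2\pi\,\mathcal{M}(f_W)\,\eta\Big)\leq 2\exp\big(-cT\min\{\eta,\eta^2\}\big).
\]

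I would then upgrade this pointwise statement to a uniform one over the sphere $S^{p_1+p_2-1}$ by a standard covering argument: fix a $1/4$-net $\mathcal{N}$ with $|\mathcal{N}|\leq 9^{p_1+p_2}$, union-bound the deviation inequality over $\mathcal{N}$, and use the discretization bound $\smalliii{\widehat{\Gamma}_W-\Gamma_W(0)}_{\text{op}}\leq 2\max_{v\in\mathcal{N}}|v'(\widehat{\Gamma}_W-\Gamma_W(0))v|$. Taking $\eta$ to be a sufficiently small constant depending only on the ratio $\mathcal{M}(f_W)/\mathfrak{m}(f_W)$ drives the uniform operator-norm deviation below $\tfrac{3}{2}\pi\,\mathfrak{m}(f_W)$, whence $\Lambda_{\min}(\widehat{\Gamma}_W)\geq 2\pi\,\mathfrak{m}(f_W)-\tfrac{3}{2}\pi\,\mathfrak{m}(f_W)=\tfrac{1}{2}\pi\,\mathfrak{m}(f_W)$ on the good event. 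The union bound costs a factor $9^{p_1+p_2}$, so the failure probability is of order $\exp\big((p_1+p_2)\log 9-cT\eta^2\big)$, which is at most $2c_1\exp(-c_2(p_1+p_2))$ exactly once $T\succsim c_3(p_1+p_2)$, since then the exponent $-cT\eta^2$ dominates the entropy term $(p_1+p_2)\log 9$.

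The main obstacle is precisely this uniformity step: the concentration inequality controls only a single quadratic form, whereas $\Lambda_{\min}(\widehat{\Gamma}_W)$ requires simultaneous control over the entire sphere. This is where the ambient dimension enters through the $9^{p_1+p_2}$ cardinality of the net and forces the linear-in-dimension sample-size requirement $T\succsim(p_1+p_2)$; the assumption $\mathfrak{m}(f_W)>0$ is what keeps the population floor strictly positive so that a constant fraction of it survives the deviation. The remaining ingredients — the trace reduction and the population spectral bound — are routine.
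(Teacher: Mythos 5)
Your proposal is correct and follows essentially the same route as the paper: both reduce the tolerance-zero RSC claim to the bound $\tfrac{1}{2T}\vertiii{\mathfrak{W}_0(\Delta)}_F^2\geq\tfrac12\Lambda_{\min}(\mathcal{W}'\mathcal{W}/T)\vertiii{\Delta}_F^2$, lower-bound $\Lambda_{\min}(\Gamma_W(0))$ by $2\pi\mathfrak{m}(f_W)$ via the spectral density (Proposition 2.3 of Basu and Michailidis), and control $\Lambda_{\min}(\mathcal{W}'\mathcal{W}/T)$ by a constant fraction of its population value on an event of probability $1-2c_1\exp(-c_2(p_1+p_2))$ once $T\succsim(p_1+p_2)$. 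The only difference is that the paper invokes Lemma 4 of Negahban and Wainwright (2011) as a black box for the uniform concentration step, whereas you write out the underlying $1/4$-net discretization and union bound explicitly, arriving at the same constant $\alpha_{\text{RSC}}=\pi\mathfrak{m}(f_W)/4$.
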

\begin{proof}[\textbf{Proof of Lemma~\ref{lemma:RSCW}}]
	First ,we note that the following inequality holds, for any $\mathcal{W}$:
	\begin{equation}\label{RSCinquality}
	\frac{1}{2T}\vertiii{\mathfrak{W}_0(\Delta)}_F^2 = \frac{1}{2T} \vertiii{\mathcal{W}'\Delta}_F^2  = \frac{1}{2T}\sum_{j=1}^{p_2} \vertii{[\mathcal{W}'\Delta]_j}_2^2 \geq \frac{1}{2} \Lambda_{\min}\left(\widehat{\Gamma}^{(0)}_W\right)\vertiii{\Delta}_F^2.
	\end{equation}
	where $\widehat{\Gamma}^{(0)}_W = \mathcal{W}'\mathcal{W}/T$. Applying Lemma 4 in \citet{negahban2011estimation} on $\mathcal{W}$ together with Proposition 2.3 in \citet{basu2015estimation}, the following bound holds with probability at least $1-2c_1\exp[-c_2(p_1+p_2)]$, as long as $T\succsim c_3(p_1+p_2)$:
	\begin{equation*}
	\Lambda_{\min}\left(\widehat{\Gamma}^{(0)}_W\right) \geq \frac{\Lambda_{\min}(\Gamma_W(0)) }{4} \geq \frac{\pi}{2}\mathfrak{m}(f_W),
	\end{equation*}
	where $\Gamma_W(0)=\mathbb{E}W_tW_t'$. Combining with~\eqref{RSCinquality}, the RSC condition holds with $\kappa(\mathfrak{W}) = \pi \mathfrak{m}(f_W)/4$. 
\end{proof}
\begin{lemma}[Verification of the deviation bound]\label{lemma:operator-infinity}
	There exist constants $c_i>0$ and $c_i'>0, i=1,2,3$ such that the following statements hold:
	\begin{enumerate}[(a)]
		\item With probability at least $1-c_1\exp[-c_2(p_1+2p_2)]$, as long as $T\succsim c_3(p_1+2p_2)$, 
		\begin{equation}\label{OperatorBound1}
		\vertiii{\mathcal{W}'\mathcal{V}/T}_{\text{op}} \leq c_0\left[\mathcal{M}(f_W) + \Lambda_{\max}(\Sigma_v) + \mathcal{M}(f_{W,V})\right]\sqrt{\frac{p_1+2p_2}{T}}.
		\end{equation}
		\item With probability at least $1-c_1'\exp(-c'_2\log (p_1+p_2) - c'_2\log p_2)$, as long as $T\succsim c_3'\log [(p_1+p_2)p_2]$, 
		\begin{equation}\label{DeviationBound2}
		\vertii{\mathcal{W}'\mathcal{V}/T}_\infty \leq c_0'\left[\mathcal{M}(f_W) + \Lambda_{\max}(\Sigma_v) + \mathcal{M}(f_{W,V})\right]\sqrt{\frac{\log(p_1+p_2)+\log p_2 }{T}}.
		\end{equation}
	\end{enumerate} 
\end{lemma}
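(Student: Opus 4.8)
The plan is to reduce both parts to a single concentration estimate for a fixed bilinear form and then to treat the operator norm by a covering argument and the element-wise norm by a direct union bound, in direct parallel with the structure of Lemma~\ref{lemma:DeviationBound1}. First I would observe that $\mathcal{W}'\mathcal{V}/T = \tfrac{1}{T}\sum_{t=1}^T W_{t-1}V_t'$, and that since $V_t$ is the innovation driving $Z_t$ and is therefore independent of the past $W_{t-1}=(X_{t-1}',Z_{t-1}')'$, we have $\mathbb{E}[W_{t-1}V_t']=0$. Hence for any fixed vectors $u\in\mathbb{R}^{p_1+p_2}$ and $v\in\mathbb{R}^{p_2}$ the scalar $u'(\mathcal{W}'\mathcal{V}/T)v=\tfrac1T\sum_t (u'W_{t-1})(v'V_t)$ is a \emph{mean-zero} quadratic expression in a jointly stationary Gaussian process, so that no centering term appears and the whole quantity is a pure fluctuation.

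The core estimate is the cross-term concentration bound in Proposition~2.4(b) of \citet{basu2015estimation}. Here I would view $\xi_t:=(W_{t-1}',V_t')'$ as a jointly stationary Gaussian process---it is a linear, time-invariant functional of the underlying innovations---so that $(u'W_{t-1})(v'V_t)$ is a product of two linear functionals of $\xi_t$. Polarizing the product into a difference of squares and invoking the proposition yields, for each fixed $(u,v)$ with $\vertii{u}_2=\vertii{v}_2=1$,
\begin{equation*}
\mathbb{P}\Big(\big|u'(\mathcal{W}'\mathcal{V}/T)v\big| > \eta\,[\mathcal{M}(f_W)+\Lambda_{\max}(\Sigma_v)+\mathcal{M}(f_{W,V})]\Big) \leq 6\exp\big[-cT\min\{\eta,\eta^2\}\big],
\end{equation*}
where the three spectral terms arise respectively from the squared contribution of $W$, from the white-noise process $V$ (whose spectral density is $\Sigma_v/(2\pi)$, giving the $\Lambda_{\max}(\Sigma_v)$ term up to absolute constants absorbed into $c_0$), and from the cross-spectrum $f_{W,V}$; I would also record $\mathcal{M}(f_{W,V})=\mathcal{M}(f_{V,W})$ so that the one-step lag direction is immaterial.

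For part (a) I would introduce $1/4$-nets $N_1,N_2$ of the unit spheres in $\mathbb{R}^{p_1+p_2}$ and $\mathbb{R}^{p_2}$, of cardinalities at most $9^{p_1+p_2}$ and $9^{p_2}$, and use the standard discretization inequality $\vertiii{\mathcal{W}'\mathcal{V}/T}_{\text{op}}\leq 2\max_{u\in N_1,v\in N_2}u'(\mathcal{W}'\mathcal{V}/T)v$. Taking $\eta\asymp\sqrt{(p_1+2p_2)/T}$ (so that $\eta\leq 1$ and $\min\{\eta,\eta^2\}=\eta^2$ once $T\succsim p_1+2p_2$) and union-bounding over $|N_1||N_2|\leq 9^{p_1+2p_2}$ pairs, the cardinality factor $e^{(p_1+2p_2)\log 9}$ is dominated by $e^{-cT\eta^2}$ for a sufficiently large leading constant $c_0$, producing the claimed probability $1-c_1\exp[-c_2(p_1+2p_2)]$. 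For part (b) I would instead apply the fixed-$(u,v)$ bound with the standard basis vectors $u=e_i$ ($i\le p_1+p_2$) and $v=e_j$ ($j\le p_2$), set $\eta\asymp\sqrt{(\log(p_1+p_2)+\log p_2)/T}$, and union-bound over the $(p_1+p_2)p_2$ entries; here $\log[(p_1+p_2)p_2]=\log(p_1+p_2)+\log p_2$ replaces the net cardinality, yielding the stated tail $1-c_1'\exp(-c_2'\log(p_1+p_2)-c_2'\log p_2)$.

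The main obstacle, and the step deserving the most care, is the correct invocation of Proposition~2.4(b) of \citet{basu2015estimation} for the cross term between the \emph{observed} joint process $W$ and the \emph{innovation} sub-process $V$: I must verify that $(W_{t-1},V_t)$ genuinely form a jointly stationary Gaussian pair with a well-defined cross-spectrum, identify that cross-spectrum with $f_{W,V}$ at the relevant lag, and confirm that the resulting variance proxy is governed by exactly $\mathcal{M}(f_W)+\Lambda_{\max}(\Sigma_v)+\mathcal{M}(f_{W,V})$ rather than a larger quantity. Once this identification is in place, the covering and union-bound calculations are routine and mirror those already carried out in Lemma~\ref{lemma:DeviationBound1} and Lemma~\ref{lemma:RSCW}.
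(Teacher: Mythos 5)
Your proposal is correct and follows essentially the same route as the paper: part (a) is exactly the covering-net argument of Lemma~\ref{lemma:boundoperator} (whose fixed-$(u,v)$ input, Lemma~\ref{lemma:DeviationBound0}, is the same polarization-into-squares reduction to the Basu--Michailidis concentration bounds that you describe, with the constant $\mathcal{M}(f_W)+\mathcal{M}(f_V)+\mathcal{M}(f_{W,V})$ and $\mathcal{M}(f_V)$ identified with $\Lambda_{\max}(\Sigma_v)$ up to $2\pi$), and part (b) is the entrywise union bound of Lemma~\ref{lemma:DeviationBound1}. Your observation that $\mathbb{E}[W_{t-1}V_t']=0$ kills the centering term $\vertiii{\Gamma_{W,V}(0)}_{\text{op}}$ that would otherwise appear in Lemma~\ref{lemma:boundoperator} is precisely the point that makes the paper's ``direct application'' legitimate.
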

\begin{proof}[\textbf{Proof of Lemma~\ref{lemma:operator-infinity}}]
	(a) is a direct application of Lemma~\ref{lemma:boundoperator} on processes $\{W_t\}\in\R^{(p_1+p_2)}$ and $\{V_t\}\in\R^{p_2}$, and (b) is a direct application of Lemma~\ref{lemma:DeviationBound1}. 
\end{proof}

\bigbreak
\section{Auxiliary Lemmas and Their Proofs.}\label{sec:aux-lemma}
\begin{lemma}\label{lemma:J3cone}
	Let $\widehat{\beta}$ be the solution to the following optimization problem: 
	\begin{equation*}
	\widehat{\beta} = \argmin\limits_{\beta \in\mathbb{R}^p} \left\{ \tfrac{1}{2n} \vertii{Y-X\beta}_2^2 + \lambda_n \|\beta\|_1 \right\},
	\end{equation*}
	where the data is generated according to $Y = X\beta^\star + E$ with $X\in\mathbb{R}^{n\times p}$ and $E\in\mathbb{R}^{n}$.
	The errors $E_i$ are assumed to be i.i.d. for all $i=1,\cdots,n$. Then, by choosing $\lambda_n\geq 2\|X'E/n\|_\infty$, the error vector $\Delta := \widehat{\beta} - \beta^\star$ satisfies $\|\Delta_{\mathcal{J}}\|_1\leq 3\|\Delta_{\mathcal{J}^c}\|_1$. 
\end{lemma}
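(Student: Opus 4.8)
The plan is to obtain the stated cone inequality from a \emph{basic inequality} driven by the optimality of $\widehat{\beta}$, mirroring the argument used for Theorem~\ref{thm:consistencyA}. Here I read $\mathcal{J}$ as the index set on which the truth vanishes, $\mathcal{J}=\{j:\beta^\star_j=0\}$, so that its complement $\mathcal{J}^c=\mathrm{supp}(\beta^\star)$ carries the entire $\ell_1$ mass of $\beta^\star$; under this reading the assertion $\|\Delta_{\mathcal{J}}\|_1\leq 3\|\Delta_{\mathcal{J}^c}\|_1$ says precisely that the off-support error is dominated by the on-support error. First I would combine the optimality of $\widehat{\beta}$ with the feasibility of $\beta^\star$ to write
\begin{equation*}
\tfrac{1}{2n}\|Y-X\widehat{\beta}\|_2^2+\lambda_n\|\widehat{\beta}\|_1\leq \tfrac{1}{2n}\|Y-X\beta^\star\|_2^2+\lambda_n\|\beta^\star\|_1 ,
\end{equation*}
substitute $Y=X\beta^\star+E$ together with $\Delta=\widehat{\beta}-\beta^\star$, expand $\|E-X\Delta\|_2^2$, and cancel the common $\|E\|_2^2$ to reach
\begin{equation*}
\tfrac{1}{2n}\|X\Delta\|_2^2\leq \tfrac{1}{n}\langle X'E,\Delta\rangle+\lambda_n\big(\|\beta^\star\|_1-\|\beta^\star+\Delta\|_1\big).
\end{equation*}
Because the left-hand side is nonnegative, it may be discarded, leaving $0\leq \tfrac{1}{n}\langle X'E,\Delta\rangle+\lambda_n(\|\beta^\star\|_1-\|\beta^\star+\Delta\|_1)$.

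Next I would bound the stochastic term by H\"{o}lder's inequality, $\tfrac{1}{n}|\langle X'E,\Delta\rangle|\leq \|X'E/n\|_\infty\,\|\Delta\|_1$, and invoke the prescribed threshold $\lambda_n\geq 2\|X'E/n\|_\infty$ to replace it by $\tfrac{\lambda_n}{2}\|\Delta\|_1$. For the penalty gap I would split coordinates over $\mathcal{J}$ and $\mathcal{J}^c$; since $\beta^\star$ is supported on $\mathcal{J}^c$, the reverse triangle inequality applied on that block gives
\begin{equation*}
\|\beta^\star+\Delta\|_1=\|\beta^\star_{\mathcal{J}^c}+\Delta_{\mathcal{J}^c}\|_1+\|\Delta_{\mathcal{J}}\|_1\geq \|\beta^\star\|_1-\|\Delta_{\mathcal{J}^c}\|_1+\|\Delta_{\mathcal{J}}\|_1 ,
\end{equation*}
so that $\|\beta^\star\|_1-\|\beta^\star+\Delta\|_1\leq \|\Delta_{\mathcal{J}^c}\|_1-\|\Delta_{\mathcal{J}}\|_1$.

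Combining the two estimates and writing $\|\Delta\|_1=\|\Delta_{\mathcal{J}}\|_1+\|\Delta_{\mathcal{J}^c}\|_1$ yields
\begin{equation*}
0\leq \tfrac{\lambda_n}{2}\big(\|\Delta_{\mathcal{J}}\|_1+\|\Delta_{\mathcal{J}^c}\|_1\big)+\lambda_n\big(\|\Delta_{\mathcal{J}^c}\|_1-\|\Delta_{\mathcal{J}}\|_1\big),
\end{equation*}
and rearranging collapses this to $\tfrac{\lambda_n}{2}\|\Delta_{\mathcal{J}}\|_1\leq \tfrac{3\lambda_n}{2}\|\Delta_{\mathcal{J}^c}\|_1$, i.e. $\|\Delta_{\mathcal{J}}\|_1\leq 3\|\Delta_{\mathcal{J}^c}\|_1$, as claimed. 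The algebra is routine; the step demanding the most care — and the one I expect to be the main obstacle — is the sign bookkeeping, namely applying the reverse triangle inequality on the support block $\mathcal{J}^c$ (not on $\mathcal{J}$) and ensuring that the $\tfrac{\lambda_n}{2}$ produced by H\"{o}lder combines with the $\lambda_n$ from the $\ell_1$ comparison to give exactly the constant $3$. The i.i.d.\ assumption on the $E_i$ plays no direct role in this deterministic inequality; it matters only downstream, where it is what permits the threshold $\lambda_n\geq 2\|X'E/n\|_\infty$ to hold with high probability for the random designs of Theorems~\ref{thm:algo1} and~\ref{thm:algo2}.
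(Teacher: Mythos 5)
Your proof is correct and follows essentially the same route as the paper's own argument: the basic inequality from optimality of $\widehat{\beta}$, H\"{o}lder's inequality combined with the threshold $\lambda_n\geq 2\|X'E/n\|_\infty$, and the reverse triangle inequality applied on the support block, yielding the constant $3$. The only difference is notational, and it works in your favor: the paper takes $\mathcal{J}$ to be the support of $\beta^\star$ (it sets $\beta^\star_{\mathcal{J}^c}=0$), so its algebra actually produces $\|\Delta_{\mathcal{J}^c}\|_1\leq 3\|\Delta_{\mathcal{J}}\|_1$ while asserting the reverse in its last line, whereas your reading of $\mathcal{J}$ as the zero set makes the derivation land exactly on the inequality as stated.
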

\begin{proof}
	Note that $\beta^\star$ is always feasible. By the optimality of $\widehat{\beta}$, we have
	\begin{equation*}
	\frac{1}{2n}  \vertii{Y-X\widehat{\beta}}_2^2 + \lambda_n \|\widehat{\beta} \|_1 \leq \frac{1}{2n} \vertii{Y-X\beta^\star}_2^2 + \lambda_n \|\beta^\star\|_1 
	\end{equation*}
	which after some algebra, can be simplified to
	\begin{equation*}
	\begin{split}
	\frac{1}{2} \Delta' \left( \frac{X'X}{n}\right)\Delta & \leq \langle \Delta',  \frac{1}{n} X'E\rangle + \lambda_n \|\beta^\star\|_1 - \lambda_n \|\beta^\star + \Delta\|_1 \\
	& \leq \|\Delta\|_1 \|\frac{1}{n} X'E\|_\infty + \lambda_n \|\beta^\star\|_1 - \lambda_n \|\beta^\star + \Delta\|_1,
	\end{split} 
	\end{equation*}
	with the second inequality obtained through an application of from H\"{o}lder's inequality. With the specified choice of $\lambda_n$, if follows that 
	\begin{equation*}
	\begin{split}
	0 & \leq \frac{\lambda_n}{2}\|\Delta\|_1 + \lambda_n \left\{ \|\beta_{\mathcal{J}}^\star\|_1 - \|\beta^\star_{\mathcal{J}} + \Delta_{\mathcal{J}}  + \beta^\star_{\mathcal{J}^c} + \Delta_{\mathcal{J}^c}\|_1 \right\} \\
	& = \frac{\lambda_n}{2} (\|\Delta_{\mathcal{J}}\|_1 + \|\Delta_{\mathcal{J}^c}\|_1 ) + \lambda_n\{\| \beta^\star_{\mathcal{J}} \|_1 - \|\beta^\star_{\mathcal{J}} + \Delta_{\mathcal{J}} \|_1 -  \|\Delta_{\mathcal{J}^c}\|_1   \} \quad \text{($\beta^\star_{\mathcal{J}^c}=0$, $\ell_1$ norm decomposable)} \\
	& \leq \frac{\lambda_n}{2} (\|\Delta_{\mathcal{J}}\|_1 + \|\Delta_{\mathcal{J}^c}\|_1) + \lambda_n (\|\Delta_{\mathcal{J}} \|_1 - \|\Delta_{\mathcal{J}^c}\| ) \qquad \text{(triangle inequality)} \\
	& = \frac{3\lambda_n}{2} \|\Delta_{\mathcal{J}} \|_1 - \frac{\lambda_n}{2} \|\Delta_{\mathcal{J}^c} \|_1,
	\end{split}
	\end{equation*}
	which implies $\|\Delta_{\mathcal{J}}\|_1\leq 3\|\Delta_{\mathcal{J}^c}\|_1$.
\end{proof}

\begin{lemma}\label{lemma:DeviationBound0}
	Consider two centered stationary Gaussian processes $\{X_t\}$ and $\{Z_t\}$. Further, assume that the spectral density of the joint process $\{(X_t',Z_t')'\}$ exists. Denote their cross-covariance by $\Gamma_{X,Z}(\ell):=\mathrm{Cov}(X_t,Z_{t+\ell})$, and their cross-spectral density is defined as
	\begin{equation*}
	f_{X,Z}(\theta) := \frac{1}{2\pi} \sum_{\ell=-\infty}^\infty \Gamma_{X,Z}(\ell)e^{-i\ell\theta}, \qquad \theta\in[-\pi,\pi],
	\end{equation*}
	whose upper extreme is given by:
	\begin{equation*}
	\mathcal{M}(f_{X,Z}) = \mathrm{esssup}_{\theta\in[-\pi,\pi]} \sqrt{ \Lambda_{\max} \left( f^*_{X,Z}(\theta)f_{X,Z}(\theta)\right)}.
	\end{equation*}	
	Let $\mathcal{X}$ and $\mathcal{Z}$ be data matrices with sample size $n$. Then, there exists a constant $c>0$, such that for any $u,v\in\mathbb{R}^p$ with $\|u\|\leq 1$, $\|v\|\leq1$, we have
	\begin{equation*}
	\mathbb{P}\left[ \left|u'\left(\frac{\mathcal{X}'Z}{T} -  \mathrm{Cov}(X_t,Z_t)\right)v\right| > 2\pi\left( \mathcal{M}(f_X) + \mathcal{M}(f_Z) + \mathcal{M}(f_{X,Z}) \right)\eta   \right] \leq 6\exp\left(-cT\min\{\eta,\eta^2\}\right).
	\end{equation*}
\end{lemma}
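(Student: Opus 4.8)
The plan is to reduce this bilinear cross-covariance deviation to the quadratic-form concentration bound of \citet{basu2015estimation} (their Proposition~2.4(a)) applied to the \emph{joint} process, via a polarization identity. First I would stack the two processes into $Y_t := (X_t',Z_t')'\in\R^{2p}$, with realization matrix $\mathcal{Y}:=[\mathcal{X},\mathcal{Z}]$, lag-zero covariance $\Gamma_Y(0)=\E Y_tY_t'$, and spectral density in block form $f_Y(\theta)=\begin{bmatrix} f_X(\theta) & f_{X,Z}(\theta)\\ f_{Z,X}(\theta) & f_Z(\theta)\end{bmatrix}$. Writing $a:=(u',0')'$ and $b:=(0',v')'$ in $\R^{2p}$, the target quantity embeds as $u'\big(\tfrac{\mathcal{X}'\mathcal{Z}}{T}-\mathrm{Cov}(X_t,Z_t)\big)v = a'\Delta b$, where $\Delta := \tfrac{\mathcal{Y}'\mathcal{Y}}{T}-\Gamma_Y(0)$. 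Since $\Delta$ is symmetric, the polarization identity $a'\Delta b=\tfrac14[(a+b)'\Delta(a+b)-(a-b)'\Delta(a-b)]$ expresses the bilinear form as a difference of two \emph{quadratic} forms of the joint process, each of the type handled directly by the single-process concentration result.

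The second ingredient is the spectral bound $\mathcal{M}(f_Y)\le \mathcal{M}(f_X)+\mathcal{M}(f_Z)+\mathcal{M}(f_{X,Z})$. I would obtain this by splitting $f_Y(\theta)=\mathrm{diag}(f_X(\theta),f_Z(\theta))+\begin{bmatrix} O & f_{X,Z}(\theta)\\ f_{Z,X}(\theta) & O\end{bmatrix}$ and applying Weyl's inequality: the block-diagonal part has top eigenvalue $\max\{\Lambda_{\max}(f_X(\theta)),\Lambda_{\max}(f_Z(\theta))\}$, while the off-diagonal Hermitian part has operator norm $\sqrt{\Lambda_{\max}(f_{X,Z}^*(\theta)f_{X,Z}(\theta))}$; taking the essential supremum over $\theta$ and using the stated identity for $\mathcal{M}(f_{X,Z})$ gives the claim. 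This is precisely what lets me express the final threshold in terms of the three separate spectral extremes rather than the joint one.

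With these in place I would finish as follows. Each summand $(a\pm b)'\Delta(a\pm b)$ equals $\|a\pm b\|^2$ times a quadratic form of the unit vector $(a\pm b)/\|a\pm b\|$ for the joint process $Y$; since $\|u\|\le1$ and $\|v\|\le1$ give $\|a\pm b\|^2\le 2$, Proposition~2.4(a) of \citet{basu2015estimation} (with spectral extreme $\mathcal{M}(f_Y)$) bounds each normalized form by $2\exp(-cT\min\{\eta,\eta^2\})$ at level $2\pi\mathcal{M}(f_Y)\eta$. If $|a'\Delta b|>2\pi(\mathcal{M}(f_X)+\mathcal{M}(f_Z)+\mathcal{M}(f_{X,Z}))\eta$, then by polarization the sum of the two quadratic magnitudes exceeds $8\pi(\cdots)\eta$, so one of them exceeds $4\pi(\cdots)\eta$; dividing by $\|a\pm b\|^2\le 2$ and invoking the spectral bound shows the corresponding normalized quadratic form exceeds $2\pi\mathcal{M}(f_Y)\eta$. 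A union bound over the two terms then yields probability at most $4\exp(-cT\min\{\eta,\eta^2\})$, which is dominated by the claimed $6\exp(-cT\min\{\eta,\eta^2\})$.

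I expect the main obstacle to be the bookkeeping in the last step: correctly tracking the norm factors $\|a\pm b\|^2\le 2$ through polarization so that the threshold collapses to the clean value $2\pi(\mathcal{M}(f_X)+\mathcal{M}(f_Z)+\mathcal{M}(f_{X,Z}))\eta$, together with verifying that the single-process concentration result of \citet{basu2015estimation} indeed applies to the joint centered Gaussian process $Y_t$ under the sole hypothesis that its spectral density exists with $\mathcal{M}(f_Y)<\infty$, rather than requiring an explicit VAR representation. The block-eigenvalue inequality $\mathcal{M}(f_Y)\le\mathcal{M}(f_X)+\mathcal{M}(f_Z)+\mathcal{M}(f_{X,Z})$ is the other place where care is needed, but it follows cleanly from the additive splitting and Weyl's inequality described above.
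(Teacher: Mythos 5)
Your proposal is correct, and it rests on the same two pillars as the paper's proof: a polarization identity that converts the bilinear deviation into quadratic-form deviations, followed by the single-trajectory concentration bound of \citet{basu2015estimation} for stationary Gaussian processes. The execution differs in two genuine ways. The paper polarizes at the level of the scalar projections $\xi_t=\langle u,X_t\rangle$ and $\eta_t=\langle v,Z_t\rangle$ via $2\xi\eta=(\xi+\eta)^2-\xi^2-\eta^2$, producing three events (hence the constant $6$), and it controls the relevant spectral extreme by writing out $f_{\xi+\eta}(\theta)=u'f_X u+v'f_Z v+u'f_{X,Z}v+v'f^*_{X,Z}u$ directly, which gives $\mathcal{M}(f_{\xi+\eta})\leq\mathcal{M}(f_X)+\mathcal{M}(f_Z)+2\mathcal{M}(f_{X,Z})$; the factor $2$ on the cross term cancels against the factor $2$ on the left of the polarization identity, yielding exactly the stated threshold. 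You instead stack into the $2p$-dimensional process, use the symmetric polarization $4a'\Delta b=(a+b)'\Delta(a+b)-(a-b)'\Delta(a-b)$, and obtain the spectral control $\mathcal{M}(f_Y)\leq\mathcal{M}(f_X)+\mathcal{M}(f_Z)+\mathcal{M}(f_{X,Z})$ via Weyl's inequality on the block splitting (the off-diagonal Hermitian block having eigenvalues $\pm\sigma_i(f_{X,Z})$); your norm bookkeeping with $\|a\pm b\|^2\leq 2$ is correct and yields the same threshold with only two events, hence the slightly better constant $4\leq 6$. The one caveat you raise yourself is easily resolved: the concentration result you need is the general stationary-Gaussian version (Proposition 2.7 of \citet{basu2015estimation}, which is what the paper invokes), stated under the sole hypothesis that the spectral density exists with finite upper extreme, so no VAR representation of the joint process is required.
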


\begin{proof}
	Let $\xi_t = \langle u,X_t\rangle $, $\eta_t = \langle v, Z_t \rangle$. Let $f_X(\theta),f_Z(\theta)$ denote the spectral density of $\{X_t\}$ and $\{Z_t\}$, respectively. Then, the spectral density of $\{\xi_t\}$ and $\{\eta_t\}$, respectively, is $f_\xi(\theta) = u'f_X(\theta)u$, $f_\eta(\theta) = v'f_Z(\theta)v$. Also, we note that $\mathcal{M}(f_\xi)\leq\mathcal{M} (f_X)$, $\mathcal{M}(f_\eta) \leq \mathcal{M}(f_Z)$. Then,
	\begin{equation}\label{eqn:HW0}
	\begin{split}
	\frac{2}{T}\left[ \sum_{t=0}^T \xi_t\eta_t - \mathrm{Cov}(\xi_t,\eta_t)  \right] & = \left[\frac{1}{T}\sum_{t=0}^T (\xi_t+\eta_t)^2 - \mathrm{Var}(\xi_t + \eta_t)  \right]  \\
	& \quad - \left[\frac{1}{T}\sum_{t=0}^T (\xi_t)^2 - \mathrm{Var}(\xi_t)  \right] - \left[\frac{1}{T}\sum_{t=0}^T (\eta_t)^2 - \mathrm{Var}(  \eta_t)  \right].
	\end{split}
	\end{equation}
	By Proposition 2.7 in \cite{basu2015estimation}, 
	\begin{equation*}
	\mathbb{P}\left(  \left|\frac{1}{T}\sum_{t=0}^T (\xi_t)^2 - \mathrm{Var}(\xi_t)  \right| > 2\pi \mathcal{M}(f_X)\eta   \right) \geq  2\exp\left[ -cn\min(\eta,\eta^2)\right],
	\end{equation*}
	and
	\begin{equation*}
	\mathbb{P}\left(  \left|\frac{1}{T}\sum_{t=0}^T (\eta_t)^2 - \mathrm{Var}(\eta_t)  \right| > 2\pi \mathcal{M}(f_Z)\eta   \right) \geq 2\exp\left[ -cn\min(\eta,\eta^2)\right].
	\end{equation*}
	What remains to be  considered is the first term in~\eqref{eqn:HW0}, whose spectral density is given by
	\begin{equation*}
	f_{\xi+\eta}(\theta) = u'f_X(\theta) u + v'f_Z(\theta)z + u'f_{X,Z}(\theta) v + v'f^*_{X,Z}(\theta) u,
	\end{equation*}
	and its upper extreme satisfies
	\begin{equation*}
	\mathcal{M}(f_{\xi+\eta}) \leq \mathcal{M}(f_X) + \mathcal{M}(f_Z) + 2\mathcal{M}(f_{X,Z}).
	\end{equation*}
	Hence, we get:
	\small
	\begin{equation*}
	\mathbb{P} \left( \left| \frac{1}{T}\sum_{t=0}^T (\xi_t+\eta_t)^2 - \mathrm{Var}(\xi_t + \eta_t) \right| > 2\pi[\mathcal{M}(f_X) + \mathcal{M}(f_Z) + 2\mathcal{M}(f_{X,Z})]\eta \right) \geq 2\exp[-cn\min(\eta,\eta^2)]. 
	\end{equation*}
	\normalsize
	Combining all three terms yields the desired result. 
\end{proof}

\begin{lemma}\label{lemma:decomposition} Let $N$ and $M$ be matrices of the same dimension. Then, there exists a decomposition $M=M_1+M_2$, such that 
	\begin{enumerate}[(a)]
		\item $\text{rank}(M_1)\leq \text2{rank}(N)$; 
		\item $\llangle M_1, M_2 \rrangle =0$. 
	\end{enumerate}
\end{lemma}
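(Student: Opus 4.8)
The plan is to realize the desired decomposition through the subspace construction already introduced in~\eqref{defn:basisspace}, applied with $\Theta = N$. Concretely, I would take a singular value decomposition $N = U\Sigma V'$ with $r := \rank(N)$, let $U^r, V^r$ denote the leading $r$ columns of $U$ and $V$, and set $M_1 := M_{\mathcal{S}_N}$ and $M_2 := M_{\mathcal{S}^\perp_N}$, i.e.
\[
M_1 = U\begin{bmatrix}\widetilde M_{11} & \widetilde M_{12}\\ \widetilde M_{21} & O\end{bmatrix}V', \qquad M_2 = U\begin{bmatrix}O & O\\ O & \widetilde M_{22}\end{bmatrix}V',
\]
where $\widetilde M = U'MV$ is partitioned with $\widetilde M_{11}\in\mathbb{R}^{r\times r}$ exactly as in the text. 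The identity $M = M_1 + M_2$ is then immediate: the two inner block matrices sum to $\widetilde M$, and $U\widetilde M V' = U(U'MV)V' = M$ by orthogonality of $U$ and $V$. (This is precisely the claim invoked as ``by Lemma~\ref{lemma:decomposition}'' in the surrounding text, so establishing it here is consistent.)

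For part (a), I would bound the rank of the inner block matrix of $M_1$, which equals $\rank(M_1)$ since left/right multiplication by the orthogonal matrices $U$ and $V'$ preserves rank. Write this inner matrix as $P + Q$, where $P$ collects the top block-row $[\widetilde M_{11}, \widetilde M_{12}]$ (and is zero elsewhere) and $Q$ collects the bottom-left block $\widetilde M_{21}$ (and is zero elsewhere). Then $P$ has nonzero entries only in its first $r$ rows, so $\rank(P)\le r$, while $Q$ has nonzero entries only in its first $r$ columns, so $\rank(Q)\le r$. Subadditivity of rank gives $\rank(M_1)\le 2r = 2\rank(N)$, which is the asserted bound (reading the typo ``$\text2{rank}(N)$'' as $2\rank(N)$).

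For part (b), I would reduce the Frobenius inner product to the inner blocks: since $\llangle M_1, M_2\rrangle = \tr(M_1' M_2)$ and $U, V$ are orthogonal, this equals the entrywise Frobenius inner product of the two inner block matrices. These matrices have disjoint supports, as the only nonzero block of $M_2$'s inner matrix is the bottom-right block $\widetilde M_{22}$, which is exactly the zero block of $M_1$'s inner matrix. Hence the sum of entrywise products vanishes identically and $\llangle M_1, M_2\rrangle = 0$.

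I expect no serious obstacle here: the statement is the standard structural lemma underlying decomposability of the nuclear norm (an analogue of the constructions in \citet{negahban2009unified}), and the proof is essentially bookkeeping organized around the SVD. The only genuine content is the rank count in part (a), so the main point to handle with care is the $P+Q$ splitting together with the reduction of the trace and the rank to the inner $\widetilde M$ blocks via orthogonal invariance.
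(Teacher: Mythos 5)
Your proposal is correct and follows essentially the same route as the paper's proof: the same SVD-based construction of $M_1 = M_{\mathcal{S}_N}$ and $M_2 = M_{\mathcal{S}_N^\perp}$, rank subadditivity for part (a), and disjoint supports of the inner blocks for part (b). If anything, your $P+Q$ splitting is slightly cleaner than the paper's (your two pieces literally sum to the inner block of $M_1$, whereas the paper's two displayed matrices double-count $\widetilde M_{11}$), but the argument is the same.
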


\begin{proof}
	Let the SVD of $N$ be given by $N = U\Sigma V'$, where both $U$ and $V$ are orthogonal matrices and assume $\text{rank}(N) = r$. Define $\widetilde{M}$ and do partition it as follows:
	\begin{equation*}
	\widetilde{M} = U' M V = \begin{bmatrix}
	\widetilde{M}_{11} & \widetilde{M}_{12} \\ \widetilde{M}_{21} & \widetilde{M}_{22}
	\end{bmatrix}.
	\end{equation*}
	Next, let 
	\begin{equation*}
	M_1 = U \begin{bmatrix}
	\widetilde{M}_{11} & \widetilde{M}_{12} \\ \widetilde{M}_{21} & O
	\end{bmatrix}V' \qquad \text{and} \qquad M_2 = U\begin{bmatrix}
	O & O \\ O & \widetilde{M}_{22}
	\end{bmatrix}V',  \qquad \widetilde{M}_{11} \in \mathbb{R}^{r\times r}.
	\end{equation*}
	Then,  $M_1 + M_2 = M$ and 
	\begin{equation*}
	\text{rank}(M_1) \leq \text{rank}\left(U \begin{bmatrix}
	\widetilde{M}_{11} & \widetilde{M}_{12} \\ O & O
	\end{bmatrix}V' \right) + \text{rank}\left(U \begin{bmatrix}
	\widetilde{M}_{11} & O \\ \widetilde{M}_{21} & O
	\end{bmatrix}V' \right) \leq 2r.
	\end{equation*}
	Moreover,
	\begin{equation*}
	\llangle M_1, M_2 \rrangle = \text{tr} \left[ M_1M_2'\right] = 0.
	\end{equation*}
\end{proof}

\begin{lemma}\label{lemma:inequality1}
	Define the error matrix by $\Delta^B = \widehat{B} - B^\star$ and $\Delta^C = \widehat{C} - C^\star$, and let the weighted regularizer $\mathcal{Q}$ be defined as
	\begin{equation*}
	\mathcal{Q}(B,C) = \vertiii{B}_* + \frac{\lambda_C}{\lambda_B}\vertii{C}_1.
	\end{equation*}
	With the subspaces defined in (\ref{defn:basisspace}) and (\ref{defn:supportspace}), the following inequality holds:
	\begin{equation*}
	\mathcal{Q}(B^\star,C^\star) - \mathcal{Q}(\widehat{B},\widehat{C}) \leq \mathcal{Q}(\Delta^B_{\mathcal{S}_{B^\star}},\Delta^C_{\mathcal{J}_{C^\star}}) - \mathcal{Q}(\Delta^B_{\mathcal{S}^\perp_{B^\star}},\Delta^C_{\mathcal{J}^c_{C^\star}}). 
	\end{equation*}
\end{lemma}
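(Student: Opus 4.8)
The plan is to exploit the orthogonal decomposability of the nuclear norm over the pair $(\mathcal{S}_{B^\star},\mathcal{S}^\perp_{B^\star})$ and of the $\ell_1$ norm over $(\mathcal{R}_{C^\star},\mathcal{R}^c_{C^\star})$, as recorded in the two displayed splitting identities preceding the lemma, in combination with the reverse triangle inequality. Observe that the target inequality is equivalent to a lower bound on $\mathcal{Q}(\widehat{B},\widehat{C})$ expressed through $\mathcal{Q}(B^\star,C^\star)$ and the projections of the error matrices onto the model subspace and its orthogonal complement; I would therefore establish such a lower bound one regularizer at a time and then reassemble using the definition of $\mathcal{Q}$.

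First I would treat the nuclear-norm component. Writing $\widehat{B}=B^\star+\Delta^B$ and splitting $\Delta^B=\Delta^B_{\mathcal{S}_{B^\star}}+\Delta^B_{\mathcal{S}^\perp_{B^\star}}$ (legitimate by Lemma~\ref{lemma:decomposition} and the subspaces in~\eqref{defn:basisspace}), the reverse triangle inequality gives
\[
\vertiii{\widehat{B}}_* \;\geq\; \vertiii{B^\star+\Delta^B_{\mathcal{S}^\perp_{B^\star}}}_* - \vertiii{\Delta^B_{\mathcal{S}_{B^\star}}}_*.
\]
Since $B^\star$ lies in $\mathcal{S}_{B^\star}$ while $\Delta^B_{\mathcal{S}^\perp_{B^\star}}$ lies in $\mathcal{S}^\perp_{B^\star}$, the decomposability identity $\vertiii{\Theta_{\mathcal{S}}+\Theta_{\mathcal{S}^\perp}}_*=\vertiii{\Theta_{\mathcal{S}}}_*+\vertiii{\Theta_{\mathcal{S}^\perp}}_*$ converts the first term into $\vertiii{B^\star}_*+\vertiii{\Delta^B_{\mathcal{S}^\perp_{B^\star}}}_*$, yielding
\[
\vertiii{\widehat{B}}_* \;\geq\; \vertiii{B^\star}_* + \vertiii{\Delta^B_{\mathcal{S}^\perp_{B^\star}}}_* - \vertiii{\Delta^B_{\mathcal{S}_{B^\star}}}_*.
\]

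Next I would repeat the identical argument for the $\ell_1$ component, using that $C^\star$ is supported on $\mathcal{J}_{C^\star}$ and the elementwise splitting $\vertii{M_{\mathcal{J}_{C^\star}}+M_{\mathcal{J}^c_{C^\star}}}_1=\vertii{M_{\mathcal{J}_{C^\star}}}_1+\vertii{M_{\mathcal{J}^c_{C^\star}}}_1$ from~\eqref{defn:supportspace}, to obtain
\[
\vertii{\widehat{C}}_1 \;\geq\; \vertii{C^\star}_1 + \vertii{\Delta^C_{\mathcal{J}^c_{C^\star}}}_1 - \vertii{\Delta^C_{\mathcal{J}_{C^\star}}}_1.
\]
Multiplying this bound by $\lambda_C/\lambda_B$ and adding it to the nuclear-norm bound, the definition $\mathcal{Q}(B,C)=\vertiii{B}_*+(\lambda_C/\lambda_B)\vertii{C}_1$ collects the terms into $\mathcal{Q}(B^\star,C^\star)$, $\mathcal{Q}(\Delta^B_{\mathcal{S}^\perp_{B^\star}},\Delta^C_{\mathcal{J}^c_{C^\star}})$, and $\mathcal{Q}(\Delta^B_{\mathcal{S}_{B^\star}},\Delta^C_{\mathcal{J}_{C^\star}})$; rearranging produces exactly the claimed inequality.

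The argument is essentially bookkeeping once the two lower bounds are in hand, so I do not anticipate a genuine analytic obstacle. The one point demanding care is the pairing in the reverse-triangle step: one must peel off the in-subspace part $\Delta^B_{\mathcal{S}_{B^\star}}$ while keeping $B^\star+\Delta^B_{\mathcal{S}^\perp_{B^\star}}$ together, so that the \emph{exact} decomposability identity (rather than merely a sub-additive bound) applies; reversing this pairing would lose the sign structure and break the proof. Accordingly, the only place where the definitions in~\eqref{defn:basisspace} and~\eqref{defn:supportspace} must be invoked explicitly is in verifying $B^\star\in\mathcal{S}_{B^\star}$, $C^\star\in\mathcal{R}_{C^\star}$, and the orthogonality of the SVD-induced projections that make both splitting identities valid.
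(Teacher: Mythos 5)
Your proposal is correct and follows essentially the same route as the paper's proof: decompose $\Delta^B$ and $\Delta^C$ via Lemma~\ref{lemma:decomposition} and the subspaces in \eqref{defn:basisspace}--\eqref{defn:supportspace}, apply the exact decomposability identity to the pairing $B^\star+\Delta^B_{\mathcal{S}^\perp_{B^\star}}$ (resp.\ $C^\star+\Delta^C_{\mathcal{J}_{C^\star}}$ with $\Delta^C_{\mathcal{J}^c_{C^\star}}$), and peel off the in-subspace error with the reverse triangle inequality. The only cosmetic difference is that you bound each regularizer separately and then recombine through $\mathcal{Q}$, while the paper carries both terms through a single chain of inequalities; the content is identical.
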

\begin{proof}
	First, from definitions  (\ref{defn:basisspace}) and (\ref{defn:supportspace}), we know that $B^\star_{\mathcal{S}^\perp}=0$ and $C^\star_{\mathcal{J}^c_{C^\star}}=0$.  Using the definition of $\mathcal{Q}$, we obtain 
	\begin{equation*}
	\mathcal{Q}(B^\star,C^\star) = \vertiii{B^\star_\mathcal{S} + B^\star_{\mathcal{S}^\perp}}_* + \frac{\lambda_C}{\lambda_B} \vertii{C^\star_{\mathcal{J}_C^\star} + C^\star_{\mathcal{J}^c_{C^\star}} }_1 = \vertiii{B^\star_{\mathcal{S}}}_*  + \frac{\lambda_C}{\lambda_B} \vertii{ C^\star_{\mathcal{J}_C^\star}}_1,
	\end{equation*}
	and
	\small
	\begin{equation*}
	\begin{split}
	\mathcal{Q}(\widehat{B},\widehat{C}) & =  \mathcal{Q}(B^\star + \Delta^B, C^\star+\Delta^C) \\
	& =  \vertiii{B^\star_\mathcal{S} + \Delta^B_{\mathcal{S}^\perp_{B^\star}}+ \Delta^B_{\mathcal{S}_{B^\star}} +B^\star_{\mathcal{S}^\perp}  }_* + \frac{\lambda_C}{\lambda_B} \vertii{C^\star_{\mathcal{J}_C^\star} + \Delta^C_{\mathcal{J}_{C^\star}} + C^\star_{\mathcal{J}^c_{C^\star}} + \Delta^C_{\mathcal{J}^c_{C^\star}} }_1 \\
	& \geq \vertiii{B^\star_\mathcal{S} + \Delta^B_{\mathcal{S}^\perp_{B^\star}}}_* - \vertiii{\Delta^B_{\mathcal{S}_{B^\star}}}_* + \frac{\lambda_C}{\lambda_B} \left( \vertii{C^\star_{\mathcal{J}_C^\star} + \Delta^C_{\mathcal{J}_{C^\star}}}_1 + \vertii{\Delta^C_{\mathcal{J}^c_{C^\star}} }_1\right)  \\
	& \geq \vertiii{B^\star_\mathcal{S}}_* + \vertiii{\Delta^B_{\mathcal{S}^\perp_{B^\star}}}_* - \vertiii{\Delta^B_{\mathcal{S}_{B^\star}}}_* + \frac{\lambda_C}{\lambda_B} \left( \vertii{C^\star_{\mathcal{J}_C^\star}}_1 + \vertii{\Delta^C_{\mathcal{J}_{C^\star}}}_1 - \vertii{\Delta^C_{\mathcal{J}^c_{C^\star}} }_1\right).
	\end{split}
	\end{equation*}
	\normalsize
	The decomposition of the first term comes from the construction of $\Delta^B_{\mathcal{S}^\perp_{B^\star}}$. It then follows that
	\small
	\begin{equation*}
	\begin{split}
	\mathcal{Q}(B^\star,C^\star) - \mathcal{Q}(\widehat{B},\widehat{C}) & \leq \frac{\lambda_C}{\lambda_B} \vertii{ C^\star_{\mathcal{J}_C^\star}}_1 +  \vertiii{\Delta^B_{\mathcal{S}_{B^\star}}}_* - \vertiii{\Delta^B_{\mathcal{S}^\perp_{B^\star}}}_* + \frac{\lambda_C}{\lambda_B} \left(  \vertii{\Delta^C_{\mathcal{J}_{C^\star}}}_1 - \vertii{\Delta^C_{\mathcal{J}^c_{C^\star}} }_1 - \vertii{C^\star_{\mathcal{J}_C^\star}}_1 \right) \\
	& = \vertiii{\Delta^B_{\mathcal{S}_{B^\star}}}_* + \frac{\lambda_C}{\lambda_B} \vertii{\Delta^C_{\mathcal{J}_{C^\star}} }_1 - \left(\vertiii{\Delta^B_{\mathcal{S}^\perp_{B^\star}}}_* + \frac{\lambda_C}{\lambda_B} \vertii{\Delta^C_{\mathcal{J}^c_{C^\star}} }_1 \right) \\
	& = \mathcal{Q}(\Delta^B_{\mathcal{S}_{B^\star}},\Delta^C_{\mathcal{J}_{C^\star}}) - \mathcal{Q}(\Delta^B_{\mathcal{S}^\perp_{B^\star}},\Delta^C_{\mathcal{J}^c_{C^\star}}). 
	\end{split}
	\end{equation*}
\end{proof}
\normalsize
\begin{lemma}\label{lemma:RSC-result}
	Under the conditions of Theorem~\ref{thm:consistencyBC}, the following bound holds:
	\begin{equation*}
	\frac{1}{T}\vertiii{\mathfrak{W}_0(\Delta^B_{\text{aug}} + \Delta^C_{\text{aug}})}_F^2  \geq \frac{\alpha_{\text{RSC}}}{2} (\vertiii{\Delta^B}_F^2 + \vertiii{\Delta^C}_F^2) - \frac{\lambda_B}{2}\mathcal{Q}(\Delta^B,\Delta^C). 
	\end{equation*}
\end{lemma}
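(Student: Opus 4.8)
The plan is to apply the RSC property of $\mathfrak{W}_0$ to the single increment $\Delta := \Delta^B_{\text{aug}} + \Delta^C_{\text{aug}}$ and then to show that the tolerance term it produces is dominated by the curvature term. First I would record two identities forced by the block structure $\Delta^B_{\text{aug}}=[\Delta^B,O]$, $\Delta^C_{\text{aug}}=[O,\Delta^C]$: because these occupy disjoint column blocks, $\vertiii{\Delta}_F^2 = \vertiii{\Delta^B}_F^2 + \vertiii{\Delta^C}_F^2$, and the infimum defining $\Phi$ is attained at this very decomposition, so $\Phi(\Delta)=\mathcal{Q}(\Delta^B,\Delta^C)$. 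Invoking Definition~\ref{defn:RSC} and multiplying through by $2$ then gives
\begin{equation*}
\frac{1}{T}\vertiii{\mathfrak{W}_0(\Delta)}_F^2 \ge 2\alpha_{\text{RSC}}\big(\vertiii{\Delta^B}_F^2 + \vertiii{\Delta^C}_F^2\big) - 2\tau\,\mathcal{Q}^2(\Delta^B,\Delta^C),
\end{equation*}
so the whole problem reduces to controlling $2\tau\,\mathcal{Q}^2(\Delta^B,\Delta^C)$.

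The key step is to reduce $\mathcal{Q}(\Delta^B,\Delta^C)$ to its projection onto the low-rank/sparse model subspaces, which is the only portion comparable to the Frobenius norm. To this end I would first extract a cone inclusion from optimality: combining the basic inequality with the choices $\lambda_B \ge 4\vertiii{\mathcal{W}'\mathcal{V}/T}_{\text{op}}$ and $\lambda_C \ge 4\vertii{\mathcal{W}'\mathcal{V}/T}_\infty$, Lemma~\ref{lemma:inequality1}, and the nonnegativity of $\vertiii{\mathfrak{W}_0(\Delta)}_F^2$, one obtains $0 \le \tfrac{3\lambda_B}{2}\mathcal{Q}(\Delta^B_{\mathcal{S}_{B^\star}},\Delta^C_{\mathcal{J}_{C^\star}}) - \tfrac{\lambda_B}{2}\mathcal{Q}(\Delta^B_{\mathcal{S}^\perp_{B^\star}},\Delta^C_{\mathcal{J}^c_{C^\star}})$, hence $\mathcal{Q}(\Delta^B_{\mathcal{S}^\perp_{B^\star}},\Delta^C_{\mathcal{J}^c_{C^\star}}) \le 3\mathcal{Q}(\Delta^B_{\mathcal{S}_{B^\star}},\Delta^C_{\mathcal{J}_{C^\star}})$. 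Using the triangle inequality for $\vertiii{\cdot}_*$ and the exact splitting for $\vertii{\cdot}_1$, this yields $\mathcal{Q}(\Delta^B,\Delta^C) \le 4\,\mathcal{Q}(\Delta^B_{\mathcal{S}_{B^\star}},\Delta^C_{\mathcal{J}_{C^\star}})$.

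Next I would bound the on-support quantity in Frobenius norm exactly as in the proof of Theorem~\ref{thm:consistencyBC}: by Lemma~\ref{lemma:decomposition}, $\text{rank}(\Delta^B_{\mathcal{S}_{B^\star}})\le 2r^\star_B$, so $\vertiii{\Delta^B_{\mathcal{S}_{B^\star}}}_* \le \sqrt{2r^\star_B}\,\vertiii{\Delta^B}_F$, while $\Delta^C_{\mathcal{J}_{C^\star}}$ has at most $s^\star_C$ nonzeros, giving $\vertii{\Delta^C_{\mathcal{J}_{C^\star}}}_1 \le \sqrt{s^\star_C}\,\vertiii{\Delta^C}_F$. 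Squaring, applying $(x+y)^2\le 2x^2+2y^2$, and combining with the cone bound produces $\mathcal{Q}^2(\Delta^B,\Delta^C) \le 64 r^\star_B\vertiii{\Delta^B}_F^2 + 32(\lambda_C/\lambda_B)^2 s^\star_C\vertiii{\Delta^C}_F^2$. Inserting this into the displayed RSC inequality and invoking the two hypotheses $128\tau r^\star_B < \alpha_{\text{RSC}}/4$ and $64\tau s^\star_C(\lambda_C/\lambda_B)^2 < \alpha_{\text{RSC}}/4$ gives $2\tau\,\mathcal{Q}^2(\Delta^B,\Delta^C) < \tfrac{\alpha_{\text{RSC}}}{4}(\vertiii{\Delta^B}_F^2+\vertiii{\Delta^C}_F^2)$, whence $\tfrac{1}{T}\vertiii{\mathfrak{W}_0(\Delta)}_F^2 \ge \tfrac{7\alpha_{\text{RSC}}}{4}(\vertiii{\Delta^B}_F^2+\vertiii{\Delta^C}_F^2)$. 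The claimed inequality follows immediately, since its right-hand side is only smaller (the term $-\tfrac{\lambda_B}{2}\mathcal{Q}(\Delta^B,\Delta^C)$ is nonpositive).

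The hard part, and the reason a naive route fails, is the off-support piece $\mathcal{Q}(\Delta^B_{\mathcal{S}^\perp_{B^\star}},\Delta^C_{\mathcal{J}^c_{C^\star}})$ of the tolerance term: it is not comparable to $\vertiii{\cdot}_F$ (a spread-out increment can have arbitrarily large nuclear/$\ell_1$ norm at fixed Frobenius norm), so the quadratic tolerance $\tau\mathcal{Q}^2$ cannot be absorbed into the curvature for an unrestricted $\Delta$. This is exactly what forces one to establish and exploit the cone inclusion first; once the increment is confined to the cone, $\mathcal{Q}(\Delta^B,\Delta^C)$ is governed by its model-subspace projection, and the delicate matching of the constants $128$ and $64$ in the hypotheses with the factors generated by the $(x+y)^2$ split and the rank-doubling of Lemma~\ref{lemma:decomposition} is what closes the argument.
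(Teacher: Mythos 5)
Your argument is correct, and it supplies in full what the paper outsources: the paper's entire proof of this lemma is a pointer to Lemma~2 of \citet{agarwal2012noisy} (with $\Theta^\star=B^\star$, $\Gamma^\star=C^\star$), whereas you reconstruct that argument, specialized to the exactly low-rank / exactly sparse case. The reconstruction is faithful in every step that matters: the cone inclusion $\mathcal{Q}(\Delta^B_{\mathcal{S}^\perp_{B^\star}},\Delta^C_{\mathcal{J}^c_{C^\star}})\le 3\,\mathcal{Q}(\Delta^B_{\mathcal{S}_{B^\star}},\Delta^C_{\mathcal{J}_{C^\star}})$ does follow from the basic inequality, Lemma~\ref{lemma:inequality1}, the stated choices of $\lambda_B,\lambda_C$ and nonnegativity of the loss; the factor-of-$4$ bound, the compressions $\vertiii{\Delta^B_{\mathcal{S}_{B^\star}}}_*\le\sqrt{2r^\star_B}\,\vertiii{\Delta^B}_F$ and $\vertii{\Delta^C_{\mathcal{J}_{C^\star}}}_1\le\sqrt{s^\star_C}\,\vertiii{\Delta^C}_F$ (via Lemma~\ref{lemma:decomposition} and $\llangle\Delta^B_{\mathcal{S}_{B^\star}},\Delta^B_{\mathcal{S}^\perp_{B^\star}}\rrangle=0$), and the $(x+y)^2\le 2x^2+2y^2$ split reproduce exactly the constants $128$ and $64$ in the hypotheses of Theorem~\ref{thm:consistencyBC}, confirming this is the intended bookkeeping. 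You also correctly identify why the off-support piece forces the cone step; indeed the stated inequality cannot hold for arbitrary increments when $\tau>0$ (the left side of the tolerance comparison is quadratic in $\mathcal{Q}$ while the $-\tfrac{\lambda_B}{2}\mathcal{Q}$ slack is only linear), so the lemma is intrinsically a statement about the optimizer's error, which is how it is used. Two small remarks: because the cone inclusion here has no slack ($B^\star$ exactly rank $r^\star_B$, $C^\star$ exactly $s^\star_C$-sparse), you actually prove the strictly stronger bound with curvature $\tfrac{7\alpha_{\text{RSC}}}{4}$ and no $-\tfrac{\lambda_B}{2}\mathcal{Q}$ correction; the weaker form in the statement is inherited from the approximately-sparse version in \citet{agarwal2012noisy}, and is the form that is subsequently cancelled against $-\tfrac{\lambda_B}{2}\mathcal{Q}(\Delta^B_{\mathcal{S}^\perp_{B^\star}},\Delta^C_{\mathcal{J}^c_{C^\star}})$ in the proof of Theorem~\ref{thm:consistencyBC}. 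And the identity $\Phi(\Delta)=\mathcal{Q}(\Delta^B,\Delta^C)$ is even simpler than ``the infimum is attained'': the block constraint $B_{\text{aug}}+C_{\text{aug}}=[\Delta^B,\Delta^C]$ forces $B=\Delta^B$ and $C=\Delta^C$, so the feasible set is a singleton (and only $\Phi(\Delta)\le\mathcal{Q}(\Delta^B,\Delta^C)$ is needed anyway).
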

\begin{proof}
	This lemma directly follows from Lemma 2 in \citet{agarwal2012noisy}, by setting $\Theta^\star = B^\star$, $\Gamma^\star = C^\star$, with the regularizer $\mathcal{R}(\cdot)$ being the element-wise $\ell_1$ norm. Note that $\sigma_j(B^\star)=0$ for $j=r+1,\cdots,\min\{p_1,p_2\}$ since $\text{rank}(B)=r$. For our problem, it suffices to set $\mathbb{M}^\perp$ as $\mathcal{J}^c_{C^\star}$, and therefore $\|C^\star_{\mathcal{J}^c_{C^\star}}\|_1=0$.
\end{proof}

\begin{lemma}\label{lemma:boundoperator}
	Consider the two centered Gaussian processes $\{X_t\}\in\mathbb{R}^{p_1}$ and $\{Z_t\}\in\mathbb{R}^{p_2}$, and denote their cross covariance matrix by $\Gamma_{X,Z}(h) = (X_t, Z_{t+h})=\mathbb{E}(X_tZ'_{t+h})$. Let $\mathcal{X}$ and $\mathcal{Z}$ denote the data matrix. There exist positive constants $c_i>0$ such that whenever $T\succsim c_3(p_1+p_2)$, with probability at least
	\begin{equation*}
	1-c_1\exp[-c_2(p_1 + p_2)],
	\end{equation*}
	the following bound holds:
	\begin{equation*}
	\frac{1}{T}\vertiii{\mathcal{X}'\mathcal{Z}}_{\text{op}} \leq \mathbb{Q}_{X,Z}\sqrt{\frac{p_1+p_2}{T}} + 4\vertiii{\Gamma_{X,Z}(0)}_{\text{op}},
	\end{equation*}
	where 
	\begin{equation*}
	\mathbb{Q}_{X,Z} = c_0\left[\mathcal{M}(f_X) + \mathcal{M}(f_Z) + \mathcal{M}(f_{X,Z}) \right].
	\end{equation*}
\end{lemma}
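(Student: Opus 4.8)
The plan is to control the operator norm through its bilinear variational characterization combined with a covering-net discretization, which reduces the problem to the per-direction deviation inequality already established in Lemma~\ref{lemma:DeviationBound0}. First I would write
\begin{equation*}
\frac{1}{T}\vertiii{\mathcal{X}'\mathcal{Z}}_{\text{op}} = \sup_{u\in S^{p_1-1},\, v\in S^{p_2-1}} u'\Big(\frac{\mathcal{X}'\mathcal{Z}}{T}\Big)v,
\end{equation*}
and fix $1/4$-nets $N_1\subseteq S^{p_1-1}$ and $N_2\subseteq S^{p_2-1}$ with $|N_1|\le 9^{p_1}$ and $|N_2|\le 9^{p_2}$. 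A standard approximation argument gives $\vertiii{M}_{\text{op}}\le 2\max_{u\in N_1, v\in N_2}|u'Mv|$ for any $M\in\mathbb{R}^{p_1\times p_2}$; applying this to $M=\mathcal{X}'\mathcal{Z}/T$ transfers the supremum over the sphere to a maximum over the finite set $N_1\times N_2$, whose cardinality is at most $9^{p_1+p_2}$.

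Next, for a fixed pair $(u,v)$ I would observe that $u'(\mathcal{X}'\mathcal{Z}/T)v = \tfrac{1}{T}\sum_{t}(u'X_t)(v'Z_t)$ is precisely the empirical quantity appearing in Lemma~\ref{lemma:DeviationBound0}, with population mean $u'\Gamma_{X,Z}(0)v=\text{Cov}(u'X_t,\, v'Z_t)$. That lemma then yields
\begin{equation*}
\mathbb{P}\Big[\,\big|u'\big(\tfrac{\mathcal{X}'\mathcal{Z}}{T}-\Gamma_{X,Z}(0)\big)v\big| > 2\pi\big(\mathcal{M}(f_X)+\mathcal{M}(f_Z)+\mathcal{M}(f_{X,Z})\big)\eta\,\Big]\le 6\exp(-cT\min\{\eta,\eta^2\}).
\end{equation*}
I would then take a union bound over the $\le 9^{p_1+p_2}$ pairs in $N_1\times N_2$ and set $\eta = c_0'\sqrt{(p_1+p_2)/T}$. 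In the regime $T\succsim c_3(p_1+p_2)$ one has $\eta\le 1$, so $\min\{\eta,\eta^2\}=\eta^2=(c_0')^2(p_1+p_2)/T$, and the union-bound exponent becomes $(p_1+p_2)\log 9 - c(c_0')^2(p_1+p_2)$; choosing $c_0'$ large enough that $c(c_0')^2>\log 9$ drives the total failure probability to at most $c_1\exp[-c_2(p_1+p_2)]$.

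Finally, on this high-probability event I would assemble the bound. By the net inequality and the triangle inequality,
\begin{equation*}
\frac{1}{T}\vertiii{\mathcal{X}'\mathcal{Z}}_{\text{op}} \le 2\max_{u\in N_1,v\in N_2}\big|u'\big(\tfrac{\mathcal{X}'\mathcal{Z}}{T}-\Gamma_{X,Z}(0)\big)v\big| + 2\max_{u\in N_1,v\in N_2}|u'\Gamma_{X,Z}(0)v|,
\end{equation*}
where the first maximum is controlled by the per-pair deviation bound, with the prefactor $2\cdot 2\pi c_0'$ absorbed into the constant $c_0$ defining $\mathbb{Q}_{X,Z}=c_0[\mathcal{M}(f_X)+\mathcal{M}(f_Z)+\mathcal{M}(f_{X,Z})]$, and the second maximum is at most $\vertiii{\Gamma_{X,Z}(0)}_{\text{op}}$; since the covering constant here is no larger than $4$, the stated coefficient $4\vertiii{\Gamma_{X,Z}(0)}_{\text{op}}$ holds a fortiori. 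I expect the main obstacle to be this uniformity step: the per-direction inequality of Lemma~\ref{lemma:DeviationBound0} must be promoted to a bound valid simultaneously over the whole sphere, which is exactly what forces the covering argument and requires the exponential tail to dominate the $9^{p_1+p_2}$ net cardinality — and this is precisely what dictates the sample-size requirement $T\succsim c_3(p_1+p_2)$. The remaining steps (net constants and collecting the spectral-density factors into $\mathbb{Q}_{X,Z}$) are routine.
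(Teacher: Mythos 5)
Your proposal is correct and follows essentially the same route as the paper's proof: the variational representation of the operator norm, a $1/4$-net discretization of the two unit spheres, the per-direction deviation inequality of Lemma~\ref{lemma:boundoperator}'s companion (Lemma~\ref{lemma:DeviationBound0}) applied to each net pair, and a union bound with $\eta\asymp\sqrt{(p_1+p_2)/T}$ so that the exponential tail dominates the net cardinality. The only cosmetic differences are the net constants (the paper uses coverings of size at most $8^{p_1}$ and $8^{p_2}$ with the factor-$4$ reduction $\Psi(1,1)\le 4\max$, which is where the coefficient $4\vertiii{\Gamma_{X,Z}(0)}_{\text{op}}$ comes from, while you use $9^{p}$-size nets with a factor of $2$), and these do not affect the result.
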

\begin{proof}
	The main structure of this proof follows from that of Lemma~3 in \citet{negahban2011estimation}, and here we focus on how to handle the temporal dependency present in our problem. Let $S^p=\{u\in\mathbb{R}^p|\|u\|=1\}$ denote the $p$-dimensional unit sphere. The operator norm has the following variational representation form:
	\begin{equation*}
	\frac{1}{T}\vertiii{\mathcal{X}'\mathcal{Z}}_{\text{op}} = \frac{1}{n} \sup\limits_{u\in S^{p_1}}\sup\limits_{v\in S^{p_2}} u'\mathcal{X}'\mathcal{Z}v.
	\end{equation*} 
	For positive scalars $s_1$ and $s_2$, define
	\begin{equation*}
	\Psi(s_1,s_2) = \sup\limits_{u\in s_1S^{p_1}}\sup\limits_{v\in s_2S^{p_2}} \langle \mathcal{X} u,  \mathcal{Z}v\rangle,
	\end{equation*}
	and the goal is to establish an upper bound for $\Psi(1,1)/T$. Let $\mathcal{A}=\{u^1,\cdots,u^A\}$ and $\mathcal{B}=\{v^1,\cdots,v^B\}$ denote the $1/4$ coverings of $S^{p_1}$ and $S^{p_2}$, respectively. \citet{negahban2011estimation} showed that
	\begin{equation*}
	\Psi(1,1)\leq 4\max\limits_{u^a\in\mathcal{A},v^b\in\mathcal{B}}  \langle \mathcal{X} u^a,  \mathcal{Z}v^b\rangle, 
	\end{equation*} 
	and by \citet{anderson1998multivariate} and \citet{anderson2011statistical}, there exists a $1/4$ covering of $S^{p_1}$ and $S^{p_2}$ with at most $A\leq 8^{p_1}$ and $B\leq 8^{p_2}$ elements, respectively. Consequently, 
	\begin{equation*}
	\mathbb{P}\left[ \left|\frac{1}{T}\Psi(1,1)\right|\geq 4\delta  \right] \leq 8^{p_1+p_2} \max\limits_{u^a,v^b} \mathbb{P}\left[\frac{|(u^a)'\mathcal{X}\mathcal{Z}(v^b)|}{T} \geq \delta\right].
	\end{equation*}
	What remains to be bounded is 
	\begin{equation*}
	\frac{1}{T}u'\mathcal{X}'\mathcal{Z}v, \qquad \text{for an arbitrary fixed pair of }(u,v)\in S^{p_1}\times S^{p_2}.
	\end{equation*}
	By Lemma~\ref{lemma:DeviationBound0}, we have
	\small
	\begin{equation*}
	\mathbb{P}\left[ \left|u'\left(\frac{\mathcal{X}'Z}{T}\right)v\right| > 2\pi\left( \mathcal{M}(f_X) + \mathcal{M}(f_Z) + \mathcal{M}(f_{X,Z}) \right)\eta + \vertiii{\Gamma_{X,Z}(0)}_{\text{op}}  \right] \leq 6\exp\left(-cT\min\{\eta,\eta^2\}\right).
	\end{equation*}
	\normalsize
	Therefore, we have 
	\small
	\begin{equation*}
	\begin{split}
	\mathbb{P}\left[ \left|\frac{1}{T}\Psi(1,1)\right|\geq 8\pi \left( \mathcal{M}(f_X) + \mathcal{M}(f_Z) + \mathcal{M}(f_{X,Z}) \right)\eta + 4\vertiii{\Gamma_{X,Z}(0)}_{\text{op}}    \right] \leq 6\exp\left[(p_1+p_2)\log 8 - cT\min\{\eta,\eta^2\}\right].
	\end{split}
	\end{equation*}
	\normalsize
	With the specified choice of sample size $T$, the probability vanishes by choosing $\eta = c_0\sqrt{\frac{p_1+p_2}{T}}$, for $c_0$ large enough, and we yield the conclusion in Lemma~\ref{lemma:boundoperator}.
\end{proof}

\begin{lemma}\label{lemma:eigenbound}
	Consider some generic matrix $A\in\mathbb{R}^{m\times n}$ and let $\gamma=\{\gamma_1,\dots,\gamma_p\}~(p<n)$ denote the set of column indices of interest. Then, the following inequalities hold
	\begin{equation*}
	\Lambda_{\min}(A'A)\leq \Lambda_{\min}(A'_\gamma A_\gamma) \leq \Lambda_{\max}(A'_\gamma A_\gamma)\leq \Lambda_{\max}(A'A).
	\end{equation*}
\end{lemma}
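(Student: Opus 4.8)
The plan is to observe that $A_\gamma'A_\gamma$ is a principal submatrix of the Gram matrix $A'A$ and then to invoke the variational (Courant--Fischer) characterization of the extreme eigenvalues. First I would fix notation: $A'A\in\mathbb{R}^{n\times n}$ is symmetric positive semidefinite, and its $(i,j)$ entry is the inner product of the $i$-th and $j$-th columns of $A$. Restricting both the row and column indices to the set $\gamma=\{\gamma_1,\dots,\gamma_p\}$ yields precisely $A_\gamma'A_\gamma$, so the latter is the $\gamma$-principal submatrix of $A'A$.

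The key step is a zero-padding embedding. For each $x\in\mathbb{R}^p$, define $\tilde{x}\in\mathbb{R}^n$ by placing the entries of $x$ in the coordinates indexed by $\gamma$ and setting all remaining coordinates to zero. Then $\|\tilde{x}\|_2=\|x\|_2$ and, crucially, $\tilde{x}'(A'A)\tilde{x}=\|A\tilde{x}\|_2^2=\|A_\gamma x\|_2^2=x'(A_\gamma'A_\gamma)x$, since the columns of $A$ outside $\gamma$ are multiplied by zero and thus contribute nothing. Hence the Rayleigh quotient of $A_\gamma'A_\gamma$ at $x$ coincides with the Rayleigh quotient of $A'A$ at $\tilde{x}$.

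With this identity in hand, the two outer inequalities follow immediately. For the upper bound, $\Lambda_{\max}(A'A)=\max_{y\neq 0} y'(A'A)y/(y'y)\geq \tilde{x}'(A'A)\tilde{x}/(\tilde{x}'\tilde{x})=x'(A_\gamma'A_\gamma)x/(x'x)$ holds for every nonzero $x$; maximizing the right-hand side over $x$ gives $\Lambda_{\max}(A_\gamma'A_\gamma)\leq \Lambda_{\max}(A'A)$. Symmetrically, the min-characterization gives $\Lambda_{\min}(A'A)\leq x'(A_\gamma'A_\gamma)x/(x'x)$ for all nonzero $x$, and minimizing over $x$ yields $\Lambda_{\min}(A'A)\leq \Lambda_{\min}(A_\gamma'A_\gamma)$. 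The middle inequality $\Lambda_{\min}(A_\gamma'A_\gamma)\leq \Lambda_{\max}(A_\gamma'A_\gamma)$ is automatic for any symmetric matrix.

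There is no genuine obstacle in this argument; the only point requiring care is verifying that the embedding $x\mapsto\tilde{x}$ simultaneously preserves the Euclidean norm and the quadratic form, which is exactly where the restriction to the column subset $\gamma$ enters. As an alternative, one could appeal directly to the Cauchy interlacing theorem for principal submatrices of a symmetric matrix, which delivers both outer inequalities at once, but I prefer the self-contained Rayleigh-quotient derivation above since it avoids invoking interlacing and makes the role of the zero-padding transparent.
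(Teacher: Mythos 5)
Your proof is correct and follows essentially the same route as the paper: the zero-padding embedding $x\mapsto\tilde{x}$ is exactly the paper's identification of the unit sphere in $\mathbb{R}^p$ with the set $\mathcal{V}_\gamma$ of unit vectors in $\mathbb{R}^n$ supported on $\gamma$, and both arguments then conclude via the Rayleigh-quotient characterization and the inclusion $\mathcal{V}_\gamma\subseteq\mathcal{V}$. No gaps.
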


\begin{proof}
	Let 
	\begin{equation*}
	\mathcal{V} := \{v=(v_1,\cdots,v_n)\in\mathbb{R}^n| v'v = 1\}.
	\end{equation*}
	and
	\begin{equation*}
	\mathcal{V}_\gamma := \{ v=(v_1,\cdots,v_n)\in\mathbb{R}^n| v'v=1 \text{ and }v_j=0~\forall j\notin \gamma \}.
	\end{equation*}
	It is obvious that $\mathcal{V}_\gamma\subseteq \mathcal{V}$. By the definition of eigenvalues through their {\em Rayleigh quotient} characterization, 
	\begin{equation*}
	\Lambda_{\min}(A'_\gamma A_\gamma) = \min\limits_{u'u=1,u\in\mathbb{R}^p} u' (A'_\gamma A_\gamma) u = \min\limits_{v'v=1,v\in \mathcal{V}_\gamma} v' (A'A)v \geq \min\limits_{v'v=1,v\in \mathcal{V}} v' (A'A)v = \Lambda_{\min}(A'A).
	\end{equation*}
	Similarly, 
	\begin{equation*}
	\Lambda_{\max}(A'_\gamma A_\gamma) = \max\limits_{u'u=1,u\in\mathbb{R}^p} u' (A'_\gamma A_\gamma) u = \max\limits_{v'v=1,v\in \mathcal{V}_\gamma} v' (A'A)v \leq \max \limits_{v'v=1,v\in \mathcal{V}} v' (A'A)v = \Lambda_{\max}(A'A).
	\end{equation*}
\end{proof}

\begin{lemma}\label{lemma:boundsubprocess}
	Let $\{X_t\}$ and $\{\varepsilon_t\}$ be two generic processes, where $\varepsilon_t=(U_t',V_t')'$. Suppose the spectral density of the joint process $(X_t',\varepsilon_t')$ exists. Then, the following inequalities hold 
	\begin{equation*}
	\mathfrak{m}(f_{X,V}) \geq \mathfrak{m}(f_{X,\varepsilon}), \qquad \mathcal{M}(f_{X,V}) \leq \mathcal{M}(f_{X,\varepsilon}).
	\end{equation*}
\end{lemma}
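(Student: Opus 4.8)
The plan is to exploit the elementary observation that, because $\varepsilon_t=(U_t',V_t')'$, the cross-spectrum $f_{X,V}(\theta)$ is nothing but a block of columns of $f_{X,\varepsilon}(\theta)$, and then to invoke the eigenvalue interlacing bound of Lemma~\ref{lemma:eigenbound} frequency by frequency. Recall that the extremes are $\mathcal{M}(f_{X,Z})=\mathop{\mathrm{esssup}}_{\theta}\sqrt{\Lambda_{\max}(f^*_{X,Z}(\theta)f_{X,Z}(\theta))}$ and, analogously, $\mathfrak{m}(f_{X,Z})=\mathop{\mathrm{essinf}}_{\theta}\sqrt{\Lambda_{\min}(f^*_{X,Z}(\theta)f_{X,Z}(\theta))}$; equivalently, $\mathcal{M}$ is the essential supremum of the largest singular value of $f_{X,Z}(\theta)$ and $\mathfrak{m}$ the essential infimum of its smallest singular value.

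First I would record the sub-matrix structure. By the definition of the cross-covariance, $\Gamma_{X,\varepsilon}(h)=\mathbb{E}X_t\varepsilon_{t+h}'=[\,\mathbb{E}X_tU_{t+h}',\ \mathbb{E}X_tV_{t+h}'\,]=[\,\Gamma_{X,U}(h),\ \Gamma_{X,V}(h)\,]$, and summing the defining Fourier series termwise yields $f_{X,\varepsilon}(\theta)=[\,f_{X,U}(\theta),\ f_{X,V}(\theta)\,]$. Thus $f_{X,V}(\theta)$ is exactly the restriction of $f_{X,\varepsilon}(\theta)$ to the set $\gamma$ of columns indexing the $V$-block. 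Next I would fix $\theta$, write $A:=f_{X,\varepsilon}(\theta)$ and $A_\gamma:=f_{X,V}(\theta)$, and note that $A_\gamma^*A_\gamma$ is the principal submatrix of the Hermitian Gram matrix $A^*A$ indexed by $\gamma$. Applying Lemma~\ref{lemma:eigenbound} then gives, for almost every $\theta$,
\[
\Lambda_{\min}(A^*A)\le \Lambda_{\min}(A_\gamma^*A_\gamma)\le \Lambda_{\max}(A_\gamma^*A_\gamma)\le \Lambda_{\max}(A^*A).
\]

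Finally, taking square roots of the two outer inequalities produces $\sigma_{\max}(f_{X,V}(\theta))\le\sigma_{\max}(f_{X,\varepsilon}(\theta))$ and $\sigma_{\min}(f_{X,V}(\theta))\ge\sigma_{\min}(f_{X,\varepsilon}(\theta))$ for almost every $\theta$; by monotonicity of the essential supremum and essential infimum under pointwise inequality, the former gives $\mathcal{M}(f_{X,V})\le\mathcal{M}(f_{X,\varepsilon})$ and the latter gives $\mathfrak{m}(f_{X,V})\ge\mathfrak{m}(f_{X,\varepsilon})$, which are precisely the two claimed bounds. The only delicate point is the passage from the real statement of Lemma~\ref{lemma:eigenbound}, phrased for $A'A$, to the complex Gram matrices $A^*A$ that arise from the complex-valued spectral densities; I expect this to be the main (though minor) obstacle, and it is resolved by observing that the Rayleigh-quotient characterization of $\Lambda_{\min}$ and $\Lambda_{\max}$ used in the proof of that lemma holds verbatim for any Hermitian matrix once the unit vectors are taken over $\mathbb{C}$, and that restricting the optimization to the coordinate subspace spanned by $\gamma$ produces exactly $A_\gamma^*A_\gamma$.
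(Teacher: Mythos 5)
Your proposal is correct and follows essentially the same route as the paper's proof: identify $f_{X,V}(\theta)$ as the column block of $f_{X,\varepsilon}(\theta)=[\,f_{X,U}(\theta),\ f_{X,V}(\theta)\,]$, observe that $f^*_{X,V}f_{X,V}$ is the corresponding principal submatrix of the Gram matrix $f^*_{X,\varepsilon}f_{X,\varepsilon}$, and apply Lemma~\ref{lemma:eigenbound} pointwise in $\theta$ before taking essential extrema. Your explicit remark that the Rayleigh-quotient argument in Lemma~\ref{lemma:eigenbound} must be read over $\mathbb{C}$ for Hermitian Gram matrices is a point the paper passes over silently, and is a welcome addition.
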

\begin{proof}
	By definition, the spectral density $f_{X,\varepsilon}(\theta)$ can be written as
	\begin{equation*}
	\begin{split}
	f_{X,\varepsilon}(\theta) & = \left(\frac{1}{2\pi}\right) \sum_{\ell=-\infty}^{\infty} \Gamma_{X,\varepsilon}(\ell) e^{-i\ell \theta}, \qquad \theta \in[-\pi,\pi] \\
	& = \left(\frac{1}{2\pi}\right) \sum_{\ell=-\infty}^{\infty} \left( \mathbb{E}X_t U_{t+\ell},~~ \mathbb{E}X_t V_{t+\ell} \right)e^{-i\ell \theta} \\
	& = (f_{X,U}(\theta),~~ f_{X,V}(\theta)).
	\end{split}
	\end{equation*}
	It follows that 
	\begin{equation*}
	\mathcal{M}(f_{X,\varepsilon}) = \mathop{\text{ess sup}}\limits_{\theta\in [-\pi,\pi]} \sqrt{\Lambda_{\max} (H(\theta)) },
	\end{equation*}
	where 
	\begin{equation*}
	H(\theta) = \begin{bmatrix}
	f^*_{X,U}(\theta) \\ f^*_{X,V}(\theta)
	\end{bmatrix} \begin{bmatrix}
	f_{X,U}(\theta) & f_{X,V}(\theta)
	\end{bmatrix}= \begin{bmatrix}
	f^*_{X,U}(\theta) f_{X,U}(\theta) & f^*_{X,U}(\theta) f_{X,V}(\theta) \\ f^*_{X,V}(\theta) f_{X,U}(\theta) & f^*_{X,V}(\theta) f_{X,V}(\theta)
	\end{bmatrix}.
	\end{equation*}
	Note that 
	\begin{equation*}
	\mathcal{M}(f_{X,V}) = \mathop{\text{ess sup}}\limits_{\theta\in [-\pi,\pi]} \sqrt{\Lambda_{\max} (f^*_{X,V}(\theta) f_{X,V}(\theta))}.
	\end{equation*}
	By Lemma~\ref{lemma:eigenbound}, $\forall \theta$, $\Lambda_{\min}(f^*_{X,V}(\theta) f_{X,V}(\theta))\geq \Lambda_{\min}(H(\theta))$
	and $\Lambda_{\max} (f^*_{X,V}(\theta) f_{X,V}(\theta))\leq \Lambda_{\max}(H(\theta))$, hence
	\begin{equation*}
	\mathfrak{m}(f_{X,V}) \geq \mathfrak{m}(f_{X,\varepsilon}), \qquad \mathcal{M}(f_{X,V}) \leq \mathcal{M}(f_{X,\varepsilon}).
	\end{equation*}
\end{proof}

\section{Testing group Granger-causality under a sparse alternative.}\label{appendix:sparse-testing}

In this section, we develop a testing procedure to test the null hypothesis against its sparse alternatives, that is, 
\begin{equation*}
H_0: B=0 \qquad \text{vs} \qquad H_A:B\text{ is nonzero and sparse}.
\end{equation*}
Throughout, we impose assumptions on the sparsity level of $B$ (to be specified later), and use the {\em higher criticism} framework \citep[c.f.][]{tukey1989higher,donoho2004higher,arias2011global} as the building block of the testing procedure.  

Once again, we start with testing sparse alternatives in a simpler model setting
\begin{equation*}
Y_t = \Pi X_t + \epsilon_t,
\end{equation*}
where $Y_t\in\mathbb{R}^{p_2}$, $X_t\in\mathbb{R}^{p_1}$, and $\epsilon_t\in\mathbb{R}^{p_2}$ with each component being independent 
and identically distributed (i.i.d) and also independent of $X_t$. We would like to test the null hypothesis $H_0:\Pi=0$. Written in a compact form, the model is given by
\begin{equation}\label{eqn:YXE}
\mathcal{Y} = \mathcal{X}\Pi' + \mathcal{E},
\end{equation}
where $\mathcal{Y}\in\mathbb{R}^{T\times p_2}$ , $\mathcal{X}\in\mathbb{R}^{T\times p_1}$, and $\mathcal{E}$ are both contemporaneously and temporally independent. The latter shares similarities to the setting in \citet{arias2011global}, with the main difference being that here we have a multi-response $\mathcal{Y}$. By rewriting~\eqref{eqn:YXE} using Kronecker products, we have
\begin{equation*}
\vect{\mathcal{Y}} = (I_{p_2}\otimes \mathcal{X})\vect{\Pi'} + \vect{\mathcal{E}} \qquad \text{i.e.,} \qquad \mathbf{Y} = \mathbf{X}\mathbf{\Pi} + \mathbf{E},
\end{equation*}
where $\mathbf{Y}=\vect{\mathcal{Y}}\in\mathbb{R}^{Tp_2},\mathbf{X}= I_{p_2}\otimes \mathcal{X}\in\mathbb{R}^{Tp_2\times p_1p_2}$ and $\mathbf{\Pi}=\vect{\Pi'}\in\mathbb{R}^{p_1p_2}$. Each coordinate in $\mathbf{E}$ is iid. In this form, using the higher criticism \citep{donoho2004higher,ingster2010detection,arias2011global} with proper scaling, the test statistic is given~by:
\begin{equation}\label{eqn:HCraw}
\HC^*(\mathbf{X},\mathbf{Y}) = \sup\limits_{t>0} H(t,\mathbf{X},\mathbf{Y}) :=  \sqrt{\frac{p_1p_2}{2\bar{\Phi}(t)(1-2\bar{\Phi}(t))}}\Big[\tfrac{1}{p_1p_2}\sum_{k=1}^{p_1p_2} \mathbf{1}\left(\tfrac{\sqrt{T}\cdot\mathbf{X}'_k\mathbf{Y}}{\|\mathbf{X}_k\|_2\|\mathbf{Y}\|_2} > t \right)-2\bar{\Phi}(t)\Big],
\end{equation}
where $\mathbf{X}_k$ is the $k^{\text{th}}$ column of $\mathbf{X}$ and $\bar{\Phi}(t)=1-\Phi(t)$ with $\Phi(t)$ being the cumulative distribution
function of a standard Normal random variable. Intuitively, $$\big(\tfrac{1}{p_1p_2}\big)\sum_{k=1}^{p_1p_2} \mathbf{1}\big\{\sqrt{T}\mathbf{X}'_k\mathbf{Y}/(\|\mathbf{X}_k\|_2\|\mathbf{Y}\|_2) > t \big\}$$ is the fraction of significance beyond a given level $t$, after scaling for the vector length and the noise level. To conduct a level $\alpha$ test, $H_0$ is rejected when $\HC^*(\mathbf{X},\mathbf{Y}) > h(p_1p_2,\alpha_{p_1p_2})$ where $h(p_1p_2,\alpha_{p_1p_2})\approx \sqrt{2\log\log(p_1p_2)}$, provided that $\alpha_{p_1p_2}\rightarrow 0$ slowly enough in the sense that $h(p_1p_2,\alpha_{p_1p_2}) = 2\sqrt{\log\log (p_1p_2)} (1+ o(1))$ \citep[see][]{donoho2004higher}. The effectiveness of the test relies on a number of assumptions on the design matrix and the sparse vector to be tested. Next, we introduce the three most relevant definitions for subsequent developments, originally mentioned in \citet{arias2011global}. 
\begin{definition}[Bayes risk]
	Following \citet{arias2011global}, the Bayes risk of a test $\mathcal{T}$ for testing $\mathbf{\Pi}=0$ vs. $\mathbf{\Pi}\sim \pi$, when $H_0$ and $H_1$ occur with the same probability, is defined as the sum of type I error and its average probability of type II error; i.e., 
	\begin{equation*}
	\text{Risk}_\pi(\mathcal{T}) = \mathbb{P}_0(\mathcal{T}=1)  + \pi[\mathbb{P}_\mathbf{\Pi}(\mathcal{T}=0)],
	\end{equation*}
	where $\pi$ is a prior on the set of alternatives $\Omega$. When no prior is specified, the risk is defined as the worst-case risk:
	\begin{equation*}
	\text{Risk}(\mathcal{T}) = \mathbb{P}_0(\mathcal{T}=1)  + \max\limits_{\mathbf{\Pi}\in\Omega}[\mathbb{P}_\mathbf{\Pi}(\mathcal{T}=0)].
	\end{equation*}
\end{definition}

\begin{definition}[Asymptotically powerful] We use $\mathcal{T}_{n,p}$ to denote the dependency of the test on the sample size $n$ and the parameter dimension $p$. With $p\rightarrow \infty$ and $n=n(p)\rightarrow\infty$, a sequence of tests $\{\mathcal{T}_{n,p}\}$ is said to be {\em asymptotically powerful} if
	\begin{equation*}
	\lim_{p\rightarrow\infty}\text{Risk}(\mathcal{T}_{n,p})=0.
	\end{equation*}	
\end{definition}

\begin{definition}[Weakly correlated] Let $\mathcal{S}_p(\gamma,\Delta)$ denote the set of $p\times p$ correlation matrices $C=[c_{jk}]$ satisfying the weakly correlated assumption: for all $j=1,\dots,p$,
	\begin{equation*}
	|c_{jk}|<1-(\log p)^{-1} \qquad \text{and} \qquad \{k: |c_{jk}|>\gamma\}\leq \Delta, \qquad \text{for some }\gamma\leq 1, \Delta \geq 1. 
	\end{equation*} 
\end{definition}
With the above definitions, \citet{arias2011global} establishes that using the test based on higher criticism is asymptotically powerful, provided that (1) $\mathbf{\Pi}$ satisfies the {\em strong sparsity assumption}, that is, the total number of nonzeros $s_{\mathbf{\Pi}}^\star=(p_1p_2)^{\theta}$ with $\theta\in(1/2,1)$; (2) the correlation matrix of $\mathbf{X}$ belongs to $\mathcal{S}(\gamma,\Delta)$ with $\gamma$ and $\Delta$ satisfying certain assumptions in terms of their relative order with respect to parameter dimension and sample size; and (3) the minimum magnitude of the nonzero elements of $\mathbf{\Pi}$ exceeds a certain lower detection threshold. 
\medbreak

Switching to our model setting in which
\begin{equation*}
Z_t = BX_{t-1} + CZ_{t-1} + V_t, \qquad B\in\mathbb{R}^{p_2\times p_1},
\end{equation*}
where $B$ encodes the dependency between $Z_t$ and $X_{t-1}$, conditional on $Z_{t-1}$, the above discussion suggests that we can use higher criticism on the residuals $\mathcal{R}_1$ and $\mathcal{R}_0$, where $\mathcal{R}_1$ and $\mathcal{R}_0$ are identically defined to those in the low-rank testing; that is, $\mathcal{R}_1$ is the residual after regressing $\mathcal{X}$ on $\mathcal{Z}$, and $\mathcal{R}_0$ is the residual after regressing $\mathcal{Z}^T$ on $\mathcal{Z}$:
\begin{equation*}
\mathcal{R}_1 = (I-P_z)\mathcal{X} \qquad \text{and} \qquad \mathcal{R}_0 = (I-P_z)\mathcal{Z}^T,
\end{equation*}
where $P_z = \mathcal{Z}(\mathcal{Z}'\mathcal{Z})^{-1}\mathcal{Z}'$. Writing the model in terms of $\mathcal{R}_1$ and $\mathcal{R}_0$, we have
\begin{equation*}
\mathcal{R}_0 = \mathcal{R}_1B' + \mathcal{V}, \qquad \text{i.e.,} \qquad \mathbf{R}_0 = \mathbf{R}_1\beta_B + \mathbf{V},
\end{equation*}
where $\mathbf{R}_0=\vect{\mathcal{R}_0}, \mathbf{R}_1=I\otimes \mathcal{R}_1$, $\mathbf{V}=\vect{\mathcal{V}}$, and $\beta_B=\vect{B'}\in\mathbb{R}^{p_1p_2}$. To test $H_0:\beta_B=0$, the higher criticism is given by
\small
\begin{equation}\label{eqn:HC}
\begin{split}
\HC^*(\mathbf{R}_1,\mathbf{R}_0) & = \sup\limits_{t>0}H(t,\mathbf{R}_1,\mathbf{R}_0) : = 
\sqrt{\frac{p_1p_2}{2\bar{\Phi}(t)(1-2\bar{\Phi}(t))}} \Big[\tfrac{1}{p_1p_2}\sum_{j=1}^{p_2}\sum_{i=1}^{p_1}\mathbf{1}\Big(\tfrac{\sqrt{T} |\mathbf{R}_{1k}'\mathbf{R}_0|}{\|\mathbf{R}_{1k}\|_2\|\mathbf{R}_0\|_2} >t \Big) - 2\bar{\Phi}(t) \Big] \\
& = \sup\limits_{t>0}\sqrt{\frac{p_1p_2}{2\bar{\Phi}(t)(1-2\bar{\Phi}(t))}} \Big[\tfrac{1}{p_1p_2}\sum_{j=1}^{p_2}\sum_{i=1}^{p_1}\mathbf{1}\Big(\tfrac{\sqrt{T} |S_{10,ij}|}{\sqrt{S_{11,ii}S_{00,jj}}} >t \Big) - 2\bar{\Phi}(t) \Big]
\end{split}
\end{equation}
\normalsize
where $S_{10}=\mathcal{R}_1'\mathcal{R}_0/T$, $S_{11}=\mathcal{R}_1'\mathcal{R}_1/T$ and $S_{00}=\mathcal{R}_0'\mathcal{R}_0/T$. The second equality is due to the block-diagonal structure of $\mathbf{R}_1$. We reject the null hypothesis if 
\begin{equation*}
\HC^*(\mathbf{R}_1,\mathbf{R}_0) >  2\sqrt{\log\log(p_1p_2)}.
\end{equation*}
Empirically, $t$ can be chosen from $\{[1,\sqrt{5\log(p_1p_2)}]\cap \mathbb{N}\}$ \citep{arias2011global}. 
\medbreak

Next, we analyze the theoretical properties of the above testing procedure. If the parameter dimension is fixed, then classical consistency results in terms of convergence (in probability or almost surely) hold when letting $T\rightarrow\infty$, and everything follows trivially, as long as the corresponding population quantities satisfy the posited assumptions. 

In the remainder, we allow the parameter dimension $p_1p_2$ to slowly vary with the sample size $T$. 
Let $S_{\mathbf{R}_1} = \mathbf{R}_1'\mathbf{R}_1/n$ be the sample covariance matrix based on the residuals $\mathbf{R}_1$, and let $C_{\mathbf{R}_1}$ be the corresponding correlation matrix. The following proposition directly follows from Theorem~4 in \citet{arias2011global}. 
\begin{proposition}[An asymptotically powerful test]\label{prop:asymptoticallypowerful}
	Under the following conditions, the testing procedure associated with the Higher Criticism statistics defined in~\eqref{eqn:HC} is asymptotically powerful, provided that the smallest magnitude of nonzero entries of $B^\star$ exceeds the lower detection boundary.\footnote{For a thorough discussion on the lower detection boundary, we refer the reader to \citet{ingster2010detection,arias2011global} and references therein.}
	\begin{enumerate}[(a)]
		\item Strong sparsity: let $p_B=p_1p_2$ be the dimension of $\beta_B^\star$, then the total number of nonzeros satisfies $s^\star_{B} = p_B^{\theta}$, where $\theta\in(1/2,1)$. 
		\item Weakly correlated design: $C_{\mathbf{R}_1}\in\mathcal{S}(\gamma,\Delta)$ with the parameters satisfying $\Delta = O(p_1^{\epsilon})$, $\gamma=O(p_1^{-1/2+\epsilon})$, $\forall~\epsilon>0$.  
	\end{enumerate}	
\end{proposition}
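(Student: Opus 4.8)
The plan is to recast the test into the exact canonical form treated in \citet{arias2011global} and then verify, one by one, that the hypotheses of their Theorem~4 hold in our setting; the conclusion $\lim_{p\to\infty}\mathrm{Risk}(\mathcal{T}_{n,p})=0$ then follows immediately. The reduction is already recorded above: after partialling out $Z_{t-1}$ through the projection $I-P_z$, the model becomes $\mathbf{R}_0=\mathbf{R}_1\beta_B+\mathbf{V}$ with $\mathbf{R}_1=I_{p_2}\otimes\mathcal{R}_1$, $\beta_B=\mathrm{vec}(B')$, and under $H_0:B=0$ we have $\beta_B=0$. The statistic in~\eqref{eqn:HC} is precisely the higher-criticism statistic of \citet{arias2011global} applied to this regression, so what remains is to check that (i) the noise driving $\mathbf{R}_0$ is (asymptotically) coordinate-wise independent sub-Gaussian, (ii) the correlation matrix $C_{\mathbf{R}_1}$ belongs to $\mathcal{S}(\gamma,\Delta)$ with the prescribed $\gamma,\Delta$, and (iii) the signal exceeds the detection threshold (the last being assumed in the statement).

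First I would address the noise. Applying $I-P_z$ to $\mathcal{Z}^T=\mathcal{X}B'+\mathcal{Z}C'+\mathcal{V}$ and using $(I-P_z)\mathcal{Z}=O$, the effective error is $(I-P_z)\mathcal{V}$ rather than $\mathcal{V}$ itself. Under the working assumption that $\Sigma_v$ is diagonal, the rows of $\mathcal{V}$ are i.i.d.\ Gaussian with independent coordinates; since $P_z$ has rank $p_2$, which is negligible relative to $T$ in the sample-size regime, the projection removes only a vanishing fraction of degrees of freedom, so $\mathrm{vec}\big((I-P_z)\mathcal{V}\big)$ meets the coordinate-wise independence requirement of \citet{arias2011global} up to an asymptotically negligible correction. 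The block-diagonal Kronecker structure $\mathbf{R}_1=I_{p_2}\otimes\mathcal{R}_1$ is then the key simplification for condition (ii): $C_{\mathbf{R}_1}$ is block diagonal with $p_2$ identical copies of the $p_1\times p_1$ column-correlation matrix $C_{\mathcal{R}_1}$, so $C_{\mathbf{R}_1}\in\mathcal{S}(\gamma,\Delta)$ as soon as $C_{\mathcal{R}_1}\in\mathcal{S}(\gamma,\Delta)$ with the same parameters. This reduces an intractable $(p_1p_2)$-dimensional condition to one on the $p_1$-dimensional partial-design matrix.

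The main obstacle is condition (ii) at the sample level, because, unlike in \citet{arias2011global}, the design $\mathcal{R}_1$ is itself random, built from temporally dependent data through a data-dependent projection. The strategy is to pass from the sample correlation $C_{\mathcal{R}_1}$ to its population target. The sample second-moment matrix is $S_{11}=\tfrac{1}{T}\mathcal{X}'(I-P_z)\mathcal{X}$, whose population analogue is the partial covariance $\Sigma_{11}=\Gamma_X-\Gamma_{X,Z}\Gamma_Z^{-1}\Gamma_{X,Z}'$ of~\eqref{eqn:Sigma0011}. I would impose the weak-correlation assumption on the population partial correlation matrix derived from $\Sigma_{11}$, and then show that $C_{\mathcal{R}_1}$ concentrates around it entrywise at rate $\sqrt{\log p_1/T}$. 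This concentration is the crux, and it must be established under the temporal and cross-sectional dependence of the VAR, which rules out standard i.i.d.\ arguments. The natural tool is the family of deviation bounds of \citet[][Proposition~2.4]{basu2015estimation} used throughout this paper: applied to $\tfrac{1}{T}\mathcal{X}'\mathcal{X}$ and, via the Schur-complement identity $S_{11}=\tfrac{1}{T}\mathcal{X}'\mathcal{X}-(\tfrac{1}{T}\mathcal{X}'\mathcal{Z})(\tfrac{1}{T}\mathcal{Z}'\mathcal{Z})^{-1}(\tfrac{1}{T}\mathcal{Z}'\mathcal{X})$, to the cross terms, together with an $\ell_\infty$ perturbation bound for $(\tfrac{1}{T}\mathcal{Z}'\mathcal{Z})^{-1}$. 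Care is also needed because $\mathcal{X}$ and the driving noise $\mathcal{V}$ are not independent (past $\{V_t\}$ enter $\mathcal{Z}$), so the argument should either be carried out conditionally or routed directly through the joint-process deviation bounds, which already absorb this dependence.

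Finally, with conditions (i)--(iii) in force, I would invoke Theorem~4 of \citet{arias2011global} verbatim: under strong sparsity $s_B^\star=(p_1p_2)^\theta$ with $\theta\in(1/2,1)$, the weakly correlated design $C_{\mathbf{R}_1}\in\mathcal{S}(\gamma,\Delta)$ with $\Delta=O(p_1^\epsilon)$ and $\gamma=O(p_1^{-1/2+\epsilon})$, and a minimum nonzero magnitude of $B^\star$ above the detection boundary, the higher-criticism test with threshold $2\sqrt{\log\log(p_1p_2)}$ has risk tending to zero, which is exactly the asserted asymptotic power.
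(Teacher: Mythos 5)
Your proposal is correct and follows essentially the same route as the paper: the paper's own justification of Proposition~\ref{prop:asymptoticallypowerful} is precisely the observation that, once the model is vectorized into the form $\mathbf{R}_0=\mathbf{R}_1\beta_B+\mathbf{V}$ with the block-diagonal design $I_{p_2}\otimes\mathcal{R}_1$, the statement ``directly follows from Theorem~4 in \citet{arias2011global}.'' The one structural difference is that most of your effort goes into establishing the weak-correlation condition for the \emph{random} design by concentrating $C_{\mathcal{R}_1}$ around the population partial correlation derived from $\Sigma_{11}$ via the deviation bounds of \citet{basu2015estimation}; this is not needed for the proposition as stated, since hypothesis~(b) is imposed directly on the sample correlation matrix $C_{\mathbf{R}_1}$ (a deterministic realization), and the paper isolates exactly that concentration argument in the separate Corollary~\ref{cor:Tlogp1p2}, whose proof matches the decomposition of $S_{11}$ you outline.
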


Note that $S_{\mathbf{R}_1} = I\otimes S_{11}$, where $S_{11}=\mathcal{R}_1'\mathcal{R}_1/T$; hence, $C_{\mathbf{R}_1}= I\otimes C_{11}$, with $C_{11}$ being the sample correlation matrix based on $\mathcal{R}_1$. The weakly correlated design assumption is thus effectively imposed on $C_{11}$, with the parameters $\gamma$ and $\Delta$ satisfying the same condition. The weakly correlated design assumption on $C_{11}$ in Proposition~\ref{prop:asymptoticallypowerful} is for a deterministic realization of $\mathcal{R}_1$. The following corollary states that for a random realization of $\mathcal{R}_1$, obtained by regressing a random $\mathcal{X}$ on $\mathcal{Z}$, to satisfy the weakly correlated design assumption with high probability, it is sufficient that the population counterparts of the associated quantities satisfy the required assumptions.   

\begin{corollary}\label{cor:Tlogp1p2}
	Consider residual $\mathcal{R}_1$ obtained by regressing a random realization of $\mathcal{X}$ on that of $\mathcal{Z}$. Let $\Sigma_{11}:=\Gamma_X - \Gamma_{X,Z}\Gamma_Z^{-1}\Gamma_{X,Z}'$ be the covariance of $X_t$ conditional on $Z_t$, and $\rho_{11}$ be the corresponding correlation matrix. Suppose $\rho_{11}\in S(\gamma,\Delta)$ with $\gamma$ and $\Delta$ satisfying the same condition as in Proposition~\ref{prop:asymptoticallypowerful}. Then with high probability, the sample correlation matrix based on $\mathcal{R}_1$ belongs to $\mathcal{S}(\gamma',\Delta')$, where $\gamma'$ and $\Delta'$ respectively satisfy the same condition as $\gamma$ and $\Delta$, provided that the same condition imposed on $\gamma$ holds for $\sqrt{T^{-1}\log(p_1p_2)}$. Moreover, the conclusion in Proposition~\ref{prop:asymptoticallypowerful} holds. 
\end{corollary}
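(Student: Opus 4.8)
The plan is to derive the corollary from Proposition~\ref{prop:asymptoticallypowerful} by verifying its weakly correlated design hypothesis for the \emph{random} residual $\mathcal{R}_1=(I-P_z)\mathcal{X}$ through a concentration argument, thereby reducing the random statement to the deterministic one already assumed at the population level for $\rho_{11}$. Since $C_{\mathbf{R}_1}=I\otimes C_{11}$, where $C_{11}$ is the sample correlation matrix built from $S_{11}=\mathcal{R}_1'\mathcal{R}_1/T=\mathcal{X}'(I-P_z)\mathcal{X}/T$, it suffices to show that with high probability $C_{11}\in\mathcal{S}(\gamma',\Delta')$ with $\gamma',\Delta'$ of the same order as $\gamma,\Delta$. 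The engine of the proof is the entrywise bound $\|C_{11}-\rho_{11}\|_\infty=O(\delta)$ with $\delta:=\sqrt{T^{-1}\log(p_1p_2)}$, after which the set membership follows by a purely deterministic inclusion.

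First I would establish the entrywise concentration of the partial sample covariance, $\|S_{11}-\Sigma_{11}\|_\infty=O(\delta)$. Writing $S_{11}=\widehat{\Gamma}_X-\widehat{\Gamma}_{X,Z}\widehat{\Gamma}_Z^{-1}\widehat{\Gamma}_{Z,X}$ and $\Sigma_{11}=\Gamma_X-\Gamma_{X,Z}\Gamma_Z^{-1}\Gamma_{Z,X}$, where $\widehat{\Gamma}_X=\mathcal{X}'\mathcal{X}/T$, $\widehat{\Gamma}_{X,Z}=\mathcal{X}'\mathcal{Z}/T$ and $\widehat{\Gamma}_Z=\mathcal{Z}'\mathcal{Z}/T$, I would bound the individual blocks using the deviation bounds for stationary Gaussian processes---Proposition~2.4 of \citet{basu2015estimation} together with the cross-covariance bound in Lemma~\ref{lemma:DeviationBound0}---each yielding $\|\widehat{\Gamma}_\bullet-\Gamma_\bullet\|_\infty=O(\delta)$ after choosing the deviation level $\eta\asymp\delta$ and taking a union bound over the $O((p_1+p_2)^2)$ entries. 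The quadratic term I would control by telescoping $\widehat{\Gamma}_{X,Z}\widehat{\Gamma}_Z^{-1}\widehat{\Gamma}_{Z,X}-\Gamma_{X,Z}\Gamma_Z^{-1}\Gamma_{Z,X}$ into three differences, invoking the standing assumption $p_2<T$ (so $\widehat{\Gamma}_Z$ is invertible) and the well-conditioning of $\Gamma_Z$ (its smallest eigenvalue is bounded below, e.g. by $2\pi\mathfrak{m}(f_W)$) to convert the operator-norm perturbation of $\widehat{\Gamma}_Z-\Gamma_Z$ into a bound on $\widehat{\Gamma}_Z^{-1}-\Gamma_Z^{-1}$.

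Second, I would transfer the covariance bound to the correlation matrix. Because the diagonal entries of $\Sigma_{11}$ are bounded away from zero and infinity (a consequence of the bounded spectral extremes of the joint process), the normalization $C_{11}=D^{-1/2}S_{11}D^{-1/2}$ with $D=\mathrm{diag}(S_{11})$ preserves the rate, giving $\|C_{11}-\rho_{11}\|_\infty=O(\delta)$. The set-membership step is then deterministic: taking $\gamma'=\gamma+c\delta$ and $\Delta'=\Delta$, any off-diagonal entry satisfies $|(C_{11})_{jk}|\le|\rho_{jk}|+c\delta$, which stays below $1-(\log p_1)^{-1}$ for $T$ large since the population correlations are bounded away from one; and $\{k:|(C_{11})_{jk}|>\gamma'\}\subseteq\{k:|\rho_{jk}|>\gamma\}$, whose cardinality is at most $\Delta$. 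Under the hypothesis that $\delta$ obeys the same order condition as $\gamma$, namely $\delta=O(p_1^{-1/2+\epsilon})$, both $\gamma'$ and $\Delta'$ retain the orders required by Proposition~\ref{prop:asymptoticallypowerful}, so $C_{11}\in\mathcal{S}(\gamma',\Delta')$ with high probability.

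The main obstacle will be the entrywise (as opposed to operator-norm) control of the quadratic term $\widehat{\Gamma}_{X,Z}\widehat{\Gamma}_Z^{-1}\widehat{\Gamma}_{Z,X}$: since $\|AB\|_\infty$ is not submultiplicative in the entrywise max norm, converting the operator-norm perturbation of the inverse into an $\|\cdot\|_\infty$ bound on the triple product requires a careful mixed-norm argument---bounding $|e_i'(\cdot)e_j|$ by products of row/column $\ell_2$-norms and the operator norm of $\Gamma_Z^{-1}$, and controlling the resulting $d_{\max}$-type sums as in Corollary~3 of \citet{ravikumar2011high}. Once this is in place, intersecting the high-probability event with the deterministic inclusion yields $C_{\mathbf{R}_1}\in\mathcal{S}(\gamma',\Delta')$, and Proposition~\ref{prop:asymptoticallypowerful}---whose strong-sparsity and detection-boundary requirements are inherited as assumptions---applies verbatim to conclude that the Higher Criticism test is asymptotically powerful, completing the proof of Corollary~\ref{cor:Tlogp1p2}.
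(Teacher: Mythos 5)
Your overall architecture coincides with the paper's: establish the entrywise concentration $\|S_{11}-\Sigma_{11}\|_\infty = O(\delta)$ with $\delta=\sqrt{T^{-1}\log(p_1p_2)}$, observe that normalizing by $\mathrm{diag}(S_{11})$ preserves the rate so that $\|C_{11}-\rho_{11}\|_\infty=O(\delta)$, and conclude $C_{11}\in\mathcal{S}(\gamma',\Delta')$ by a deterministic perturbation of the membership conditions before invoking Proposition~\ref{prop:asymptoticallypowerful}. Your second and third steps, the reduction to $C_{11}$ via $C_{\mathbf{R}_1}=I\otimes C_{11}$, and the final appeal to the proposition are essentially what the paper does.

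Where you diverge is the concentration step, and there your route has a genuine gap. You expand $S_{11}=\widehat{\Gamma}_X-\widehat{\Gamma}_{X,Z}\widehat{\Gamma}_Z^{-1}\widehat{\Gamma}_{Z,X}$ and telescope the triple product, proposing to control each telescoped term by products of row/column $\ell_2$-norms and the operator norm of $\Gamma_Z^{-1}$. But a bound of the form $|e_i'(\widehat{\Gamma}_{X,Z}-\Gamma_{X,Z})\widehat{\Gamma}_Z^{-1}\widehat{\Gamma}_{Z,X}e_j|\leq \|(\widehat{\Gamma}_{X,Z}-\Gamma_{X,Z})'e_i\|_2\,\smalliii{\widehat{\Gamma}_Z^{-1}}_{\text{op}}\,\|\widehat{\Gamma}_{Z,X}e_j\|_2$ converts an entrywise $O(\delta)$ deviation into a row-$\ell_2$ deviation of order $\sqrt{p_2}\,\delta$ and pairs it with $\|\widehat{\Gamma}_{Z,X}e_j\|_2=O(\sqrt{p_2})$, so the term is only $O(p_2\delta)$; polynomial dimension factors destroy the claimed rate unless you impose row-sparsity or decay that the corollary does not assume. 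The paper avoids $\widehat{\Gamma}_Z^{-1}$ altogether: it writes the population regression $\mathcal{X}=\mathcal{Z}Q^\star+\mathcal{R}^\star$ and decomposes $S_{11}-S_{R^\star}=\tfrac{2}{T}(\mathcal{R}^\star)'\mathcal{Z}(Q^\star-\widehat{Q})+(Q^\star-\widehat{Q})'\big(\tfrac{\mathcal{Z}'\mathcal{Z}}{T}\big)(Q^\star-\widehat{Q})$, bounding each piece entrywise through $\smalliii{Q^\star-\widehat{Q}}_1$ (a column-$\ell_1$, not column-$\ell_2$, norm) together with $\|(\mathcal{R}^\star)'\mathcal{Z}/T\|_\infty$ (Lemma~\ref{lemma:operator-infinity}) and $\|\mathcal{Z}'\mathcal{Z}/T\|_\infty$ (Proposition~2.4 of \citet{basu2015estimation}); the base term $\|S_{R^\star}-\Sigma_{11}\|_\infty$ is handled by standard sub-Gaussian covariance concentration. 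To salvage your route you would need to replace the $\ell_2$/operator-norm splitting by $\ell_1$/$\ell_\infty$ duality throughout, at which point you essentially reproduce the paper's argument with the regression coefficient $Q^\star$ made explicit.
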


\begin{remark}
	In the work of \citet{anderson2002canonical} and \citet{arias2011global}, the authors focus their analysis primarily on the multiple regression setting, where the regression coefficient matrix directly encodes the relationship between the response variable and the covariates, in an iid data setting. We consider a more complicated model setting in which the regression coefficient matrix of interest encodes the partial auto-correlations between a multivariate response and a set of exogenous variables, while the data exhibit temporal dependence. It is worth pointing out that with the presence of temporal dependence, the rate with respect to the model dimension $p$ and sample size $T$ stays the same, as in the case where the data are iid \citep[e.g.,][]{rudelson2013hanson,basu2015estimation}; specifically, it is $\sqrt{\log p/T}$ in terms of the element-wise infinity norm, whereas the associated constant is a function of the lower and upper extremes of the spectral density, which intricately controls the exact coverage and power of the testing procedures. Therefore, as long as the rate constraint on $p$ and $T$ is satisfied (as in Corollary~\ref{cor:Tlogp1p2}), the main conclusion is compatible with previous work, and asymptotically, we either obtain the distribution of the test statistic (low rank testing), or have a powerful test (sparse testing). 
\end{remark}

\begin{remark}
	To solve the global testing problem for the sparse setting, a possible alternative is to construct a test statistic based on estimates of the regression coefficients, then perform a global or max test on the estimated coefficients. A key issue for such a test is that the estimated entries of $B$ are biased due to the use of Lasso; therefore, a debiasing procedure \citep[e.g.][]{javanmard2014confidence} would be required to obtain valid marginal distributions for the entries of the $B$ matrix. In contrast, the higher criticism test statistic is based on the correlation between the response and the covariates (see Equation~\eqref{eqn:HCraw}), and here we employ the idea on the residuals so that the effect of $Z_t$ block is removed. We do not directly deal with the estimates of the $B$ matrix and thus avoid the complications induced by the potentially biased estimates of $B$.  
\end{remark}

\bigbreak
\section{Estimation and Consistency for an Alternative Model Specification.}\label{appendix:sparseB}

In this section, we consider the finite-sample error bound for the case where both $B$ and $C$ are sparse. We assume the presence of a sparse contemporaneous conditional dependence, hence the alternate between the estimation of transition matrices and that of the covariance matrix is required. In what follows, we briefly outline the estimation procedure and the error bounds of the estimates. All notations follow from those in Section~\ref{sec:theory}.

The joint optimization problem is given by
\begin{align}\label{eqn:optB-sparse-C-sparse}
(\widehat{B},\widehat{C},\widehat{\Omega}_v) = \argmin\limits_{B,C,\Omega_v} &\Big\{ \text{tr} \big[ \Omega_v (\mathcal{Z}^T - \mathcal{X}B' - \mathcal{Z}C')'(\mathcal{Z}^T - \mathcal{X}B' - \mathcal{Z}C')/T \big] - \log\det \Omega_v  \nonumber \\
&+ \lambda_B \vertii{B}_1 + \lambda_C \vertii{C}_1 + \rho_v\vertii{\Omega_v}_{1,\text{off}} \Big\}.
\end{align}
For every iteration, with a fixed $\widehat{\Omega}^{(k)}_v$, $\widehat{B}^{(k+1)}$ and $\widehat{C}^{(k+1)}$ are both updated via Lasso; for fixed $(\widehat{B}^{(k)},\widehat{C}^{(k)})$, $\widehat{\Omega}^{(k)}_v$ is updated by the graphical Lasso. 
\begin{corollary}\label{cor:B-sparse-C-sparse}
	Consider the stable Gaussian VAR system defined in~\eqref{eqn:model0} in which $B^\star$ is assumed  to be low rank with rank $r_B^\star$ and $C^\star$ is assumed to be $s_C^\star$-sparse. Further, assume the following 
	\begin{itemize}
		\item[C.1] The incoherence condition holds for $\Omega_v^\star$.
		\item[C.2] $\Omega_v^\star$ is diagonally dominant. 
		\item[C.3] The maximum node degree of $\Omega_v^\star$ satisfies $d_{\Omega_v^\star}^{\max} = o(p_2)$.   
	\end{itemize}
	Then, for random realizations of $\{X_t\}$, $\{Z_t\}$ and $\{V_t\}$, and the sequence $\{(\widehat{B}^{(k)},\widehat{C}^{(k)}),\widehat{\Omega}_v^{(k)}\}_k$ returned by Algorithm~\ref{algo:2} outlined in Section~\ref{sec:estimation}, with high probability, 
	the following bounds hold for all iterations $k$ for sufficiently large sample size $T$:
	\begin{equation*}
	\smalliii{\widehat{B}^{(k)}-B^\star}_F+\smalliii{\widehat{C}^{(k)}-C^\star}_F = O\Big(\sqrt{\tfrac{\max\{s_B^\star,s_C^\star\}\big(\log(p_1+p_2)+\log p_2\big)}{T}}\Big),
	\end{equation*}
	and
	\begin{equation*}
	\smallii{\widehat{\Omega}_v^{(k)} - \Omega_v^\star}_F = O\Big(\sqrt{\tfrac{(s^\star_{\Omega_v}+p_2)(\log(p_1+p_2) + \log p_2)}{T}}\Big).
	\end{equation*}
\end{corollary}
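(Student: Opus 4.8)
The plan is to mirror the proof of Theorem~\ref{thm:algo2}, replacing the nuclear-norm machinery for $B$ with the $\ell_1$ machinery already developed for $A$ and $C$. Since both $B^\star$ and $C^\star$ are now sparse, I would treat the augmented target $\Pi^\star := [B^\star, C^\star]\in\R^{p_2\times(p_1+p_2)}$ as a single sparse matrix with support cardinality $s_\Pi^\star = s_B^\star + s_C^\star$, and view~\eqref{eqn:optB-sparse-C-sparse} (at fixed $\widehat{\Omega}_v$) as an $\ell_1$-penalized least squares problem driven by the operator $\mathfrak{W}_{\widehat{\Omega}_v}$ of~\eqref{eqn:W-operator}. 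The consistency of the initializer $(\widehat{B}^{(0)},\widehat{C}^{(0)})$ then follows from the sparse analog of Theorem~\ref{thm:consistencyA} rather than from Theorem~\ref{thm:consistencyBC}: with the norm $\Phi(\Delta)=\|\Delta\|_1$, RSC curvature $\alpha_{\text{RSC}}$, and the choice $\lambda := \max\{\lambda_B,\lambda_C\}\geq 4\|\mathcal{W}'\mathcal{V}/T\|_\infty$, the same basic-inequality and cone argument yields $\smalliii{\widehat{\Pi}^{(0)}-\Pi^\star}_F = O(\sqrt{s_\Pi^\star}\,\lambda/\alpha_{\text{RSC}})$.

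The two probabilistic ingredients I need for the initializer are the RSC condition for $\mathfrak{W}_0$ with respect to $\|\cdot\|_1$ and a deviation bound on $\|\mathcal{W}'\mathcal{V}/T\|_\infty$. The first is obtained exactly as in Lemma~\ref{lemma:REX}, except that I apply Proposition~4.2 of \citet{basu2015estimation} to the joint $\VAR(1)$ process $\{W_t\}$ (with characteristic polynomial $\mathcal{G}$) rather than to $\{X_t\}$, producing an inequality $\tfrac{1}{2T}\smalliii{\mathfrak{W}_0(\Delta)}_F^2 \geq \alpha_{\text{RSC}}\smalliii{\Delta}_F^2 - \tau\|\Delta\|_1^2$ with $\alpha_{\text{RSC}}\propto \mathfrak{m}(f_W)$ and $\tau\propto \log(p_1+p_2)/T$, valid once $T\succsim \max\{s_B^\star,s_C^\star\}\log(p_1+p_2)$. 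The second is precisely part (b) of Lemma~\ref{lemma:operator-infinity}, giving $\|\mathcal{W}'\mathcal{V}/T\|_\infty \lesssim \sqrt{(\log(p_1+p_2)+\log p_2)/T}$. Combining the two produces $\smalliii{\widehat{\Pi}^{(0)}-\Pi^\star}_F = O(\sqrt{\max\{s_B^\star,s_C^\star\}(\log(p_1+p_2)+\log p_2)/T})$, already exhibiting the claimed logarithmic dimension dependence. Notably, unlike the low-rank case, I only ever need the infinity-norm deviation bound (never the operator-norm bound in~\eqref{eqn:W-deviation}) to drive both the $B$ and $C$ updates.

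The covariance step and the passage to general $k$ follow the proof of Theorem~\ref{thm:algo2} almost verbatim. I would control $\|\widehat{S}_v^{(0)}-\Sigma_v^\star\|_\infty$ via the decomposition $\widehat{S}_v^{(0)}-S_v = \tfrac{2}{T}\mathcal{V}'\mathcal{W}(\Pi^\star-\widehat{\Pi}^{(0)})' + (\Pi^\star-\widehat{\Pi}^{(0)})(\mathcal{W}'\mathcal{W}/T)(\Pi^\star-\widehat{\Pi}^{(0)})'$, bounding the cross term through $\|\mathcal{W}'\mathcal{V}/T\|_\infty$ and $\|\Pi^\star-\widehat{\Pi}^{(0)}\|_1$, and the quadratic term through $\|\mathcal{W}'\mathcal{W}/T\|_\infty$ (Proposition~2.4 of \citet{basu2015estimation}), so that $O(\sqrt{(\log(p_1+p_2)+\log p_2)/T})$ remains the leading order. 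Feeding this into Theorem~1 of \citet{ravikumar2011high} gives the stated bound on $\widehat{\Omega}_v^{(0)}$. At iteration $1$ I would split $\mathcal{W}'\mathcal{V}\widehat{\Omega}_v^{(0)}/T = \mathcal{W}'\mathcal{V}\Omega_v^\star/T + \mathcal{W}'\mathcal{V}(\widehat{\Omega}_v^{(0)}-\Omega_v^\star)/T$, bounding the first summand by the reasoning of Lemma~\ref{lemma:DeviationBound1} and the second by $d_{\Omega_v^\star}^{\max}\|\mathcal{W}'\mathcal{V}/T\|_\infty\|\widehat{\Omega}_v^{(0)}-\Omega_v^\star\|_\infty$ using Corollary~3 of \citet{ravikumar2011high} together with C.3.

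The hard part, exactly as in Theorems~\ref{thm:algo1} and~\ref{thm:algo2}, is the uniform-in-$k$ control: I must show that after iteration $1$ all sources of randomness in $(\mathcal{X},\mathcal{Z},\mathcal{V})$ have been absorbed into a fixed finite collection of high-probability events, so that the RSC constant for $\mathfrak{W}_{\widehat{\Omega}_v^{(k)}}$, the infinity-norm deviation bound for $\mathcal{W}'\mathcal{V}\widehat{\Omega}_v^{(k)}/T$, and the concentration $\|\widehat{S}_v^{(k)}-\Sigma_v^\star\|_\infty$ all retain the same rate and governing probability for every $k$. The observation that closes the argument is that the iteration-$1$ error rate for $(\widehat{B}^{(1)},\widehat{C}^{(1)})$ coincides with that of the initializer, so the errors cease to accumulate; one only needs the multiplicative correction $d_{\Omega_v^\star}^{\max}\|\widehat{\Omega}_v^{(k)}-\Omega_v^\star\|_\infty$ to remain of smaller order than the leading term, which is secured by C.2, C.3 and the sample-size requirement $T\succsim (d_{\Omega_v^\star}^{\max})^2(\log(p_1+p_2)+\log p_2)$. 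The final bounds then follow by invoking the sparse initializer estimate with the estimated operator $\mathfrak{W}_{\widehat{\Omega}_v^{(k)}}$ in place of $\mathfrak{W}_0$.
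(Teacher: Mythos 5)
Your proposal is correct and its overall architecture — oracle inequality for the initializer, concentration of $\widehat{S}_v^{(0)}$, graphical Lasso for $\widehat{\Omega}_v^{(0)}$, the split $\mathcal{W}'\mathcal{V}\widehat{\Omega}_v^{(k)}/T = \mathcal{W}'\mathcal{V}\Omega_v^\star/T + \mathcal{W}'\mathcal{V}(\widehat{\Omega}_v^{(k)}-\Omega_v^\star)/T$, and the observation that the iteration-$1$ rate matches the iteration-$0$ rate so that the randomness is exhausted after finitely many events — is exactly the paper's. You deviate in two places worth noting. First, for the deterministic bound on the initializer you collapse $[B,C]$ into a single sparse matrix $\Pi$ and run the Theorem~\ref{thm:consistencyA}-style $\ell_1$ cone argument with one tuning parameter, whereas the paper keeps the weighted decomposable regularizer $\mathcal{Q}(B,C)=\|B\|_1+\tfrac{\lambda_C}{\lambda_B}\|C\|_1$ and the subspace decomposition of~\eqref{defn:supportspace}, mimicking Theorem~\ref{thm:consistencyBC}; since both deviation bounds reduce to $\|\mathcal{W}'\mathcal{V}/T\|_\infty$ the two are equivalent here, and your version in fact correctly carries the sparsity factors $s_B^\star,s_C^\star$ into the final rate, which the paper's intermediate display~\eqref{eqn:bound-sparse-sparse-I} drops (evidently a typo, as the corollary's statement requires them). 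Second, you verify RSC by applying Proposition~4.2 of \citet{basu2015estimation} to the joint process $\{W_t\}$, obtaining a restricted-eigenvalue inequality with nonzero tolerance under $T\succsim \max\{s_B^\star,s_C^\star\}\log(p_1+p_2)$; the paper instead reuses Lemma~\ref{lemma:RSCW}, which gives curvature $\pi\mathfrak{m}(f_W)/4$ with tolerance zero but requires $T\succsim (p_1+p_2)$. Your route is the more natural one for the all-sparse specification, since it yields a sample-size requirement logarithmic in dimension, consistent with the $\sqrt{\log(p_1+p_2)/T}$-type rates being claimed; the paper's route buys nothing here beyond reusing an existing lemma verbatim. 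Either choice closes the argument, and your correct remark that the operator-norm deviation bound of~\eqref{eqn:W-deviation} is never needed in this setting matches the paper.
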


Note that when no contemporaneous dependence is present, $(\widehat{B},\widehat{C})$ solves
\begin{equation}\label{optBC-sparse-sparse}
(\widehat{B},\widehat{C}) = \argmin\limits_{(B,C)}\Big\{\tfrac{1}{T}\vertiii{\mathcal{Z}^T-\mathfrak{W}_0(B_{\text{aug}} + C_{\text{aug}})}_F^2 + \lambda_B\|B\|_1 + \lambda_C\|C\|_1 \Big\},
\end{equation}
whose error bound is given by 
\begin{equation}\label{eqn:bound-sparse-sparse-I}
\smalliii{\widehat{B}-B^\star}_F^2 + \smalliii{\widehat{C}-C^\star}_F^2 \leq 4(\lambda_B^2 + \lambda_C^2)/\alpha^2_{\text{RSC}}, 
\end{equation}
provided that the RSC condition holds and the regularization parameters are chosen properly. By setting the weighted regularizer as $\mathcal{Q}(B,C) = \|B\|_1+\tfrac{\lambda_C}{\lambda_B}\|C\|_1$ and $\Delta^B:=\widehat{B}-B^\star$ can be decomposed as (see equation~\eqref{defn:supportspace})
\begin{equation*}
\|\Delta^B_{\mathcal{J}_B} + \Delta^B_{\mathcal{J}_B^c}\|_1 = \|\Delta^B_{\mathcal{J}_B}\|_1 + \|\Delta^B_{\mathcal{J}_B^c}\|_1.
\end{equation*}
The rest of the proof is similar to that of Theorem~\ref{thm:consistencyBC} hence is omitted here.

\section{Proof of Propositions and Corollaries.}

%
%
\begin{proof}[\textbf{Proof of Proposition~\ref{prop:testing-rank}}]
	The joint process $W_t=\{(X_t',Z_t')'\}$ is a stationary $\VAR(1)$ process, and it follows that 
	\begin{equation*}
	S_w(h): = \begin{bmatrix}
	S_x(h) & S_{x,z}(h) \\ S_{z,x}(h) & S_z(h)
	\end{bmatrix}= \frac{1}{T}\sum_{t=1}^T w_tw_{t+h}' \stackrel{p}{\to} \Gamma_W(h):= \E W_t W_{t+h}' , \quad \text{as}~~T\rightarrow\infty,
	\end{equation*}
	which implies 
	\begin{equation*}
	S_x \stackrel{p}{\to} \Gamma_X, \quad S_z\stackrel{p}{\to} \Gamma_Z, \quad S_{x,z} \stackrel{p}{\to}\Gamma_{X,Z}, \quad S_{x,z}(1)\stackrel{p}{\to} \Gamma_{X,Z}(1).
	\end{equation*}
	Note that sample partial regression residual covariances can be obtained by
	\begin{equation*}
	S_{00} = S_z - S_z(1)S_z^{-1}S_z'(1), \qquad 
	S_{11} = S_x - S_{x,z}S_z^{-1}S_{x,z}', \qquad 
	S_{10} = S_{x,z}(1) - S_z(1)S_z^{-1}S_{x,z}'.
	\end{equation*}
	An application of the Continuous Mapping Theorem yields
	\begin{equation*}
	S_{00}\stackrel{p}{\to} \Sigma_{00}, \qquad S_{10}\stackrel{p}{\to}\Sigma_{10}, \qquad S_{11}\stackrel{p}{\to} \Sigma_{11}.
	\end{equation*}
	By \citet{hsu1941limiting,hsu1941problem}, the limiting behavior of $T\Psi_r$ is given by
	\begin{equation*}
	T\Psi_r \sim \chi^2_{(p_1-r)(p_2-r)}, \qquad \text{as}\quad T\rightarrow\infty.
	\end{equation*}
	Note that since $\mu$ is of multiplicity one and the ordered eigenvalues are continuous functions of the matrices, the following holds:
	\begin{equation*} 
	\phi_k \stackrel{p}{\to} \mu_k, \qquad \forall\, k=1,\dots,\min (p_1,p_2).
	\end{equation*}
\end{proof}

%
%
\begin{proof}[\textbf{Proof of Corollary~\ref{cor:Tlogp1p2}}]
	First, we note that $\mathcal{R}_1$ effectively comes from the following stochastic regression:
	\begin{equation}\label{eqn:XZR}
	X_t = QZ_t + R_t, \qquad \text{for some regression matrix $Q$},
	\end{equation}
	with $\mathcal{R}_1=\mathcal{X}-\mathcal{Z}\widehat{Q}$ being the sample residual. The population covariance of $R_t$ is the {\em conditional covariance} of $X_t$ on $Z_t$, given by
	\begin{equation*}
	\Sigma_{11} = \Sigma^\star_{R} := \Gamma_X - \Gamma_{X,Z}\Gamma_Z^{-1}\Gamma'_{X,Z}.
	\end{equation*}
	$\Sigma_{11}$ is identical to that defined in equation~\eqref{eqn:Sigma01}. Writing the model in terms of data, we~have
	\begin{equation*}
	\mathcal{X} = \mathcal{Z}Q + \mathcal{R}^\star,
	\end{equation*}
	where we use $\mathcal{R}^\star$ to denote the true error term, for the purpose of distinguishing it from the residuals by regressing $\mathcal{X}$ on $\mathcal{Z}$. Note that $\mathcal{R}^\star$ is also sub-Gaussian. First, we would like to obtain a bound for $\|S_{11} - \Sigma_{11}\|_\infty$. Let $S_{R}$ be the sample covariance matrix based on the actual errors, i.e., $S_{R^\star} = (\mathcal{R}^\star)'(\mathcal{R}^\star)/T$, then 
	\begin{equation*}
	\|S_{11} - \Sigma_{11}\|_\infty \leq \| S_{R^\star} - \Sigma_{11}\|_\infty + \|S_{11} - S_{R^\star}\|_\infty
	\end{equation*}
	The first term can be directed bounded by Lemma 1 in \citet{ravikumar2011high}, that is, there exists some constant $\tau>2$, such that for large enough sample size $T$, with probability at least $1-1/p_2^{\tau-2}$,
	\begin{equation*}
	\| S_{R^\star} - \Sigma_{11}\|_\infty \leq C_0\sqrt{\log p_2/T}, \qquad \text{for some constant }C_0>0.
	\end{equation*} 
	Consider the second term. Rewrite it as
	\begin{equation*}
	S_{11} - S_{R^\star} = \tfrac{2}{T}(\mathcal{R}^\star)'\mathcal{Z}(Q^\star - \widehat{Q}) + (Q^\star - \widehat{Q})'\Big(\tfrac{\mathcal{Z}'\mathcal{Z}}{T}\Big)(Q^\star - \widehat{Q}):= I_1 + I_2,
	\end{equation*}
	then for $I_1$,
	\begin{equation*}
	I_1 \leq 2 \vertiii{Q^\star - \widehat{Q}}_1 \vertii{\frac{1}{T} (\mathcal{R}^\star)'\mathcal{Z}}_\infty \leq 2\vertii{\text{vec}(Q^\star) - \text{vec}(\widehat{Q})}_1 \vertii{\frac{1}{T} (\mathcal{R}^\star)'\mathcal{Z}}_\infty.
	\end{equation*}
	By Lemma~\ref{lemma:operator-infinity}, there exist constants $c_i>0$ such that with probability at least $1-c_1\exp(-c_2\log (p_1p_2))$, for sufficiently large sample size $T$, we get
	\begin{equation}\label{eqn:RZ}
	\vertii{\frac{1}{T} (\mathcal{R}^\star)'\mathcal{Z}}_\infty \leq C_1\sqrt{\log (p_1p_2)/T},\qquad \text{for some constant }C_1>0.
	\end{equation}
	For $I_2$, we have that
	\begin{equation*}
	I_2 \leq \vertiii{Q^\star - \widehat{Q}}^2_1\vertii{\frac{\mathcal{Z}'\mathcal{Z}}{T}}_\infty \leq \vertii{\text{vec}(Q^\star) - \text{vec}(\widehat{Q})}_1^2\vertii{\frac{\mathcal{Z}'\mathcal{Z}}{T}}_\infty.
	\end{equation*}
	By Proposition 2.4 in \citet{basu2015estimation} and taking the union bound, there exist some constants $c_1'$ and $c_2'$ such that with probability at least $1-c_1'\exp(-c_2'\log p_2)$, for sufficiently large sample size $T$, we obtain
	\begin{equation*}
	\vertii{\frac{\mathcal{Z}'\mathcal{Z}}{T} - \Gamma_Z}_\infty \leq C_2\sqrt{\log p_2/T}, \qquad \text{for some constant }C_2>0,
	\end{equation*}
	which implies 
	\begin{equation}\label{eqn:ZZ}
	\vertii{\frac{\mathcal{Z}'\mathcal{Z}}{T}}_\infty \leq C_2\sqrt{\log p_2/n} + \max_i(\Gamma_{Z,ii}).
	\end{equation}
	By assuming that $\vertii{\text{vec}(Q^\star) - \text{vec}(\widehat{Q})}_1\leq \varepsilon_Q$, it follows that 
	\begin{equation*}
	I_1 + I_2 \leq C_1'\varepsilon_Q \sqrt{\frac{\log(p_1p_2)}{T}} + C'_2\varepsilon_Q^2 \sqrt{\frac{\log p_2}{T}},
	\end{equation*}
	hence 
	\begin{equation}\label{eqn:boundresidual}
	\|S_{11} - \Sigma_{11}\|_\infty \leq C_0\sqrt{\frac{\log p_2}{T}} + C_1'\varepsilon_Q \sqrt{\frac{\log(p_1p_2)}{T}} + C'_2\varepsilon_Q^2 \sqrt{\frac{\log p_2}{T}}.
	\end{equation}
	Regardless of the relative order of $p_1$ and $p_2$, one can easily verify that 
	\begin{equation}
	\|S_{11} - \Sigma_{11}\|_\infty  = O\Big(\sqrt{\tfrac{\log (p_1p_2)}{T}} \Big).
	\end{equation}
	by assuming $\log(p_1p_2)/T$ being a small quantity. Since
	\begin{equation*}
	C_{11} = \left(\text{diag}(S_{11})\right)^{-1/2} S_{11} \left(\text{diag}(S_{11})\right)^{-1/2}
	\end{equation*}
	and letting $\widetilde{R}_t = \text{diag}(\Sigma_{11})^{-1/2}R_t$, we then have that $C_{11}$ is simply the sample covariance matrix based on residual surrogates of $\widetilde{R}_t$, whose error rate stays unchanged by scaling, i.e, $\vertii{C_{11}-\rho_{11}}_\infty=O(\sqrt{T^{-1}\log(p_1p_2)})$. The latter fact further implies that if $\rho_{11}\in\mathcal{S}(\gamma,\Delta)$, then $C_{11}\in\mathcal{S}(\gamma',\Delta')$ with $\Delta'\geq \Delta - {\text{(const)}}\sqrt{\log(p_1p_2)/T}$ and $\gamma'\geq \gamma + {\text{(const)}}\sqrt{\log(p_1p_2)/T}$.
	
	It then follows that as long as $\sqrt{T^{-1}\log(p_1p_2)}$ satisfies the same condition imposed on $\gamma = O(p_1^{-1/2+\epsilon})$, that is, 
	\begin{equation*}
	p_1^{1-2\epsilon}\log(p_1p_2) = O(T), \qquad \text{for all }\epsilon>0, 
	\end{equation*}
	with high probability, the sample covariance matrix based on the residuals $\mathcal{R}_1$ satisfies the weakly correlated design assumption, for a random realization $\mathcal{X}$ and $\mathcal{Z}$. 
\end{proof}

Finally, we briefly outline the main steps of how to obtain the error bound in Corollary~\ref{cor:B-sparse-C-sparse}. 
\begin{proof}[\textbf{Proof of Corollary~\ref{cor:B-sparse-C-sparse}}]
	Let $W_t=(X_t',Z_t')'$.	At iteration 0, $(\widehat{B}^{(0)},\widehat{C}^{(0)})$ solves~\eqref{optBC-sparse-sparse}, and the following bound holds:
	\begin{equation*}
	\smalliii{\widehat{B}^{(0)}-B^\star}_F^2 + \smalliii{\widehat{C}^{(0)}-C^\star}_F^2 \leq 4(\lambda_B^2 + \lambda_C^2)/\alpha^2_{\text{RSC}}, 
	\end{equation*}
	provided that $\mathfrak{W}$ satisfies the RSC condition, and $\lambda_B,\lambda_C$ both satisfy
	\begin{equation*}
	\lambda_B \geq 4\|\mathcal{W}'\mathcal{V}/T\|_\infty, \qquad \lambda_C \geq 4\|\mathcal{W}'\mathcal{V}/T\|_\infty.
	\end{equation*}
	In particular, by Lemma~\ref{lemma:RSCW} and Lemma~\ref{lemma:operator-infinity}, for random realizations of $\{X_t\}$, $\{Z_t\}$ and $\{V_t\}$, for sufficiently large sample size, with high probability
	\begin{equation*}
	\mathfrak{W} \text{ satisfies the RSC condition},
	\end{equation*}
	and
	\begin{equation*}
	\vertii{\mathcal{W}'\mathcal{V}/T}_\infty \leq C_1\sqrt{\frac{\log(p_1+p_2)+\log p_2}T},
	\end{equation*}
	for some constant $C_1$. Hence, with high probability, 
	\begin{equation}
	\smalliii{\widehat{B}^{(0)} - B^\star}_F^2 + \smalliii{\widehat{C}^{(0)} - C^\star}_F^2 =  O\Big(\tfrac{\log(p_1+p_2)+\log p_2}T\Big).
	\end{equation}
	For $\widehat{\Omega}_v^{(0)}$, it solves a graphical Lasso problem:
	\begin{equation*}
	\widehat{\Omega}_v^{(0)} = \argmin\limits_{\Omega_v\in\mathbb{S}_{++}^{p_2\times p_2}} \Big\{\log\det\Omega_v - \tr\big(\widehat{S}_u^{(0)}\Omega_v \big) + \rho_v \|\Omega_v\|_{1,\text{off}}\Big\},
	\end{equation*}
	where $\widehat{S}_v^{(0)} = \tfrac{1}T(\mathcal{Z}^T - \mathcal{X}\widehat{B}^{(0)'} - \mathcal{Z}\widehat{C}^{(0)'})'(\mathcal{Z}^T - \mathcal{X}\widehat{B}^{(0)'} - \mathcal{Z}\widehat{C}^{(0)'})$.
	Similar to the proof of Theorem~\ref{thm:algo2}, its error bound depends on $\|\widehat{S}_v^{(0)}-\Sigma_v^\star\|_\infty$. With the same decomposition and consider only the leading term, 
	\begin{equation*}
	\|\widehat{S}_v^{(0)}-\Sigma_v^\star\|_\infty = O \Big(\sqrt{\tfrac{\log (p_1+p_2) + \log p_2}{T}}\Big), \qquad \Rightarrow \quad \|\widehat{\Omega}_v^{(0)}-\Omega_v^\star\|_\infty =  O \Big(\sqrt{\tfrac{\log (p_1+p_2) + \log p_2}{T}}\Big).
	\end{equation*}
	At iteration 1, the bound of $\smalliii{\widehat{B}^{(1)}-B^\star}_F^2 + \smalliii{\widehat{C}^{(1)}-C^\star}^2_F$ relies on 
	\begin{equation}
	\begin{split}
	\smallii{\frac{1}{T}\mathcal{W}'\mathcal{V}\widehat{\Omega}_v^{(0)}}_\infty&\leq \smallii{\tfrac{1}T\mathcal{W}'\mathcal{V}(\widehat{\Omega}_v^{(0)}-\Omega_v^\star)}_\infty + \smallii{\tfrac{1}T\mathcal{W}'\mathcal{V}\Omega_v^\star}_\infty, \\
	&\leq C_2\sqrt{\tfrac{\log(p_1+p_2)+\log p_2}T} + d^{\Omega_v^\star}_{\max}\smallii{\tfrac{1}T\mathcal{W}'\mathcal{V}}_\infty \smallii{\widehat{\Omega}_v^{(0)}-\Omega_v^\star}_\infty\\
	&= O\Big(\sqrt{\tfrac{\log(p_1+p_2)+\log p_2}T}\Big), 
	\end{split}
	\end{equation}
	hence $\smalliii{\widehat{B}^{(1)}-B^\star}_F^2 + \smalliii{\widehat{C}^{(1)}-C^\star}_F^2 = O(\tfrac{\log(p_1+p_2)+\log p_2}T)$, which coincides with the bound of the estimator of iteration 0, implying the error rate remains unchanged henceforth. Up to this step, all sources of randomness have been captured. Consequently, the following bounds hold with high probability for all iterations $k$:
	\begin{equation*}
	\|\mathcal{W}'\mathcal{V}\widehat{\Omega}_v^{(k)}/T\|_\infty = O\Big(\sqrt{\tfrac{\log(p_1+p_2)+\log p_2}{T}}\Big),
	\end{equation*}
	and
	\begin{equation*}
	\| \widehat{S}_v^{(k)} - \Sigma_v^\star\|_\infty= O\Big(\sqrt{\tfrac{\log (p_1+p_2)+\log p_2}{T}}\Big),
	\end{equation*}
	which imply the bounds in Corollary~\ref{cor:B-sparse-C-sparse}.
\end{proof}

\section{List of Stock and Macroeconomic Variables}\label{appendix:Dictionary}
\begin{table}[H]
	\centering
	\caption{List of Stocks used in the Analysis}
		\begin{tabular}{l|l|l|l}
			\specialrule{.1em}{.05em}{.05em} 
			Stock Symbol	& 	Name	& 	Stock Symbol	& 	Company Name	\\ 
			\specialrule{.1em}{.05em}{.05em}
			AAPL	& 	Apple Inc.	& 	JNJ	& 	Johnson \& Johnson Inc	\\
			AIG	& 	American International Group Inc.	& 	JPM	& 	JP Morgan Chase \& Co	\\
			ALL	& 	Allstate Corp.	& 	KO	& 	The Coca-Cola Company	\\
			AMGN	& 	Amgen Inc.	& 	LMT	& 	Lockheed-Martin	\\
			AXP	& 	American Express Inc.	& 	LOW	& 	Lowe's	\\
			BA	& 	Boeing Co.	& 	MCD	& 	McDonald's Corp	\\
			BAC	& 	Bank of America Corp	& 	MDLZ	& 	Mondel International	\\
			BK	& 	Bank of New York Mellon Corp	& 	MDT	& 	Medtronic Inc.	\\
			BMY	& 	Bristol-Myers Squibb	& 	MMM	& 	3M Company	\\
			C	& 	Citigroup Inc	& 	MO	& 	Altria Group	\\
			CAT	& 	Caterpillar Inc	& 	MRK	& 	Merck \& Co.	\\
			CL	& 	Colgate-Palmolive Co.	& 	MS	& 	Morgan Stanley	\\
			CMCSA	& 	Comcast Corporation	& 	MSFT	& 	Microsoft	\\
			COF	& 	Capital One Financial Corp.	& 	NSC	& 	Norfolk Southern Corp	\\
			COP	& 	ConocoPhillips	& 	ORCL	& 	Oracle Corporation	\\
			CSCO	& 	Cisco Systems	& 	OXY	& 	Occidental Petroleum Corp.	\\
			CVS	& 	CVS Caremark	& 	PEP	& 	Pepsico Inc.	\\
			CVX	& 	Chevron	& 	PFE	& 	Pfizer Inc	\\
			DD	& 	DuPont	& 	PG	& 	Procter \& Gamble Co	\\
			DIS	& 	The Walt Disney Company	& 	RTN	& 	Raytheon Company	\\
			DOW	& 	Dow Chemical	& 	SLB	& 	Schlumberger	\\
			DVN	& 	Devon Energy Corp	& 	SO	& 	Southern Company	\\
			EMC	& 	EMC Corporation	& 	T	& 	AT\&T Inc	\\
			EXC	& 	Exelon	& 	TGT	& 	Target Corp.	\\
			F	& 	Ford Motor	& 	TWX	& 	Time Warner Inc.	\\
			FCX	& 	Freeport-McMoran	& 	TXN	& 	Texas Instruments	\\
			FDX	& 	FedEx	& 	UNH	& 	UnitedHealth Group Inc.	\\
			GD	& 	General Dynamics	& 	UPS	& 	United Parcel Service Inc	\\
			GE	& 	General Electric Co.	& 	USB	& 	US Bancorp	\\
			GILD	& 	Gilead Sciences	& 	UTX	& 	United Technologies Corp	\\
			GS	& 	Goldman Sachs	& 	VZ	& 	Verizon Communications Inc	\\
			HAL	& 	Halliburton	& 	WBA	& 	Walgreens Boots Alliance	\\
			HD	& 	Home Depot	& 	WFC	& 	Wells Fargo	\\
			HON	& 	Honeywell	& 	WMT	& 	Wal-Mart	\\
			IBM	& 	International Business Machines	& 	XOM	& 	Exxon Mobil Corp	\\
			INTC	& 	Intel Corporation	& 		& 		\\			
			\specialrule{.1em}{.05em}{.05em} 
		\end{tabular}
\end{table}		

\begin{table}[H]
\centering
\caption{List of Macroeconomic Variables and the transformation used in the Analysis}
		\begin{tabular}{l|l|l}
			\specialrule{.1em}{.05em}{.05em} 
			Symbol & Description &  Transformation\\ \hline
			FFR & Federal Funds Rate & abs diff \\
			T10yr & 10-Year Treasury Yield with Constant Maturity & abs diff\\
			UNEMPL & Unemployment Rate for 16 and above & abs diff \\
			IPI & Industrial Production Index & relative diff \\
			ETTL & Employment Total & relative diff \\
			M1 & M1 Money Stock & relative diff\\
			AHES & Average Hourly Earnings of Production and Nonsupervisory Employees &  relative diff \\
			CU & Capital Utilization & relative diff \\
			M2 & M2 Money Stock & relative diff \\
			HS & Housing starts & relative diff \\
			EX & US Exchange Rate & abs diff \\
			PCEQI & Personal Consumption Expenditures Quantity Index & relative diff \\
			GDP & real Gross Domestic Product & relative diff \\
			PCEPI & Personal Consumption Expenditures Price Index & relative diff \\
			PPI & Producer Price Index & relative diff \\
			CPI & Consumer Price Index & relative diff \\
			SP.IND & S\&P Industrial Sector index & relative diff  \\
			\specialrule{.1em}{.05em}{.05em} 
		\end{tabular}\vspace*{3mm}\\
		*abs diff: $x_t-x_{t-1}$, relative diff: $\tfrac{x_t-x_{t-1}}{x_{t-1}}$
\end{table}

\clearpage
\vskip 0.2in
\bibliographystyle{plainnat}
\nocite{*}
\bibliography{refbib}

\end{document}